\newcommand{\LONGVERSION}[1]{#1}
\newcommand{\SHORTVERSION}[1]{}
\newcommand{\LONGVERSIONREF}{the long version}
 \newcommand{\SUBSTCLO}[1]{}
 \newcommand{\LONGVERSION}[1]{}
 \newcommand{\LONGVERSIONCHECKED}[1]{}
 \newcommand{\SHORTVERSION}[1]{#1}
\newcommand{\LONGSHORT}[2]{\LONGVERSION{#1}\SHORTVERSION{#2}}
\lstdefinelanguage{Moebius}%
{%
  keywords=%
  {%
    box, unbox, with,%
    mFn,tFn,fn,fix,%
    if,else,then,match,%
    let,in,%
  },%
  sensitive,
}
\newcommand{\Moebius}{M\oe{}bius}
\newcommand{\Beluga}{\textsf{Beluga}}
\newcommand{\ctxmerge}[2]{{{#1} \oplus {#2}}}
\newcommand{\ctxinsert}[2]{{{#1} \oplus {#2}}}
\newcommand{\ctxinsertatl}[3]{{{#1} \oplus {#2}^{#3}}}
\newcommand{\ctxappend}[2]{{{#1},{#2}}}
\newcommand{\ctxrestrict}[2]{#1|_{#2}}
\newcommand{\ctxignore}[2]{#1|^{#2}}
\newcommand{\bnfas}{\mathrel{::=}}
\newcommand{\bnfalt}{\mathrel{\mid}}
\newcommand{\m}[1]{\mathsf{#1}}
\newcommand{\arrow}{\rightarrow}
\newcommand{\id}{\m{id}}
\newcommand\ceil[1]{\lceil#1\rceil}
\newcommand{\cbox}[1]{\ceil {#1}}
\newcommand{\Psihat}{\hat{\Psi}}
\newcommand{\Gammahat}{\hat{\Gamma}}
\newcommand{\Phihat}{\hat{\Phi}}
\newcommand{\tto}{\rightarrow}
\newcommand{\tfn}[2]{\m{Fn}\;#1 \tto #2}
\newcommand{\fn}[2]{\m{fn}\;#1 \tto #2}
\newcommand{\letboxm}[3]{\m{let}\,\m{box}\;#1\;{=}\;#2\;\m{in}\;#3}
\newcommand{\boxn}[3]{\m{box}_{#1}(#2.\,#3)}
\newcommand{\boxm}{\boxn{}}
\newcommand{\caseof}[2]{\m{case}_{#2}\ {#1}\ \m{of}}
\newcommand{\atl}[1]{\vdash^{#1}}
\begin{document} 

%% Title information
\title
{\Moebius: Metaprogramming using Contextual Types}
\subtitle{The stage where System F can pattern match on itself 
\LONGVERSION{-- \textbf{Long version}}}
% {Multi-level Contextual Types as a Foundation for Type-Safe Meta-Programming}
%\subtitle{Contextual Modal types  }

%%%%  Alternative Titles:
%%
%% A Modal Analysis of Meta-Programming Revisited
%%
%% Multi-staged meta-programming through contextual types
%% Typed Multi-staged Meta-programming: How to support polymorphic code
%% generation and analysis
%%
%% Revisiting multi-staged meta-programming using contextual types
%%
%% Moebius: A contextual modal analysis of staged computation
%%
%% A Modal Analysis of Meta-Programming, Revisited :
%% How to support polymorphic code generation and analysis
%%
%% Revisiting staged computation through the lenses of contextual types
%%
%% Mœbius's contextual types: The stage where System F can pattern match on itself
%% Revisiting Meta-Programming using Contextual Types: The stage where System F can pattern match on itself

%% Author information
%% Contents and number of authors suppressed with 'anonymous'.
%% Each author should be introduced by \author, followed by
%% \authornote (optional), \orcid (optional), \affiliation, and
%% \email.
%% An author may have multiple affiliations and/or emails; repeat the
%% appropriate command.
%% Many elements are not rendered, but should be provided for metadata
%% extraction tools.

\author{Junyoung Jang}

\affiliation{
  \department{School of Computer Science} %% \department is recommended
  \institution{McGill University}         %% \institution is required
  \country{Canada}                    %% \country is recommended
}
\email{junyoung.jang@mail.mcgill.ca}          %% \email is recommended

\author{Samuel G{\'e}lineau}

\affiliation{
  \department{} %% \department is recommended
  \institution{Simspace}         %% \institution is required
  \country{Canada}                    %% \country is recommended
}
\email{}          %% \email is recommended

\author{Stefan Monnier}

\affiliation{
	\department{Computer Science} %% \department is recommended
	\institution{Universit{\'e} de Montr{\'e}al}         %% \institution is required
	\country{Canada}                    %% \country is recommended
}
\email{monnier@iro.umontreal.ca}

\author{Brigitte Pientka}

\affiliation{
	\department{School of Computer Science} %% \department is recommended
	\institution{McGill University}         %% \institution is required
	\country{Canada}                    %% \country is recommended
}
\email{bpientka@cs.mcgill.ca}         %% \email is recommended

%% Abstract
%% Note: \begin{abstract}...\end{abstract} environment must come
%% before \maketitle command

\begin{abstract}
We describe the foundation of the metaprogramming language, \Moebius{},
which supports the generation of polymorphic code and, more
importantly the analysis of polymorphic code via pattern matching. 

\Moebius{} has two main ingredients: 1) we exploit contextual modal
types to describe open code together with the context in which it is
meaningful. In \Moebius, open code can depend on
type and term variables (level \(0\)) whose values are supplied at a later
stage, as well as code variables (level \(1\)) that stand for code
templates supplied at a later stage. This leads to a multi-level modal
lambda-calculus that supports System-F style polymorphism and forms
the basis for polymorphic code generation.
%This leads us to a polymorphic multi-level modal lambda-calculus where
%we use l
%.
2) we extend the multi-level modal lambda-calculus to
support pattern matching on code. As pattern matching on polymorphic code
may refine polymorphic type variables, we extend our type-theoretic
foundation to generate and track typing constraints that arise. We
also give an operational semantics and prove type preservation.

Our multi-level modal foundation for \Moebius{} provides the appropriate
abstractions for both generating and pattern matching on open code
without committing to a concrete representation of variable binding
and contexts. Hence, our work is a step towards building a general
type-theoretic foundation for multi-staged metaprogramming that, on the
one hand, enforces strong type guarantees and, on the other hand, makes it
easy to generate and manipulate code. This will allow us to exploit
the full potential of metaprogramming without sacrificing the reliability
of and trust in the code we are producing and running.
\end{abstract}

%% 2012 ACM Computing Classification System (CSS) concepts
%% Generate at 'http://dl.acm.org/ccs/ccs.cfm'.
\begin{CCSXML}
<ccs2012>
<concept>
<concept_id>10003752.10003790.10003793</concept_id>
<concept_desc>Theory of computation~Modal and temporal logics</concept_desc>
<concept_significance>500</concept_significance>
</concept>
<concept>
<concept_id>10003752.10003790.10003806</concept_id>
<concept_desc>Theory of computation~Programming logic</concept_desc>
<concept_significance>500</concept_significance>
</concept>
</ccs2012>
\end{CCSXML}

\ccsdesc[500]{Theory of computation~Modal and temporal logics}
\ccsdesc[300]{Theory of computation~Programming logic}
%% End of generated code

%% Keywords
%% comma separated list
\keywords{Metaprogramming, Contextual Types, Polymorphism, Pattern Matching}  %% \keywords are mandatory in final camera-ready submission

%% \maketitle
%% Note: \maketitle command must come after title commands, author
%% commands, abstract environment, Computing Classification System
%% environment and commands, and keywords command.

\maketitle

% \section{Summary / Outline}
% Contributions:
% \begin{itemize}
% \item Multi-level modal polymorphic lambda-calculus based on contextual
%   modal type using a let-box formulation; this provides a
%   type-theoretic foundation for multi-staged programming where we can
%   generate and reason about open polymorphically typed code.
% \item Relationship of multi-level contextual types to ordinary
%   contextual types in the let-box formulation
% \item We extend the modal lambda-calculus to support pattern
%   matching on open typed code. This allows us to inspect polymorphically
%   type code fragments. Pattern matching on polymorphic
%   typed code fragments may refine polymorphic type variables. To
%   accommodate it, we define matching on multi-level contextual types
%   and extend our framework to keep track of typing constraints as part of
%   our typing environment and generalize our typing rules to reason
%   with these constraints (typing modulo).
% \item We further extend the system with context variables to allow
%   writing recursive functions.

% %\item Relationship of contextual types to closed modal types and hence
% %  to Kripke-style unbox-box formulation
% \item We give an operational semantics for our multi-level
%   metaprogramming language and show type preservation.
% \end{itemize}

\section{Introduction}\label{sec:introduction}
Metaprogramming is the art of writing programs that produce or manipulate other programs. This opens the possibility to eliminate boilerplate code and exploit domain-specific knowledge to build high-performance programs. Unfortunately, designing a language extension to support type-safe, 
multi-staged metaprogramming remains very challenging.

One widely used approach to metaprogramming going back to Lisp/Scheme
is using \emph{quasiquotation} which allows programmers to generate and
compose code fragments. This provides a simple and flexible mechanism
for generating and subsequently splicing in a code fragment. However,
it has been challenging to provide type safety guarantees about the
generated code, and although many statically typed programming languages such as Haskell
\cite{Sheard:Haskell02}, and even dependently-typed languages such as
Coq \cite{Anand:ITP18} or Agda \cite{vanderWalt:IFL12} support a form
of quasiquotation, they generate untyped code. Fundamentally, we lack
a type-theoretic foundation for quasiquotation that provides rich
static type guarantees about the generated code and allows programmers
to analyze and manipulate it.

This is not to say that no advances have been made in the area of
typed metaprogramming. Two decades ago, MetaML\cite{Taha:TCS00,Taha:POPL03}
pioneered the type-safe code generation from a practical
perspective. At the same time, \citet{Davies:ACM01} observed that
the \emph{necessity (box) modality} allows us to distinguish the
generated code from the programs that are 
generating it providing a logical foundation for metaprogramming.
For example, code fragments like \lstinline!box(2 + 2)! have boxed
types such as \lstinline![nat]!, while programs such as 
\lstinline!6 * 3!, which will evaluate to $18$ and have type \lstinline!nat!.
\citet{Nanevski:TOCL08} extend this idea to describe open code with
respect to a context in which it is meaningful using contextual
types. For example, the  code fragment \lstinline!box(x. x + 2)! has
the contextual box type \lstinline![x:nat |- nat]!. To use a piece of
code such as \lstinline!box(x. x + 2)!, we bind \lstinline!x.x+2! to
the contextual variable \lstinline!U! of type 
\lstinline!(x:nat |- nat)! using a \lstinline!let-box! expression. At the site where we use
\lstinline!U! to eventually splice in \lstinline!(x.x+2)!, we
associate \lstinline!U! with a delayed substitution giving a value to
\lstinline|x|. As soon as we know
what \lstinline!U! stands for, we can apply the substitution. For
example, \lstinline!let box (x.U) = box(x.x+2) in box(U with 3)!
will bind \lstinline!U! to the code \lstinline!(x.x+2)! during
runtime, and produce \lstinline!box(3 + 2)! as a result where
\lstinline!x! in the code \lstinline!x+2! has been replaced by
\lstinline!3!. We hence treat contextual variables as closures
consisting of a variable and a delayed substitution (written using the
keyword \lstinline!with!, in this example). 
This allows us not
only to instantiate open code fragments, but also ensures that
variables in code fragments are properly renamed when we splice them
in. Using closures and contextual types is in
contrast to using functions and function types where we could write a program
\lstinline!let box U = box(fn x -> x + 2) in box(U 3)!. Here, evaluation would produce the code
\lstinline!box ((fn x -> x + 2) 3)! containing an administrative redex,
which is particularly undesirable in \Moebius, where the program can
inspect the code. Arguably, contextual types lead to more compact generated code which is more in line with the programmer's 
intention. 
%Further, the let-box formulation makes the
%evaluation order explicit, which we see as an advantage in
%understanding the behavior of a multi-staged meta-program that deals
%with open code. 

This line of work
cleanly separates local variables (such as \lstinline!x!) and global
variables (i.e. variables such as \lstinline!U!,
accessible at every stage) into two zones. This provides an
alternative to the Kripke-style view where a stack
of contexts models the different stages of computation. While the
latter is appealing, since \lstinline!box! (quote) and
\lstinline!unbox! (unquote) match common
metaprogramming practice, reasoning
about context stacks and variable dependencies is difficult. The
\lstinline!let box! formulation of the modal necessity leads to an arguably
much simpler formulation of the static and operational semantics, which
we see as an advantage.

% This view provides a logical explanation for
% metaprogramming and code generation, while at the same time providing
% static type safety guarantees about the generation of code and the
% generated code itself. 
%Today,  there are several metaprogramming implementations such as MetaML \cite{Taha:TCS00,Taha:POPL03},
%MetaOCaml\cite{Kiselyov:FLOPS14} and Scala\cite{Parreaux:POPL17}. All
%support multiple (\(>2\)) stages, generating code with free variables,
%and evaluating the generated code. However, none of these approaches
%allow programmers to define polymorphic code and to analyze  code
%via pattern matching. 

%Recently, \citet{Gratzer:ICFP19}  proposed a modal type theory that could conceivably be used as a foundation for meta-programming and that provides polymorphism. However, it does not support open code, and while it supports  multiple stages, the type theory exploits idempotency property for the \emph{box} modality. As a consequence, multiple stages conceptually collapses into a two\--stage system. Further, we generally lack understanding how to support pattern matching on open code which seems critical, if we want to exploit the full power of typed metaprogramming for writing domain-specific optimizations. 

In this paper, we introduce a core metaprogramming language,
\Moebius{}, which allows us to generate and, importantly, analyze
polymorphically-typed open code fragments using pattern
matching. Our starting point is the lambda-calculus presented by \citet{Nanevski:TOCL08} where
they distinguish between local and global variables. This formulation
makes evaluation order explicit using 
\lstinline!let box!-expressions. 
%We see this as an advantage over Kripke-stack
%formulations of the modal lambda-calculus, as it 
This allows us to more
clearly understand the behavior of multi-staged metaprograms that work
with open code. We extend their work in three main directions: 

\smallskip

\textbf{1) Generating polymorphic code.} First, we generalize their
language to support System F style polymorphism
and polymorphic code generation. In particular, 
\lstinline!['a:*, x:'a, f: 'a -> 'a |- 'a]!  describes a polymorphic code fragment such
as \lstinline!box('a, x, f. f (f x))!. A context in our setting keeps
track of both term and type variables, hence treating both assumptions
uniformly. This allows us to support code generation for polymorphic data
structures such as polymorphic lists.  
%As we are working in System F,
%not only code can depend on variables, but types also can depend
%variables. We hence characterize not only open code fragments, but
%also open type fragments. 
%
%

\textbf{2) Generating code that depends on other code and type templates.}
Second, we generalize contextual types to characterize code that depends
itself on other code fragments and support a composition of code
fragments which avoids creating administrative redexes due to
boxing.
This is achieved by generalizing the two-zone formulation by
\citet{Davies:ACM01} and \citet{Nanevski:TOCL08}, which distinguishes
between two levels, to an n-ary zone formulation and leads us to a
multi-level contextual modal lambda-calculus.  The two-zone
formulation by Davies, Pfenning, et. al. is then a special case of
our multi-level calculus. 
For example, \lstinline![c:(x:nat |- nat), x:nat |- nat]! can describe a
piece of code like \lstinline!box(c, x. x + c with x)! that eventually
computes a natural number, but depends on the variable
\lstinline!x:nat! and on another piece of open code
\lstinline!c:(x:nat |- nat)!. Note that \lstinline!c! simply stands for
a piece of open code and any reference to it will have to provide an
\lstinline!x! with which to close it -- this is in contrast to an assumption
of type  \lstinline![x:nat |- nat]! which stands for the boxed version of
a piece of code. This allows us to elegantly and concisely combine
code fragments:  

\begin{lstlisting}[numbers=left]
let box (y. R) = box (y. y + 2) in 
let box (c,x. U) = box (c,x. 3 * x + (c with (2 * x))) in 
  box (y. U with ((y. R with y), y))
\end{lstlisting}
which results in the code \lstinline!box(y. 3 * y + (2 * y + 2))!. We say
that \lstinline![c:(x:nat |- nat), x:nat |- nat]! is a level 2
contextual type, as it depends on \lstinline!x:nat! (level 0) and
\lstinline!c:(x:nat |- nat)! (level 1) assumptions. This view is in
contrast to the two-level modal lambda-calculus described in
\citet{Davies:ACM01} or \citet{Nanevski:TOCL08}, where we would need to
characterize code depending on other code using the type
\lstinline![c:[x:nat |- nat], x:nat |- nat]!. However, this would not
allow us to splice in \lstinline!(y. y + 2)! for \lstinline!c! in the code bound
to \lstinline!U!; instead, we would need to splice in 
\lstinline!box(y. y + 2)! for \lstinline!c! and retrieve the code \lstinline!(y. y + 2)!
using a \lstinline!let-box!-expression. The code
bound to \lstinline!U! in line 2 would need to be written as:
\lstinline!box (c,x. let box (y. R') = c in 3 * x + (R' with (2 * x)))!.
This would then generate the code 
\lstinline!box(y. let box (y. R') = box(y. y + 2) in 3 * y + (R' with (2 * y)))!
which contains an
administrative redex.
% due to a boxed contextual type instead of
% a multi-level contextual type. 
Our work avoids the generation of any
administrative redexes and hence generates code as intended and
envisioned by the programmer. 

We further extend this idea of code templates uniformly to type
templates, i.e.~we are able to describe the skeleton and shape not only of
code but also of types themselves. As a consequence, \Moebius{}
sits between System F and System F$_\omega$ where we support term and
type-level computation.

%We present a multi-level modal lambda-calculus where we define a
%general substitution operation for variables defined
% based on
%\lstinline[basicstyle=\ttfamily\footnotesize]!box! and
%\lstinline[basicstyle=\ttfamily\footnotesize]!let-box!.  

\textbf{3) Pattern matching on code.}
Third, we extend the multi-level contextual modal lambda-calculus with
  support for pattern matching on code. This follows ideas in \Beluga{}
  \cite{Pientka:POPL08,Pientka:PPDP08} 
  where pattern matching on higher-order abstract syntax (HOAS) trees
  is supported. However, in \Beluga{}, we separate the language in which
  we write programs from the language that describes syntax. In \Moebius,
  such a distinction does not exist and meta-programs can
  pattern match on code that represents another (meta-)program in the same
  language.
  We view the syntax of code through the lenses of higher-order abstract syntax
  treating variables abstractly and characterize pattern variables
  using multi-level contextual types. In fact, multi-level contextual
  types are the key to characterizing pattern variables in code --
  especially code that may itself contain
  \lstinline!box!-expressions and \lstinline!let box! expressions!
%  Fundamentally, levels allow us to stratify variables and distinguish
%  (pattern) variables whose value is computed during run-time and
%  local variables in open code which are instantiated when code is spliced
%  into another code fragment.  

   Pattern matching on polymorphic code may refine type variables, as
   is typical in indexed or dependently typed systems. For example, when
   we pattern match on a piece of code of type
   \lstinline![x:'a |- 'a]!, then
   one of the branches may have the pattern \lstinline!box(x. 0)!,
   which has type \lstinline![x:int |- int]!. Hence, we need to use
   the constraint \lstinline!'a = int! to type check the body of the
   branch. We, therefore, extend our multi-level contextual modal lambda
   calculus to generate and track type constraints, and type-check a
   given expression modulo constraints. Despite the delicate issues
   that arise in supporting System F style polymorphism and pattern
   matching on code, our operational semantics and the accompanying
   type preservation proof is surprisingly compact and clean.

% The program that generates code will then change:
%
% \begin{lstlisting}
% let box (y. R) = box (y. y + 2) in 
% let box (x,c. U) = box (x,c . let box (y.R') = c in 3x + R' with (2x)) in 
%   box (y. U with y, box (y.R))
% \end{lstlisting}
% which results in: \lstinline[basicstyle=\ttfamily\footnotesize]!box(y.let box(y.R') = box (y.y+2) in 3x + R' with (2x))!. We now have another type of administrative redex that will only reduce when we evaluate and run the code fragment. Our work generalizes  contextual modal lambda-calculus described in \cite{Nanevski:TOCL08} to a multi-level modal lambda-calculus. which avoids generating hidden administrative redexes that stem from the contextual modal type. Our work builds on the multi-level contextual type theory described in \cite{Boespflug:LFMTP11} and distinguishes between different levels. However, unlike the work by \citet{Boespflug:LFMTP11} we give a practical foundation for polymorphic multi-staged programming based on \lstinline[basicstyle=\ttfamily\footnotesize]!box! and \lstinline[basicstyle=\ttfamily\footnotesize]!let-box!. 
% %

%\paragraph{Outlook}
\medskip

We view our work as a step towards building a general
type-theoretic foundation for multi-staged metaprogramming, which enforces strong type guarantees and makes it
easy to generate and manipulate code. This will allow us to exploit
the full potential of metaprogramming without sacrificing the reliability
of and trust in the code we are producing and running.

%%% Local Variables:
%%% mode: latex
%%% TeX-master: mstages
%%% End:

\section{Motivation}\label{sec:motivation}
To illustrate the design and capabilities of \Moebius, we discuss
several examples below.

\subsection{Example: Generating Open Polymorphic Code}
First, we implement the function \lstinline!nth! which generates
the code to look up the \lstinline!i!-th element in a polymorphic list
\lstinline!v! where \lstinline!v! is supplied at a later (next)
stage. 

\begin{lstlisting}
nth : int -> ['a:*, v:'a list |- 'a]
nth n = if n <= 0 then
          box('a,v. hd v )
        else
          let box ('a,v. X) = nth (n - 1) in box('a,v. X with 'a, tl v) )
\end{lstlisting}

The result of the computation has 
the contextual type \lstinline!['a:*, v:'a list |- 'a]!. This
type describes open code at level \(1\) that has type
\lstinline!'a! and depends on variables from level \(0\), namely
\lstinline!'a:*, v:'a list!. In general, a contextual type
\lstinline![$\Psi$ $\atl{n}$ T]! characterizes a code template at level $n$
of type \lstinline!T!. This code template may depend on locally bound
variables in $\Psi$ which contains variables at levels strictly lower than
$n$. The code template may also refer to outer variables that are
greater or equal to $n$ and whose values are computed during run-time. This principle is what underlies the design
of the multi-level modal lambda-calculus and is worth highlighting:

\begin{center}
\fbox{
  \begin{tabular}{p{12cm}}
\emph{Mantra: a code template $\cbox{\Psi \atl{n} T}$ at level $n$, can depend on locally bound
    variables $\Psi$ from levels less than $n$ and outer variables that have levels greater or equal to $n$.}
  \end{tabular}}
\end{center}

In the above example, the result that we return in the recursive case
is \lstinline!box('a,v. X with 'a, tl v)!. The code inside the
\lstinline!box!, depends on the locally bound variables 
\lstinline!'a! and \lstinline!v! (all of which have level $0$) and uses the outer variable
\lstinline!X! (which has level $1$). During runtime, \lstinline!X!
will be bound to the result of the recursive call \lstinline!nth (n-1)!.
Given an
\lstinline!n!, the program \lstinline!nth! will recursively build up the code 
\begin{center}
\lstinline!box('b, v. hd (!$\underset {n}{\underbrace{\mbox{\lstinline!tl (.. (tl!}}}$\lstinline! v))))!
\end{center}
Ultimately producing a code template that depends only on variables at
level $0$. 

Subsequently, we write $\atl{n}$ in a contextual type, if we want to make
explicit the level at which a term is
well-typed, but we will mostly omit the levels when they can be
easily inferred.

% Note that recursive call \lstinline!nth (n-1)! produces 

% \begin{center}
% \lstinline!box('a, l. hd (!$\underset{n-1}{\underbrace{\mbox{\lstinline!tl (.. (tl!}}}  $\lstinline! l)))!
% \end{center}

% where we use distinct variable names \lstinline!'b! (for the local
% type variable) and \lstinline!v! (for the local list).
% We splice in the result of the recursive call together with a substitution for the local
% variables \lstinline!'b! and \lstinline!l!. This is written in our
% concrete syntax using \lstinline!with! keyword.  The substitution renames the type variable
% \lstinline!'b! to \lstinline!'a! and replace the variable
% \lstinline!l! with \lstinline!(tl v)!. 

% Conceptually, all variables are modelled uniformly; whereas the result
% characterizes code at level \(1\), i.e., it depends on the type variable
% \lstinline!'a! and the term variable \lstinline!v:'a list! and has the
% contextual type
% \lstinline!['a:*, v:'a list |-$^1$ int]!, the input \lstinline!n!
% describes an expression at level \(0\) of type \lstinline!int! -- or more
% precisely \lstinline![ |-$^0$ int]!. As there are no variables at
% level below $0$, we simply write \lstinline!int!.
% We omit the level information often in our concrete programs, as it
% often can be inferred.

% We note that in principle, we could generate closed code that has the
% polymorphic function type 
% $('a:*) -> 'a list -> int$ instead of the given contextual type
% above. This means that the recursive call would generate a function
% that is subsequently applied.

% Do we need this example?
%\begin{lstlisting}
%apply : nat -> [a:*, f:a -> a, x:a |- a] =
%apply n = if n <= 0 then
%            box( a, f, x . x )
%         else
%            let box (a,f,x. R) = apply (n-1) in
%            box( a, f, x . f R )
%\end{lstlisting}

\subsection{Example: Combining Code Templates}
Next, we 
generate code that depends on two other code templates, \lstinline!c! and
\lstinline!d!, both of which have the contextual type
\lstinline!(x:int |-$^1$ int)!. 
The type of these templates is at level $1$, as they depend on the
variable \lstinline!x:int!, which is at level $0$. Hence, the overall
generated code template lives at level $2$.

In this simple example, we combine the two templates \lstinline!c! and
\lstinline!d! in different ways depending on whether the input to the
function \lstinline!combine! is
\lstinline!true! or \lstinline!false!. From a computational view, if
the input evaluates to \lstinline!true!, then we generate code that, during
runtime, will first evaluate the template
\lstinline!d! and subsequently pass its result to the template
\lstinline!c!. If the input evaluates to \lstinline!false!, then we do
the opposite, i.e., we will first evaluate the template \lstinline!c! and
then pass its result to \lstinline!d!.

%We use here a function application
%\lstinline!(fun y -> c with y) (d with x)! to model a let-expression
%\lstinline!let y = d with x in c with y!.

% Given an int,
% we generate a macro that combines two code fragments
% depending on whether the input flag is true or false
% this essentially changes how we combine the two code fragments

\begin{lstlisting}
combine : bool -> [c:(x:int |-$^1$ int) , d:(x:int |-$^1$ int), x:int |-$^2$ int]
combine p = if p then
              box(c,d,x. (fun y -> c with y) (d with x))
            else
              box(c,d,x. (fun y -> d with y) (c with x))   
\end{lstlisting}

One might wonder whether we could have generated the code
\lstinline!box(c,d,x. c with (d with x))! instead of
\lstinline!box(c,d,x. (fun y -> c with y) (d with x))!.
To understand the difference in the runtime behaviour, we use the
following instantiation:  \lstinline!(x. x+2*x)!
for \lstinline!c!, \lstinline!(x. x*3)! for \lstinline!d!, and
\lstinline!3! for \lstinline!x!.

When we use the code
\lstinline!box(c,d,x. (fun y -> c with y) (d with x))! with
\lstinline!(x. x+2*x), (x. x*3), 3!, then we run the code
\lstinline!(fun y -> y + 2*y) (3*3)!, which will first evaluate
\lstinline!3*3! to \lstinline!9! and then compute \lstinline!9 + 2*9!,
which returns $27$. When we use \lstinline!box(c,d,x. c with (d with x))!
with the same instantiation, we will evaluate the code
\lstinline!3*3 + 2*(3*3)!, i.e. we will evaluate the code
\lstinline!3*3!, which we obtain by instantiating \lstinline!d with x!
twice! Effectively, we are using a call-by-name evaluation strategy,
as the execution of \lstinline!3*3! gets delayed. This example hence
illustrates yet another difference between closures and
function applications.

Last, this example also highlights the difference between levels and
stages. The result for \lstinline!combine! is a contextual type at
level $2$, since it depends on other code templates. However, we are
generating the code in one stage.

% let box (x,c,d. B) = foo true
%  box(x. B with pow 2 , add 3) ===> box(x. (fn y -> y * y * 1) (x + 3))
%  box(x. B with pow 2 , add 3) ===> box(x. (fn y -> y + 3) (y * y * 1)

The difference between levels and stages can also be understood from a
logical perspective: stages characterize \emph{when code is
generated}; as such they describe a property of a boxed type in a
positive position. Levels express a property of the assumptions
(i.e. negative occurrences) that are used in a boxed type, namely that
\emph{how code is used}. In particular, levels allow us to express
directly and accurately the fact that code may depend not only on
closed values, but also on open code. This allows us to describe code
that may itself depend on other code. 
% that 
% tells us
% something about the surrounding context where the code will be spliced
% in. 

\subsection{Example: Type Templates}
We support System-F style polymorphism in \Moebius. As a consequence,
we may not only want to describe open code, but also types
that are open (i.e. type templates). For example, consider
the contextual type

\begin{lstlisting}
['a:('c:* |- *), f: $\forall$'b. 'b -> ('a with 'b) -> int |- int]
\end{lstlisting}

This describes a piece of code that relies on the polymorphic
function \lstinline!f!, which computes an \lstinline!int!. Here \lstinline!'a! describes a
type template -- it stands for some type that has one free type
variable. We again associate \lstinline!'a! with a substitution which,
in this example, renames the variable \lstinline!'c! in the type template to be
\lstinline!'b!. As a consequence, \lstinline!f! in fact stands for a family
of functions! If we instantiate \lstinline!'a! with the type template
\lstinline!'c.'c!, then \lstinline!f! has type
\lstinline!$\forall$ 'b. 'b -> 'b -> int!. If we instantiate it with
\lstinline!'c. 'c -> c'!, then \lstinline!f! stands for a function of
type \lstinline!$\forall$ b'. b' -> (b' -> b') -> int!.
In System F$_\omega$, we would have declared this type as:
\lstinline!['a: * -> *, f:$\forall$ 'b. 'b -> 'a 'b -> int |- int]!

The support of type templates that have contextual kinds means that
\Moebius{} sits between System F$_\omega$ where we have type level
functions and System F. Using type variables that have a contextual
kind brings a distinct advantage over System F$_\omega$: as we apply
the substitution associated with the type variable \lstinline!'a! as
soon as we know its instantiation, we can compare two types simply by
structural equality, and we do not need to reason about type-level
computation. 

The ability to characterize holes in types and terms is particularly
important when we consider pattern matching on code. In our setting,
polymorphic programs may contain explicit type applications, and hence,
we not only pattern match on terms but also on types. 
%, since we need to describe a pattern that will allow us to match
%on code \lstinline!box($\Lambda$ 'a. fn x -> id 'a x)! 

%Type templates also allow us to deal with cross-stage persistence of
%type variables explicitly which we discuss in the next section.

\subsection{Example: Lift Polymorphic Data Structures}
In \Moebius, as in other similar frameworks, we need to lift values
explicitly to the next stage. Lifting integers is done by
simply traversing the input and turning it into its syntax
representation.  This is straightforward. We write here \lstinline![int]! as an abbreviation for
\lstinline![ |-$^1$ int]!, which describes a closed integer at level
$1$, i.e. the minimum level that generated code can have.

\begin{lstlisting}
lift_int : int -> [int]
lift_int n = if n = 0 then box(0) else let box(X) = lift_int (n - 1) in box(X + 1)
\end{lstlisting}

To lift polymorphic lists, we need to lift values of type
\lstinline!'a! to their syntactic representations and then lift lists
themselves. This generic lifting function for values of type
\lstinline!'a! will intuitively have the
type \lstinline!'a -> ['a]!. But how can we ensure that the type
variable \lstinline!'a! can be used inside a contextual type at level
$1$? -- To put it differently, how can we guarantee that the type
variable persists when we transition inside the contextual box type?
We again use the intuition that \emph{code and types at stage $n$, have
locally bound variables from levels less than $n$, but they may depend on outer
variables that have levels greater or equal to $n$.} For \lstinline!['a]!
to be a well-formed type, the type variable \lstinline!'a! must be
declared at level $1$ or higher. 

%As a consequence, we treat both
%kinds of variable declarations uniformly; this will be essential when
%we scale the foundation to dependently typed languages where the
%distinction between terms and types does not really exist.

Hence, we declare the type variable \lstinline!'a! as a type template 
\lstinline!'a: ($~$|- *)!. We omit here again the level $1$, since it
can be inferred from the surrounding context. 
%However, we emphasize that these type templates can now be used at level $0$ and at level $1$.
% Values of this type live at level \(1\) -- which means they
% can be used at level \(1\) or below.

\begin{lstlisting}
lift_list : ('a:( $~$|- *)) -> 'a list -> ('a -> ['a]) -> ['a list]
lift_list l lift = match l with
| v[]v    -> box(v[]v)
| x::xs -> let box X = lift x in let box XS = lift_list xs lift in box(X::XS)
\end{lstlisting}

For terms, we have a natural interpretation of running and evaluating
code. This rests on the idea that a term at stage 1 can always be used
at stage 0.

\begin{lstlisting}
eval_int : [int] -> int
eval_int x = let box X = x in X

eval_list : ('a:( |- *)) -> ['a list] -> 'a list
eval_list v = let box V = v in V
\end{lstlisting}

Similarly, we can run and evaluate polymorphic code. Again we rely on
the idea that the type variable \lstinline!a! is describing types at
level \(1\) or below.
% One might ask whether a similar notion of evaluating
%a type at level \(0\) makes sense: the answer is yes. While in our setting
%we do not allow type-level computation, this is certainly
%present in dependently typed languages, and running a type at
%level \(0\) corresponds to evaluating any type-level computation.

This concept also allows us to lift the result of a function
\lstinline!rev ('a:*) -> 'a list -> 'a list!. In the code below, we
simply call \lstinline!rev! with \lstinline!'a! (or more precisely
with \lstinline!'a with .!).

\begin{lstlisting}
lift_rev_list : ('a:( $~$|- *)) -> 'a list -> ('a -> ['a]) -> ['a list] 
lift_rev_list l = lift_list (rev l)  
\end{lstlisting}

%\subsection{Example:Promoting type variables and polymorphic functions}
% \Moebius~ has fine grained control over the level of type
% variables. For example, the program \lstinline!lift_list! takes in a
% type template \lstinline!'a:($~$|- *)!.  How would we define 
% a function \lstinline!lift_rev_list!  that simply uses the list
% reversal function \lstinline!rev:('a:*) -> 'a list -> 'a list!? 
% -- The following program would be ill-typed:

% \begin{lstlisting}
% lift_rev_list l = lift_list (rev l)  
% \end{lstlisting}

% However, we can promote \lstinline!rev! first. We note that a type
% variable \lstinline!'a:( |- *)! which is at level $1$ can be used at
% level $0$ by associating it with the empty substitution. Hence, we can
% call \lstinline!rev! by passing the type variable 
% \lstinline!('a with .)! to it -- or to keep it more compact, we simply
% write \lstinline!'a! as we omit the identity substitution associated
% with it.

% \begin{lstlisting}
% rev' ('a:( $~$ |- *)) -> 'a list -> 'a list
% rev' l = rev l

% lift_rev_list : ('a:( $~$|- *)) -> 'a list -> ('a -> ['a]) -> ['a list]
% lift_rev_list l = lift_list (rev' l)
% \end{lstlisting}

% In fact, we can always promote  type variables at level $n$ to a
% level higher than $n$. 

%% LONG VERSION
\cprotect\LONGVERSION{
\subsection{Example:  Staged Polymorphic Map}
With the ability to generate polymorphic code, there are many useful
programs we can write, for example a staged polymorphic map
function. We omit again the levels, as they can be inferred. 
% We make the levels at which contextual types live explicit
% for levels greater than \(0\). 
% map : (a:( |-$^1$ *) -> a list -> (a -> [ |-$^1$ a]) -> [b:*, f: a -> b |-$^1$ b list]
\begin{lstlisting}
map : (a:( |- *)) -> a list -> (a -> [a]) -> [b:*, f: a -> b |- b list]
map l lift = match l with
| v[]v    -> box( b, f . v[]v )
| x::xs -> let box X = lift x in
           let box (b,f. XS) = (map xs lift) in
             box(b, f . f (X :: XS))
\end{lstlisting}

We again exploit the fact that we declare type variable \lstinline!a! at level
\(1\), so we can use it inside a boxed contextual type. While the type
variable \lstinline!b! is locally bound, the type variable
\lstinline!a! is bound outside.
}

% Similarly, we can implement a staged version of fold.

% \begin{lstlisting}
% fold : (a:( |-$^1$ *)) -> a list -> (a -> [ |-$^1$ a)]) -> [b:*, g:a -> b -> b, base:b |- b]
% fold l lift = match l with
% | v[]v    -> box( b, g, base . base )
% | x::xs -> let box (b,g,base. R) = (fold xs lift) in
%            let box X = lift x in
%              box( b, g, base. R with _, g, g X base )
% \end{lstlisting}

% Here we see the power of using an open code with an associated
% substitution. While we simply rename the type variable \lstinline!b!
% and the name of the function \lstinline!g!, we instantiate the
% variable \lstinline!base! with the term \lstinline!g X base!. The
% delayed substitutions that we associate with a contextual variable
% will be eagerly applied as soon as we know what \lstinline!R! stands
% for. The alternative to contextual types and contextual variables
% would be to represent code fragments using functions that abstract
% over \lstinline!b!, \lstinline!g!, and \lstinline!base!. However, to
% use this function we need to apply it -- this function application is
% however not done eagerly as soon as we know what the function stands
% for. It will only happen when we run the generated code. As a
% consequence, the actual generated code will contain these
% administrative redexes that will always reduce. Associating contextual
% variables with a substitution anticipates the creation and reduction
% of the redex and avoids it by creating a closure with a suspended
% substitution.

\subsection{Example: Multi-Staged Polymorphic Code}
We now give an example of multi-staged polymorphic code generation. Here, names
\lstinline!l! and \lstinline!liftA! in the function type are to help
the understanding, and are not part of our syntax. It is a multi-staged version of
\lstinline!map_reduce!. We annotate all the contextual types and kinds
with their level, although the level can be inferred from where the
type variable is used. It serves as another illustration that the
level manages variable dependencies and cross-stage persistence\cite{Taha:TCS00}, while
the staging manages when code is generated. In the code below, we omit
writing the identity substitution that is associated with variables
writing for example simply \lstinline!R! instead of \lstinline!(R with 'b,f,liftB)!.

\begin{lstlisting}
map_reduce : ('a:( |-$^2\;$*)) -> (l:'a list) -> (liftA:'a -> [ |-$^2\;$'a])
          -> ['b:( |-$^1\;$*), f:'a -> 'b, liftB:'b ->[ |-$^2\;$'b] 
                                  |-$^2\;$['c:*, g:'b ->'c ->'c, base:'c |-$^1\;$'c]]
map_reduce l liftA = match l with
| v[]v    -> box ('b,f,liftB. box('c,g,base. base) )
| x::xs ->
  let box ('b,f,liftB. R) = map_reduce xs liftA in
  let box A = liftA x in box ('b,f,liftB.
    let box ('c,g,base. M) =  R in
    let box X' = liftB (f A) in box ('c,g,base. M with 'c, g, (g X' base)))
\end{lstlisting}
%%
%This example illustrates an important difference between the level and
%the stage: The \emph{level} associated with a contextual type describes \emph{the
%stage at which we can use a value of that type}; the stage refers to
%\emph{when code is generated}. 
Multi-stage code generation generates
code incrementally in several stages. For example,
\lstinline!map_reduce! generates the final code in \(2\) stages. This is
evident, since the return type has a boxed type nested inside another
boxed type.

% As noted earlier, the level associated with a contextual type and the
% stage do not necessarily need to be the same. For example, the
% previous function \lstinline!combine! generated code of level \(2\) in one
% stage. We emphasize that the level $n$ associated with a variable
% declaration allows us, on the one hand, to control local bound variable
% dependencies and, on the other hand, to cleanly manage usage and
% cross-stage persistence of that variable.

In the function declaration of \lstinline!map_reduce!, we declare
\lstinline!('a:( |-$^2$ *))!, as we want to use \lstinline!a! at any
stage \(2\) or below. However, the other declarations
\lstinline!l! and \lstinline!liftA! are declared at level $0$ -- they
will be inaccessible inside a box-expression. Therefore, this
forces us to lift the head of the list (\lstinline!liftA x!) first
before we build our result, as when we enter the scope of
\lstinline!'b:( |-$^1$ *); f:'a -> 'b, liftB:'b -> [ |-$^2$ 'b] !
we only keep the declaration \lstinline!('a:( |-$^2$ *))!, but the
outer declarations at level \(0\) get dropped as they become
inaccessible. Similarly, when we enter the inner contextual box
\lstinline!['c:*, g:'b -> 'c -> 'c, base:'c |-$^1$ 'c]!, we will drop all the
outer assumptions at level \(0\) (as they get overwritten by the new local
context \lstinline!'c:*, g:'b -> 'c -> 'c, base:'c!, but assumptions at levels higher than
\(0\) remain accessible. The levels, therefore, manage scope dependencies --
especially as we cross different stages.

\subsection{Example: Working with Church Encodings and Pattern Matching}
%The last example is a simple example illustrating code pattern
%matching. % To keep things simple, we only show here how to inspect generated
% code that contains an arithmetic expression. This follows closely
% ideas developed in the \Beluga{} language \cite{Pientka:POPL08} which
% supports inspection of HOAS trees within a context. Here our
% intention is to analyze a product expression that has the form:
% \lstinline!x * (x * (x * x * ... (x * 1)))!.

% \begin{lstlisting}
% simplify_prod: [x:int |- int] -> bool
% simplify_prod m = match m with
% | box (x. x * 1) -> box (x.x)
% | box (x. x * Y) -> let box (x.R) = simplify_prod (box(x. Y))
%                        in box(x.x * R)
% \end{lstlisting}
% In the first case we
% simplify the expression \lstinline!x+1! to \lstinline!x!.
% In the second case we recursively analyze the product denoted by
% \lstinline!Y!. Note that we simply write \lstinline!Y! when
% \lstinline!Y! is associate the identity substitution (i.e.
% \lstinline!Y with x!. We now recursively simplify \lstinline!box(x.Y)!
% to obtain \lstinline!R! and then rebuild the product.

The last example illustrates the benefits of generating code as
intended by programmers and the use of pattern matching. We represent
the type of natural numbers using Church encoding 
as \lstinline!['a:*, x:'a, f:'a -> 'a |- 'a]!.  
However, unlike the usual Church encoding where we
use functions and function application, we exploit the power of
contextual types instead. This then allows for elegant implementations
of addition or other arithmetic operations. We will again omit writing
the identity substitutions, i.e. for example, writing simply
\lstinline!N! instead of \lstinline!(N with 'a,x,f)!.

\begin{lstlisting}
gen_church: int -> ['a:*,x:'a, f:'a -> 'a |- 'a] 
gen_church n = if n = 0 then box('a,x,f. x) 
               else let box('a,x,f. N) = gen_church (n-1) in box('a,x,f. f N)

add: ['a:*,x:'a, f:'a -> 'a |- 'a] -> ['a:*,x:'a, f:'a -> 'a |- 'a] 
  -> ['a:*,x:'a, f:'a -> 'a |- 'a]
add n m = let box ('a,x,f. N) = n in 
          let box ('a,x,f. M) = m in box('a,x,f. N with 'a, M, f)
\end{lstlisting}

Working with the Church encoding directly as syntactic representations
again allows us to generate code as the programmer intended. For
example adding the Church encoding of $2$ and the Church encoding of
$3$ will yield the Church encoding of $5$ -- not some program that
will evaluate to $5$ when we run the result. As we know the structure
of the Church encoding for numbers, we can also inspect it via pattern
matching. This opens new possibilities to implement the predecessor
function directly via pattern matching instead of the usual more
painful solution when we work with functions and function applications.  

\begin{lstlisting}
pred n = case n of
| box('a,x,f. x) -> box('a,x,f. x)
| box('a,x,f. f X) -> box('a,x,f. X)
\end{lstlisting}

The first case captures the fact that the input \lstinline!n! is a
representation of $0$, and we simply return it. In the second case, we
know that it is an application \lstinline!f X! where $X$ is the
pattern variable denoting a code template of type
\lstinline!('a:*,x:'a,f:'a -> 'a |- 'a)!. 
We hence simply strip off the \lstinline!f!. 

%In general, we want to pattern match on polymorphic code
%fragments that, for example, have type \lstinline![x:'a |- 'a]!. This means that
We delay a more in-depth discussion of pattern matching on code to
Sec.~\ref{sec:PatMatch}.

% \Beluga{} supports pattern matching on a relatively simple language of
% code, i.e. terms from the logical framework LF. In our setting, we want
% to support pattern matching of  \Moebius{} code itself. In particular,
% we may want to inspect code fragments that contain \lstinline!box! and
% \lstinline!let-box! itself!
% We delay a more in-depth discussion of pattern matching on code to Sec.~\ref{sec:patternmatch}.

% % A natural question to
% % ask is whether we can also write a general program transformation%
% %
% %
% \begin{lstlisting}
% dead_code: (g:ctx(n)) -> ('a:( |- *)) -> [g |- 'a] -> [g |- 'a]
% dead_code m = match m with
% | box (g. fn x -> E)
%   where g:ctx(n), E:(g,x:'b1 |- 'b2)
%  ->
% | box (g. Fn 'a -> E)
%   where g:ctx(n), E:(g,x:'b1 |- 'b2)
%  ->
% | box (g. E1 E2)
%   where g:ctx(n), E1:(g |- 'b2 -> 'b), E2:(g |- 'b2)
% ->
% | box (g. E (h.A)) ->
%   where g:ctx(n), h:ctx(k), E:(g |- ('a:(h |- *)) -> 'b)
%         A:(g|$_n$, h |- *)

% | box (g. box (h.E))
%   where g:ctx(n), h:ctx(k), E:(g |- ('a:(h |- *)) -> 'b)
%  ->

% | box (g. let box (h.X) = E in F)->
%   where g:ctx(n), h:ctx(k), E:(g |- [h |- 'b]),
%         F:(g, X:(h |- 'b) |- 'a)

% | box (g. match E with [(h'. box(g'.P) [g' |- A] -> F)] ->
%   where g:ctx(n), g':ctx(k), h':ctx(k+1), E:(g |- [g' |- 'a]), A:(h',g' |- *)
%   For Each : g|$_k$,h' |- match (g' |- 'a) = (g' |- A) / g$_i$
%              F:(g$_i$ |- 'b) ->
% \end{lstlisting}

% \begin{lstlisting}
% simplify : ($\gamma$:ctx) (n:level) [$\gamma$ |-$^n$ 'a] -> [$\gamma$ |-$^n$ 'a]
% simplify $\gamma$ n e = match e with
% | box($\gamma$ . fn x -> e) : 'a1 -> 'a2 ->
%   let box ($\gamma$, x. F) = simplify ($\gamma$, x:a1) box($\gamma$,
% x. e)
%   in box($\gamma$. fn x -> F)
% |
% \end{lstlisting}

% \begin{lstlisting}
% factor : (\gamma : ctx_stack) => (\alpha : *) => [\gamma |- \alpha] =>
% \end{lstlisting}

\section{A Multi-level Modal Lambda-Calculus with Polymorphism}
We describe first a modal polymorphic lambda-calculus where we use
multi-level contextual types to characterize open code. This calculus
serves as a foundation to generate polymorphic open code. We then extend this
calculus to support pattern matching on polymorphic code in Sec.~\ref{sec:PatMatch}.

\begin{figure}[ht]
\[
\begin{array}{llcl}
% \mbox{Kinds} & \kappa & \bnfas & (\Psi\typ
\mbox{Types} & T,S & \bnfas & \alpha[\sigma] \bnfalt T_1 \arrow T_2 \bnfalt
                   (\alpha{:}(\Psi\atl{n} *)) \arrow T \bnfalt \cbox{\Psi \atl{n} T}\\
\mbox{Terms} & e   & \bnfas & x[\sigma] \bnfalt \fn{x}{e} \bnfalt e_1\;e_2
\bnfalt \tfn{\alpha^n}{e} \bnfalt e\;(\Psihat^n.T) \bnfalt
%                              \bnfalt \rec{f{:}T}{E} \bnfalt
\\
& &              & \boxm {\Gammahat^n} e \bnfalt \letboxm{(\Gammahat^n.u)}{e_1}{e_2}
%                  \bnfalt \caseof{E} \overrightarrow{p_i \tto E_i}
\\[0.5em]
\hline
\\[-0.5em]
%    \mbox{Patterns}  & p   & \bnfas & x^n[\sigma] \bnfalt x \bnfalt \hash{p} \bnfalt \fn{x{:}T}{p} \bnfalt p_1\;p_2 \bnfalt %\rec{f{:}T} p \bnfalt \\
%&&& \boxm{\Gamma}{p} \bnfalt \letboxm{u}{p_1}{p_2} \bnfalt \caseof{E} \overrightarrow{p_i \tto E_i}\\\\
\mbox{Substitution} & \sigma & \bnfas & \cdot \bnfalt \sigma,
                                        \Psihat^n.e \bnfalt \sigma; x^n \bnfalt
                                        \sigma, \Psihat^n.T \bnfalt \sigma; \alpha^n \\
\mbox{Context} & \Gamma, \Psi, \Phi & \bnfas &
   \cdot \bnfalt \Gamma, x{:}(\Psi \atl{n} T)
         \bnfalt \Gamma, \alpha{:}(\Psi\atl{n}*)
\\
\mbox{Erased context} & \Gammahat,\Psihat, \Phihat & \bnfas &
\cdot \bnfalt \Gammahat, x^n \bnfalt \Gammahat, \alpha^n
\end{array}
\]
  \caption{Syntax of Multi-Level Modal Lambda-Calculus}
  \label{fig:synmoebius}
\end{figure}

\subsection{Syntax}
We describe the syntax of \Moebius{} in Fig.~\ref{fig:synmoebius}.
Central to \Moebius{} are multi-level contextual types
and kinds that describe code and type templates together with the context
in which they are meaningful. A multi-level contextual type $(\Psi
\atl{n} T)$ describes code of type $T$ in a context $\Psi$ where all
assumptions in $\Psi$ are themselves at levels below $n$. Variables at
level $n$ or above are viewed as global variables whose values will be
computed during run-time. Similarly, a multi-level
contextual kind $(\Psi \atl{n} *)$ describes a type fragment in a
context $\Psi$.

We treat all variable declarations in our context uniformly; hence our typing context
keeps track of variable declarations $x:(\Psi \atl{n} T)$ and type
declarations $\alpha:(\Psi \atl{n} *)$.
If $n=0$, then we recover our ordinary bound variables of type $T$
from $x:(\cdot \atl{0} T)$. As there are no possible declarations at a level below
$0$, this context is necessarily empty (in fact it doesn't even exist). Similarly, a type
declaration $\alpha:(\cdot \atl{0} *)$ denotes simply a type variable
$\alpha$.

A contextual type or term variable is associated with a substitution, written here as
$\alpha[\sigma]$ and $x[\sigma]$. 
Intuitively, if a contextual variable $x:(\Psi \atl{n} \_)$ is declared in some context
$\Phi$, then the substitution $\sigma$ provides instantiations for
variables in $\Psi$ in terms of $\Phi$. As soon as we know the instantiation for $x$, we apply the
substitution $\sigma$. We previously wrote  \lstinline!(x with $\sigma$) !
as concrete source syntax in our code examples in
Sec.~\ref{sec:motivation}. 
% For $n=0$, we omit the substitution, since 

%
\Moebius{} is a generalization of the modal lambda-calculus described by \citet{Davies:ACM01} or
\citet{Nanevski:TOCL08} which separates the global assumptions in a
meta-context from the local assumptions. This gives us a two-zone
representation of the modal lambda-calculus. \Moebius{} generalizes
this work to an $n$-ary zone representation. If $n=1$, then we obtain
the two-zone representation from the previous work. Variables at level
$1$ correspond to the meta-variables which live in the global
meta-context, and variables at level $0$ correspond to the ordinary
bound variables. 
%to variables at level $1$ in \Moebius{} and have type $(\Psi\atl{1}
%T)$; ordinary local variables correspond to 
%variables of level $0$ in \Moebius{}. 

Our notion of multi-level contextual types is also similar to the work by 
\citet{Boespflug:LFMTP11}. However, this work was concerned with
developing a multi-level contextual logical framework. We adopt the
ideas to polymorphic multi-staged programming.

\Moebius{} supports boxed contextual types, $\cbox{\Psi \atl{n} T}$
which describe a code template of type $T$ in the context $\Psi$. The
level $n$ enforces that the given template can only depend locally on
variables at levels below $n$. This will allow us to cleanly manage
variable dependencies between stages and will serve as a central guide
in the design of \Moebius.
In addition, we support function types (written as
$T_1 \arrow T_2$) and polymorphic function $(\alpha{:}(\Psi\atl{n} *))
\arrow T$ which abstract over type variables $\alpha$. The ordinary polymorphic function space is a special
case where $n=0$. As a type variable $\alpha$ stands in general for a
type-level template, we associate it with a substitution $\sigma$,
written as $\alpha[\sigma]$. As soon as we know what type 
variable $\alpha$ stands for, we splice it in and apply $\sigma$. This ensures that
the type makes sense in the context where it is used.
When $n=1$, we can model types in System F$_\omega$. For
example, in System F$_\omega$, we might define:
$(\alpha{:} * \arrow *) \arrow (\beta:*) \arrow \alpha ~\beta$. In
\Moebius, we declare it instead as:
$(\alpha{:}(\_{:}* \atl{1} *)) \arrow (\beta:*) \arrow \alpha[\beta]$
 avoiding type-level functions and creating a type-level function 
 application. 

Our core language of \Moebius{} includes
functions ($\fn{x}{e}$), function application ($e_1\;e_2)$,
type abstractions ($\tfn{\alpha^n}{e}$) and type application
($e\;(\Psihat^n.T)$). We carry the level $n$ of $\alpha$ as an
annotation; this is needed for technical reasons when we define
simultaneous substitution operations.
Further, we include $\m{box}$ and
$\m{let}\,\m{box}$ expressions to generate code and use code.
In particular, $\boxm{\Gammahat^n} e$ describes code $e$ together with
the list of (local) variables $\Gammahat^n$ which may be used in
$e$. The level $n$ states again that these bound variables are below level
$n$ and we ensure that $n$ here is greater than 0 during typing,
as an expression \(e\) at level \(0\) is an ordinary expression, not a code fragment.
To use a program $e_1$ that generates a code template, we bind a contextual
variable $u$ to the result of evaluation of $e_1$ and use the result
in a body $e_2$. This is accomplished via
$\letboxm{(\Gammahat^n.u)}{e_1}{e_2}$.

A substitution is formed by extending a given substitution with a
contextual term $(\Psihat^n.e)$, a contextual type $(\Psihat^n.T)$, a term variable $x$,
or a type variable $\alpha$.
The latter extensions are necessary to allow us to treat substitutions as
simultaneous substitutions. In particular, when pushing a substitution $\sigma$
inside a function $\tfn{\alpha^n}{e}$, we need to extend it with a mapping for
$\alpha$. However, we would need to eta-expand $\alpha$ based on its type to
obtain a proper contextual object, as $\alpha$ by itself would not be a
legitimate type. Therefore, we allow extensions of substitutions with
an identity mapping, which is written as $\sigma ; \alpha^n$. 
%The level
%$n$ will tell us where to insert $\alpha$ in the substitution, since we
%would only insert it to the rightmost end if $n=0$. 
Similarly,
if we push the substitution inside $\boxm {\Gammahat^n}{e}$ then we
may need to extend the simultaneous substitution with mappings for the
variables listed in $\Gammahat$, but we lack the type information to
expand a term variable $x$ to a proper contextual term. This issue is not new and is
handled similarly in \citet{Nanevski:TOCL08} in a slightly different
setting.

In contrast to the two-zone formulation of the modal lambda-calculus
in \citet{Nanevski:TOCL08}, which has different substitution
operations for ordinary bound variables (variables at level \(0\)) and
meta-variables (variables at level \(1\)), our substitution encompasses
both kinds of substitutions.

\paragraph{Remark 1:}
We define the level of a context: $\m{level}(\Psi) = n$ iff for all
declarations $x{:}(\_ \atl{k} \_)$ in $\Psi$ have $k < n$.
% not sure that we want $n$ to be the smallest; it would make the
% definition for level produce a unique n, but it then doesn't fit
% with the level annotation where no such requirement is stated; in
% fact we want |-^2 T where the left hand side is empty. Here we say
% that the empty context has level \(2\) (while the minimum would be \(0\)).
In other words, the level of context $\Psi$ ensures that all variables in $\Psi$ are at
strictly smaller levels. We often make the level explicit as a
superscript, $\Psi^n$, which describes a context $\Psi$ where
$\m{level}(\Psi) = n$. % and hence all declarations in $\Psi$ are 
% at some level $k$ where $k < n$.
Note that we abuse the functional notation here, since $\m{level}$ 
does not have a unique result for a given context $\Psi$.\\[-2em]
% is not a function, as there are multiple \(n\) for fixed \(\Psi\).

% or $\sigma ;
% x$ respectively.

\paragraph{Remark 2:} The erasure of type
annotations from the context $\Psi^n$ drops all the type and kind
declarations from $\Psi$, but retains the level to obtain $\Psihat^n$.  
%Note that
%we keep the level annotation and annotate each variable with its
%level.
\newcommand{\erase}[1]{\m{erase}(#1)}
\LONGVERSION{
\[
  \begin{array}{lcl}
    \erase{\cdot} & = & \cdot\\
    \erase{\Psi,x{:}(\Phi \atl{n}T)} & = & \erase{\Psi}, x^n\\
    \erase{\Psi,\alpha{:}(\Phi \atl{n}*)} & = & \erase{\Psi}, \alpha^n
  \end{array}
\]

}
Abusing notation, we often simply write $\Psihat^n$ for the erasure of type annotations
from the context $\Psi^n$\LONGVERSION{ instead of $\erase{\Psi} = \Psihat$}.\\[-2em]

\paragraph{Remark 3:} We write $\id(\Phihat)$ for the identity substitution
that has domain $\Phi$. It is defined as follows:
\newlength{\oldabovedisplayskip}
\newlength{\oldbelowdisplayskip}
\newlength{\oldabovedisplayshortskip}
\newlength{\oldbelowdisplayshortskip}
\setlength{\oldabovedisplayskip}{\abovedisplayskip}
\setlength{\oldbelowdisplayskip}{\belowdisplayskip}
\setlength{\oldabovedisplayshortskip}{\abovedisplayshortskip}
\setlength{\oldbelowdisplayshortskip}{\belowdisplayshortskip}
\setlength{\abovedisplayskip}{0.5em}
\setlength{\belowdisplayskip}{0.5em}
\setlength{\abovedisplayshortskip}{0.5em}
\setlength{\belowdisplayshortskip}{0.5em}
\[
  \begin{array}{lcl}
\id(\cdot)           & = & \cdot\\
\id(\Phihat,~x^n)         & = & \id(\Phihat); x^n\\
\id(\Phihat,~\alpha^n) & = & \id(\Phihat);\alpha^n
  \end{array}
\]

\LONGVERSION{\paragraph{Remark 4:} We sometimes write simply $x{:}T$ for
$x{:}(\Psi \atl{0} T)$ for readability; similarly we write simply
$x$ for $x[\id{(\Psihat)}]$, i.e. when $x$ is associated with the
identity substitution and $\m{erasure}(\Psi) = \Psihat$.}
\setlength{\abovedisplayskip}{\oldabovedisplayskip}
\setlength{\belowdisplayskip}{\oldbelowdisplayskip}
\setlength{\abovedisplayshortskip}{\oldabovedisplayshortskip}
\setlength{\belowdisplayshortskip}{\oldbelowdisplayshortskip}

Before moving to the typing rules, we take a closer look first at our typing
contexts and then at our substitutions, because they are both key parts of
the design of our language and they hopefully give a good general intuition
about the way the system works.

%%%% Until here revisions -bp
\subsection{Context operations}
%% Before moving to the typing rules, we take a closer look at our typing
%% contexts. 
% 
First, we note that we maintain order in a context and hence
contexts must be sorted according to the level 
of assumptions $x{:}(\Psi \atl{n} T)$. One should conceptualize this
ordered context as a stack of sub-contexts, one for each level of
variables. Let $\Psi(k)$ be the subcontext of $\Psi^n$ with only
assumptions of level $k$. Then, $\Psi^n = \Psi(n-1), \Psi(n-2),
\ldots, \Psi(1), \Psi(0)$. Slightly abusing notation we write
$\ctxappend{\Psi(i)}{\Psi(i-1)}$ for appending the context $\Psi(i)$
and the context $\Psi(i-1)$. 
Note that $\Psi(0)$ contains only
declarations of type $(\cdot \atl{0} T)$ and we recover our ordinary typing context which
simply contains variable declarations $x{:}T$. The
context $\Psi^n$ is essentially an $n$-ary zone generalization of the two-zone context present in
\citet{Davies:ACM01} work. We can recover their two zones when
$n=1$. We opt here for a flattened presentation of the $n$-ary zones
of contexts in order to simplify operations on contexts. Keeping the
context sorted not only provides conceptual guidance, but also helps
us to handle
cleanly the dependencies among variable declarations.

Keeping the context sorted comes at a cost: extending a context must
preserve this invariant. However,
restricting a context, which is necessary as we move up one stage,
i.e. we move inside a box, is simpler. 
%When we check that $\boxm{\Psihat}{e}$ has type $\cbox{\Psi
%  \atl{n} T}$  in a context $\Gamma$ where $\m{level}(\Psi) = n$, we observe that $e$ can depend
% on variables only from $\Gamma$ that are greater or equal to $n$, as
% all variables below $n$ are locally bound in $\Psi$. Intuitively, we
% need to check that $e$ is well-typed in a context $\Gamma',\Psi$ where
% $\Gamma'$ is the context where all variables below level $n$ are
% dropped from $\Gamma$. 
%
We therefore define and explain two main operations on
context: restricting a context and merging two contexts.
To chop off all variables below level $n$ from an ordered context $\Gamma$,
we write $\ctxrestrict{\Gamma}{n}$. 
% Respectively, to chop off
% variables above or equal to level $n$, we write
% \(\ctxignore{\Gamma}{n}\). 
%
To merge two ordered contexts $\Gamma$ and $\Psi$ we write $\ctxmerge{\Gamma}{\Psi}$.
With merging defined, insertion of a new assumption
into a context is a special case. For compactness, we write $K$ for a type or a
kind in the definition of the context operations in Fig.~\ref{fig:ctxoperations}.

\begin{figure}[tb]
\[
\begin{array}{lcll}
\multicolumn{3}{l}{\mbox{Merging contexts:} \quad \ctxmerge{\Psi}{\Phi} = \Gamma }\\
    \ctxmerge{\cdot}{\Phi} &= &\Phi \\
    \ctxmerge{\Psi}{\cdot} &= &\Psi \\
    \ctxmerge{\Psi, x {:} (\Gamma \atl{n} K)}{\Phi, y {:}({\Gamma'} \atl{k} K')} &= &
    (\ctxmerge{\Psi, x {:} (\Gamma\atl{n} K)}{\Phi}), y {:}({\Gamma'} \atl{k} K')
    &\text{if $k  \leq n$} % BP: changed to \leq; was <
    \\
    \ctxmerge{\Psi, x {:}(\Gamma\atl{n} K)}{\Phi, y {:}({\Gamma'}\atl{k} K')} &=&
    (\ctxmerge{\Psi}{\Phi, y {:}({\Gamma'}\atl{k} K')}), x {:}({\Gamma}\atl{n} K)
    &\text{otherwise}
\\[0.75em]
\multicolumn{3}{l}{\mbox{Chopping lower context:} \quad\ctxrestrict{\Psi}{n}~ =~  \Phi
%  \quad\mbox{where $\Phi, \Phi'  = \Psi$}
}\\
    \ctxrestrict{(\cdot)}n &= & \cdot \\ \relax
    \ctxrestrict{(\Psi, x {:} (\Phi \atl{k}K)}n &= & \ctxrestrict{\Psi}n
    &\text{if $k < n$}
    \\
    \ctxrestrict{(\Psi, x {:} (\Phi \atl{k}K))}n &= & \Psi, x {:} (\Phi \atl{k} K)
    &\text{otherwise}
  \end{array}
\]
  \caption{Merging and chopping of contexts}  \label{fig:ctxoperations}
\end{figure}
%
%In Section Context merging and context chopping preserve well-formedness of
% contexts.
The chopping operation for the lower context allows us to drop all variable
assumptions below the given level from a context. If $k \leq n$, then
$\ctxrestrict{(\Psi^k)}{n} = \cdot$. 
%Likewise, the chopping operation for
%upper context gives us a context without assumptions above a given
%level. The latter operation is mainly used in stating and proving
%structural and substitution properties about \Moebius. 
%

Merging of two independent, sorted contexts is akin to the merge step of the
merge-sort algorithm and therefore inherits many of its properties. In
particular, the merge of two sorted independent contexts is again a
sorted context. It is also stable, in the sense that the relative
positions of any two assumptions in $\Psi^n$ or in $\Phi^k$ is
preserved in $\ctxmerge{\Psi^n}{\Phi^k}$.

When we merge two sorted contexts \(\Psi\) and \(\Phi\), i.e.~when for every
declaration $x{:}(\_ \atl{n} \_)$ in \(\Psi\), we have
\(\m{level}(\Phi) \le n + 1\), we simply write $\ctxappend{\Psi}{\Phi}$
(see \LONGSHORT{Appendix \ref{sec:ctxop}}{\LONGVERSIONREF} for the definition).
%
%\(\ctxappend{\ctxrestrict{\Psi}{n}}{\Phi}\) as it is a sorted context.
While defining a separate
append operation on sorted contexts is not necessary, it is often conceptually
easier to read and use. It is also more concise.

\subsection{Simultaneous Substitution Operation}

\newcommand \SSubst [2] {[#1]#2}
\newcommand \LKP [2] {\textrm{lkp}(#1)~#2}

% \todo[inline]{$\LKP{\sigma/\Psihat}x$ is not explained yet. - Clare\\
%               And I find $\LKP{\sigma/\Psihat}x$ ugly - Stephan}

We concentrate here on the simultaneous substitution operation, but
the single substitution operation follows similar ideas. 

In our grammar, simultaneous substitutions are defined without their
domain. However, when applying a simultaneous substitution $\sigma$,
we can always recover its domain $\Psihat$. The idea is that $\sigma$
provides instantiations for all variables in $\Psihat$ where
$\m{level}(\Psihat) = n$. All variables at levels greater or
equal to $n$ are treated as global variables and are untouched by the
substitution operation. We note that since contexts are ordered,
simultaneous substitutions are also ordered. 

Applying the substitution operation is written in prefix as
$\SSubst{\sigma/\Psihat}{T}$ (see \LONGSHORT{Fig.~\ref{fig:typsubst}}{\LONGVERSIONREF}). % and Fig.~\ref{fig:termsubst}).
Note that the closure of a variable with a
substitution is written in postfix (see $x[\sigma]$ and
$\alpha[\sigma]$ resp.). 

We describe here in detail applying a substitution to a type 
$\SSubst{\sigma/\Phihat}{T}$. Due to space, we omit the definition for
$\SSubst{\sigma/\Phihat}{e}$ which applies the substitution $\sigma$
to a term $e$, composing substitutions, and applying a substitution to a context. They
can be found in \LONGSHORT{the Appendix \ref{sec:substerm}}{\LONGVERSIONREF}.
In the definition of these substitution operations, we rely on 
chopping, merging, and appending simultaneous substitutions, written
$\ctxrestrict{(\sigma/\Phihat)}{k}$,
$\ctxmerge{(\sigma/\Phihat)}{(\sigma'/\Psihat)}$ and
$\ctxappend{(\sigma/\Phihat)}{(\sigma'/\Psihat)}$. Those operations
correspond to the equivalent context operations and are also defined in
\LONGSHORT{the Appendix \ref{sec:substop}}{\LONGVERSIONREF}.

%The definition for applying a simultaneous substitution rests on two
%central idea: 1) we need to extend it, when we go under a binder, and
%2) we may need to chop off some mappings, if those mappings will be
%replaced. Following the context operations, we 
%similarly define chopping, merging, and appending simultaneous substitutions, written
%$\ctxrestrict{(\sigma/\Phihat)}{k}$,
%$\ctxmerge{(\sigma/\Phihat)}{(\sigma'/\Psihat)}$ and
%$\ctxappend{(\sigma/\Phihat)}{(\sigma'/\Psihat)}$. To keep the
%presentation compact, we define these operations in the Appendix
%\ref{sec:substop}.  

\begin{figure}[ht]
  \begin{displaymath}
    \renewcommand{\arraystretch}{1.2}
    \begin{array}{@{}l@{~=~}l@{}}
      %%\multicolumn{3}{l}{\mbox{Simultaneous substitution for
      %%    types:}~\SSubst{\sigma/\Psihat}{T} = T'}\\[1em]
      \SSubst{\sigma/\Psihat}{(\alpha[\sigma'])} &
          \alpha[\sigma''] \hfill
          \alpha \not\in \hat\Psi \text{ and } \SSubst{\sigma/\Psihat}{\sigma'} = \sigma''
          \\
      \SSubst{\sigma/\Psihat}{(\alpha[\sigma'])} &
          T' \qquad\quad \hfill
          \LKP{\sigma /\Psihat}{\alpha} = (\Phihat^n.T)~ \mbox{and}~
          \m{level}(\Psihat) > n\\
     \multicolumn{2}{r}{\qquad\quad     \mbox{and}~
               \SSubst{\sigma/\Psihat}{\sigma'} = \sigma''
      ~\mbox{and}~ \SSubst{\sigma''/\Phihat}{T} =T'}
          \\
      \SSubst{\sigma/\Psihat}{(\alpha[\sigma'])} &
          \beta[\sigma''] \hfill  
          \LKP{\sigma /\Psihat}{\alpha} = \beta^n~\mbox{and}~
          \m{level}(\Psihat) > n~\mbox{and}~
               \SSubst{\sigma/\Psihat}{\sigma'} = \sigma''
          \\
      \SSubst{\sigma/\Psihat}{(T \arrow S)} &
          T' \arrow S' \hfill
          \SSubst{\sigma/\Psihat}{T} = T' ~\mbox{and}~ \SSubst{\sigma/\Psihat}{S} = S'
          \\
      \SSubst{\sigma/\Psihat}{((\alpha{:}(\Phi\atl{n} *)) \arrow T)} &
          ((\alpha{:}(\Phi\atl{n} *)) \arrow T')
          \hfill\qquad 
          \mbox{level}(\Psihat) > n \text{ and } \SSubst{\ctxinsert{(\sigma/\Psihat)}{(\alpha/\alpha^n)}}{T} = T'
          \\
      \SSubst{\sigma/\Psihat}{((\alpha{:}(\Phi\atl{n} *)) \arrow T)} &
          ((\alpha{:}(\Phi\atl{n} *)) \arrow T')
          \hfill
          \mbox{level}(\Psihat) \le n \text{ and } \SSubst{\sigma/\Psihat}{T} = T'
          \\
      \SSubst{\sigma/\Psihat}{(\cbox{\Phi \atl{n} T})} &
          \cbox{\Phi' \atl{n} T'} \hfill
          \begin{array}[t]{@{}r@{}}
            \mbox{level}(\Psihat) \ge n \text{ and } \ctxrestrict{(\sigma/\Psihat)}{n} = \sigma'/\Psihat' \text{ and } \\
            {\SSubst{\sigma'/\Psihat'}{\Phi} = \Phi'} \text{ and }
            \SSubst{\ctxappend{(\sigma'/\Psihat)}{(\id(\Phihat)/\Phihat)}}{T} = T'
          \end{array}
          \\
      \SSubst{\sigma/\Psihat}{(\cbox{\Phi \atl{n} T})} &
          \cbox{\Phi \atl{n} T} \hfill
          \begin{array}[t]{@{}r@{}}
            \mbox{level}(\Psihat) < n
          \end{array}
    \end{array}
  \end{displaymath}
  \caption{Simultaneous Substitution Operation for Types:
    $[\sigma/\Psihat]T = S$}
  \label{fig:typsubst}
\end{figure}

The substitution operation $[\sigma/\Psihat]T$ is then mostly straightforward and applied
recursively to $T$. In the variable case $\alpha[\sigma']$, we distinguish between three cases: 

 If $\alpha$ is not in $\Psihat$, then $\alpha$ denotes a ``global''
 variable; we leave $\alpha$ untouched and only apply $\sigma$ to $\sigma'$.

If $\alpha$ is in $\Psihat$, then we have either
a corresponding instantiation $(\Phihat^n.T)/\alpha^n$ or simply
$\beta^n$ for $\alpha$ in $\sigma/\Psihat$ (see \LONGSHORT{Appendix \ref{sec:substop}}{\LONGVERSIONREF} for the
definition of the lookup operation, $\LKP{(\sigma/\Psihat)} \alpha$). In both cases, we apply the
simultaneous substitution $\sigma$ to $\sigma'$ giving us some substitution
$\sigma''$. In the former case (where $(\Phihat^n.T)/\alpha^n$),
we now apply the substitution $\sigma''$ to $T$. Applying the
substitution will terminate, as $\sigma''$ provides instantiations for variables
at lower levels than $n$. 
In the latter case (where $\beta^n/\alpha^n$), we simply pair $\beta$ with
$\sigma''$ creating a new closure.
% effectively renaming $\alpha$.

%For types annotated with level \(n\), we give a special treatment in
%order to preserve the level of \(\Psihat\). We compare the level of the
%domain with the level annotation on the target type, and decide whether
%we extend \(\sigma\) with identity substitution or not. 
For polymorphically
quantified types, we push the substitution inside and extend the
substitution with the identity, if its level \(n\) satisfies
\(n < \mbox{level}(\Psihat)\). Although we quantify over (contextual)
type variables that have type $(\Phi \atl{n} *)$, we do not have to apply
the substitution to $\Phi$ itself, as $\Phi$ is a context containing only
type variable declarations\footnote{Generalizing this to contain both type
and term variable declarations is straightforward and follows the similar
principle as in the boxed type case where we apply to $\Phi$ the substitution
$\ctxrestrict{(\sigma/\Psihat)}{n}$.}.
For boxed types such as $\cbox{\Phi \atl{n} T}$, we push $\sigma$
inside the box, if $\m{level}(\Psihat) \geq n$. To do this,
we first drop
all the mappings for variables below $n$ from the substitution
$\sigma$,  since those variables will be replaced by $\Phi$. Then,  
we apply 
$\sigma'/\Psihat'$ to $\Phi$ 
(\LONGSHORT{see Fig.~\ref{fig:ctxsubst} in the Appendix \ref{sec:substop}}{see \LONGVERSIONREF for the definition}). Further, we extend $\sigma'$  with the
identity mapping for variables in $\Phi$ before applying it to $T$.
If \(\m{level}(\Psihat) \le n \),
then we do not apply the substitution to $\Phi$ or $T$, since it
concerns variables that will be replaced by $\Phi$ and those variables
are locally bound.

%The substitution operation for terms follows similar principles and is
%omitted due to space. We point the interested reader to Appendix
%\ref{sec:substerm}.

\begin{theorem}[Termination]\quad
The substitution operation $[\sigma/\Phihat]T$, $[\sigma/\Phihat]e$, and $[\sigma/\Phihat]\sigma'$ terminates.
\end{theorem}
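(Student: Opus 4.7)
The plan is to prove termination by strong induction on a lexicographic measure $\mu([\sigma/\Phihat]X) = (\m{level}(\Phihat),\ |X|)$, where $X$ ranges over the target (a type $T$, a term $e$, a context, or a substitution $\sigma'$) and $|X|$ denotes its ordinary structural size. I would first verify that the measure is well-founded under the standard lexicographic order on $\mathbb{N} \times \mathbb{N}$, and then check, for each clause of the definition of $[\sigma/\Phihat](-)$, that every recursive invocation on the right-hand side yields a strictly smaller $\mu$.

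The routine clauses (arrows, applications, abstractions, boxed types when $\m{level}(\Phihat) < n$, substitution composition) all keep the first coordinate fixed and strictly decrease $|X|$, so termination is immediate by the inner induction on size. For the clauses that \emph{extend} the substitution, namely the polymorphic arrow $(\alpha{:}(\Phi\atl{n}*))\arrow T$ with $\m{level}(\Phihat) > n$ and the boxed type $\cbox{\Phi \atl{n} T}$ with $\m{level}(\Phihat) \ge n$, I would show that the extended substitution's domain has level still bounded by $\m{level}(\Phihat)$: in the polymorphic case, we append $\alpha^n$ with $n < \m{level}(\Phihat)$; in the boxed case, chopping yields $\sigma'/\Phihat'$ with $\m{level}(\Phihat') \le \m{level}(\Phihat)$ (chopping only removes low-level assumptions) and we then append $\id(\Phihat)/\Phihat$ where $\m{level}(\Phihat) \le n \le \m{level}(\Phihat)$, so the combined domain still has level at most $\m{level}(\Phihat)$. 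With $m$ unchanged and $|X|$ strictly smaller, the induction goes through.

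The crucial clause, and the one driving the need for a lexicographic measure rather than plain induction on $|X|$, is the variable-lookup case $[\sigma/\Phihat](\alpha[\sigma'])$ when $\alpha \in \Phihat$ and $\LKP{\sigma/\Phihat}{\alpha} = (\Phihat'^n.T)$. Here the first recursive call $[\sigma/\Phihat]\sigma'$ keeps $m = \m{level}(\Phihat)$ fixed and shrinks the size, so it is handled by the inner induction. The second recursive call $[\sigma''/\Phihat'^n]T$ applies a \emph{new} substitution to a \emph{new} target $T$, so the structural size gives no information; however, $\m{level}(\Phihat'^n) = n < \m{level}(\Phihat)$ by the side condition of the clause, which is precisely the strict decrease of the first coordinate needed to invoke the outer induction. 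The analogous single-variable lookup clause for terms works the same way.

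I expect the main obstacle to be the bookkeeping showing that the level of the substitution's domain is non-increasing across every structural descent — in particular that chopping, merging, and appending with identity substitutions (as they appear in Fig.~\ref{fig:typsubst}) never push the level above $\m{level}(\Phihat)$. This amounts to a small auxiliary lemma about the context operations defined in Fig.~\ref{fig:ctxoperations}, stating $\m{level}(\ctxrestrict{\Psi}{n}) \le \m{level}(\Psi)$ and $\m{level}(\ctxappend{\Psi}{\Phi}) = \max(\m{level}(\Psi), \m{level}(\Phi))$, which I would prove separately by induction on $\Psi$ and then invoke in the main argument. Once this lemma is in hand, the rest of the termination proof is a mechanical case analysis over the clauses of Fig.~\ref{fig:typsubst} and the corresponding (omitted) clauses for terms and substitutions.
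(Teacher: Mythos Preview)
Your proposal is correct and matches the paper's approach exactly: the paper's proof is the one-liner ``by induction on $\m{level}(\Phihat)$ and the structure of $T$, $e$, $\sigma'$; either the level stays the same and the target decreases, or the level decreases,'' which is precisely your lexicographic measure $(\m{level}(\Phihat),|X|)$. Your case analysis is a faithful elaboration of that sketch; the only quibble is a notational slip in the boxed-type case, where you write ``$\m{level}(\Phihat)\le n\le\m{level}(\Phihat)$'' while conflating the domain of the outer substitution with the local context $\Phi$ of the box---once you disambiguate the two $\Phi$'s the bound you want (all variables in the extended domain sit below the original level) is exactly what you argue.
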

\begin{proof}
By induction on the $\m{level}(\Phihat)$ and structure of $T$
and $e$ and $\sigma'$. Either $\m{level}(\Phihat)$  stays the same and
$T$ (resp. $e$ or $\sigma$) is decreasing or $\m{level}(\Phihat)$ is decreasing.
\end{proof}

\subsection{Typing Rules}\label{sec:typrules}
\newcommand{\vde}{\mathrel{\vdash{\!\!\!\!\!}\vdash}}
With the context and substitution operations in place, we can now
define the typing rules for the multi-level contextual modal
lambda-calculus in an elegant way. The levels provide us with enough
structure to keep track of the scope of variables, terms and types.

%In the typing rules for well-formed contexts, we do not enforce the
%context is also ordered, as it is simpler to check the ordering
%separately. 
To distinguish the notation for contextual types, written
as $\cbox{\Psi \atl{n} T}$ from the typing judgment, we use $\vde$ for
all the typing judgments.

%We reiterate here the main principle that underlies our design:
%
%\begin{center}
%\fbox{
%  \begin{tabular}{p{12cm}}
%\emph{A type or term at level $n$ in a context $\Psi^n$ may refer to
%    ``local'' variables from $\Psi$ which are at levels below $n$ and 
%``global'' variables which are at level $n$ or greater    }
%  \end{tabular}
%}  
%\end{center}

We begin with defining well-formed contexts.
A context
$\Gamma, x{:}(\Phi \atl{n} T)$ is well-formed, if $\Gamma$ is 
well-formed, $\Phi$ is well-formed with respect to
$\ctxrestrict{\Gamma}{n}$, and $T$ is well-kinded in the
context $(\ctxappend{\ctxrestrict{\Gamma}{n}}{\Phi})$.
\[
\begin{array}{c}
\multicolumn{1}{l}{\mbox{Well-formed contexts: \fbox{~$\vde \Gamma$}}}\\[0.75em]
\infer{\vde \ctxappend{\Gamma}{x{:}(\Phi \atl{n} T)}}
{\vde \Gamma &
 \m{level}(\Phi) \leq n &  %% Make sure `Γ|ₙ,Φ` is well-defined.
 \vde \ctxappend{\ctxrestrict{\Gamma}{n}}{\Phi} &
 \ctxrestrict{\Gamma}n,\Phi \vde T
% & 
 %% FIXME: IIUC, this `level(Γ)≥n` simply makes sure that we don't
 %% introduce a variable at level 42 if there isn't already some
 %% variable at level 41, but I don't know why we need it and the text
 %% doesn't seem to talk about this idea that there can't be any "holes"
 %% in our levels (i.e. unused levels between used levels).
% \m{level}(\Gamma) \geq n
}
\quad
\infer{\vde \ctxappend{\Gamma}{\alpha{:}(\Phi \atl{n} *)}}
{\vde \Gamma &
 \m{level}(\Phi) \leq n &  %% Make sure `Γ|ₙ,Φ` is well-defined.
 \vde \ctxappend{\ctxrestrict{\Gamma}{n}}{\Phi}
%&
% \m{level}(\Gamma) \geq n
}
\quad
\infer{\vde \cdot}{}
\end{array}
\]

With the context operations in place, the kinding rules for
types
% ~(Fig.~\ref{fig:kinding}) 
are straightforward. 
%%
%\begin{figure}[htb]
%  \caption{Kinding rules: \fbox{$\Gamma \vde T$}}
%  \centering
%\begin{center}
\[
\begin{array}{c}
%\multicolumn{1}{l}{\mbox{Kinding rules for types: \fbox{$\Gamma \vde T$}}}
%\\[0.75em]
\raisebox{2ex}{\mbox{Kinding rules for types: \fbox{$\Gamma \vde T$}}}\quad\qquad\qquad
\infer{\Gamma \vde \alpha[\sigma]}
      {\Gamma(\alpha) = (\Phi \atl{n} *) &
       \Gamma \vde \sigma : \Phi}
% \quad
\\[0.75em]
\infer{\Gamma \vde S \arrow T}
      {\Gamma \vde S &
       \Gamma \vde T}
\quad
\infer{\Gamma \vde (\alpha {:}(\Phi \atl{n} *)) \arrow T}
{\vde \ctxrestrict{\Gamma}{n},\Phi &
 \ctxinsert{\Gamma}{\alpha {:}(\Phi \atl{n} *)} \vde T}
\quad
\infer[n > 0]{\Gamma \vde \cbox{\Phi \atl{n} T}}
{ \vde \ctxrestrict{\Gamma}{n},\Phi &
  \ctxrestrict{\Gamma}{n},\Phi \vde T
}
\end{array}
\]
%\end{center}
%  \label{fig:kinding}
%\end{figure}
\smallskip

Type variables $\alpha[\sigma]$ where $\alpha: (\Phi \atl{n} *)$ in
$\Gamma$ are well-kinded, if the associated substitution $\sigma$
maps variables from $\Phi$ to the present context $\Gamma$. Function
types, $S \arrow T$ are well-kinded, if both $S$ and $T$ are well-kinded.
When we check that
$(\alpha {:}(\Phi \atl{n} *)) \arrow T$ is well-kinded in a context $\Gamma$,
we ensure that $\Phi$ is well-formed with respect to $\ctxrestrict{\Gamma}{n}$,
as the local context $\Phi$ will replace all the variables at levels below $n$ in
the original $\Gamma$; this is accomplished by $\vde \ctxrestrict{\Gamma}{n},\Phi$.
We then check that $T$ is well-formed, in the context $\Gamma$
extended with the assumption $\alpha {:}(\Phi \atl{n} *)$ where the
assumption is inserted at the appropriate position in $\Gamma$.
For the kinding of $\cbox{\Phi \atl{n} T}$, we proceed similarly. We 
again replace all the variables at levels below $n$ in
the original $\Gamma$ with $\Phi$ using the previously defined context
operations.

Using the context operations for restricting and extending contexts,
the typing rules for expressions (Fig.~\ref{fig:typing}) are mostly straightforward and follow
the general principle that we have seen before. To show that a function
$\fn{x}{e}$ has type $S \arrow T$, we extend the context $\Gamma$ with the
declaration $x{:}(\cdot \atl{0} S)$ and check that the body $e$ has
  type $T$. As we remarked before, no typing context with assumptions
  below level $0$ exists, and this declaration can be viewed as
  $x{:}T$. However, giving it level $0$ allows for a uniform treatment.
To show that a type abstraction,  $\tfn{\alpha^n}{e}$, has the
appropriate type, we extend the context with type variable declaration
for $\alpha$.

Applications, $e_1~e_2$ are defined as expected. For type
applications, $ e\;(\Phihat^n.T)$, we need to be a bit more careful:
in addition to verifying that $e$ has type $(\alpha{:}(\Phi \atl{n} *)) \arrow S$,
we again verify that $T$ is well-kinded in a new context, where we
replace the assumptions at level below $n$ in $\Gamma$ with the
declarations from $\Phi$. To accomplish this, we again rely on our
context operations, first restricting $\Gamma$ and then appending
$\Phi$.  As the type of $e$ will be polymorphic, namely
$(\alpha{:}(\Phi \atl{n} *)) \arrow S$, we return as the type of
 $e\;(\Phihat^n.T)$ the type $S$ where we have replaced $\alpha$
with $(\Phihat^n.T)$.

The rules for $\m{box}$ and $\m{let box}$ again appropriately restrict
and extend $\Gamma$ based on the level where $n > 0$. This side
condition simply allows us to distinguish between code, i.e. terms
that are boxed and represent syntax, and programs, i.e. terms that
will be evaluated and run.

\begin{figure}
\[
\begin{array}{c}
%\multicolumn{1}{l}{\mbox{Typing rules for expressions: \fbox{$\Gamma \vde e : T$}}}\\[0.75em]
%\infer{\Delta ; \Gamma \vde x : T}{\Gamma(x) = T}  \qquad
\infer{\Gamma \vde x[\sigma] : [\sigma/\Phihat]T}
{\Gamma(x) = (\Phi \atl{n} T) & \Gamma \vde \sigma : \Phi}
\quad
\infer{\Gamma \vde \fn{x}{e} : S \arrow T}
      {\Gamma, x{:}(\cdot \atl{0} S) \vde e : T}
\qquad
\infer{\Gamma \vde e_1\;e_2 : T}
      {\Gamma \vde e_1 : S \arrow T
     & \Gamma \vde e_2 : S}
\\[0.75em]
\infer{\Gamma \vde \tfn{\alpha^n}{e} : (\alpha{:}(\Phi \atl{n} *)) \arrow T}
      {\ctxinsert{\Gamma}{\alpha{:}(\Phi \atl{n}*)} \vde e : T}
\qquad
\infer{\Gamma \vde e\;(\Phihat^n.T) : [\Phihat^n.T/\alpha^n]S}
      {\Gamma \vde e : (\alpha{:}(\Phi \atl{n} *)) \arrow S
     & \ctxappend{\ctxrestrict{\Gamma}{n}}{\Phi} \vde T}
\\[0.75em]
\infer{\Gamma \vde \boxm{\Phihat^n}{e} : \cbox{\Phi \atl{n} T}}{
%\m{level}(\Phi) = n &
\ctxappend{\ctxrestrict{\Gamma}{n}}{\Phi} \vde e : T}
\qquad
\infer{\Gamma \vde \letboxm{(\Phihat^n.u)}{e_1}{e_2} : T}
      {\Gamma \vde e_1 : \cbox{\Phi \atl{n} S} &
      \ctxinsert{\Gamma}{u{:}(\Phi \atl{n} S)} \vde e_2 : T}
\end{array}
\]
\caption{Typing rules for expressions \fbox{$\Gamma \vde e : T$}}
\label{fig:typing}
\end{figure}

A substitution $(\sigma, \Psihat^n.e)$ provides a mapping from
$(\Gamma', x{:}(\Psi \atl{n} T))$ to the context $\Gamma$, if $\sigma$
maps variables from $\Gamma'$ to $\Gamma$ and if $e$ is well-typed.
As the type $(\Psi \atl{n} T)$ is well-formed with respect
to $\Gamma'$, we will need to transport both $\Psi$ and $T$ to
$\Gamma$. Since $\Psi$ depends only on $\ctxrestrict{\Gamma'}{n}$, we
only need to apply the restricted substitution
$\ctxrestrict{(\sigma/\Gammahat')}{n} = (\sigma''/\Gammahat'')$ to $\Psi$. Since we work with
simultaneous substitution, we need to extend this restricted
substitution with the identity mapping for $\Psi$, before we can apply
it to the type $T$ and effectively move the type to the new context
$(\ctxrestrict{\Gamma}{n},[\sigma''/\Gammahat'']\Psi)$ in
which $e$ will be well-typed.
The cases where we extend a substitution with a
variable (i.e. $\sigma;x^n$) is a special case
of the previous cases. 
The extension of a substitution with a type $(\Psihat^n.T)$ or a type
variable $\alpha^n$ follows the similar idea and are a special case of
handling the term extension. We omit them here due to space, but their
definition is given in \LONGSHORT{the Appendix \ref{sec:subtyp}}{\LONGVERSIONREF}.

\begin{figure}
\[
\begin{array}{c}
%\multicolumn{1}{l}{\mbox{Typing rules for substitution: \fbox{$\Gamma \vde \sigma : \Phi$}}}\\[0.75em]
\infer{\Gamma \vde \cdot : \cdot}{}
\quad
\infer{\Gamma \vde (\sigma, \Psihat^n.e) ~:~ (\Gamma', x{:}(\Psi \atl{n} T))}{
 \Gamma \vde \sigma : \Gamma' &
\ctxrestrict{(\sigma/\Gammahat')}{n} = (\sigma'/\Gammahat'') &
\ctxappend{\ctxrestrict{\Gamma}{n}}{[\sigma'/\Gammahat'']\Psi} \vde
   e : [\ctxappend{(\sigma'/\Gammahat'')}{(\id(\Psihat)/\Psihat)}]T}
\\[0.75em]
\infer{\Gamma \vde (\sigma;x) ~:~ (\Gamma', x{:}(\Psi \atl{n} T))}
{\Gamma \vde \sigma : \Gamma' &
% \Psihat.\alpha[id]
 \ctxrestrict{(\sigma/\Gammahat')}{n} = (\sigma''/\Gammahat'') &
% \Psi' = [sigma''/\Gammahat'']\Psi
 \Gamma(x) = ([\sigma''/\Gammahat'']\Psi \atl{n} [\sigma''/\Gammahat'']T)
}
\LONGVERSION{\\[0.75em]
\infer{\Gamma \vde (\sigma, \Psihat^n.T) ~:~ (\Gamma', \alpha{:}(\Psi \atl{n} *))}{
 \Gamma \vde \sigma : \Gamma' &
\ctxrestrict{(\sigma/\Gammahat')}{n} = (\sigma'/\Gammahat'') &
\ctxappend{\ctxrestrict{\Gamma}{n}}{[\sigma'/\Gammahat'']\Psi} \vde T
}
\\[0.75em]
\infer{\Gamma \vde (\sigma;\alpha) ~:~ (\Gamma', \alpha{:}(\Psi \atl{n} *))}
{\Gamma \vde \sigma : \Gamma' &
% \Psihat.\alpha[id]
 \ctxrestrict{(\sigma/\Gammahat')}{n} = (\sigma''/\Gammahat'') &
% \Psi' = [sigma''/\Gammahat'']\Psi
 \Gamma(\alpha) = ([\sigma''/\Gammahat'']\Psi \atl{n} *)
}
}
\end{array}
\]
  \caption{Typing rules for substitutions \fbox{$\Gamma \vde \sigma : \Phi$}}
  \label{fig:substtyping}
\end{figure}

% \subsection*{Examples}
% \newcommand{\allT}[1]{\forall#1.}
% \newcommand{\bbrk}[1]{[#1]}
% \paragraph{Power function:} We can view $\m{pow}$ as a function which is defined at level $0$. It takes as input an integer (at level $0$) and generates an open piece of code at level $1$, which has type $\cbox{x{:}\m{Int} \atl{1} \m{Int}}$.

% %\[
% \begin{align*}
%   &\m{pow} : \m{Int} \arrow \cbox{x{:}\m{Int} \atl{1} \m{Int}}\\
%   &\m{pow}\ n = \m{if}\ n \equiv 0\\
%   &\quad \m{then}\ \cbox{x{:}\m{Int} \vdash 1}\\
%   &\quad \m{else}\ \letboxm{u}{\m{pow}\ (n - 1)}{ \cbox{x{:}\m{Int} \vdash x * u\bbrk{x}}}
% \end{align*}
% % \]
% % using case
% % \begin{align*}
% %   &\m{pow} : \m{Int} \arrow \cbox{x{:}\m{Int}\vdash \m{Int}}\\
% %   &\m{pow}\ n = \m{if}\ n \equiv 0\\
% %   &\quad \m{then}\ \cbox{x{:}\m{Int}\vdash 1}\\
% %   &\quad \m{else}\ \caseof{\m{pow}\ (n - 1)}\\
% %   &\qquad| \boxm{x{:}\m{Int}}{y} \to \boxm{x{:}\m{Int}}{x * y\bbrk{x}}
% % \end{align*}

\subsection{Substitution properties}
\newcommand{\ctxapp}[2]{\ctxappend{#1}{#2}}

%\todo[inline]{Q to Clare: Add substitution extension lemma - Clare
%  \quad\quad Could you fix this todo please or remove it? -bp}

We now establish substitution properties. Recall that a context $\Psi^k = \Psi(k-1), \Psi(k-2), \ldots, \Psi(1), \Psi(0)$ where $\Psi(j)$ is the subcontext of $\Psi^k$ with only assumptions of level $j$.
For easier readability, we will simply write $\ctxapp{\ctxapp{\Gamma_1}{\alpha{:}(\Phi \atl{n})}}{\Gamma_0}$ for a context where all assumptions in $\Gamma_1$ are at level $n$ or above and  all assumptions in
  $\Gamma_0$ are at level $n$ or below.

We first state that the generation of identity substitutions yields
well-typed substitutions. 

\begin{lemma}[Identity Substitution]\label{lem:idsubst}
$\ctxapp{\Gamma}{\Phi} \vde \id(\Phihat) : \Phi$    
\end{lemma}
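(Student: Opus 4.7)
The plan is to proceed by structural induction on $\Phi$ (equivalently, on the length of $\Phihat$), unfolding the definition of $\id$ and appealing to the corresponding substitution typing rule at each step.

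In the base case $\Phi = \cdot$, we have $\id(\hat{\cdot}) = \cdot$, and the rule $\Gamma \vde \cdot : \cdot$ immediately closes the goal for any $\Gamma$.

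For the inductive step with a term declaration $\Phi = \Phi', x{:}(\Psi \atl{n} T)$, the definition gives $\id(\Phihat) = \id(\Phihat'); x^n$, so I apply the rule for $(\sigma; x)$. This leaves three obligations: (i) $\ctxapp{\Gamma}{\Phi', x{:}(\Psi \atl{n} T)} \vde \id(\Phihat') : \Phi'$; (ii) compute $\ctxrestrict{(\id(\Phihat')/\Phihat')}{n} = (\sigma''/\Gammahat'')$; and (iii) verify $\ctxapp{\Gamma}{\Phi', x{:}(\Psi \atl{n} T)}(x) = ([\sigma''/\Gammahat'']\Psi \atl{n} [\sigma''/\Gammahat'']T)$. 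Obligation (i) follows from the induction hypothesis applied with $\Phi'$, combined with weakening by the rightmost declaration $x{:}(\Psi \atl{n} T)$ (which is not in the domain of $\id(\Phihat')$). For (ii), because $\ctxapp{\Gamma}{\Phi'}$ appended with a level-$n$ variable means every assumption in $\Phi'$ is at level $\geq n$, the chopping operation keeps all of $\id(\Phihat')$ intact, so $\sigma'' = \id(\Phihat')$ and $\Gammahat'' = \Phihat'$. For (iii), since the context lookup yields $(\Psi \atl{n} T)$ directly, I need $[\id(\Phihat')/\Phihat']\Psi = \Psi$ and $[\id(\Phihat')/\Phihat']T = T$. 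The type-declaration case $\Phi = \Phi', \alpha{:}(\Psi \atl{n} *)$ is handled symmetrically using the $(\sigma; \alpha)$ rule.

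The main obstacle is obligation (iii), which relies on an auxiliary lemma stating that the identity substitution acts as the identity on types, terms, and substitutions, i.e.\ $[\id(\hat{\Delta})/\hat{\Delta}]T = T$ (and analogously for terms and substitutions). This lemma is standard for contextual calculi, but in the multi-level setting it must be established with some care because its proof inspects how $\id$ interacts with the $\alpha[\sigma']$, $\cbox{\cdot}$, and polymorphic-type cases of Fig.~\ref{fig:typsubst}, where levels are compared. I would prove it separately by induction on $T$ (resp.\ $e$, $\sigma'$), using that $\id$ on a concatenated or restricted context is $\id$ on the corresponding sub-context, so the equations defining substitution reduce each syntactic constructor to itself. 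Given this lemma and a standard weakening lemma for the substitution judgment, the induction above goes through routinely.
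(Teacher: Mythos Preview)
Your proposal is correct and matches the paper's approach exactly: the paper's proof is simply ``Induction on $\Phi$'', and you have correctly unfolded what that induction entails, including the need for the auxiliary fact that $[\id(\hat\Delta)/\hat\Delta]T = T$ (which the paper records separately as the Property of Identity Substitution in the appendix) and a routine weakening step for the substitution typing judgment.
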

\begin{proof}
Induction on $\Phi$.    
\end{proof}

Next, we generalize a property that already exists in
\citet{Nanevski:TOCL08} and \citet{Pientka03phd}. In this previous
work, it states that the substitution for locally bound variables and
the substitution operation for meta-variables commute. Here we state
more generally that substitution for a variable with level $m$ and
substitution for a variable with level $n$ commute, as long as $m$ is
greater or equal to $n$. 

\begin{lemma}[Commuting Substitutions]\label{lem:commsubst}\quad
  \begin{enumerate}
  \item If \(m \ge n\), then \([(\hat\Psi^m.S_2)/\beta^m][(\Phihat^n.S_1)/\alpha^n]T = [(\Phihat^n.[(\hat\Psi^m.S_2)/\beta^m]S_1)/\alpha^n][(\hat\Psi^m.S_2)/\beta^m]T\).
  \item If \(m \ge n\), then \([(\hat\Psi^m.e_2)/y^m][(\Phihat^n.e_1)/x^n]e_0 = [(\Phihat^n.[(\hat\Psi^m.e_2)/y^m]e_1)/x^n][(\hat\Psi^m.e_2)/y^m]e_0\).
  \end{enumerate}
\end{lemma}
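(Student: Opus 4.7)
The plan is to prove both parts simultaneously by structural induction, strengthened with a mutual statement for substitutions: under the same hypothesis $m \ge n$, $[(\hat\Psi^m.S_2)/\beta^m][(\Phihat^n.S_1)/\alpha^n]\sigma = [(\Phihat^n.[(\hat\Psi^m.S_2)/\beta^m]S_1)/\alpha^n][(\hat\Psi^m.S_2)/\beta^m]\sigma$, and the analogous term-level statement. The induction measure is the pair (level of the outer domain, structure of the target), in the spirit of the termination argument for substitution given just above the theorem statement. Before launching the main induction I would isolate a few routine bookkeeping lemmas: that chopping commutes with substitution, namely $\ctxrestrict{([\tau/\hat\Gamma]\sigma)}{k} = [\tau/\hat\Gamma](\ctxrestrict{\sigma}{k})$ when $\m{level}(\hat\Gamma) \le k$ (and is trivial otherwise); that substitution commutes with appending identity substitutions, $[\tau/\hat\Gamma](\ctxappend{\sigma}{\id(\hat\Phi)}) = \ctxappend{[\tau/\hat\Gamma]\sigma}{\id(\hat\Phi)}$; and a weakening/strengthening lemma stating that $[(\Phihat^n.S_1)/\alpha^n]T = T$ whenever $\alpha \notin \mathit{FV}(T)$, which I will use in particular to argue that substituting for $\beta^m$ inside $\Phi$ is a no-op because $\m{level}(\Phi) < n \le m$.

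For part 1, the proof proceeds by cases on $T$. The arrow, polymorphic-function, and boxed-type cases reduce to the IH once one checks that the level side conditions in Fig.~\ref{fig:typsubst} line up on both sides of the equation; the hypothesis $m \ge n$ guarantees that both substitutions fall on the same side of every threshold, so the two definitional branches match. The only delicate case is the variable case $T = \gamma[\sigma']$, which I would split three ways. If $\gamma$ is neither $\alpha$ nor $\beta$, both sides reduce to $\gamma[\sigma'']$ with $\sigma''$ obtained by appealing to the mutual IH on $\sigma'$. If $\gamma = \alpha$, the LHS unfolds to $[[(\hat\Psi^m.S_2)/\beta^m]\sigma''/\Phihat]S_1$ while the RHS unfolds to $[\sigma''/\Phihat]([(\hat\Psi^m.S_2)/\beta^m]S_1)$; these agree by the IH on $S_1$ together with the substitution lemma for composition, plus the fact that $\beta$ cannot appear in $\Phi$. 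The case $\gamma = \beta$ is symmetric, and here the outer substitution $\sigma'$ gets completely swallowed by the replacement $[(\hat\Psi^m.S_2)/\beta^m]$, while on the RHS the inner substitution for $\alpha$ is vacuous because $\alpha \notin \mathit{FV}(S_2)$ (again by levels, since $\hat\Psi$ lives at level $m \ge n$). Part 2 follows exactly the same pattern, with $\m{fn}$, application, type abstraction, $\m{box}$, and $\m{let}\,\m{box}$ all handled by direct induction; the $\m{box}$ case reuses the level analysis from the boxed-type case of part 1.

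The main obstacle I anticipate is the level case analysis inside the boxed forms. When a subterm of the form $\cbox{\Phi' \atl{k} T'}$ is encountered, the substitution definition forks on whether $\m{level}$ of the incoming substitution's domain is $\ge k$ or $< k$, and in the former subcase chopping and identity extension are invoked. To make the two sides of the commutation identity match one must show that the chopped substitutions $\ctxrestrict{((\hat\Psi^m.S_2)/\beta^m)}{k}$ and $\ctxrestrict{((\Phihat^n.S_1)/\alpha^n)}{k}$ commute in the same sense as the full ones, and that extending both sides with $\id(\hat\Phi')$ preserves the equation. The hypothesis $m \ge n$ is exactly what is needed so that the chopping operation does not separate the two substitutions from one another in an asymmetric way; once the bookkeeping lemmas above are available, this case collapses to a direct invocation of the IH on the body $T'$ (or $e'$).
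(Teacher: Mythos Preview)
Your proposal is correct and matches the paper's approach: the paper's proof reads in full ``By induction on $T$ and $e_0$,'' and what you have written is a reasonable unpacking of that induction, including the natural strengthening to a mutual statement on substitutions (needed for the $\gamma[\sigma']$ case) and the level-based bookkeeping for the boxed forms. The paper does not spell out any of the auxiliary facts you list, so your additional lemmas are genuine work you would have to do, but they are the expected ones and nothing in your plan deviates from the intended argument.
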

\begin{proof}
By induction on \(T\) and \(e_0\).
\end{proof}

Last, we state the substitution properties for type
variables, term variables and simultaneous substitutions. Since the
single substitution operation relies on simultaneous substitution, we
prove them all mutually. However, for clarity, we state them
separately.

\begin{lemma}[Type Substitution Lemma]\label{lem:typesubst}
Assuming
  $\vde \ctxapp{\ctxapp{\Gamma_1}{\alpha{:}(\Phi \atl{n}
      *)}}{\Gamma_0}$ and $\ctxapp{\Gamma_1}{\Phi} \vde T$.
%and is ordered such that all assumptions
%  in $\Gamma_1$ are at level $n$ or above and all assumptions in
%  $\Gamma_0$ are at level $n$ or below.
  \begin{enumerate}
  \item\label{itm:typesubst-context}
    %% If $~\vde \ctxapp{\ctxapp{\Gamma_1}{\alpha{:}(\Phi \atl{n} *)}}{\Gamma_0}$
% and $\ctxapp{\Gamma_1}{\Phi} \vde T$
Then $\vde \ctxapp{\Gamma_1}{[(\Phihat^n.T)/\alpha^n]\Gamma_0}$.
  \item\label{itm:typesubst-type} If $\ctxapp{\ctxapp{\Gamma_1}{\alpha{:}(\Phi \atl{n} *)}}{\Gamma_0} \vde S$
% and $\ctxapp{\Gamma_1}{\Phi} \vde T$\\
then $\ctxapp{\Gamma_1}{[(\Phihat^n.T)/\alpha^n]\Gamma_0} \vde [(\Phihat^n.T)/\alpha^n]S$.
  \item\label{itm:typesubst-term} If $\ctxapp{\ctxapp{\Gamma_1}{\alpha{:}(\Phi \atl{n} *)}}{\Gamma_0} \vde e_2 : S$
% and $\ctxapp{\Gamma_1}{\Phi} \vde T$\\
then $\ctxapp{\Gamma_1}{[(\Phihat^n.T)/\alpha^n]\Gamma_0} \vde [(\Phihat^n.T)/\alpha^n]e_2 : [(\Phihat^n.T)/\alpha^n]S$.
  \item\label{itm:typesubst-subst} If $\ctxapp{\ctxapp{\Gamma_1}{\alpha{:}(\Phi \atl{n} *)}}{\Gamma_0} \vde \sigma : \Psi$
% and $\ctxapp{\Gamma_1}{\Phi} \vde T$\\
then $\ctxapp{\Gamma_1}{[(\Phihat^n.T)/\alpha^n]\Gamma_0} \vde [(\Phihat^n.T)/\alpha^n]\sigma : [(\Phihat^n.T)/\alpha^n]\Psi$.
  \end{enumerate}
\end{lemma}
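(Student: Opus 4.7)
The plan is to prove parts (1)--(4) simultaneously by mutual induction. Part (1) proceeds by induction on the context $\Gamma_0$; parts (2), (3), (4) proceed by structural induction on the derivations $\Gamma \vde S$, $\Gamma \vde e_2 : S$, and $\Gamma \vde \sigma : \Psi$ respectively. The mutuality is essential because kinding relies on substitution typing through the variable/closure rule, typing appeals to kinding in type applications, and substitution typing invokes both. A suitable termination measure for the mutual calls is the lexicographic pair (level $n$ of the substituted variable, size of the derivation being traversed), which matches the termination argument of the substitution operation itself.

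For part (1), I unfold $\Gamma_0$ from left to right. Since $\alpha$ sits between $\Gamma_1$ and $\Gamma_0$, the substitution leaves $\Gamma_1$ unchanged. For each declaration $y{:}(\Psi \atl{k} S)$ or $\beta{:}(\Psi \atl{k} *)$ in $\Gamma_0$, the side conditions in the well-formed context rule (well-kindedness of $S$ and well-formedness of the restricted-then-extended context) reduce directly to part (2) applied under the appropriate sub-context, together with the IH on the shorter $\Gamma_0$.

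For part (2), the interesting case is the variable $S = \beta[\tau]$. If $\beta \in \Gamma_1$ or $\beta \in \Gamma_0$ with $\beta \neq \alpha$, the substitution is pushed into $\tau$ by part (4) and the result re-assembles by re-applying the variable kinding rule, using the substituted declaration produced by part (1). When $\beta = \alpha$, the substitution yields $[\tau''/\Phihat]T$ where $\tau''$ is the substituted $\tau$; its well-kindedness follows from $\ctxapp{\Gamma_1}{\Phi} \vde T$ via a further substitution appeal at strictly smaller level $n$, together with the Identity Substitution Lemma (Lemma~\ref{lem:idsubst}) to re-introduce local variables. For $\cbox{\Psi \atl{k} S'}$ and $(\beta{:}(\Psi \atl{k} *)) \arrow S'$, the level comparison drives the case split: if $\m{level}(\Phihat^n) < k$ the substitution is vacuous inside and the result is immediate; otherwise we restrict $\sigma$ via $\ctxrestrict{(\sigma/\Psihat)}{k}$, extend with $\id(\Psihat)$ using Lemma~\ref{lem:idsubst}, and invoke the IH on the restricted judgment.

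Parts (3) and (4) follow the same pattern. The contextual variable case $x[\tau]$ of part (3) requires the Commuting Substitutions Lemma (Lemma~\ref{lem:commsubst}) to exchange the outer substitution $[(\Phihat^n.T)/\alpha^n]$ past the inner closure substitution $[\tau/\Phihat']$, so that the output type lands in the form dictated by the re-applied typing rule. The cases $\m{box}$, $\m{let\,box}$, $\tfn{\beta^k}{e}$, and $e\;(\Phihat^k.T')$ all use the same restrict-and-extend discipline for contexts and substitutions. The main obstacle is precisely this bookkeeping: one must verify that $\ctxrestrict{(\sigma/\Gammahat')}{k}$, composed with an identity on newly-introduced locals, types into the expected restricted context and that applying it commutes with the context operations $\oplus$, $|_k$, and $\ctxapp{-}{-}$ used in the typing rules. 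Lemmas~\ref{lem:idsubst} and~\ref{lem:commsubst} do the heavy lifting here, and ensuring that every mutual appeal decreases the lexicographic measure is what makes the induction well-founded.
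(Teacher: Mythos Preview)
Your overall structure is right, and the lexicographic (level, derivation-size) measure is the one the paper uses. But there is a genuine gap in your statement of the mutual induction: it is not only parts (1)--(4) of Lemma~\ref{lem:typesubst} that are proved together. The paper establishes Lemmas~\ref{lem:typesubst}, \ref{lem:termsubst}, and \ref{lem:simsubst} by a \emph{single} simultaneous lexicographic induction. Your own argument reveals why this is necessary: in the case $\beta = \alpha$ of part~(2), the result $[\tau''/\Phihat]T$ is obtained by applying a \emph{simultaneous} substitution to $T$, and showing this is well-kinded is exactly an instance of Lemma~\ref{lem:simsubst}(\ref{itm:simsubst-type}), not any part of Lemma~\ref{lem:typesubst}. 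That appeal is justified because $\m{level}(\Phi) \le n$, so the first component of the lexicographic measure drops --- but only if Lemma~\ref{lem:simsubst} is inside the same induction. Without folding all three lemmas into one mutual proof, the step you label ``a further substitution appeal at strictly smaller level'' is a forward reference, not an IH.

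Relatedly, the measure ``level $n$ of the substituted variable'' does not generalise to the simultaneous case, where there is no single variable. The paper's phrasing, ``level of $\Gamma_0$'' (the context being substituted into, resp.\ the domain of $\sigma$), applies uniformly to all three lemmas and is what you should use. A smaller point: the Commuting Substitutions Lemma (Lemma~\ref{lem:commsubst}) is stated only for two \emph{single} substitutions; the paper invokes it in the type-application case $e\;(\Psihat^m.S)$ to exchange $[(\Phihat^n.T)/\alpha^n]$ with $[(\Psihat^m.S)/\beta^m]$. For the $x[\tau]$ case the relevant commutation is between a single and a simultaneous substitution, which is not literally Lemma~\ref{lem:commsubst}; in the paper's development this falls out of the substitution definitions and the auxiliary bookkeeping lemmas (chop/merge distribution and chop--substitution commutativity), which you should expect to need alongside Lemmas~\ref{lem:idsubst} and~\ref{lem:commsubst}.
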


\begin{lemma}[Term Substitution Lemma]\label{lem:termsubst} Assuming 
$~\vde \ctxapp{\ctxapp{\Gamma_1}{u{:}(\Phi \atl{n} T)}}{\Gamma_0} $
and  $\ctxapp{\Gamma_1}{\Phi} \vde e : T$.
  \begin{enumerate}
  \item\label{itm:termsubst-term} If $\ctxapp{\ctxapp{\Gamma_1}{u{:}(\Phi \atl{n} T)}}{\Gamma_0} \vde e_2 : S$
%and $\ctxapp{\Gamma_1}{\Phi} \vde e : T$
then $\ctxapp{\Gamma_1}{\Gamma_0} \vde [(\Phihat^n.e)/u]e_2 : S$.
  \item\label{itm:termsubst-subst} If $\ctxapp{\ctxapp{\Gamma_1}{u{:}(\Phi \atl{n} T)}}{\Gamma_0} \vde \sigma : \Psi$
%and $\ctxapp{\Gamma_1}{\Phi} \vde e : T$
then $\ctxapp{\Gamma_1}{\Gamma_0} \vde [(\Phihat^n.e)/u]\sigma : \Psi$.
  \end{enumerate}
\end{lemma}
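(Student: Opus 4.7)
The proof proceeds by mutual induction on the typing derivations of $e_2$ (for Part 1) and $\sigma$ (for Part 2). For the routine cases---function abstraction, function application, type abstraction, type application, and well-typed empty substitution---we unfold the substitution operation to its recursive form and appeal directly to the induction hypothesis, observing that the fresh bound variables (e.g., $x$ in $\fn{x}{e'}$) are inserted into the level-$0$ slot of $\Gamma_0$, which preserves the structural invariants required by the IH. Lemma \ref{lem:idsubst} is invoked whenever the substitution is extended by identity mappings (such as when pushing through a binder on a substitution's codomain).

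The critical case is the variable case $e_2 = x[\sigma']$. If $x \neq u$, we apply IH Part 2 to $\sigma'$ and rebuild $x[[(\Phihat^n.e)/u]\sigma']$ directly at the type $[[(\Phihat^n.e)/u]\sigma'/\Phi_x]T_x$, where $\Gamma(x) = (\Phi_x \atl{\_} T_x)$; the outer substitution does not alter the type annotation on $x$. If $x = u$, by definition of substitution we have $[(\Phihat^n.e)/u](u[\sigma']) = [\sigma''/\Phihat]e$ with $\sigma'' = [(\Phihat^n.e)/u]\sigma'$. IH Part 2 yields $\ctxapp{\Gamma_1}{\Gamma_0} \vde \sigma'' : \Phi$, but to finish we must invoke a companion preservation result for simultaneous substitution: from $\ctxapp{\Gamma_1}{\Phi} \vde e : T$ and $\ctxapp{\Gamma_1}{\Gamma_0} \vde \sigma'' : \Phi$, conclude $\ctxapp{\Gamma_1}{\Gamma_0} \vde [\sigma''/\Phihat]e : [\sigma''/\Phihat]T$. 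For the $\m{box}$ case $e_2 = \boxm{\Phihat_2^m}{e'}$, we split on whether $m > n$ (the restriction chops off $u$, so the substitution does not descend into the body and the existing derivation transfers directly, since $\ctxrestrict{\Gamma}{m} = \ctxrestrict{(\ctxapp{\Gamma_1}{\Gamma_0})}{m}$) or $m \leq n$ (the substitution descends, and IH applies in the restricted context with $\Gamma_0$ replaced by $\ctxrestrict{\Gamma_0}{m}, \Phi_2$, which still sits entirely below level $n$ as required by the hypothesis). The $\m{let}\,\m{box}$ case is handled analogously by applying IH to both subterms, with the newly introduced code variable slotted into either $\Gamma_1$ or $\Gamma_0$ depending on its level.

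The main obstacle is the companion preservation result for simultaneous substitution needed in the $x = u$ subcase above. This requires its own mutual induction over terms, types, and substitutions, and the key technical ingredient is Lemma \ref{lem:commsubst} used to reconcile the order in which substitutions are applied when pushing them through nested binders at different levels, together with careful manipulation of the identity extension $\ctxappend{(\sigma''/\Phihat)}{(\id(\Phihat_2)/\Phihat_2)}$ that appears when descending into a $\m{box}$ of level $m \leq n$. Once this companion result is in hand, the remaining cases of Lemmas \ref{lem:termsubst}.\ref{itm:termsubst-term} and \ref{lem:termsubst}.\ref{itm:termsubst-subst} proceed mechanically by induction, with the context merging and restriction operations behaving compatibly thanks to the sorted-context invariant established in the preceding subsection.
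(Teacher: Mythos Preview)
Your proposal captures the overall structure correctly—induction on the typing derivation, with the critical subcase being $x = u$ where the single-variable term substitution unfolds to a simultaneous substitution $[\sigma''/\Phihat]e$—but there is a genuine gap in how the well-foundedness of the argument is established.

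The paper does not prove the Term Substitution Lemma in isolation. As noted just before Lemma~\ref{lem:typesubst}, Lemmas~\ref{lem:typesubst}, \ref{lem:termsubst}, and \ref{lem:simsubst} are proved together by a single mutual \emph{lexicographic} induction on the pair $(\m{level}(\Gamma_0),\ \text{first derivation})$. The level component is essential: in the $x = u$ subcase, the appeal to the Simultaneous Substitution Lemma is an appeal to the \emph{same} induction hypothesis, justified because $\m{level}(\Phi) \leq n$ while $u$ (at level $n$) sits above $\Gamma_0$, so the level strictly drops even though $e$ is not a subderivation. Your plan of treating simultaneous substitution as a separately proved ``companion result'' would face exactly the same issue internally: in its own variable case, where $\m{lkp}(\sigma/\hat\Gamma_0)\,x = (\hat\Phi^m.e')$, one must type $[\sigma''/\hat\Phi]e'$ for a term $e'$ that is not a subterm of the original. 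A plain mutual structural induction ``over terms, types, and substitutions'' does not terminate there; the level measure is what makes it go through, and you do not mention it.

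A second, smaller point: you identify Lemma~\ref{lem:commsubst} as the key technical ingredient for the companion result, but that lemma commutes two \emph{single-variable} substitutions. In the simultaneous-substitution proof the type-application case instead requires commuting a single-variable type substitution $[(\hat\Phi^m.T)/\alpha^m]$ with a simultaneous substitution $[\sigma/\hat\Gamma_0]$, and the paper handles such interactions inside the one big mutual induction (invoking the IH across lemmas, together with the auxiliary context lemmas on chopping and merging) rather than via Lemma~\ref{lem:commsubst}. So while your high-level decomposition is reasonable, the cited mechanics do not line up with what is actually needed, and the missing lexicographic level measure is the real obstacle.
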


\begin{lemma}[Simultaneous Substitution Lemma]\label{lem:simsubst}
  Assuming $\ctxapp{\Gamma_1}{\Phi} \vde \sigma : \Gamma_0$.
  \begin{enumerate}
  \item\label{itm:simsubst-context} If $~\vde
    \ctxapp{\ctxapp{\Gamma_1}{\Gamma_0}}{\Psi}$ 
% and $\ctxapp{\Gamma_1}{\Phi} \vde \sigma : \Gamma_0$
then $\vde \ctxapp{\ctxapp{\Gamma_1}{\Phi}}{([\sigma/\hat\Gamma_0]\Psi)}$.
  \item\label{itm:simsubst-type} If $\ctxapp{\Gamma_1}{\Gamma_0} \vde S$ 
% and $\ctxapp{\Gamma_1}{\Phi} \vde \sigma : \Gamma_0$
then $\ctxapp{\Gamma_1}{\Phi} \vde [\sigma/\hat\Gamma_0]S$.
  \item\label{itm:simsubst-term} If $\ctxapp{\Gamma_1}{\Gamma_0} \vde
    e : S$
%  and $\ctxapp{\Gamma_1}{\Phi} \vde \sigma : \Gamma_0$
then $\ctxapp{\Gamma_1}{\Phi} \vde [\sigma/\hat\Gamma_0]e : [\sigma/\hat\Gamma_0]S$.
  \item\label{itm:simsubst-subst} If $\ctxapp{\Gamma_1}{\Gamma_0} \vde
    \sigma_2 : \Psi$ % and $\ctxapp{\Gamma_1}{\Phi} \vde \sigma : \Gamma_0$
then $\ctxapp{\Gamma_1}{\Phi} \vde [\sigma/\hat\Gamma_0]\sigma_2 : [\sigma/\hat\Gamma_0]\Psi$.
  \end{enumerate}
\end{lemma}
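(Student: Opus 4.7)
I would prove all four parts simultaneously by induction, lexicographically ordered by $\m{level}(\hat\Gamma_0)$ first and by the structure of the goal (the derivation of $\Psi$, $S$, $e$, or $\sigma_2$ in parts~\ref{itm:simsubst-context}--\ref{itm:simsubst-subst}) second, in mutual recursion with Lemmas~\ref{lem:typesubst} and~\ref{lem:termsubst}. This is the same lexicographic measure that underlies the termination theorem for substitution. A useful preliminary is that $\ctxapp{\Gamma_1}{\Phi} \vde \sigma : \Gamma_0$ implies, by inversion on Fig.~\ref{fig:substtyping}, that $\ctxrestrict{(\sigma/\hat\Gamma_0)}{k}$ is well-typed against $\ctxrestrict{\Gamma_0}{k}$ in $\ctxrestrict{(\ctxapp{\Gamma_1}{\Phi})}{k}$ for every $k$; this is what lets us descend through binders that chop the context.

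Most cases then reduce to unfolding the substitution operation in Fig.~\ref{fig:typsubst} and its companions and appealing to the IH. In part~\ref{itm:simsubst-type}, the variable case $\alpha[\sigma']$ is the only one that actually consumes $\sigma$: when $\alpha \notin \hat\Gamma_0$, the variable is left alone and we only apply the IH to $\sigma'$; when $\LKP{\sigma/\hat\Gamma_0}{\alpha}$ returns a template of the form $\hat\Phi'.T'$ at level $n$, we use the IH on $\sigma'$ to obtain a well-typed $\sigma''$ and then reinvoke part~\ref{itm:simsubst-type} on $T'$ against $\sigma''$ --- which is admissible because $\m{level}(\hat\Phi') < n \le \m{level}(\hat\Gamma_0)$, so the outer measure strictly decreases. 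Term variables in part~\ref{itm:simsubst-term} proceed symmetrically through the term substitution lemma.

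The delicate cases are the binders that cross stages: $\cbox{\Phi'\atl{n} T}$ and $(\alpha{:}(\Phi'\atl{n}*))\arrow T$ in part~\ref{itm:simsubst-type}, together with the box and let-box forms in part~\ref{itm:simsubst-term}. Each of these definitionally replaces $\sigma/\hat\Gamma_0$ by $\ctxappend{(\ctxrestrict{\sigma}{n}/\hat\Gamma_0'')}{(\id(\hat\Phi')/\hat\Phi')}$ before descending, so I must show that this extended substitution is well-typed against $\ctxapp{\ctxrestrict{\Gamma_0}{n}}{\Phi'}$ in the new context $\ctxapp{\ctxapp{\Gamma_1}{\Phi}}{[\ctxrestrict{\sigma}{n}/\hat\Gamma_0'']\Phi'}$. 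The chopped component follows from the preliminary observation combined with part~\ref{itm:simsubst-context} applied to $\Phi'$, while the identity tail is precisely Lemma~\ref{lem:idsubst}. Part~\ref{itm:simsubst-subst}, applied to extensions of the form $(\sigma_2', \hat\Psi.e')$ and $(\sigma_2', \hat\Psi.T')$, recycles the same blueprint and additionally appeals to Lemma~\ref{lem:commsubst} to push $\sigma$ past the restriction of $\sigma_2$ to the relevant level.

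The main obstacle I anticipate is precisely this interplay of chopping, merging, and appending on simultaneous substitutions: showing that applying $\sigma$ to a boxed body agrees with first chopping $\sigma$ to level $n$ and then re-extending it by the identity on $\Phi'$. Concretely, I expect to need a small package of equalities --- the substitution analogues of the context operations in Fig.~\ref{fig:ctxoperations} --- relating $[\sigma](\ctxappend{\tau_1}{\tau_2})$ to $\ctxappend{[\sigma]\tau_1}{[\sigma]\tau_2}$, and $\ctxrestrict{([\sigma]\tau)}{k}$ to $[\ctxrestrict{\sigma}{k}](\ctxrestrict{\tau}{k})$. Once those bookkeeping equalities are in hand, every remaining case collapses to a direct application of the IH together with Lemmas~\ref{lem:idsubst} and~\ref{lem:commsubst}.
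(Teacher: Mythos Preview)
Your proposal is correct and follows essentially the same approach as the paper: a lexicographic induction on $\m{level}(\hat\Gamma_0)$ and the structure of the first derivation, carried out mutually with Lemmas~\ref{lem:typesubst} and~\ref{lem:termsubst}, with the variable case consuming the outer measure and the binder cases handled by chopping $\sigma$ and re-extending with identities. The ``small package of equalities'' you anticipate is exactly what the paper packages as Lemmas~\ref{lem:chopdistrib}, \ref{lem:mergereorder}, \ref{lem:chopsubstcommute}, \ref{lem:substdistrib}, \ref{lem:substchop}, and \ref{lem:substext}; the only minor divergence is that the paper does not invoke Lemma~\ref{lem:commsubst} in the simultaneous-substitution cases themselves (it is used in the type-substitution cases), relying instead on the substitution-extension lemma to thread the identity through binders.
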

\begin{proof}
  By induction on the first derivation. 
%The proof exploints various
%  properties of context operations. 
% such as context chopping operation
% distributes over context merge
%  (\(\ctxrestrict{(\ctxinsert{\Gamma_1}{\Gamma_0})}{n} = 
% \ctxinsert{\ctxrestrict{\Gamma_1}{n}}{\ctxrestrict{\Gamma_0}{n}}\). 
For more details, see \LONGSHORT{Appendix \ref{sec:proofsubsts}}{\LONGVERSIONREF}.
\end{proof}

% \subsection{Reductions and Expansions }

\subsection{Local Soundness and completeness}\label{sec:locsound}
With the substitution properties in place, we establish local
soundness and completeness properties of our calculus. Local soundness
guarantees that our typing rules are not too
strong, i.e. they do not allow us to infer more than we
should. Dually, local completeness guarantees that the rules are
sufficiently strong.

The local soundness property also gives natural rise to reduction
rules in our operational semantics and can be viewed as showing
that types are preserved during reduction.  

%% FIXME: The type setting is confusing because the \paragraph level looks
%% just like the lower-level "local soundness" and "local completeness".
\paragraph{Function type}
For local soundness, we refer to the term substitution lemma
\ref{lem:termsubst} to obtain $\mathcal{D'}$.
\smallskip

\emph{Local Soundness}: \\[-1.75em]
\[
%\begin{array}{p{12cm}}
%\mbox{\emph{Local Soundness}: }
%\end{array}
%
 \begin{array}[b]{c}
\infer{\Gamma \vde (\fn{x}{e})~e' : T \arrow S}
{
   \begin{array}[b]{c}
     \mathcal{D}\\
     \Gamma, x{:}(\cdot \atl{0} T) \vde e
   \end{array}
    &
      \begin{array}[b]{c}
        \mathcal{E}\\
        \Gamma \vde e' : T
      \end{array}
}
  \end{array}
    \Longrightarrow
  \begin{array}[b]{c}
  \mathcal{D'}\\
  \Gamma \vde  [(\cdot . e') / x^0]e : S
  \end{array}
\]

Local completeness derivation follows directly from the given typing
rules.
\smallskip

\emph{Local Completeness}: \\[-1.75em]
\[
  \begin{array}[b]{c}
%\multicolumn{1}{l}{\mbox{\emph{Local Completeness}: }}\\[0.75em]
\mathcal{D}\\
\Gamma \vde e: T \arrow S
\end{array}
\Longrightarrow
\begin{array}[b]{c}
\infer{\Gamma \vde \fn{x}{e~x[\cdot]} : T \arrow S}
{\infer{\Gamma,x{:}(\cdot \atl{0} T) \vde e ~ x[\cdot] : S}
  {
  \begin{array}[b]{c}
    {\mathcal{D}} \\
    {\Gamma,x{:}(\cdot \atl{0} T) \vde e : T \arrow S}
  \end{array}
        &
  \begin{array}[b]{c}
    \infer{\Gamma, x{:}(\cdot \atl{0} T) \vde x[\cdot] : T}{
    {\infer{\Gamma, x{:}(\cdot \atl{0} T) \vde \cdot : \cdot }{}
    }
    }
  \end{array}
  }
}
\end{array}
\]

\paragraph{Polymorphic type}
The local soundness and completeness property for the polymorphic type
are similar. %For local soundness, we refer to the type
%substitution lemma \ref{lem:typesubst}; the local completeness
%derivation follows directly from the given typing rules and the
%identity substitution lemma \ref{lem:idsubst}.

\smallskip
\emph{Local Soundness}: \\[-1.75em]
\[
 \begin{array}[b]{c}
\infer{\Gamma \vde (\tfn{\alpha^n}{e})~(\Phihat^n.T) : [\Phihat^n.T/\alpha^n]S}
{
  \infer{\Gamma \vde \tfn{\alpha^n}{e} : (\alpha{:}(\Phi \atl{n} *)) \arrow S}{
   \begin{array}[b]{c}
     \mathcal{D}\\
     \ctxinsert{\Gamma}{\alpha{:}(\Phi  \atl{n} *)} \vde e : S
   \end{array}}
    &
      \begin{array}[b]{c}
        \mathcal{E}\\
         \ctxappend{\ctxrestrict{\Gamma}{n}}{\Phi} \vde T
      \end{array}
}
  \end{array}
\Longrightarrow
\begin{array}[b]{c}
\mathcal{D'}\\
\Gamma \vde  [(\Phihat^n . T) / \alpha^n]e : [\Phihat^n.T/\alpha^n]S
\end{array}
\]

We note that $\ctxinsert{\Gamma}{\alpha{:}(\Phi  \atl{n} *)}$ results in an ordered context of the form $\Gamma_1, \alpha{:}(\Phi  \atl{n} *),\Gamma_0$ where all declarations in $\Gamma_1$ are at level $n$ or above and all declarations in $\Gamma_0$ are below $n$.
The derivation $\mathcal{D'}$ is then obtained using the type substitution lemma \ref{lem:typesubst}.
\smallskip

%% \emph
    {Local Completeness}: \\[-1.95em]
\[
  \begin{array}[b]{c}
\mathcal{D}\\
\Gamma \vde e: (\alpha{:}(\Phi \atl{n} *)) \arrow S
\end{array}
\Longrightarrow
\hspace{-5pt}
\begin{array}[b]{c}
\infer{\Gamma \vde \tfn{\alpha^n}{e~(\Phihat^n.\alpha[\id(\Phihat)])} :  (\alpha{:}(\Phi \atl{n} *)) \arrow S}
{\infer{\ctxinsert{\Gamma}{\alpha{:}(\Phi \atl{n} *)} \vde e
  ~(\Phihat^n.\alpha[\id(\Phihat)]) : S}
  {
  \begin{array}[b]{@{}c}
    {\mathcal{D}} \\
    \ctxinsert{\Gamma}{\alpha{:}(\Phi \atl{n} *)} \vde e : (\alpha{:}(\Phi \atl{n} *)) \arrow S
  \end{array}
        &
  \begin{array}[b]{c@{}}
    \infer{\ctxappend{\ctxrestrict{\Gamma}{n}}{\Phi} \vde \alpha[\id(\Phihat)] }
    {\infer{\ctxappend{\ctxrestrict{\Gamma}{n}}{\Phi} \vde \id(\Phihat) : \Phi}{}
    }
  \end{array}
          }
}
\end{array}
\]

We use the identity substitution lemma \ref{lem:idsubst} to justify the derivation 
$\ctxappend{\ctxrestrict{\Gamma}{n}}{\Phi} \vde \alpha[\id(\Phihat)]$ and exploit the fact that $[(\Phihat^n.\alpha[\id(\Phihat)])/\alpha^n]S = S$.

\paragraph{Multi-level contextual type} Local soundness and
completeness follows similar ideas.
For local soundness, we note that $\ctxinsert{\Gamma}{u{:}(\Phi \atl{n} T)}$ results in an ordered context $\Gamma'$ where the declaration $u{:}(\Phi \atl{n} T)$ is inserted at the appropriate level. The derivation $\mathcal{D'}$ is then obtained by referring to the term substitution lemma \ref{lem:termsubst} using $(\Phihat^n.e)$ for $u$, $\mathcal{E}$, and $\mathcal{D}$.
\smallskip

\emph{Local Soundness}: \\[-1.5em]
\[
  \begin{array}[b]{c}
\infer{\Gamma \vde \letboxm{(\Phihat^n.u)}{\boxm{\Phihat^n}{e}}{e_2} : S}
{\infer{\Gamma \vde \boxm{\Phihat^n}{e} : \cbox{\Phi \atl{n} T}}{
    \begin{array}[b]{c}
      \mathcal{E}\\
      \ctxapp{\ctxrestrict{\Gamma}{n}}\Phi \vde e : T
    \end{array}
 }
  &
   \begin{array}[b]{c}
     \mathcal{D}\\
     \ctxinsert{\Gamma}{u{:}(\Phi \atl{n} T)} \vde e_2 : S
   \end{array}
    }
  \end{array}
  \Longrightarrow
  \begin{array}[b]{c}
\mathcal{D'}\\
\Gamma \vde [(\Phihat^n.e)/u^n]e_2 : S
  \end{array}
\]

For local completeness, we again use Lemma \ref{lem:idsubst} for identity substitutions to justify the derivation
$\ctxapp{\ctxrestrict{(\ctxinsert{\Gamma}{{u{:}(\Phi \atl{n} T)}})}{n} }\Phi \vde u[\id(\Phihat)] : T$.
\medskip

\emph{Local Completeness}: \\[-1.75em]
\[
  \begin{array}[b]{c}
\mathcal{D}\\
\Gamma \vde e : \cbox{\Phi \atl{n} T}
  \end{array}
  \Longrightarrow
  \begin{array}[b]{c}
\infer{\Gamma \vde \letboxm{(\Phihat^n.u)}{e}{\boxm{\Phihat}{u[\id(\Phihat)]}} : \cbox{\Phi \atl{n} T}}
{
    \begin{array}[b]{c}
      \mathcal{D}\\
      \Gamma \vde e : \cbox{\Phi \atl{n} T      }
    \end{array}
&
\infer
{\ctxinsert{\Gamma}{u{:}(\Phi \atl{n} T)} \vde \boxm{\Phihat}{u[\id(\Phihat)]} : \cbox{\Phi \atl{n} T}}
{\infer
  {\ctxapp{\ctxrestrict{(\ctxinsert{\Gamma}{{u{:}(\Phi \atl{n} T)}})}{n} }\Phi \vde u[\id(\Phihat)] : T}
  {\infer{\ctxapp{\ctxrestrict{(\ctxinsert{\Gamma}{{u{:}(\Phi \atl{n} T)}})}{n} }\Phi \vde \id(\Phihat) : \Phi}{}}
}
}
  \end{array}
\]

% \subsection{Relationship to CMTT}
% bp: To be developed further
%
% Ordinary contextual modal type theory is essentially a special case where the level $n$ is either $0$ or $1$. Note that we can translate contextual types at level $n$ higher than $1$. The idea is easiest explained in a simplified example.

% \paragraph{Example 1}

% Assume that we have an expression $m$ of type $C$ which may depend on a variable declaration $f{:}(u{:}(\Gamma \atl{1} T), x{:}(\atl{0}S)\atl{2} R)$, then we replace any occurrence of $f$ with the expression $(u,x. f'[\boxm{\Gammahat}{u[\id]}, x])$ where $f'{:}(u{:}(\cdot ~\atl{0}~\cbox{\Gamma \atl{1} T}),x{:}(\cdot \atl{0}S) ~\atl{1}~ S)$. This effectively lowers the level at which a given variable is declared by lifting assumption and wrapping it into an explicit box.

% \paragraph{Example 2} Another example to illustrate is the following. Let's say that we have a context $\Gamma$ where $x_1{:}(\Psi \atl{1} T_1), x_0{:}(\cdot \atl{0} T_0)$. We can transform $\Gamma$ into a context $\Gamma' = x'_1{:}\cbox{\Psi \atl{1} T_1}, x'_0{:}(\cdot \atl{0} T_0)$ using the substitution

% \[
%   \begin{array}{lcl}
%     (\Psihat. \letboxm{v}{x'_1[\id]}{v[\id]}& /\ & x_1 \\
%     (\cdot  . x'_0) & /\ & x_0
%   \end{array}
% \]

% \begin{center}
%   Generalize and formulate this translation
% \end{center}

\section{Extension to pattern matching}\label{sec:PatMatch}
We now extend the multi-level programming foundation to support pattern matching on code, i.e. expressions of type $\cbox{\Gamma \atl{n} T}$. This generalizes the expressions $\letboxm{(\Gammahat^n.u)}{e_1}{e_2}$ to support not only binding of code to a variable $u$, but also inspecting this piece of code by pattern matching. In general, one may view the expression  $\letboxm{(\Gammahat^n.u)}{e_1}{e_2}$, as a match on $e_1$ where the pattern consists only of a pattern variable $u$ and $e_2$ is the branch of the case-expressions. We view code patterns as higher-order abstract syntax trees; this abstraction allows us to choose any encoding internally in the implementation. 

Adding pattern matching on code in a type-safe manner has been a 
challenge for two reasons: 

\paragraph{1) Representation of Code and Refinement of Types}  In
general, we want to pattern match on $\cbox{\Gamma \atl{n} \alpha}$
(where we omit the identity substitution that is associated with the
type variable $\alpha$ for better readability). As a consequence, a
pattern may impose some constraints on
$\alpha$. For example, a code pattern $\boxm{\Gammahat}{\fn{x}{p}}$
which has type $\cbox{\Gamma \atl{n} \beta_1 \arrow \beta_2}$
generates the constraint that $\alpha := \beta_1 \arrow \beta_2$. A
similar situation exists in dependently typed systems. % indeed
                                % we may view the code representation
                                % of terms as intrinsically HOAS
                                % trees. 
Hence, it is not surprising that constraints arise.

\paragraph{2) Characterizing pattern variables at different stages. }
 A key question is what type to assign to pattern variables in code and
 type patterns. When we capture a code pattern $\boxm{\Gammahat}{\fn{x}{p}}$, it seems
natural that we assign $p$ the type $\beta_2$ in the context $\Gamma$
extended with the declaration for $x$ of type $\beta_1$. This follows,
for example, the approach taken in the proof environment \Beluga
\cite{Pientka:POPL08,Pientka:IJCAR10,Pientka:CADE15} where the pattern variable $p$
depends on the bound variables from $\Gammahat$ extended with
$x$. However, in our setting, we want to also allow code patterns of
the form 
$\boxm{\Gammahat}{\letboxm{(\Phihat^k.u)}{p}{q}}$. As a consequence,
we need to capture the type of $q$, which may depend not only on the
variables from $\Gammahat$, but also on the variable $u$. Or we might
want to allow code patterns that themselves contain code! For example,
$\boxm{\Gammahat}{\boxm{\Phihat}{p}}$. This is where our multi-level
context approach pays off. Since we associate every
variable with a level, we simply view $q$ in the extended context with
the variable $u$ where $u$ has been inserted at the appropriate position.
Similarly, in the code pattern
$\boxm{\Gammahat^k}{\boxm{\Phihat^n}{p}}$, the levels associated with
$\Gammahat$ and $\Phihat$ give us the structure to determine in which
context $p$ is making sense.  

\smallskip

These two considerations lead us to generalize our term language and
our contexts. In particular, we track and reason about two kinds of
constraints: the first one $ \alpha := (\Psi \atl{n} T) $ refines a type declaration $\alpha:(\Psi \atl{*})$ and the second one $\#$ denotes an inconsistent set of constraints. The latter allows for elegant handling of impossible cases that may arise. 
We note that these constraints arise only during typing, and programmers will write only pure contexts, i.e. contexts without any constraints.
% This is inspired by \cite{}

\[
\begin{array}{llcl}
% \mbox{Kinds} & \kappa & \bnfas & (\Psi\typ 
\mbox{Terms} & e , p, q  & \bnfas & \ldots 
                  \bnfalt \caseof{e}{\cbox{\Phi \atl{k} T}} ~\overrightarrow{(\Psi_i.(\Phihat_i.p_i) : (\Phi_i \atl{k} T_i) \tto e_i)}\\
\mbox{Context} & \Gamma, \Psi, \Phi & \bnfas & \ldots \bnfalt \Gamma,
                                               \alpha := (\Psihat^n.T) : (\Psi \atl{n} *) \bnfalt \Gamma, \#
\end{array}
\]

For simplicity and clarity, we define pattern matching and branches in
a case-expression in such a way that all pattern variables are
explicitly listed as part of the branch in the context $\Psi_i$. The
context $\Phi_i$ contains all the locally bound variables that may be
used in the pattern $p_i$. We concentrate here on patterns that are
terms, but not case-expressions themselves to explain the main
ideas. 
 In practice, we would infer the type associated with those pattern
 variables. We note that for all variables $x{:}( \_ \atl{n} \_)$  in
 $\Psi_i$, we have that $n  > \m{level}(\Phi_i)$,
 since pattern variables abstract over holes/variables in a code
 pattern.
Further, these pattern variables themselves may depend on the context
 $\Phi_i$ where all variables in $\Phi_i$ denote bound variables
 within the code pattern and hence are variables defined at levels
 higher than $n+1$.
 We also add a type annotation $(\Phi_i \atl{k} T_i) $ to each pattern
 which states that the pattern $p_i$ has type $T_i$ in the context
 $\ctxapp{\Psi_i}{\Phi_i}$.

 Last but not least, we add a type annotation to the case-expression
 itself. It describes the type of the scrutinee $e$. This is used
 during run-time where we first find the compatible branch by matching
 the type of the scrutinee against one of type annotations in branches and
 subsequently match the scrutinee against the pattern in the selected
 branch. Before giving a more detailed explanation of the static and
 dynamic semantics for case-expressions and turning our previous kinding and
 typing rules from Sec.~\ref{sec:typrules} % from Fig.~\ref{fig:kinding}  and Fig.~\ref{fig:typing}
 into shallow pattern rules (see Sec.~\ref{sec:pattyp}), we consider
 the generation and handling of type constraints next.
% Sec. \ref{sec:typconstraints}

\subsection{Typing modulo}
To handle constraints during type checking, we extend our definition
of well-formed contexts s.t. $\Gamma, \alpha := (\Psi \atl{n} T)$ is a
well-formed context if $\Gamma$ is well-formed, and the type $T$ is
well-formed in $\ctxapp{\ctxrestrict{\Gamma}{n}}{\Psi}$. We also must
ensure that there are no circularities in $\Gamma$. A context $\Gamma,
\#$ which contains a contradiction marked by $\#$ is well-formed, if
$\Gamma$ is. 

\[
  \begin{array}{c}
\multicolumn{1}{l}{\text{Additional rules for context
    well-formedness}:\fbox{$\vde \Gamma$}}
\\[0.75em]
\infer{\vde \Gamma,  \alpha := (\Psihat^n. T) : (\Psi \atl{n} *)}
{\vde \Gamma & \vde (\ctxapp{\ctxrestrict{\Gamma}{n}}{\Psi}) & (\ctxapp{\ctxrestrict{\Gamma}{n}}{\Psi}) \vde T}
\quad 
\infer{\vde \Gamma, \#}
{\vde \Gamma}
  \end{array}
\]

We also revisit the typing rules for
substitutions. To ensure that $\sigma,~(\Psihat^k.T)$ is a well-typed
substitution for a context 
$\Phi, \alpha{:=}(\Psihat^k.S) : (\Psi \atl{k} *)$, we check in
addition to the fact that $\sigma$ and $T$ are well-typed, that $T$ and $S$ are equal
in the range of the substitution. We also note that, if the domain of a
substitution contains a contradiction, then the range must also
contain a contradiction. 
%
%as we not only use substitutions within programs, but we also exploit substitutions in the operational semantics. 
%%
\[
  \begin{array}{c}
\multicolumn{1}{l}{\text{Additional rules for well-typed
    substitutions:}~ \fbox{$\Gamma \vde \sigma : \Phi$}}% \mbox{$\sigma$
                                % maps variables from $\Phi$ to
                                % $\Gamma$}
\\[0.75em]
\infer
{\Gamma \vde \sigma,~(\Psihat^k.T) : \Phi, \alpha{:=}(\Psihat^k.S) : (\Psi \atl{k} *)}
{\Gamma \vde \sigma : \Phi ~~\quad& 
 \ctxapp{\ctxrestrict{\Gamma}{k}}{\Psi} \vde T = [\ctxapp{(\ctxrestrict{(\sigma/\hat\Phi)}{k})}{(\id(\hat\Psi)/\hat\Psi)}]S ~~\quad&
 \ctxapp{\ctxrestrict{\Gamma}{k}}{\Psi} \vde T
}
\quad
\infer
{\Gamma \vde \sigma : \Phi, \#}
{\# \in \Gamma& 
 \Gamma \vde \sigma : \Phi
}
  \end{array}
\]
Last, we add a type conversion rule that lets us
prove that two types are equal modulo the constraints in
$\Psi$. 
\[
\begin{array}{c}
\infer{\Psi \vde e : T}
{\Psi \vde e : S & \Psi \vde S = T}
\end{array}
\]

%\bpadded{
%We note that the previously stated and proven substitution lemma can
%be extended to cover the additional cases for constraints. 
%}

% Last, we extend our substitution lemma.

% \begin{lemma}[Substitution lemma]\mbox{}\\
%   \mbox{} \qquad If $\Gamma_1, \Gamma_0 \vde \sigma : \Phi$ and $\Gamma_1, \Phi \vde e : T$ then $\Gamma_1, \Gamma_0 \vde [\sigma/\Phihat]e : [\sigma/\Phihat]T$.
% \end{lemma}

\subsection{Type equality modulo constraints} \label{sec:typeq}

Since we accumulate equality constraints in our context $\Psi$ during
type checking, declarative equality is not purely structural -- it can
also exploit the equality constraints. Structural equality on types
modulo constraints is defined using the judgment: $\Psi \vde T = S$.
%We define structural equality on types modulo
%in Fig.~\ref{fig:typeqmod}

Most structural equality rules are as expected, and we omit
them for brevity. To, for example, compare two function types for equality, we
simply compare their components. For polymorphic types, we similarly
compare their components and make
sure to extend the context with the declaration for $\alpha$ when
comparing the result types. If $\Psi$ contains a contradiction, then
we simply succeed. The only
interesting case is : $\alpha[\sigma] = T$ where $\alpha{:=}(\Phihat^n.S) : (\Phi \atl{n} *) \in \Psi $.
In this case, we continue to compare $T$ with $[\sigma/\Phihat]S$.
The complete set of structural equality rules can be found in \LONGVERSION{the Appendix \ref{appendix:typeq}}{\LONGVERSIONREF}.

% We note that we only compare well-kinded types. To compare two function types for equality, we simply compare its components. For polymorphic types, we similarly compare its components modulo possible constraints in $\Psi$ and make sure to extend the context with the declaration for $\alpha$ when comparing the result types. There are several cases for type variables: if we compare two type variable occurrences, $\alpha[\sigma]$ and $\alpha[\sigma']$ then we simply compare the associated substitutions (see Fig.~\ref{fig:substeqmod}); if we compare a type variable occurrence, $\alpha[\sigma]$ with another type $T$ (where $T$ is not $\alpha[\sigma]$), then we look up $\alpha$ in the constrained context $\Psi$. If there is a constraint $\alpha := (\Phihat^n.S):(\Phi \atl{n} *)$ in $\Psi$, then we continue to compare $S$ and $T$. If $\Psi$ contains a contradiction, then we simply succeed. In all other cases, structural equality modulo fails.

%\begin{lemma}[Reflexivity]
%For all $\Psi \vde T$, we have $\Psi \vde T = T$.
%\end{lemma}

\subsection{Typing Rule for Case-Expressions}
We now discuss the typing rule for case-expressions which is the centrepiece of supporting typed code analysis. 
\[
\begin{array}{c}
\infer{\Psi \vde \caseof{e}{\cbox{\Phi \atl{k} T}}~ {\overrightarrow{(\Psi_i.(\Phihat_i.p_i) : (\Phi_i \atl{k} T_i) \tto e_i)}} : S}
{
\begin{array}{l}
\qquad\quad\quad\!\Psi \vde e : \cbox{\Phi \atl{k} T} \qquad \qquad\quad
\Psi \vde \cbox{\Phi \atl{k} T} \qquad \qquad\\[0.5em]
\mbox{For all $i$}~
\left\{
 \begin{array}{l}
  \vde (\ctxapp{\Psi_i}{\Phi_i}) \qquad \qquad
  \Psi_i ; (\Phi_i)^k \vde T_i \qquad \quad
  \Psi_i ; (\Phi_i)^k \vde p_i : T_i \\
\vde (\ctxinsert{\Psi_i}{\ctxrestrict{\Psi}{k}}) \qquad
(\ctxinsert{\Psi_i}{\ctxrestrict{\Psi}{k}}) \vde (\Phi \atl{k} T) = (\Phi_i \atl{k} T_i) \searrow \Gamma_i
%\\
\qquad \ctxapp{\Gamma_i}{\ctxignore{\Psi}{k}} \vde e_i : S
\end{array}\right.
\end{array}
}
  \end{array}
\]

We first check that the guard $e$ has type $\cbox{\Phi \atl{k} T} $
and that the given type annotation is a well-kinded type. For each
branch, we then check the following conditions: 

\begin{enumerate}

\item The type annotation in the pattern is
  well-kinded. This entails verifying that the context
  $\ctxapp{\Psi_i}{\Phi_i}$ is well-formed and that the type
$T_i$ is well-kinded. Recall that $\ctxapp{\Psi_i}{\Phi_i}$ is only
defined, when all declarations in $\Psi_i$ are at higher levels than $k$, i.e. the level of $\Phi_i$.
\item The pattern $p_i$ has type $T_i$ in the context $\ctxapp{\Psi_i}{\Phi_i}$. In fact, we ensure something stronger, namely that $\Psi_i$ contains the pattern variables and the code pattern $\boxm{\Phihat_i}{p_i}$ has type $\cbox{\Phi_i \atl{k} T_i}$. This is accomplished by the judgment $\Psi_i ; \Phi_i \vde p_i : T_i$ (see Sec. \ref{sec:pattyp}).
%\item The order in $(\ctxinsert{\Psi_i}{\ctxrestrict{\Psi}{k}})$ is
%  actually important \todo[inline]{Provide more explanation!}
\item We match the type of the pattern, $(\Phi_i \atl{k} T_i)$ against
  the type of the scrutinee $(\Phi \atl{k} T)$ using the judgment
\[
(\ctxinsert{\Psi_i}{\ctxrestrict{\Psi}{k}}) \vde (\Phi \atl{k} T) =
(\Phi_i \atl{k} T_i) \searrow \Gamma_i
\]

This generates a new
  context $\Gamma_i$ which constrains some type variables in
  $\ctxrestrict{\Psi}{k}$. 
 Recall that all variables in $\Psi$ which are below $k$ will be
 replaced by $\Phi_i$, and hence only the declarations in
 $\ctxrestrict{\Psi}{k}$ matter. Further,
 $\ctxinsert{\Psi_i}{\ctxrestrict{\Psi}{k}}$ only contains variable
 declarations at levels above $k$ and therefore also $\Gamma_i$
 contains only variables above $k$.

 Last, we ensure that the resulting context $\Gamma_i$ is
 well-formed, by starting with the joint context
 $(\ctxinsert{\Psi_i}{\ctxrestrict{\Psi}{k}})$. The order will in fact
 ensure that type variables in $\ctxrestrict{\Psi}{k}$ (the type of
 the scrutinee) can be
 constrained by variables in $\Psi_i$ (the type of the pattern).
 We describe matching and the generation of constraints in
 Sec.~\ref{sec:typconstraints}. 

\item Finally, we check that the body $e_i$ of the branch has type $S$ in
  the constrained context $\Gamma_i$ extended with the
  $\ctxignore{\Psi}{k}$ where we remove all assumptions that are at
  level $k$ or higher. Note that all removed assumptions are in fact
  present in $\Gamma_i$  and have possibly been refined in the previous step.
\end{enumerate}

We now describe pattern typing and constraint generation.

\subsection{Pattern Typing}\label{sec:pattyp}

Pattern kinding and typing rules are derived from
kinding and typing rules for types and terms in Sec.~\ref{sec:typrules}. They are a special case of those
rules. 
In all the pattern typing judgments, we separate
the pattern variables \(\Psi\) from the bound variables \(\Gamma\), writing $\Psi ;
\Gamma^n$ instead of $\Psi,\Gamma$ where $\m{level}(\Gamma) = n$ and
for every declaration $x{:}(\_ \atl{j} \_)$ in $\Psi$, we have $j >
n$. Although this separation is not strictly necessary for ensuring
that a pattern is well-typed, it is necessary when we define unification
on code and types, since only pattern variables in $\Psi$ can be
instantiated, while bound variables in $\Gamma$ remain fixed. Further,
to ensure that unification falls into the decidable higher-order pattern
fragment (see \citet{Miller91iclp}), pattern variables must be
associated with a variable substitution. For simplicity, we choose here the identity
substitution, which we in fact omit for better readability.
%This ensures that this is the case in our pattern
%kinding and typing rules. 
%This restriction is not necessary for bound
%variables. 

\begin{figure}%% [htb]
   % -- we
   % omit again writing the identity substitution that is associated
   % with all pattern variables
%; {\color{red}{for simplicity we could also choose $\sigma$ to be
%identity -- or just add a list of ``meta-variables''}}
  \begin{displaymath}
    \begin{array}{c}
      %%\multicolumn{1}{l}{\mbox{Kinding rules for types}:\Gamma \vde T}
      %%\\[0.75em]
      \infer{\Psi ; \Gamma^n \vde \alpha[\sigma]}
            {\alpha:(\Phi \atl{k} *) \in \Gamma &
             \Psi ; \Gamma \vde \sigma : \Phi}
      \quad
      \infer{\Psi ; \Gamma^n \vde \alpha}
            {\alpha:(\Gamma \atl{n} *) \in \Psi}
      \\[0.75em]
      \infer{\Psi;\Gamma^n \vde \cbox{\Phi \atl{k} \beta}}
            { k > 0 &
              \vde (\ctxapp{\ctxrestrict{\Gamma}{k}}{\Phi}) &
              \beta:((\ctxapp{\ctxrestrict{\Gamma}{k}}{\Phi}) \atl{\m{max}(n,k)} *) \in \Psi
            }
      \\[0.75em]
      \infer{\Psi ; \Gamma^n \vde \beta \arrow \alpha }
            {\beta:(\Gamma \atl{n} *) \in \Psi &
             \alpha:(\Gamma \atl{n} *) \in \Psi}
      \quad
      \infer{\Psi ; \Gamma^n \vde (\alpha {:}(\Phi \atl{k} *)) \arrow \beta}
                                % \beta[\ctxmerge{\ctxrestrict{\Gamma}{n}}{\Phi}]
            {\vde \ctxapp{\Psi}{\ctxapp{\ctxrestrict{\Gamma}{k}}{\Phi}} &
             \beta:((\ctxinsert{\Gamma}{\alpha {:}(\Phi \atl{k} *)}) \atl{\m{max}(n,k)} *) \in \Psi}
    \end{array}
  \end{displaymath}
  \caption{Type Pattern rules: \fbox{$\Psi ; \Gamma^n \vde T$}}
  \label{fig:patkinding}
\end{figure}

We first consider type patterns (see  Fig. \ref{fig:patkinding}). For
type variables, we distinguish two cases. When the type variable $\alpha$ 
% which is implicitly associated with an identity substitution
is in $\Psi$, it must have type $(\Gamma \atl{n} *)$ and describes a
pattern type variable. A bound variable occurrence, $\alpha[\sigma]$, is well-kinded, if $\alpha$ is declared in $\Gamma$ to have kind $(\Phi \atl{k} *)$ and $\sigma$ is a substitution mapping variables from $\Phi$ to $\Psi;\Gamma$. Pattern kinding for function types $\beta \arrow \alpha$ is straightforward: each pattern variable must be declared in $\Psi$ with kind $(\Gamma \atl{n} *)$. For polymorphic type patterns, $(\alpha {:}(\Phi \atl{k} *)) \arrow \beta$, we note that $\beta$ is the pattern variable that makes sense in the extended context $(\ctxinsert{\Gamma}{\alpha {:}(\Phi \atl{k} *)})$. Note that in general, $k$ could be greater than $\m{level}(\Gamma) = n$. Hence, the type variable $\alpha$ can increase the overall level of the extended context, if $k > n$, 
 and the extended context has level $\m{max}(n,k)$. Therefore, the pattern type variable $\beta$ must be declared in $\Psi$ with kind $((\ctxinsert{\Gamma}{\alpha {:}(\Phi \atl{k} *)}) \atl{\m{max}(n,k)} *)$. 
Last, we consider the contextual type pattern, $\cbox{\Phi \atl{k}
  \beta}$, where $\beta$ is the pattern type variable. It is
meaningful in the context
$(\ctxapp{\ctxrestrict{\Gamma}{k}}{\Phi})$. The level of this
resulting context is $\m{max}(n,k)$ depending on whether $n$ is
greater or less than $k$. Hence, it must be declared in $\Psi$ with kind $((\ctxapp{\ctxrestrict{\Gamma}{k}}{\Phi}) \atl{\m{max}(n,k)} *)$.

\begin{figure}%% [htb]
%    where $p$ is the pattern in $\boxm{\Gammahat^n.p}$ and has type
%    $\cbox{\Gamma \atl{n} T}$ in the context $\Psi$
% ; moreover $\m{level}(\Psi) \geq n$ 
% (we omit writing the identity substitution that is associated with pattern variables)
\[
\begin{array}{c}
%\multicolumn{1}{l}{\mbox{Pattern Typing: \fbox{$\Psi ; \Gamma^n \vde p : T$}}}\\[0.75em]
%\infer{\Delta ; \Gamma \vde x : T}{\Gamma(x) = T}  \qquad
\infer{\Psi ; \Gamma^n \vde x[\sigma] : [\sigma/\Phihat]T}
{x:(\Phi \atl{k} T) \in \Gamma&
 \Psi;\Gamma^n \vde \sigma : \Phi}
\quad
\infer{\Psi ; \Gamma^n \vde x : T}
{x:(\Gamma \atl{n} T) \in \Psi}
%\infer{\Psi ; \Gamma \vde x[\id(\Phihat)] : [\id(\Phihat)/\Phihat]T}
%{\Psi(x) = (\Phi \atl{k} T) &
% \ctxmerge{\Psi}{\Gamma} \vde \id(\Phihat) : \Phi}
%
\quad
\infer{\Psi ; \Gamma^n \vde \fn{x}{p} : S \arrow T}
      {p{:}( \ctxappend{\Gamma}{x{:}(\cdot \atl{0} S)} \atl{n}T) \in \Psi
%      & \m{level}(\Gamma) = n
      }
\\[0.75em]
\infer{\Psi ; \Gamma^n \vde p\;q : T}
      {p{:}(\Gamma \atl{n} S \arrow T) \in \Psi &
       q{:}(\Gamma \atl{n} S) \in \Psi}
\qquad
\infer{\Psi ; \Gamma^n \vde \tfn{\alpha^k}{p} : (\alpha{:}(\Phi \atl{k} *)) \arrow T}
      {p{:}(\ctxinsert{\Gamma}{\alpha{:}(\Phi \atl{k}*)} \atl{\m{max}(n,k)}  T) \in \Psi}
\\[0.75em]
\infer{\Psi ; \Gamma^n \vde p\;(\Phihat^k.\beta) : [\Phihat^k.\beta[\id(\Phihat)]/\alpha^k]S}
      {p:(\Gamma \atl{n} (\alpha{:}(\Phi \atl{k} *)) \arrow S) \in \Psi
     & \beta{:}(\ctxapp{\ctxrestrict{\Gamma}{k}}{\Phi} \atl{\m{max}(n,k)} *) \in \Psi}
\\[0.75em]
\infer{\Psi ; \Gamma^n \vde \boxm{\Phihat^k}{p} : \cbox{\Phi \atl{k} T}}{
%\m{level}(\Phi) = n &
k > 0 &
p{:}(\ctxapp{\ctxrestrict{\Gamma}{k}}\Phi \atl{\m{max}(n,k)} T) \in \Psi}
\quad
\infer{\Psi ; \Gamma^n \vde \letboxm{(\Phihat^k.u)}{p}{q} : T}
      {p{:}(\Gamma \atl{n} \cbox{\Phi \atl{k} S}) \in \Psi &
      q{:}(\ctxinsert{\Gamma}{u{:}(\Phi \atl{k} S)} \atl{\m{max}(n,k)}  T) \in \Psi}
%\\[0.5em]
\end{array}
\]
  \caption{Code Pattern Typing \fbox{$ \Psi ; \Gamma^n \vde p : T$}}
\label{fig:pattyping}
\end{figure}

We describe code pattern typing in Fig.~\ref{fig:pattyping}. A code
pattern $p$ is the sub-term in $\boxm{\Gammahat^n}{p}$ which has type $\cbox{\Gamma \atl{n} T}$. Our
code pattern typing rules follow the typing rules for terms given
earlier, and we assume that $\cbox{\Gamma \atl{n} T}$ is a well-formed
pattern type.  We again separate the pattern variables (declared in
$\Psi$) from the bound variables (declared in $\Gamma$) (see
Fig.~\ref{fig:pattyping}). A pattern variable $x$ of type $T$ is
implicitly associated with the identity substitution and hence must be
declared in $\Psi$ with the type $(\Gamma \atl{n} T)$. A bound
variable occurrence, written as a closure $x[\sigma]$, is well-typed,
if $x$ is declared to have type $(\Phi \atl{k} T)$ in $\Gamma$ and
$\sigma$ is a substitution pattern that maps variables from $\Phi$ to $\Psi;\Gamma$. In code patterns for functions, $\fn{x}{p}$, of type $S \arrow T$, the pattern variable $p$ is meaningful in the extended context $\ctxappend{\Gamma}{x{:}(\cdot \atl{0} S)}$. It hence must be declared in $\Psi$ with type 
$(\ctxappend{\Gamma}{x{:}(\cdot \atl{0} S)} \atl{n}) T$. Code patterns for applications, $p~q$ are straightforward given that we adapt a declarative formulation. Polymorphic code patterns, $\tfn{\alpha^k}{p} $, and type application patterns, $p\;(\Phihat^k.\beta)$, follow the same principles as in polymorphic type patterns discussed earlier; similarly, boxed code patterns, $\boxm{\Phihat^k}{p}$, follow the same principles as for contextual type patterns. In each of these cases, the pattern variable is declared in the extended context at level $\m{max}(n,k)$. 
For $\letboxm{(\Phihat^k.u)}{p}{q}$, we follow the typing rules given earlier: $p$ must be declared with type $(\Gamma \atl{n} \cbox{\Phi \atl{k} S})$ and $q$ with type $(\ctxinsert{\Gamma}{u{:}(\Phi \atl{k} S)} \atl{\m{max}(n,k)}  T)$ in $\Psi$.

Last, we describe typing rules for substitution patterns
(Fig.~\ref{fig:substpattyping}) which follow the corresponding typing
rules for simultaneous substitutions.
% Substitution
% patterns are associated with bound variables 

\begin{figure}[htb]
\[
\begin{array}{c}
%\multicolumn{1}{l}{\mbox{Substitution Pattern}}\\[0.5em]
\infer{\Psi ; \Gamma^n \vde \cdot : \cdot}{} \quad
\infer{\Psi ; \Gamma^n \vde \sigma, (\Phihat^k.p) : (\Phi', x{:}(\Phi \atl{k} T))}
{\Psi ; \Gamma^n \vde \sigma : \Phi' &
 p{:}(\ctxapp{\ctxrestrict{\Gamma}{k}}{[\sigma/\Phihat']\Phi} \atl{\m{max}(n,k)} [\sigma/\Phihat', \id(\Phihat)/\Phihat]T) \in \Psi}
\\[0.75em]
\infer{\Psi ; \Gamma^n \vde \sigma, (\Phihat^k.\alpha) : (\Phi', \alpha{:}(\Phi \atl{k} *))}
{\Psi ; \Gamma^n \vde \sigma : \Phi' &
 \alpha{:}(\ctxapp{\ctxrestrict{\Gamma}{k}}{\Phi} \atl{\m{max}(n,k)} *) \in \Psi}
% For substitutions, we omit the variable extension cases; these cases should only occur when we push a substitution under a binder
\end{array}
\]
  \caption{Substitution Pattern Typing \fbox{$ \Psi ; \Gamma^n \vde \sigma : \Phi$}}
\label{fig:substpattyping}
\end{figure}

Whenever a pattern is well-typed using the pattern kinding and typing rules, the pattern is also
well-typed when viewed as an expression. 
%A similar reflection property
%which allows us to view pattern types simply as types holds. 

\pagebreak[3]
\begin{lemma}[Pattern Reflection]\label{lem:patrefl}\mbox{} \quad Let \(\vde
                                 \ctxappend{\Psi}{\Gamma^n}\).
  \begin{enumerate}
  \item\label{itm:patrefl-type} If \(\Psi ; \Gamma^n \vde T\) then \(\Psi \vde \cbox{\Gamma \atl{n} T}\)
  \item\label{itm:patrefl-pattern} If \(\Psi ; \Gamma^n \vde p : T\) then \(\Psi \vde \boxm{\hat\Gamma^n}{p} : \cbox{\Gamma \atl{n} T}\)
  \item\label{itm:patrefl-subst} If \(\Psi ; \Gamma^n \vde \sigma : \Phi\) then \(\Psi, \Gamma^n \vde \sigma : \Phi\)
  \end{enumerate}
\end{lemma}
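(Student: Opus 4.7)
The plan is to prove (\ref{itm:patrefl-type}), (\ref{itm:patrefl-pattern}), and (\ref{itm:patrefl-subst}) by simultaneous induction on the given pattern typing derivations, case-analyzing on the last pattern rule applied. The guiding observation is that the pattern typing judgments are essentially specializations of the regular rules (Sec.~\ref{sec:typrules}) obtained by separating the outer pattern variables $\Psi$ from the locally bound $\Gamma$ and leaving the identity substitution on each pattern variable occurrence implicit. Accordingly, every case is handled by reassembling the matching regular rule, invoking the induction hypothesis on each subderivation, and producing the omitted identity substitution via Lemma~\ref{lem:idsubst}.

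The three parts are tied together by a uniform reduction. For (\ref{itm:patrefl-type}) and (\ref{itm:patrefl-pattern}), the outer conclusions $\Psi \vde \cbox{\Gamma \atl{n} T}$ and $\Psi \vde \boxm{\hat\Gamma^n}{p} : \cbox{\Gamma \atl{n} T}$ unfold via the regular kinding and typing rules to $\ctxap{\ctxrestrict{\Psi}{n}}{\Gamma} \vde T$ and $\ctxap{\ctxrestrict{\Psi}{n}}{\Gamma} \vde p : T$, respectively. Because the discipline on $\Psi ; \Gamma^n$ places every declaration in $\Psi$ at level strictly greater than $n$, the restriction $\ctxrestrict{\Psi}{n}$ equals $\Psi$, so both obligations become judgments in the composite context $\ctxap{\Psi}{\Gamma}$, matching precisely what the inductive hypothesis on the pattern subderivations supplies. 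Part (\ref{itm:patrefl-subst}) is already of this composite shape and feeds directly into the substitution premises that occur in the pattern-variable and bound-variable cases of (\ref{itm:patrefl-type}) and (\ref{itm:patrefl-pattern}).

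The routine cases, namely the variable and bound-variable occurrences, $\fn{x}{p}$, $p\;q$, and the substitution extension rules, follow by direct application of the corresponding regular rule; in each pattern-variable position we appeal to Lemma~\ref{lem:idsubst} to type the implicit identity substitution. The main obstacle is the bookkeeping required in the rules that introduce a nested modal or polymorphic binder at some level $k$, specifically $\cbox{\Phi \atl{k} \beta}$, $\boxm{\Phihat^k}{p}$, $(\alpha{:}(\Phi \atl{k} *)) \arrow \beta$, and $\letboxm{(\Phihat^k.u)}{p}{q}$. There the pattern rule stores the inner pattern variable in $\Psi$ with expected local context $\ctxap{\ctxrestrict{\Gamma}{k}}{\Phi}$ at level $\m{max}(n,k)$, whereas the regular rule wants the judgment in $\ctxap{\ctxrestrict{(\ctxap{\Psi}{\Gamma})}{k}}{\Phi}$. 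Reconciling these requires the identity $\ctxrestrict{(\ctxap{\Psi}{\Gamma})}{k} = \ctxap{\Psi}{\ctxrestrict{\Gamma}{k}}$, which is a routine consequence of the definitions of restriction and sorted appending (Fig.~\ref{fig:ctxoperations}) together with the level invariant on $\Psi$. Once these context equalities are in hand, each case closes by plugging the induction hypothesis (and, where relevant, Lemma~\ref{lem:idsubst}) into the corresponding regular rule.
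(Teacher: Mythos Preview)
Your overall strategy is sound and close to the paper's, but two points deserve correction.

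First, the paper's proof is sharper about the induction structure: only part~(\ref{itm:patrefl-subst}) is proved by induction on the derivation; parts~(\ref{itm:patrefl-type}) and~(\ref{itm:patrefl-pattern}) are handled by a single case analysis. The reason is that the pattern rules of Figs.~\ref{fig:patkinding}--\ref{fig:pattyping} are \emph{shallow}: every immediate subterm of a pattern is itself a pattern variable declared in $\Psi$, not a nested pattern. Consequently the premises of each rule are either membership assertions $p{:}(\ldots)\in\Psi$, well-formedness side conditions, or a substitution-pattern premise $\Psi;\Gamma^n\vde\sigma:\Phi$. There is no recursive occurrence of a type- or term-pattern judgment, so ``induction'' on (\ref{itm:patrefl-type}) and (\ref{itm:patrefl-pattern}) degenerates to case analysis, with the substitution premises discharged by part~(\ref{itm:patrefl-subst}). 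Your proposal is not wrong, but presenting it as a simultaneous induction obscures this structure and suggests you expect nested patterns where there are none.

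Second, the context identity you rely on,
\[
\ctxrestrict{(\ctxapp{\Psi}{\Gamma})}{k} \;=\; \ctxapp{\Psi}{\ctxrestrict{\Gamma}{k}},
\]
is not valid in general: the level invariant on $\Psi$ only guarantees that every declaration in $\Psi$ has level $>n$, not level $\geq k$, so when $k>n+1$ some of $\Psi$ may be chopped as well. The correct identity is the chop-distribution property $\ctxrestrict{(\ctxapp{\Psi}{\Gamma})}{k} = \ctxapp{\ctxrestrict{\Psi}{k}}{\ctxrestrict{\Gamma}{k}}$. Your argument still goes through with this fix, because in each binder case the relevant pattern variable is declared at level $\m{max}(n,k)\geq k$ and therefore survives the restriction on $\Psi$; the identity-substitution premise then types in $\ctxapp{\ctxrestrict{\Psi}{k}}{(\ctxapp{\ctxrestrict{\Gamma}{k}}{\Phi})}$ exactly as Lemma~\ref{lem:idsubst} requires.
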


\begin{proof}
  For (\ref{itm:patrefl-subst}), by induction on the first derivation. For others, by case analysis.
\end{proof}

\subsection{Unification on Types: constraint generation}\label{sec:typconstraints}

\begin{figure}
  \begin{displaymath}
    \begin{array}{@{}c@{}}
      %%\multicolumn{1}{l}{\mbox{Type refinement of type}: \Gamma \mid \Phi
      %%    \vde T = T' \searrow \Gamma' }
      %% \quad \mbox{where $ \ctxmerge{\Gamma'}{\Phi} \vde T = T'$}}
      %% \\[0.75em]
      %%\clareadded{%
      \mbox{}\hfill
      \infer%
          {\Gamma ; \Phi \vde \alpha = \alpha \searrow \Gamma}%
          {\alpha{:}(\Phi \atl{k} *) \in \Gamma}%
                                %}
      \hfill%% \quad\hfill
      \infer{\Gamma ; \Phi \vde \alpha = T \searrow \Gamma, \#}
            {\Gamma = \Gamma_1, \alpha{:}(\Phi \atl{k} *), \Gamma_0 &
             \Gamma ; \Phi \vde \alpha \in T
            }
      \hfill\mbox{}
      \\[0.75em]
      \infer{\Gamma ; \Phi \vde \alpha = T \searrow \Gamma'}
            {\Gamma = \Gamma_1, \alpha{:}(\Phi \atl{k} *), \Gamma_0 &
             \ctxapp{\Gamma_1}{\Phi} \vde T &  \Gamma ; \Phi \vde \alpha \not\in T &
             \Gamma' = \Gamma_1, \alpha{:=}(\Phihat.T) : (\Phi \atl{k} *), \Gamma_0
            }
      \\[0.75em]
      \infer{\Gamma ; \Phi \vde \alpha = T \searrow \Gamma'}
            { \Gamma = \Gamma_1, \alpha{:=}(\Phihat.T') : (\Phi \atl{k} *),\Gamma_0 &
              \Gamma ; \Phi \vde \alpha \not\in T &
              \Gamma ; \Phi \vde T' = T \searrow \Gamma'
            }
      \\[0.75em]
      \infer%
          {\Gamma ; \Phi \vde \alpha[\sigma] = \alpha[\sigma'] \searrow \Gamma'}%
          {\alpha{:}(\Psi \atl{k} *) \in \Phi&%
           \Gamma ; \Phi \vde \sigma = \sigma' : \Psi \searrow \Gamma'
          }%
      \quad
      \infer{\Gamma ; \Phi \vde \cbox{\Psi \atl{n} T} = \cbox{\Psi' \atl{n} S} \searrow \Gamma''}
            { % n < k &
              \Gamma ; \ctxrestrict{\Phi}{n} \vde \Psi = \Psi' \searrow \Gamma' &
                  \Gamma'; (\ctxapp{\ctxrestrict{\Phi}{n}}{\Psi}) \vde T = S \searrow \Gamma''
            }
      \\[0.75em]
      %% \mbox{\clareadded{[And the same rules for symmetric cases of above rules]}}
      %% \\[0.75em]
      \infer{\Gamma ; \Phi \vde T_1 \arrow T_2 = S_1 \arrow S_2 \searrow \Gamma_0}
            {\Gamma ; \Phi \vde T_1 = S_1 \searrow \Gamma_1 &
             \Gamma_1 ; \Phi \vde T_2 = S_2 \searrow \Gamma_0}
                                % \\[0.75em]
      \quad 
      \infer{\Gamma ; \Phi \vde (\alpha{:}(\Psi\atl{n} *)) \arrow T =
             (\alpha{:}(\Psi'\atl{n} *)) \arrow S \searrow \Gamma'}
            {% \vde \ctxapp{\Gamma}{\ctxapp{\ctxrestrict{\Phi}{n}}{\Psi}} & 
              \Gamma ; \ctxrestrict{\Phi}{n} \vde \Psi = \Psi' &
                  \Gamma ; \ctxinsert{\Phi}{(\alpha{:}(\Psi\atl{n} *))} \vde T = S \searrow \Gamma'
            }
      %% \\[0.75em]
      %% LONGVERSION
      %% \\[0.75em]
      %% \infer{\Gamma ; \Phi \vde \cbox{\Psi \atl{n} T} = T_1 \arrow T_2
      %%     \searrow \Gamma, \#}{}
      %% \quad
      %% \infer{\Gamma ; \Phi \vde T_1 \arrow T_2 =\cbox{\Psi \atl{n} T}
      %%     \searrow \Gamma, \#}{}
      %% \\[0.75em]
      %% \infer{\Gamma ; \Phi \vde \cbox{\Psi \atl{n} T} = \cbox{\Psi'
      %% \atl{m} T'} \searrow \Gamma, \#}{n \neq m}
      \\[0.75em]
      %% \mbox{\clareadded{[All other cases yield $\Gamma,\#$]}}
    \end{array}
  \end{displaymath}
  \caption{Type Unification:\fbox{$\Gamma ; \Phi \vde T = S \searrow \Gamma'$ }}
  \label{fig:typunify}
\end{figure}

We define the refinement of types using unification via the judgment
 \[
\Psi \vde (\Phi \atl{k} T) = (\Phi_i \atl{k} T_i) \searrow \Gamma
\]

Without loss of generality, we assume that $\Psi$ is a context that
contains variables at levels higher than $k$. In particular, it contains the pattern
variables. % and we want to only refine
% variables at type higher than $k$. 

\[
\infer{\Gamma \vde (\Phi \atl{k} T) = (\Phi_i \atl{k} T_i) \searrow \Gamma_2}
{\Gamma ; \cdot \vde \Phi = \Phi_i \searrow \Gamma_1 &
 \Gamma_1 ; \Phi \vde T = T_i \searrow \Gamma_2}
\]

To unify a contextual type $(\Phi \atl{k} T)$ against $(\Phi_i \atl{k} T_i) $, we first unify
$\Phi$ against $\Phi_i$ and subsequently unify $T$ against $T_i$ given
the (bound) variable context $\Phi$. To separate between the bound variables \(\Phi\) that are fixed and bound and
pattern variables \(\Gamma\) that can be refined and instantiated, we again use the
symbol $;$ and write $\Gamma ; \Phi$.
Context and type unification are then defined in Fig.~\ref{fig:typunify} and Fig.~\ref{fig:contextunify}
using the judgments $\Gamma ; \Phi^k \vde T = S \searrow \Gamma'$ and
$\Gamma ; \Gamma_0 \vde \Psi = \Phi \searrow \Gamma'$.  

We only unify well-kinded types $T$ and $S$; in particular, if $T$ and
$S$ are well-typed in $\ctxapp{\Gamma}{\Phi^k}$, then they remain
well-typed in $\ctxapp{\Gamma'}{\Phi^k}$ where $\Gamma'$ is a
refinement of $\Gamma$, i.e.  some type variable declarations that occur in $T$ and
in $S$ are constrained
s.t. $\ctxapp{\Gamma'}{\Phi} \vde T = S$. 

%Since $T$ is a pattern type in $\Gamma;\Phi^k$, we have in fact all
%declarations  $\Gamma$ are above some 

\begin{figure}[t]
  \begin{displaymath}
    \begin{array}{c}
      %%\multicolumn{1}{l}{\mbox{Type refinement of context}}\\[0.75em]
      \infer{\Gamma ; \Gamma_0 \vde \cdot = \cdot \searrow \Gamma}{}
      \quad
      \infer{\Gamma ; \Gamma_0 \vde (\Psi, \alpha{:}(\Psi' \atl{n} *)) = (\Phi, \beta{:}(\Phi' \atl{n} *)) \searrow \Gamma''}
            {\Gamma ; \Gamma_0 \vde \Psi = \Phi \searrow \Gamma' &
             \Gamma' ; \ctxapp{\Gamma_0}{\ctxrestrict{\Psi}n} \vde \Psi' = \Phi' \searrow \Gamma''}
      \\[0.75em]
      %% \infer{\Gamma ; \Gamma_0 \vde (\Psi, x{:}(\Psi' \atl{n} T)) = (\Phi, x{:}(\Phi' \atl{n} T')) \searrow \Gamma'''}
      %% {\Gamma ; \Gamma_0 \vde \Psi = \Phi \searrow \Gamma' & 
      %% \Gamma' ; \ctxapp{\Gamma_0}{\ctxrestrict{\Psi}{n}} \vde \Psi' = \Phi' \searrow \Gamma'' &
      %% & \Gamma'' ; \ctxapp{\Gamma_0}{\ctxapp{\ctxrestrict{\Psi}{n}}{\Psi'}} \vde T = T' \searrow \Gamma'''}
      %% \\[0.75em]
      \infer{\Gamma ; \Gamma_0 \vde (\Psi, x{:}(\Psi' \atl{n} T')) = (\Phi, y{:}(\Phi'
             \atl{n} S')) \searrow \Gamma'''}
            {\Gamma ; \Gamma_0 \vde \Psi = \Phi \searrow \Gamma' &
             \Gamma' ; \ctxapp{\Gamma_0}{\ctxrestrict{\Psi}n} \vde \Psi' = \Phi' \searrow \Gamma'' &
             \Gamma'' ; \ctxapp{\Gamma_0}{\ctxapp{\ctxrestrict{\Psi}n}{\Psi'}} \vde T' = S'
             \searrow \Gamma'''}
      \\[0.75em]
      %%\infer{\Gamma \vde (\Psi, x{:}(\Psi' \atl{n} T')) = (\Phi, y{:}(\Phi'
      %%    \atl{n} S)) \searrow \Gamma_1}
      %%{\Gamma
      %%\Gamma \vde \Psi = \Phi \searrow \Gamma_1}
      \mbox{[All other cases yield $\Gamma,\#$]}
    \end{array}
  \end{displaymath}
  \caption{Context Unification: \fbox{$\Gamma ; \Gamma_0 \vde \Psi = \Phi \searrow \Gamma'$}}
  \label{fig:contextunify}
\end{figure}

\newcommand{\ext}{\succeq}
We write $\Gamma' \ext \Gamma$ to describes context refinement 
where a declaration $\alpha{:}(\Phi \atl{n} *)$ in $\Gamma$ has been
updated to $\alpha{:=}(\Phihat^n.T) : (\Phi \atl{n} *)$ and all
variable declarations present in $\Gamma$ are also present in $\Gamma'$.

Unification is defined recursively based on the type $T$ and $S$. In
particular, if $T= T_1 \arrow T_2$ and $S = S_1 \arrow S_2$, we first
match $T_1$ against $S_1$ which returns a refined context $\Gamma_1$
and we subsequently match $T_2$ against $S_2$ yielding a further
refinement $\Gamma_2$. We similarly proceed to match 
$\cbox{\Psi \atl{n} T}$ against $\cbox{\Psi' \atl{n} S} $ given the
pattern variable context $\Gamma$ and the bound variable context
$\Phi$. Since $\cbox{\Psi \atl{n} T}$ is well-kinded type pattern in
$\Gamma ; \Phi$, we know that $T$ is a well-kinded type pattern in 
$\Gamma ;\ctxapp{\ctxrestrict{\Phi}{n}}{\Psi}$. We therefore first
match $\Psi$ against $\Psi'$ in $\Gamma ; \ctxrestrict{\Phi}{n}$. This
returns an updated context $\Gamma'$. Next, we match $T$ against $S$
in the context $\Gamma'; \ctxapp{\ctxrestrict{\Phi}{n}}{\Psi}$.

To unify two polymorphic types, $(\alpha{:}(\Psi\atl{n} *)) \arrow T $ against 
$(\alpha{:}(\Psi'\atl{n} *)) \arrow S$, we first match $\Psi$ against
$\Psi'$ and subsequently match $T$ against $S$. The first context
match is not strictly necessary in our setting, as $\Psi$ and $\Psi'$
are type variable contexts and hence do not themselves contain any
pattern variables themselves. Nevertheless, we keep the general form,
as it highlights how one would extend it to a fully dependently typed
system where types could also depend on term variables.  

For pattern variables, where we match $\alpha$ against a
type $T$, there are three cases we need to consider. Note that we
again omit writing the identity substitution that is associated with
the pattern type variable $\alpha$. 

 1) if $T$
contains $\alpha$ (occur's check), then we return a context where we
add a contradiction. The constraints are unsatisfiable, and hence, we
can conclude anything.

 2) if $\alpha$ is not yet constrained, we add
$(\Phihat^k.T)$ as a constraint, making sure that $T$ is well-typed in
$\ctxapp{\Gamma_1}{\Phi}$. This will ensure that the resulting
constrained context is well-formed. 

3) if $\alpha$ was associated with a constraint $\alpha := (\Phihat^k.S)$, then we continue to match $S$ against $T$.

We omit here the symmetric variable cases where $T = \alpha$ for
compactness and lack of space, but they follow the same principle

Further, we mostly concentrate here on the cases where unification
succeeds; the omitted cases, where $T$ is not a pattern type variable
and $T$ and $S$ do not share the top-level type constructor, lead to
a contradiction. This is reflected in returning the context $\Gamma, \#$.

In practice, we use the given unification algorithm for
bi-directional matching where both sides $T$ and $T_i$ have distinct
unification variables. During type-checking, we match $T_i$ (type of
the pattern) against $T$ (the type of the scrutinee),
and we instantiate meta-variables in $T$. During runtime, the type
$T$ is concrete, and we match $T$ against $T_i$ to select the
appropriate branch in the case-expression.

As a consequence, our unification algorithm is biased. 
%i.e. we 
%as the ordering in $\Gamma$ 
%matters\footnote{We could avoid this by introducing proper unification
%variables for the type variables in the scrutinee. This would, however, add
%unnecessary complexity.}. 
For example, if $\beta$ occurs before $\alpha$ in $\Gamma$, and
we encounter a unification problem $\alpha = \beta$, we will
instantiate $\alpha$. If $\alpha \arrow \alpha$ is the type of the
pattern and $\beta$ is the type of the scrutinee, the algorithm will
fail. This is in line with our intuition that the pattern refines the
type of the scrutinee, but not vice versa.

Unification on contexts proceeds recursively using the judgment $\Gamma ; \Gamma_0 \vde \Psi = \Phi \searrow \Gamma'$. Here $\Gamma$ denote the pattern variables. The context $\ctxapp{\Gamma_0}{\Psi}$ and $\ctxapp{\Gamma_0}{\Phi}$ are well-formed, i.e., all declarations in $\Gamma$ are at higher levels than declarations in $\Gamma_0$, and all declarations in $\Gamma_0$ are at higher levels than declarations in $\Psi$ and $\Phi$. We maintain these criteria during unification, which helps us to cleanly manage variable dependencies. 

\begin{lemma}
  If $~\Gamma' \ext \Gamma$ and $\Gamma \vde T = T'$ then
$\Gamma' \vde  T = T'$.
\end{lemma}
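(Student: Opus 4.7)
The plan is to proceed by induction on the derivation of $\Gamma \vdash T = T'$, leaning on an auxiliary observation that context refinement is monotone with respect to the context operations used in the equality rules. Before starting the induction, I would verify: (i) if $\Gamma' \ext \Gamma$ then $\ctxrestrict{\Gamma'}{n} \ext \ctxrestrict{\Gamma}{n}$; (ii) if $\Gamma' \ext \Gamma$ then $\ctxinsert{\Gamma'}{\alpha{:}(\Psi\atl{n}*)} \ext \ctxinsert{\Gamma}{\alpha{:}(\Psi\atl{n}*)}$; and (iii) if $\Gamma' \ext \Gamma$ and $\alpha{:=}(\Phihat^n.S){:}(\Phi\atl{n}*)\in\Gamma$ then the same constrained declaration is present in $\Gamma'$. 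All three follow immediately from the definition of $\ext$, which only updates pure declarations to constrained ones and preserves every declaration already present (including $\#$).

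For the structural cases (function types, polymorphic types, and contextual box types) the induction hypothesis applies directly to the subderivations, using the monotonicity observations above to move between $\Gamma$ and $\Gamma'$, or their restrictions and extensions, as appropriate. The case where $\# \in \Gamma$ is immediate: $\# \in \Gamma'$ by (iii), and the corresponding equality rule fires. The closure case $\alpha[\sigma] = \alpha[\sigma']$ likewise reduces to a recursive call on substitutions, where the analogous induction hypothesis (proved simultaneously, or as a companion lemma) applies.

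The one delicate case is variable unfolding: if $\Gamma \vdash \alpha[\sigma] = T$ was derived using $\alpha{:=}(\Phihat^n.S){:}(\Phi\atl{n}*)\in\Gamma$ with premise $\Gamma \vdash [\sigma/\Phihat]S = T$, then by (iii) the same constraint sits in $\Gamma'$, so the same rule applies with the induction hypothesis on the premise. For reflexivity on a pure type variable $\alpha{:}(\Phi\atl{n}*)\in\Gamma$, the declaration for $\alpha$ is still in $\Gamma'$; if it remains pure, reflexivity applies unchanged, and if it has been refined to $\alpha{:=}(\Phihat^n.U){:}(\Phi\atl{n}*)$, the equality $\alpha[\id(\Phihat)] = \alpha[\id(\Phihat)]$ is still derivable either by reflexivity or by unfolding both sides to $U$ and using reflexivity on $U$.

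The main obstacle I anticipate is administrative: keeping the bookkeeping straight between pattern variables, bound variables, and the restriction/extension operations on contexts, and in particular verifying that the context passed to each recursive premise is itself a refinement of the corresponding context in the original derivation. Once the three monotonicity facts above are in hand, however, each case of the induction is essentially mechanical, because $\ext$ is designed so as not to disturb any syntactic condition the equality rules depend on.
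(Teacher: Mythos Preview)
The paper states this lemma without proof, so there is no argument to compare against directly; your induction on the equality derivation is the natural approach, and the monotonicity facts (i)--(iii) are exactly what the congruence cases need.

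There is, however, a real gap in your handling of the rule
\[
\infer{\Psi \vde \alpha[\sigma] = \alpha[\sigma']}{\alpha{:}(\Phi \atl{n} *) \in \Psi \quad \Psi \vde \sigma = \sigma' : \Phi}.
\]
This rule does not require $\sigma$ and $\sigma'$ to be identity substitutions, nor even syntactically equal---only equal modulo the constraints in $\Psi$. When $\alpha$ is pure in $\Gamma$ but has been refined to $\alpha{:=}(\Phihat^n.U)$ in $\Gamma'$, the first premise no longer holds in $\Gamma'$, so you must re-derive the conclusion via the two unfolding rules, which leaves the obligation $\Gamma' \vde [\sigma/\Phihat]U = [\sigma'/\Phihat]U$. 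Your proposed ``reflexivity on $U$'' discharges this only when $\sigma$ and $\sigma'$ coincide syntactically. In general you need a separate congruence lemma of the form: if $\Gamma' \vde \sigma = \sigma' : \Phi$ then $\Gamma' \vde [\sigma/\Phihat]U = [\sigma'/\Phihat]U$ for $U$ well-formed over $\Phi$. This is provable (by induction on $U$, using the substitution-equality rules of Fig.~\ref{fig:substeqmod}), but it is an independent ingredient your sketch omits; the simultaneous induction on type, context, and substitution equality must be organised so that this congruence is available when the refined-variable case is reached.
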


We next state and prove some properties of unification.

\begin{lemma}[Well-defined Unification]\label{lem:welldefunify}\quad
Assume     $\vde \ctxapp{\Gamma}{\ctxapp{\Gamma_0}{\Psi}}$.
  \begin{enumerate}
  \item\label{itm:welldefunify-context}
    If
%     $\vde \ctxapp{\Gamma}{\ctxapp{\Gamma_0}{\Psi}}$ and
    $\vde \ctxapp{\Gamma}{\ctxapp{\Gamma_0}{\Phi}}$
    then
    there is a $\Gamma'$ s.t. $\Gamma ; \Gamma_0 \vde \Psi = \Phi \searrow \Gamma'$.
  \item\label{itm:welldefunify-type}
    If
%     $\vde \ctxapp{\Gamma}{\Psi}$ and
    $\ctxapp{\Gamma}{\Psi} \vde T$ 
    $\ctxapp{\Gamma}{\Psi} \vde S$    then
    there is a $\Gamma'$ s.t. $\Gamma ; \Psi \vde T = S \searrow \Gamma'$
  \end{enumerate}
\end{lemma}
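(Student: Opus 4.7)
The plan is to establish both parts by mutual induction. Part 2 proceeds by induction on $T$ (with an appropriate termination measure discussed below), and Part 1 by induction on the length of $\Psi$, with Part 1 invoking Part 2 on type components of term declarations and Part 2 invoking Part 1 in the boxed-type case. In each case I check that the relevant preconditions of the inductive hypothesis are discharged using the assumed well-formedness of the ambient context together with the monotonicity observation that $\Gamma' \ext \Gamma$ preserves well-kindedness (already used in the lemma just above this statement).

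For Part 2, I case-analyze $T$. When $T$ is a pattern variable $\alpha{:}(\Phi \atl{k} *) \in \Gamma$, I first decide the occurs-check on $S$: if $\alpha$ occurs in $S$ the rule producing $\Gamma, \#$ applies; otherwise, if $\alpha$ is currently unrefined, the refinement rule yields $\Gamma' = \Gamma_1, \alpha{:=}(\Phihat.S){:}(\Phi \atl{k} *), \Gamma_0$, with the required well-kindedness $\ctxapp{\Gamma_1}{\Phi} \vde S$ obtained from the assumption $\ctxapp{\Gamma}{\Psi} \vde S$ and the ordering of $\Gamma$; if $\alpha$ is already refined to $(\Phihat.T')$, I recurse on $T' = S$. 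For composite types such as $T_1 \arrow T_2 = S_1 \arrow S_2$, apply the IH to $T_1 = S_1$ to obtain $\Gamma_1$, use monotonicity to conclude that $T_2$ and $S_2$ remain well-kinded in $\ctxapp{\Gamma_1}{\Psi}$, and apply the IH again. The boxed-type case first invokes Part 1 on the contexts and then Part 2 on the bodies, and the polymorphic case proceeds analogously. Any head mismatch falls into an unstated default rule returning $\Gamma, \#$, analogous to the catch-all given in Fig.~\ref{fig:contextunify}.

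For Part 1, if $\Psi = \cdot$, check whether $\Phi = \cdot$: if so, apply the base rule; otherwise emit $\Gamma, \#$ via the catch-all. If $\Psi$ ends in a declaration $\alpha{:}(\Psi' \atl{n} *)$ or $x{:}(\Psi' \atl{n} T')$, inspect $\Phi$'s last declaration: in the matching case, apply IH on the prefixes to obtain $\Gamma'$, then apply Part 1 recursively on $\Psi' = \Phi'$ under the enlarged bound context $\ctxapp{\Gamma_0}{\ctxrestrict{\Psi}{n}}$, and additionally invoke Part 2 on $T' = S'$ in the term case. The well-formedness of $\ctxapp{\ctxapp{\Gamma}{(\ctxapp{\Gamma_0}{\ctxrestrict{\Psi}{n}})}}{\Psi'}$ required for these recursive calls follows by inversion on $\vde \ctxapp{\Gamma}{\ctxapp{\Gamma_0}{\Psi}}$ and the analogous premise for $\Phi$.

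The main obstacle is termination in the refined-variable case of Part 2: recursing on $T' = S$ with $T'$ potentially much larger than $\alpha$ blocks plain structural induction on $T$. I would use a measure defined as the size of $T$ after fully expanding every refined pattern variable of $\Gamma$ according to its stored definition. This expansion is well-defined because the occurs-check performed at every refinement step maintains acyclicity of the refinement dependency graph. The measure strictly decreases when we replace $\alpha$ by $T'$ in the recursive call, and it also strictly decreases structurally in the compound cases; in those compound cases the second IH call is made in a possibly enlarged $\Gamma_1$, but the expanded sizes of $T_2$ and $S_2$ in $\Gamma_1$ are bounded by the expanded sizes of $T_1 \arrow T_2$ and $S_1 \arrow S_2$ in the original $\Gamma$, so the measure on the pair still descends.
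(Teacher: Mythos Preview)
Your overall structure---mutual induction between the two parts, case analysis on the head of the type, recursive descent into subcomponents---matches the paper. The paper, however, sets up the induction differently: it does a \emph{lexicographic} induction on the pair of kinding derivations $\mathcal{D}: \Gamma,\Psi \vde T$ and $\mathcal{E}: \Gamma,\Psi \vde S$, letting either one drive the case split. Your case analysis on $T$ alone leaves out the symmetric situation in which $T$ is compound but $S$ is a pattern variable in $\Gamma$; that is not a ``head mismatch'' (there is a dedicated rule for it, as the paper notes when it says the symmetric variable cases are omitted), so your catch-all does not cover it and you would have to case on $S$ as well.

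The more serious gap is in your termination argument. The measure ``size after fully expanding every refined pattern variable'' fails in both places you use it. In the refined-variable step, the full expansion of $\alpha$ is by definition the full expansion of its stored definition $T'$, so passing from $\alpha = S$ to $T' = S$ leaves the measure \emph{unchanged}, not strictly smaller. In the compound step, your claim that the expanded sizes of $(T_2,S_2)$ under $\Gamma_1$ are bounded by those of $(T_1\arrow T_2,\,S_1\arrow S_2)$ under $\Gamma$ is false: take $T = \alpha \arrow (\alpha \arrow \alpha \arrow \alpha \arrow \alpha)$ and $S = (\beta_1 \arrow \beta_2) \arrow \gamma$ with all variables unrefined; the starting sum of expanded sizes is $9+5=14$, but after refining $\alpha := \beta_1 \arrow \beta_2$ the expanded $T_2$ has size $15$ and $S_2=\gamma$ has size $1$, giving $16>14$. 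The standard repair is a lexicographic measure whose first component is the number of \emph{unrefined} pattern variables in $\Gamma$ (which drops whenever a new refinement is installed, as in the counterexample) and whose later components handle the purely structural and unfold steps; expanded size alone does not suffice.
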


\begin{lemma}[Soundness of Unification]\label{lem:unifysound}\quad % Given that $\vde (\ctxmerge{\Gamma}{\Phi})$.
  \begin{enumerate}
  \item\label{itm:unifysound-context}
    If
    $\Gamma ; \Gamma_0 \vde \Phi = \Psi \searrow \Gamma'$
    then
    $\ctxapp{\Gamma'}{\Gamma_0} \vde \Phi = \Psi$ and
    $\Gamma' \ext \Gamma$.
  \item\label{itm:unifysound-type} % Let $\ctxmerge{\Gamma }{\Phi} \vde T$ and $\ctxmerge{\Gamma }{\Phi} \vde S$.
    If
    $\vde \ctxapp{\Gamma}{\Phi}$ and
    $\Gamma ; \Phi \vde T = S \searrow \Gamma'$
    then
    $\vde \Gamma'$ and
    $\ctxapp{\Gamma'}{\Phi} \vde T = S$ and
    $\Gamma' \ext \Gamma$.  % and $\vde \ctxmerge{\Gamma'}{\Phi}$
  \end{enumerate}
\end{lemma}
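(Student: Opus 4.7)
The plan is to prove parts (\ref{itm:unifysound-context}) and (\ref{itm:unifysound-type}) simultaneously by mutual induction on the two unification derivations, since type unification invokes context unification in the rule for $\cbox{\Psi \atl{n} T} = \cbox{\Psi' \atl{n} S}$ and for polymorphic types, while context unification invokes type unification for term-variable declarations. For each rule in Fig.~\ref{fig:typunify} and Fig.~\ref{fig:contextunify}, I would verify three obligations on the output $\Gamma'$: well-formedness $\vde \Gamma'$, the asserted equality, and the refinement $\Gamma' \ext \Gamma$. Reflexivity of $\ext$ handles the leaf cases ($\alpha = \alpha$ and $\cdot = \cdot$) immediately, and the definition of $\ext$ makes its transitivity across chained IH invocations routine.

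The three rules that unify a pattern variable $\alpha$ against a type $T$ are the most delicate. In the occurs-check case, the returned context $\Gamma, \#$ is well-formed by the rule for $\#$; every equality trivially holds modulo an inconsistent context; and $\Gamma, \# \ext \Gamma$ follows by definition. When $\alpha$ is unconstrained, the new context $\Gamma_1, \alpha := (\Phihat.T) : (\Phi \atl{k} *), \Gamma_0$ is well-formed precisely because the premise $\ctxapp{\Gamma_1}{\Phi} \vde T$ discharges the kinding side condition in the well-formedness rule for constrained declarations. The equality $\alpha[\id(\Phihat)] = T$ then follows from the structural equality rule for constrained type variables in Sec.~\ref{sec:typeq} together with the identity-substitution property $[\id(\Phihat)/\Phihat]T = T$. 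The third case, where $\alpha$ already carries a constraint $(\Phihat.T')$, follows by applying the IH to the recursive call $\Gamma ; \Phi \vde T' = T \searrow \Gamma'$ and then chaining through the structural rule for constrained variables to obtain $\alpha = T$ in $\Gamma'$.

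The structural cases for $T_1 \arrow T_2$, $(\alpha{:}(\Psi\atl{n} *)) \arrow T$, and $\cbox{\Psi \atl{n} T}$ chain two IH invocations. After the first IH returns $\Gamma_1 \ext \Gamma$, I would invoke the monotonicity lemma stated immediately above the statement to transport any well-kindedness premises needed for the second subproblem from $\Gamma$ to $\Gamma_1$; this yields $\Gamma_2 \ext \Gamma_1$, and transitivity gives $\Gamma_2 \ext \Gamma$. For the contextual type case, I must establish $\vde \ctxapp{\Gamma_1}{(\ctxapp{\ctxrestrict{\Phi}{n}}{\Psi})}$ before invoking the type IH on $T = S$; this combines the context IH's well-formedness guarantee for the refined $\Psi$ with monotonicity applied to $\ctxapp{\Gamma}{\Phi} \vde T$. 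Context unification then proceeds by structural induction paralleling the type cases, with the term-declaration case invoking type unification as a subroutine via three successive IH applications (on $\Psi = \Phi$, on $\Psi' = \Phi'$, and on $T' = S'$), each composed with monotonicity.

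The main obstacle will be bookkeeping the flow of refinements and well-kindedness premises across successive IH calls: each intermediate $\Gamma_i$ may refine a type variable that still occurs in a component yet to be unified, so the monotonicity lemma must be invoked to lift the relevant kinding judgment from $\Gamma$ to $\Gamma_i$ before the next IH applies. A secondary source of bookkeeping is the matching of local contexts for the recursive subproblems in the contextual-type and polymorphic-type rules, where the ambient context shifts from $\Phi$ to $\ctxapp{\ctxrestrict{\Phi}{n}}{\Psi}$ or to $\ctxinsert{\Phi}{(\alpha{:}(\Psi\atl{n} *))}$; aligning levels and restrictions correctly in these invocations is where most of the care is required, but no essentially new ingredient beyond the IH, monotonicity, and the constrained-variable equality rule is needed.
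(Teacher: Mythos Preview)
Your proposal is correct and takes essentially the same approach as the paper: induction on the unification derivation, with the key variable case handled via the well-formedness rule for constrained declarations, the identity-substitution property, and the structural equality rule for $\alpha := (\Phihat.T)$. Your account is in fact more detailed than the paper's proof, which only spells out the unconstrained-variable case and leaves the structural chaining with the monotonicity lemma implicit.
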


% Given a term $e$ of type $T$ that depends on a contextual variable $f{:}(\Phi \atl{2} S)$, we translate any declarations $x_i{:}(\Psi_i \atl{1} T_i)$ in $\Phi$ into a declaration $x_i{:}\cbox{\Psi_i \atl{} T_i}$ resulting in a context $\Phi'$. Note that $\Psi$ itself must be a context that has only declarations at level $0$.

% Now, we replace any occurrence of $f$ with an expression that may depend on a $f'{:}(\Phi' \atl{1} S)$; this expression is
% $\Phihat\prime.

%Further, we require that our unification algorithm produces the most general unifier.
%
%\begin{lemma}
%If $\Gamma ; \Psi \vde T = S \searrow \Gamma'$ then for all $\Phi$ and substitutions $\sigma$ s.t. $\ctxapp{\Phi}{\Psi} \vde \sigma,\id{\Psihat} : \ctxapp{\Gamma}{\Psi}$ we have that $\ctxapp{\Phi}{\Psi} \vde [\sigma,\id{(\Psihat)}]T = S$ if and only if $\ctxapp{\Phi}{\Psi} \vde% \sigma,\id{\Psihat} : \ctxapp{\Gamma'}{\Psi}$.
%\end{lemma}
%
% We can show that the unification algorithm is stable under substitution.

% Suppose· ⊢ box(Phihat.e) : [Phi |-^k T .If Psi_i ; Phi_i ⊢ p : T_i and e = p[θ]. Then · ⊢ θ : Psi_i. 

\begin{lemma}[Unification is stable under substitution]\label{lem:stableunify}\mbox{}

  If
  $\ctxapp{\Gamma}{\Psi} \vde T$ and
  $\ctxapp{\Gamma}{\Psi} \vde S$ and
%  $\mathcal{D} = $
  $\Gamma ; \Phi \vde T = S \searrow \Gamma'$ and
  $\Gamma_1 \vde \sigma : \Gamma$ then
  there is $\Gamma'_1$ such that
  $\Gamma_1 ; [\sigma/\Gammahat]\Phi \vde [\ctxapp{(\sigma/\Gammahat)}{(\m{id}(\hat\Phi)/\hat\Phi)}]T = [\ctxapp{(\sigma/\Gammahat)}{(\m{id}(\hat\Phi)/\hat\Phi)}]S \searrow \Gamma'_1$ and
  $\Gamma'_1 \vde \sigma : \Gamma'$
\end{lemma}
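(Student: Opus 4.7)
The plan is to proceed by induction on the derivation $\Gamma ; \Phi \vde T = S \searrow \Gamma'$, mutually with a corresponding stability lemma for context unification $\Gamma ; \Gamma_0 \vde \Psi = \Psi' \searrow \Gamma'$. The three auxiliary tools we will lean on are: the identity substitution lemma (Lemma~\ref{lem:idsubst}), the soundness of unification (Lemma~\ref{lem:unifysound}), and the well-defined-ness of unification (Lemma~\ref{lem:welldefunify}).

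For the structural cases --- arrow types, boxed contextual types $\cbox{\Psi \atl{n} T}$, polymorphic types $(\alpha{:}(\Psi\atl{n} *)) \arrow T$, and the bound-variable closure case $\alpha[\sigma_T] = \alpha[\sigma'_T]$ with $\alpha \in \Phi$ --- the IH applies to each sub-unification in turn. We thread the refinement: starting from $\Gamma_1 \vde \sigma : \Gamma$, the first IH yields some $\Gamma_{11}$ with $\Gamma_{11} \vde \sigma : \Gamma_1'$ (where $\Gamma_1'$ is the intermediate refinement), and we feed this into the IH for the next sub-unification, ultimately landing at the desired $\Gamma'_1$ with $\Gamma'_1 \vde \sigma : \Gamma'$. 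When descending under a local context $\Psi$ (for instance, inside a box or a polymorphic binder), the substitution is extended with the identity on the bound variables $\hat\Psi$; Lemma~\ref{lem:idsubst} together with the substitution properties (Lemma~\ref{lem:simsubst}) ensure that the extended substitution remains well-typed and that the action on $\Phi$ and the body commutes with the extension, as required by the unification rules.

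The main obstacle is the refinement rule where an unconstrained $\alpha{:}(\Phi \atl{k} *) \in \Gamma$ is instantiated as $\alpha{:=}(\hat\Phi.T)$ in $\Gamma'$. Since $\alpha \in \Gamma$, the typing derivation $\Gamma_1 \vde \sigma : \Gamma$ supplies some entry $(\hat\Phi^k.T_\alpha)$ for $\alpha$ (writing $\sigma_0 = \ctxrestrict{(\sigma/\Gammahat)}{k}$ for the relevant truncation). After applying $\sigma$, the equation $\alpha = T$ becomes $T_\alpha = [\ctxapp{\sigma_0}{(\m{id}(\hat\Phi)/\hat\Phi)}]T$ in the context $\ctxapp{\Gamma_1}{[\sigma_0]\Phi}$. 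We use well-defined-ness (Lemma~\ref{lem:welldefunify}(\ref{itm:welldefunify-type})) to obtain $\Gamma'_1$ such that this equation unifies, and soundness (Lemma~\ref{lem:unifysound}(\ref{itm:unifysound-type})) gives $\ctxapp{\Gamma'_1}{[\sigma_0]\Phi} \vde T_\alpha = [\ldots]T$ together with $\Gamma'_1 \ext \Gamma_1$. This latter equation is exactly what is needed to derive $\Gamma'_1 \vde \sigma : \Gamma'$ using the substitution-typing rule for the constraint $\alpha{:=}(\hat\Phi.T)$, because $\Gamma'$ differs from $\Gamma$ only by adding this constraint. The case where $\alpha$ was already constrained to some $T'$ reduces directly to the IH on the sub-derivation $T' = T$. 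The occur's-check / structural-mismatch cases produce $\Gamma',\#$; for these we set $\Gamma'_1 = \Gamma_1,\#$ (propagating the contradiction marker) and invoke the contradiction rule for substitution typing.

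The hardest point, as indicated above, is ensuring the consistency of the freshly added constraint with the pre-existing substitution $\sigma$: since $\sigma$ already fixes a value $T_\alpha$ for $\alpha$, imposing the refinement $\alpha{:=}(\hat\Phi.T)$ demands that $T_\alpha$ and $[\ldots]T$ agree in the refined target context, and this agreement is precisely what $\Gamma'_1$ is constructed to witness. The mutually-defined context unification cases are entirely analogous: each context entry is handled by the corresponding type-or-kind stability, and the intermediate refinements are threaded along the same pattern, relying on the invariant that declarations to the left of the current position have strictly higher level than those to the right, so the substitutions and identity extensions compose cleanly.
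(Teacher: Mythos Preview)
Your proposal is correct and follows the same skeleton as the paper: induction on the derivation of $\Gamma ; \Phi \vde T = S \searrow \Gamma'$, with the contradiction-producing cases discharged by propagating $\#$ into $\Gamma'_1$. The paper's own proof is a single sentence (``by induction on the structure of the unification derivation'') together with the observation that once $\#\in\Gamma'_1$ the obligation $\Gamma'_1 \vde \sigma : \Gamma'$ is trivial; you have simply unpacked the interesting variable-instantiation case explicitly, using well-definedness (Lemma~\ref{lem:welldefunify}) to produce the target $\Gamma'_1$ and soundness (Lemma~\ref{lem:unifysound}) to extract the equality needed for the constrained-entry rule in substitution typing.
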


\begin{proof}
  By induction on the structure of $\Gamma ; \Phi \vde T = S \searrow \Gamma'$. Note that if \(\Gamma'_1\) contains \(\#\), \(\Gamma'_1 \vde \sigma : \Gamma'\) becomes trivial with \(\Gamma_1 \vde \sigma : \Gamma\).
\end{proof}

\begin{lemma}[Unification is Compatible with Structural Equality]\label{lem:unifycomp}\mbox{}

  For pure context \(\Psi\),
  if
  $\Psi \vde T$ and
  $\Psi \vde S$ and
  $\mathcal{D}:\Phi \vde T = S$ and
  $\mathcal{E}:\cdot ; \Phi \vde T = S \searrow \Gamma'$
  then $\Gamma' = \cdot$
\end{lemma}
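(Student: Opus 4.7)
The plan is to proceed by mutual induction on the structural equality derivation $\mathcal{D}$, simultaneously for types, substitutions, and contexts, using $\mathcal{E}$ only to determine which unification rule must fire. The key observation is that since the pattern variable context passed to unification is empty, every rule in Fig.~\ref{fig:typunify} whose premise requires a declaration $\alpha{:}(\Phi\atl{k}*)\in\Gamma$ or $\alpha{:=}(\Phihat.T)\in\Gamma$ is vacuous. Dually, because $\Psi$ (and hence the contexts we descend into) is pure, the constraint-lookup case of structural equality $\Psi \vde \alpha[\sigma] = T$ (where $\alpha := (\Phihat^n.S) \in \Psi$) cannot fire. Thus only the purely structural rules of either judgment are ever in play, and they match head-to-head.

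In each structural case of $\mathcal{D}$ — namely $T_1 \arrow T_2$, $(\alpha{:}(\Psi\atl{n}*)) \arrow T$, $\cbox{\Psi\atl{n}T}$, and the bound-variable case $\alpha[\sigma]=\alpha[\sigma']$ with $\alpha\in\Phi$ — the corresponding unification rule in Fig.~\ref{fig:typunify} dispatches to subsidiary unifications on the sub-components, which are in turn matched by the sub-derivations of $\mathcal{D}$. Applying the induction hypothesis to each subsidiary derivation yields $\cdot$ for that intermediate context. Since unification threads these contexts sequentially (e.g.\ the second premise of the $\arrow$ rule takes as input the output of the first), a sequence of empty refinements composes to an empty refinement, so $\Gamma' = \cdot$.

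The main obstacle is the bound-variable case $\alpha[\sigma] = \alpha[\sigma']$, which forces us to strengthen the lemma to an analogous statement for substitution unification, $\cdot ; \Phi \vde \sigma = \sigma' : \Psi' \searrow \Gamma'$ implies $\Gamma' = \cdot$, and similarly for context unification (needed to handle $\cbox{\Psi\atl{n}T}$ and the polymorphic case, whose unification rules first recurse into contexts via Fig.~\ref{fig:contextunify}). These three statements are proven in a single mutual induction: substitution unification unrolls pointwise into type unification, and context unification unrolls similarly via the rules of Fig.~\ref{fig:contextunify}, appealing back to the type case.

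A minor invariant to maintain along the induction is that the bound-variable contexts we descend into (for example $\ctxapp{\ctxrestrict{\Phi}{n}}{\Psi}$ in the $\cbox{\cdot}$ case and $\ctxinsert{\Phi}{(\alpha{:}(\Psi\atl{n}*))}$ in the polymorphic case) remain pure, so that neither structural equality's constraint-lookup rule nor unification's refinement-lookup rule is available. Since purity is preserved by restriction, append, and the insertion of plain variable declarations, this invariant propagates straightforwardly, and the whole argument goes through.
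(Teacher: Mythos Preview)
Your proposal is correct and matches the paper's approach: the paper's proof is the one-liner ``By induction on the structure of $\mathcal{D}$ and case analysis on the structure of $\mathcal{E}$,'' and you have supplied the details that make this work, namely the mutual strengthening to substitution and context unification and the purity invariant that blocks the constraint-lookup rules on the structural-equality side.
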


\begin{proof}
  By induction on the structure of \(\mathcal{D}\) and case analysis on the structure of \(\mathcal{E}\).
\end{proof}

% \section{Abstracting over contexts}
% To write recursive programs, we need to be able to abstract over the 
% context and treat contexts as first-class. We take here inspiration
% from the work of \Beluga~\cite{Pientka:POPL08} where contexts are
% first-class. Unlike \Beluga, which also allows us to characterize the
% rich structure of a given context using context schemas, we treat
% contexts here generically. Hence, a context variable stands just for a
% non-empty prefix of a context.

% \begin{center}
%   Does this add any issues?
% \end{center}

\subsection{Operational Semantics for Case-expressions} 
We define here the operational semantics of case-expressions and show
that types are preserved. Together with the reduction rules for our
core multi-level lambda-calculus that
emerge from Sec.~\ref{sec:locsound} this provides an operational
semantics for \Moebius. % We note that we do not evaluate inside $\boxm{\Phihat^n}{e}$; we omit below all the congruence rules for the other 

\begin{figure}[htb]
\[
  \begin{array}{lcl}
(\fn{x}{e})~v & \Longrightarrow & [(~ .v)/x^0]e\\
(\tfn{\alpha^n}{e})~(\Phihat^n.T) & \Longrightarrow & [(\Phihat^n.T)/\alpha^n]e\\
\letboxm{\Phihat^n.u}{\boxm{\Phihat^n}{e}}{e'} & \Longrightarrow & [(\Phihat^n.e)/u^n]e'
\\[0.75em]
\multicolumn{3}{c}{
\infer{\caseof{(\boxm{\Phihat^k}{e}) }{\cbox{\Phi \atl{k} T}}
              {~\overrightarrow{(\Psi_i.(\Phihat_i.q_i) : (\Phi_i \atl{k} T_i) \tto e'_i)} }
       \Longrightarrow
        [\sigma_i/\Psihat_i]e'_i}
      {
%%     \begin{array}{l}
% \cdot \vde \boxm{\Phihat^k}{e}: \cbox{\Phi \atl{k} T} \\
%% \mbox{there exists}~
\cdot \vde \sigma_i : \Psi_i \qquad
      \cdot \vde [\sigma_i/\Psihat_i]\cbox{\Phi_i \atl{k} T_i}  = \cbox{\Phi \atl{k} T}\qquad
      \cdot \vde [\sigma_i/\Psihat_i]\boxm{\Phihat_i}{q_i} = \boxm{\Phihat^k}{e}
%%     \end{array}
}
}
  \end{array}
\]
  \caption{Operational Semantics: \fbox{$e \Longrightarrow e'$} }
  \label{fig:opsem}
\end{figure}

Last, we prove type preservation. In the proof below, we only
concentrate on case-expressions; the remaining cases follow the
arguments for local soundness given in Sec.~\ref{sec:locsound}.

\begin{theorem}[Type preservation]
If $\vde e : T$ and $e \Longrightarrow e'$ then $\vde e' : T$. 
\end{theorem}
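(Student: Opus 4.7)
The plan is to proceed by case analysis on the reduction $e \Longrightarrow e'$. All cases other than the case-expression reduction are essentially the local soundness computations already carried out in Sec.~\ref{sec:locsound}, so I will focus on the case-expression rule. Let $e = \caseof{(\boxm{\Phihat^k}{e_0})}{\cbox{\Phi \atl{k} T_0}}\;\overrightarrow{(\Psi_j.(\Phihat_j.q_j) : (\Phi_j \atl{k} T_j) \tto e'_j)}$ and suppose the reduction fires on branch $i$, producing $e' = [\sigma_i/\Psihat_i]e'_i$ with witness $\cdot \vde \sigma_i : \Psi_i$ supplied by the reduction rule.

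First I would invert the typing derivation of $e$. Since the ambient context is empty, the premises of the case-expression rule instantiate to: (i) $\cdot \vde \boxm{\Phihat^k}{e_0} : \cbox{\Phi \atl{k} T_0}$; (ii) the pattern typing $\Psi_i ; \Phi_i^k \vde q_i : T_i$; (iii) a unification derivation $\Psi_i \vde (\Phi \atl{k} T_0) = (\Phi_i \atl{k} T_i) \searrow \Gamma_i$; and (iv) the branch body typing $\Gamma_i \vde e'_i : T$, using that $\ctxrestrict{(\cdot)}{k}$ and $\ctxignore{(\cdot)}{k}$ are both empty.

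The crux of the argument is to promote the witness $\cdot \vde \sigma_i : \Psi_i$ to a well-typed substitution $\cdot \vde \sigma_i : \Gamma_i$ into the refined context. By soundness of unification (Lemma~\ref{lem:unifysound}), $\Gamma_i \ext \Psi_i$: every declaration of $\Psi_i$ appears in $\Gamma_i$, possibly strengthened from $\alpha{:}(\Phi' \atl{n} *)$ to $\alpha{:=}(\Phihat'.R) : (\Phi' \atl{n} *)$. For $\sigma_i$ to extend to $\Gamma_i$, I have to verify, for each such refined declaration, that the image of $\alpha$ under $\sigma_i$ is structurally equal to $R$ modulo the (empty) constraint context. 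This is exactly where the two equality side conditions of the reduction rule earn their keep: $\cdot \vde [\sigma_i/\Psihat_i]\cbox{\Phi_i \atl{k} T_i} = \cbox{\Phi \atl{k} T_0}$ and $\cdot \vde [\sigma_i/\Psihat_i]\boxm{\Phihat_i}{q_i} = \boxm{\Phihat^k}{e_0}$. An induction on the unification derivation, aided by Lemma~\ref{lem:stableunify} (stability of unification under substitution) and Lemma~\ref{lem:unifycomp} (compatibility of unification with structural equality), shows that these side conditions force $\sigma_i$ to satisfy every constraint introduced along the way.

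With $\cdot \vde \sigma_i : \Gamma_i$ in hand, the Simultaneous Substitution Lemma (Lemma~\ref{lem:simsubst}, clause~\ref{itm:simsubst-term}) applied to $\Gamma_i \vde e'_i : T$ yields $\cdot \vde [\sigma_i/\Gammahat_i]e'_i : [\sigma_i/\Gammahat_i]T$. Since $T$ is the return type of the case-expression in the empty ambient context, it contains no pattern variables from $\Gamma_i$ and $[\sigma_i/\Gammahat_i]T = T$, giving the desired $\cdot \vde e' : T$. The main obstacle is this promotion step: reconciling the bookkeeping between the pattern variables in $\Psi_i$, the constraints accumulated in $\Gamma_i$ during static unification, and the runtime equality witnesses produced by the matching rule. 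Everything else is routine inversion and substitution.
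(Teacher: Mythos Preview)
Your approach is essentially the paper's: invert the typing of the case-expression, use Lemmas~\ref{lem:stableunify} and~\ref{lem:unifycomp} to upgrade $\cdot \vde \sigma_i : \Psi_i$ to a substitution into the refined context, then invoke the Simultaneous Substitution Lemma. Two small points where you make it harder than necessary. First, no separate ``induction on the unification derivation'' is needed in the main proof: Lemma~\ref{lem:stableunify} already packages that induction and hands you, as part of its conclusion, $\Gamma \vde \sigma_i : \Gamma_i$ for the context $\Gamma$ produced by the substituted unification; Lemma~\ref{lem:unifycomp} then shows $\Gamma = \cdot$ from the type-level equality side condition alone. Second, the term-level side condition $\cdot \vde [\sigma_i/\Psihat_i]\boxm{\Phihat_i}{q_i} = \boxm{\Phihat^k}{e_0}$ plays no role in type preservation---only the type-level equality is used. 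The paper also records the bookkeeping fact $\Psihat'_i = \Psihat_i$ (erasures agree since $\Gamma_i \ext \Psi_i$), which you implicitly use when writing $[\sigma_i/\Gammahat_i]$.
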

\begin{proof}
By case analysis on the second derivation. We show below only the case where
 $e$ is a $\m{case}$ expression.
% = \caseof{(\boxm{\Phihat^k}{e}) }{\cbox{\Phi \atl{k} T}}
%              {~\overrightarrow{(\Psi_i.(\Phihat_i.q_i) : (\Phi_i \atl{k} T_i) \tto e'_i)}} $.
%\\[0.75em]
\paragraph{Case} $\caseof{(\boxm{\Phihat^k}{e}) }{\cbox{\Phi \atl{k} T}}
              {~\overrightarrow{(\Psi_i.(\Phihat_i.q_i) : (\Phi_i \atl{k} T_i) \tto e'_i)} }
       \Longrightarrow
        [\sigma_i/\Psihat_i]e'_i$
\\[0.75em]
$\cdot \vde \caseof{(\boxm{\Phihat^k}{e}) }{\cbox{\Phi \atl{k} T}}
              {~\overrightarrow{(\Psi_i.(\Phihat_i.q_i) : (\Phi_i \atl{k} T_i) \tto e'_i)}} : S$ \hfill by assumption\\
$\Psi_i \vde \cbox{\Phi_i \atl{k} T_i} = \cbox{\Phi \atl{k} T} \searrow \Psi'_i$ \hfill by inversion\\
$\Psi'_i \vde e_i : S  $ \hfill by inversion\\
$\mbox{there exists}~\vde \sigma_i : \Psi_i $ \hfill by reduction rule for $\m{case}$\\
$\cdot \vde [\sigma_i/\Psihat_i]\cbox{\Phi_i \atl{k} T_i}  = \cbox{\Phi \atl{k} T} \searrow \Gamma$ \hfill \\
$\Gamma \vde \sigma_i : \Psi'_i$ \hfill by unification stability under substitution lemma~\ref{lem:stableunify}\\
$\cdot \vde [\sigma_i/\Psihat_i]\cbox{\Phi_i \atl{k} T_i}  = \cbox{\Phi \atl{k} T}$ \hfill by reduction rule for $\m{case}$\\
$\Gamma = \cdot$ \hfill by unification compatibility with structural equality~\ref{lem:unifycomp}\\
$\Psi'_i \ext \Psi_i$ and therefore $\Psihat'_i = \Psihat_i$ \hfill by soundness of unification lemma~\ref{lem:unifysound} and inversion on \(\vde \sigma_i : \Psi'_i\)\\
$\cdot \vde [\sigma_i/\Psihat'_i]e_i: S$ \hfill by simultaneous
substitution lemma~\ref{lem:simsubst} and the fact that $\cdot \vde S$\\
$\cdot \vde [\sigma_i/\Psihat_i]e_i: S$ \hfill since \(\Psihat'_i = \Psihat_i\)\\
\end{proof}

\section{Related Work}
% \todo[inline]{to be revised and structured better -bp}

%% FIXME: Add some structure.
\paragraph{Early staged computation systems}

\citet{Davies:ACM01} first observed that the necessity modality of the
S4 modal logic is ideally suited to distinguish 
between closed code which has type ${\square} \tau$ and programs, which have
type $\tau$ in multi-staged programming setting. To characterize open
code fragments, \citet{Davies:LICS96} proposed \(\lambda^{\bigcirc}\)
which corresponds to linear temporal logic. Open code is characterized
by the type \({\bigcirc} \tau\). While this logical foundation
provides an explanation for generating and splicing in open
code, we cannot distinguish open from closed code and the type system
of  \(\lambda^{\bigcirc}\) does not provide guarantees for its safe
evaluation. The reason is exactly in the openness: it is not sound to
evaluate an open expression before all of its free variables are
bound. This openness also makes it difficult to support pattern
matching on (open) code. 

The desire to combine the advantages of $\lambda^{\square}$  and
  \(\lambda^{\bigcirc}\) has inspired a long line of research
on type systems for staged computation, most notable
being MetaML \cite{Taha:TCS00} and in
particular, the work of \citet{Taha:POPL03} on environment classifiers which
allows manipulation of open code and supports type inference. Environment classifiers
give names for typing environments to specify when exactly its \(\m{eval}\)
function can correctly insert code, but unlike contextual types does not list
the variables that may occur in an environment (or context)
concretely. Instead, we reason about environments and their extensions
abstractly. %As the modal type of MetaML follow ideas how we
%characterize open expressions in $\lambda{\bigcirc}$. 
Inherent in the nature of that work is that we do not reason about the concrete
variables that occur in a given piece of code. It seems hence less
expressive than using contextual types, and, in particular, it seems
difficult to extend to support pattern matching on code. As
\citet{Taha:PEPM00} observed, adding intensional analysis to MetaML
would make many optimizations unsound. Environment classifiers
also seem too coarse to characterize the dependencies that exist
within the context of assumptions. This however seems critical when
adding System F style polymorphism, as we do in this work, or
extending such a system to dependent types.
%if we want to keep
%track of both term and type assumptions and the dependency between
%those assumptions.
\\[-2em]

\paragraph{Reflecting contexts in types}

Subsequent work by \citet{Kim:POPL06} characterizes 
% \(\lambda^{sim}_{open}\) and \(\lambda^{sim}_{poly}\) include types
% for 
open code using typing environments (using extensible records). Since
their goal is to extend ML with Lisp's ability to write both hygienic
(via capture-avoiding substitution) and unhygienic (via capturing
substitution) templates, variables are treated symbolically, unlike with
contextual types. As a result, code templates are not guaranteed to be
lexically well-scoped. 

\citet{Kiselyov:APLAS16} describes a staged calculus \lstinline!<NJ>! that safely permits open
code movements across different binding environments. The key to
ensuring hygiene and type safety when manipulating open code is
reflecting free variables of a code fragment in its type, which evokes
contextual modal type theory. However, unlike CMTT, that work uses
environment classifiers with the argument that a type does not need to
know exactly in which order free variables are bound. We would
disagree. Especially when moving to a richer type system where we
track dependencies among assumptions, we do need to know the order. 
Contextual types provide a more exact and general approach to model
and reason about variable dependencies. 
%\citet{Nanevski:TOCL08} introduced contextual types and generalized
%the modal lambda-calculus described by \citet{Davies:ACM01} to
%characterize code together with the local assumptions which it depends
%on. This allows us to generate compact code which avoids
%administrative administrative redices and matches the intuition of the
%programmer. We extend this work to 

\citet{Rhiger:ESOP12} shows a typed Kripke-style calculus for
staged computation \(\lambda^{[]}\), supporting evaluation under binders in future
stages, manipulation of open code, and mutable state. It uses its own
notion of contextual modal types, which models linear time (the typing
judgment tracks all future stages), rather than the branching time of
our contextual modal types. As in the work on MetaML, the context is
still represented abstractly, and we hence lack the ability to express
dependencies among assumptions.\\[-2em]

% As our work which takes as a starting point $\lambda^{\square}$,
% \citet{nanevski02jfp} describe a
% mmetaprogramming calculus, \(\nu^{\square}\), that extends
% \(\lambda^{\square}\) to reason explicitly about the variables that
% appear in open code. However, instead of using contextual types to
% keep track of free variables in boxed terms, it uses nominal
% logic. This line of research has also been pursued in FreshML
% \cite{Shinwell:ICFP03,Pottier:LICS07} and in Delphin
% \cite{Schuermann:ESOP08}, but has also been fraught with difficulties
% ranging from variables extruding their scope and ensuring type
% preservation in the presence of pattern matching. Using contextual
% types reasons about the variables in open code locally and avoids many
% of these issues.
%Over the past decade, we reasoning about names and supporting
%intensional analysis 
%Code generation and analysis is restricted to the simply typed
%\(\lambda\)-calculus fragment; in particular, it does not support
%pattern matching on code. 

\paragraph{Inspecting code}

Closely related to using contextual types to model typed code
fragments, \citet{Chen:ICFP03,Chen:JFP05} characterize typeful code
representations pairing the list of value types that may appear
in a given code together with the type of the code itself. This is
reminiscent of what contextual types accomplish, but fundamentally
uses a de Bruijn representation to model bound variables occurring in
code. This is also a popular approach when modeling object languages
in proof assistants and supporting mechanizing meta-theory (see for
example \cite{Benton:JAR12}). In \citet{Chen:JFP05}, the authors sketch
how to add pattern matching on code, but it remains unclear how to
exploit the full potential of pattern matching. In particular, the
refinement of types, which happens when pattern matching on
polymorphic code, is omitted.

%more explicit way of stating
%how code is used.

One of the few works that have investigated code inspection via
pattern matching in a typed
multi-staged programming setting is by \citet{Viera:GPCE06}. Similar
to contextual types, the type of code fragments, \(\m{cod}^{\Gamma}\),
are annotated with a typing environment of the free
variables. However, unlike \Moebius, the type of the code fragment
is not tracked, and hence we cannot fully reason about the type of
open code. Instead, type-checking of evaluated code fragments is
deferred to runtime. The complexity that arises from pattern matching
on typed code fragments is hence sidestepped.
% This seems to be the only work that supports code inspection via pattern matching.

% Maybe say something about

\citet{Parreaux:POPL18} describe Squid, a Scala macro library which
supports code generation and code inspection using a rewrite primitive.
This allows programmers to define code transformations
without extra code for the traversal through the program itself.
This is very convenient in practice. Yet, from a theoretical point of view,
it remains unclear how these rewrite rules are applied (i.e. in what context
is a rule applied) and what conditions the rewrite rules must satisfy (i.e.
should these rules be non-overlapping, deterministic in how they are applied,
etc.).\\[-2em]

\paragraph{Beside multi-staged programming: Typed Self-Interpreter}
Many popular languages have a self-interpreter, that is, an
interpreter for the language written in itself. In this line of work
we do not reason about different stages of computation, however we can
respresent programs within the language and recover a program from its
representation. Most recently, \citet{Brown:POPL16} show that System
F$_\omega$ is capable of representing a self-interpreter, a program that recovers a program
from its representation and is implemented in the language itself.
To circumvent the limits of the type language, they encode types and
type-level functions extensionally as an instantiation
function. Our work, which sits between System F$_\omega$ and System F, faced
a similar problem and choses an intensional encoding instead. While
we do not literally support type-level computation as in System F$_\omega$, we
do support contextual kinds and view contextual type variables as a
closure. This simplifies the equational theory compared to System
F$_\omega$. Further, our notion of levels gives us the appropriate
structure to describe type and code pattern typing in
case-expressions enabling an intensional analysis of code. \\[-2em]

%% NSC
\paragraph{Modal Dependent Type Theory}
Recently, there has been also interest in developing a dependently
typed modal type theory that includes the box-modality.
In particular, 
\citet{Gratzer:ICFP19} describe such a type theory which, in
principle, allows us to describe the generation of dependently typed
and polymorphic code using \lstinline!box! and
\lstinline!unbox!. Their modal dependent type theory exploits a
Fitch-style system where locks are added to the context of
assumptions. Those locks manage access to variables in a similar
fashion as context stacks in Kripke-style modal
type systems. Intuitively existing assumptions 
are locked when we enter a new stage and go under a
box expression. But in that work, \emph{all} prior
assumptions are unlocked. This essentially removes the ability to
reason about different stages, as all prior stages are treated the
same; the system also has no ability to run code. Hence, this work lacks the ability to reason about
multi-staged code and run it. It also does not support reasoning about
open code nor pattern matching on code, which raise independent issues.

%% NSC
\citet{Pientka:LICS19} present a Martin L{\"o}f type theory that uses
the box modality. This work allows the generation and analysis of
code using pattern matching. However code is represented in the
logical framework LF  and only a subset of the computation language can be directly embedded
into LF. As such, the system is not a homogeneous programming
language and not rich enough to generally model code that contains
nested boxed and let box expressions. It also does not allow us to model
multi-staged metaprogramming, as the system is fundamentally
restricted to two stages. 

\citet{Kawata:APLAS19} develop a dependently typed multi-stage
calculus based on the logical framework LF that supports execution of
code and cross-stage persistence, but does not handle pattern matching
or polymorphism. Their approach based on environment
classifiers and cross-stage persistence annotations provides less
fine-grained control compared to the use of contextual types and level
annotations.

\section{Conclusion}
%% FIXME: Improve?
This paper describes a metaprogramming foundation that, for the first
time, supports the generation of typed polymorphic code and its analysis
via pattern matching. This is accomplished by generalizing contextual
types to a multi-level contextual type system. This allows us to
disentangle the notions of levels and stages. While a stage refers to
when at run-time code is generated vs executed, levels are annotations on
code and the variables that appear inside the code. These level
annotations on variables allow us to cleanly distinguish between holes whose
value is supplied at runtime (global variables) and holes (local
variables) that will be instantiated when the code is spliced into
another code fragment. Hence levels enforce a stratification of
variables based on how they are used. 
%This allows us to characterize code that is, for example,
%generated in one stage, but itself may depend on variables that are
%supplied 

The multi-level contextual types are the key to defining pattern matching on
code and type fragments and unlocking the full potential of typed
meta-programming.  More specifically, the multiple levels are crucial to
support generating and inspecting code which itself generates % further
code. 

An important benefit of multi-level contextual types is that they allow us
to work with clean and high-level type and code pattern abstractions where
we neither expose nor commit to a concrete representation of variable
bindings and contexts. We see our work as a step towards building a general
type-theoretic foundation for multi-staged metaprogramming that, on the one
hand, enforces strong type guarantees and, on the other hand, makes it easy
to generate and manipulate code. This will allow us to exploit the full
potential of metaprogramming without sacrificing the reliability of and
trust in the code we are producing and running.

In the immediate future, we aim to add context abstraction to
\Moebius{} following the approach taken in \Beluga. Abstracting over
contexts is an orthogonal issue, which, we believe, is compatible with
the foundation of \Moebius. 
In the longer term, we aim to generalize the foundations to dependent
types. This would provide the first dependently-type
meta-programming language.

\section{Acknowledgments}
This work was funded by the Natural Sciences and Engineering Research
Council of Canada (grant number 206263), Fonds de recherche
du Qu{\'e}bec - Nature et technologies (grant number 253521), and a
graduate fellowship from Fonds de recherche
du Qu{\'e}bec - Nature et technologies (grant number 304215) awarded to the first author.

%\FIXME{\cite{Taha:POPL03}'s classifiers use a different representation of
%       "levels" but it's otherwise closely related, we should try and
%       carefully understand the fundamental differences.}

\bibliography{bibi-extract}

\LONGVERSION{
\clearpage
\appendix
\section{Context Operation}\label{sec:ctxop}
In addition to restricting a context by dropping all variable
assumptions below a given level, it is also sometimes convenient to define a
context operation where we ignore all assumptions above a level. 

% \todo[inline]{Q. for Clare: Is this operation actually still used? -bp}

\[
  \begin{array}{lcll}
\multicolumn{3}{l}{\mbox{Chopping upper context:} \quad\ctxignore{\Psi}{n}~ =~  \Phi
%  \quad\mbox{where $\Phi, \Phi'  = \Psi$}
}\\
    \ctxignore{(\cdot)}n &= & \cdot \\ \relax
    \ctxignore{(\Psi, x {:} (\Phi \atl{k}K)}n &= & \cdot
    &\text{if $k \ge n$}
    \\
    \ctxignore{(\Psi, x {:} (\Phi \atl{k}K))}n &= & \ctxignore{\Psi}n, x {:} (\Phi \atl{k} K)
    &\text{otherwise}
    
  \end{array}
\]

We include here the definition for appending two sorted contexts
$\Psi$ and $\Phi$ where for all declarations $x{:}(\Gamma \atl{n} \_)$
in \(\Psi\), we have \(\m{level}(\Phi) \le n + 1\).

We, therefore, define the operation of appending two sorted contexts
\(\ctxappend{\Psi}{\Phi}\) which we use instead of
merging for conciseness:

\[
  \begin{array}{lcll}
    \multicolumn{3}{l}{\mbox{Appending contexts:} \ctxappend{\Psi}{\Phi} = \Gamma }\\
    \ctxappend{\Psi}{\cdot}                         &= & \Psi\\
    \ctxappend{\Psi}{(\Phi, x{:}(\Gamma \atl{n} T))} &= & (\ctxappend{\Psi}{\Phi}), x{:}(\Gamma \atl{n} T)\\
  \end{array}
\]

{
\subsection{Structural properties of context operations}

With these operations, the following properties hold:

\begin{lemma}[Properties of Context Operations]\label{lem:contextprop}Assume that the context
  $\Psi$ and $\Phi$ are ordered, and $\ctxmerge{\Psi}{\Phi}=\Gamma$.
  \begin{enumerate}
  \item $\Gamma$ is ordered.
  \item For any level $n$, $\Gamma =
    \ctxappend{\ctxrestrict{\Gamma}{n}}{\ctxignore{\Gamma}{n}} = \ctxmerge{\ctxrestrict{\Gamma}{n}}{\ctxignore{\Gamma}{n}}$.
  \item If $\Psi$ only contains declarations greater or equal to $n$ and $\Phi$
    contains declarations below $n$,
    then $\ctxmerge{\Psi}{\Phi} = \ctxappend{\Psi}{\Phi}$.
  \end{enumerate}
\end{lemma}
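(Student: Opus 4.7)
The plan is to prove the three parts in the sequence (3), (1), (2), each leaning on the previous where useful. All three are essentially structural inductions that exploit the invariant that our sorted contexts list high-level assumptions on the left and low-level assumptions on the right, so I will not belabour the routine bookkeeping.

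For part (3), I will induct on $\Phi$. The base case $\Phi = \cdot$ is immediate from the definitions of $\ctxmerge{\cdot}{\Phi}$ and $\ctxappend{\Psi}{\cdot}$. In the step case $\Phi = \Phi', y{:}(\Gamma' \atl{k} K')$ with $k < n$, I split on whether $\Psi$ is empty. When $\Psi$ is non-empty with rightmost declaration at level $m \geq n > k$, the third merge clause applies, yielding $(\ctxmerge{\Psi}{\Phi'}), y{:}(\Gamma' \atl{k} K')$; the induction hypothesis then rewrites this to $(\ctxappend{\Psi}{\Phi'}), y{:}(\Gamma' \atl{k} K') = \ctxappend{\Psi}{\Phi}$ by unfolding append. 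When $\Psi = \cdot$, both sides reduce to $\Phi$ directly.

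For part (1), I will induct on the sum of the sizes of $\Psi$ and $\Phi$ and case-split on the clauses of $\ctxmerge{\Psi}{\Phi}$. The base cases where one argument is empty are trivial. In each recursive case the merge appends a declaration whose level $k$ is no greater than the rightmost level of the other operand. To confirm the result is sorted, I need to show that this newly appended declaration is correctly placed with respect to whatever is now the second-rightmost in the merged prefix. This holds because the rightmost of the recursive merge is the minimum of the rightmost levels of its two still-sorted operands, and by the case hypothesis each of these is at least $k$. The main obstacle lies here: carefully tracking this ordering invariant through both symmetric branches of the merge definition, especially in the subcase where one operand is exhausted mid-recursion and the merge collapses to the other operand verbatim.

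For part (2), I will show the first equality $\Gamma = \ctxappend{\ctxrestrict{\Gamma}{n}}{\ctxignore{\Gamma}{n}}$ by induction on the sorted $\Gamma$, splitting on the level $k$ of its rightmost declaration. When $k \geq n$, sortedness of $\Gamma$ forces every earlier declaration to have level at least $n$, so $\ctxrestrict{\Gamma}{n} = \Gamma$ and $\ctxignore{\Gamma}{n} = \cdot$, making the append immediate. When $k < n$, the rightmost declaration is retained by $\ctxignore{}{n}$ and peeled off by the append definition, leaving an instance of the induction hypothesis on the prefix. For the second equality $\ctxappend{\ctxrestrict{\Gamma}{n}}{\ctxignore{\Gamma}{n}} = \ctxmerge{\ctxrestrict{\Gamma}{n}}{\ctxignore{\Gamma}{n}}$, I will simply invoke part (3), since $\ctxrestrict{\Gamma}{n}$ contains only declarations of level $\geq n$ and $\ctxignore{\Gamma}{n}$ contains only declarations of level $< n$ by their definitions.
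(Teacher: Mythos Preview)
The paper states this lemma without proof, treating the three properties as routine structural facts about merge-sort-style context operations. Your proposal fills in exactly the details one would expect: part (3) by induction on $\Phi$, part (1) by induction on the combined size with the merge-sort invariant, and part (2) by induction on the sorted $\Gamma$ followed by an appeal to part (3). All three arguments are sound; in particular, your handling of the $k \geq n$ subcase in part (2) correctly exploits that both $\ctxrestrict{\Gamma}{n}$ and $\ctxignore{\Gamma}{n}$ short-circuit immediately on the rightmost declaration, and your ordering of the parts so that (3) is available for (2) is the natural dependency.
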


\begin{lemma}[Chopping Distribution Over Merge]\label{lem:chopdistrib}\mbox{}\quad
  \begin{enumerate}
  \item \(\ctxrestrict{(\ctxinsert{\Gamma_1}{\Gamma_0})}{n} = \ctxinsert{\ctxrestrict{\Gamma_1}{n}}{\ctxrestrict{\Gamma_0}{n}}\)
  \item \(\ctxrestrict{(\ctxappend{\Gamma_1}{\Gamma_0})}{n} = \ctxappend{\ctxrestrict{\Gamma_1}{n}}{\ctxrestrict{\Gamma_0}{n}}\)
  \item \(\ctxignore{(\ctxinsert{\Gamma_1}{\Gamma_0})}{n} = \ctxinsert{\ctxignore{\Gamma_1}{n}}{\ctxignore{\Gamma_0}{n}}\)
  \item \(\ctxignore{(\ctxappend{\Gamma_1}{\Gamma_0})}{n} = \ctxappend{\ctxignore{\Gamma_1}{n}}{\ctxignore{\Gamma_0}{n}}\)
  \end{enumerate}
\end{lemma}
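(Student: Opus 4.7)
The plan is to prove each of the four identities by straightforward structural induction, relying only on the defining clauses of the merge, append, and chopping operations given in Fig.~\ref{fig:ctxoperations} and Appendix~\ref{sec:ctxop}. Parts (2) and (4), involving append, are the easier cases because $\ctxappend{\cdot}{\cdot}$ is defined by recursion on its \emph{right} argument $\Gamma_0$, so induction on $\Gamma_0$ aligns directly with the definition. Parts (1) and (3), involving merge, require induction on the sum of the sizes of $\Gamma_1$ and $\Gamma_0$ (or, equivalently, induction on the derivation that computes $\ctxmerge{\Gamma_1}{\Gamma_0}$), since the defining clauses of merge pattern-match on the top elements of both contexts simultaneously.

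For part (2), the base case $\Gamma_0 = \cdot$ reduces both sides to $\ctxrestrict{\Gamma_1}{n}$ by direct unfolding. In the step case $\Gamma_0 = \Gamma'_0, x{:}(\Phi \atl{k} K)$, I first unfold $\ctxappend{\Gamma_1}{\Gamma_0}$ to $(\ctxappend{\Gamma_1}{\Gamma'_0}), x{:}(\Phi \atl{k} K)$, then split on whether $k < n$ or $k \ge n$. When $k < n$, the outer chopping discards the declaration on both sides, and the induction hypothesis on $\Gamma'_0$ closes the case; when $k \ge n$, the declaration is retained on both sides and the IH again finishes. Part (4) is completely dual, with $\ge n$ and $< n$ swapping roles.

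For part (1), I perform case analysis on the recursive clause of $\ctxmerge{\Gamma_1}{\Gamma_0}$ that fires. In each of the two non-base cases, the outermost declaration surviving on the left-hand side comes from either $\Gamma_1$ or $\Gamma_0$, and its level $k$ can be either $< n$ or $\ge n$. In each sub-case I unfold $\ctxrestrict{(\cdot)}{n}$ at the outer position, apply the IH to the smaller merge that remains, and then re-fold using the merge clause applicable to the chopped contexts (using the fact that chopping preserves relative order, so if the top of $\Gamma_1$ sits below the top of $\Gamma_0$ in $\Gamma_1 \oplus \Gamma_0$, the same relative positioning holds after chopping). Part (3) mirrors this argument with $\ctxignore{\cdot}{n}$ in place of $\ctxrestrict{\cdot}{n}$.

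The main subtlety, and the only place that is not purely mechanical, is in parts (1) and (3): when chopping a merge, I have to verify that the two chopped contexts still satisfy the case discriminant that picks the correct clause of merge. Since chopping removes declarations uniformly by level and does not reorder anything (by Lemma~\ref{lem:contextprop}), the level comparison between the surviving tops remains the same as in $\Gamma_1$ and $\Gamma_0$; this justifies re-folding merge on the chopped contexts and matches the desired right-hand side. Once this observation is made, every case reduces to an immediate equational rewrite followed by an application of the induction hypothesis.
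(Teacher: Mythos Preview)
Your approach is correct and is exactly the natural one; the paper in fact states this lemma without proof, treating it as a routine structural property of the context operations. A straightforward induction of the kind you outline is what is implicitly intended.

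Two small remarks on presentation. First, in the $k \ge n$ sub-cases (for both append and merge), the chopping operation \emph{stops} rather than recurses: $\ctxrestrict{(\Psi, x{:}(\Phi \atl{k} K))}{n} = \Psi, x{:}(\Phi \atl{k} K)$ when $k \ge n$. So no induction hypothesis is invoked there; instead you use the sortedness invariant to conclude that $\ctxrestrict{\Gamma_1}{n} = \Gamma_1$ and $\ctxrestrict{\Gamma_0}{n} = \Gamma_0$ (since every declaration in a sorted context has level at least that of the rightmost one). Second, your worry about ``re-folding'' merge on the chopped contexts and checking the case discriminant is unnecessary: in the $k < n$ sub-case the dropped declaration is removed on both sides before any merge clause is re-applied, so the IH already gives the right-hand side directly with no further merge step to justify. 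With these simplifications the argument is entirely mechanical.
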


\begin{lemma}[Merge Reordering]\label{lem:mergereorder}\mbox{}\quad
  \begin{enumerate}
  \item If \(n < \m{level}(\Gamma_0)\) then \(\ctxinsert{(\ctxappend{\Gamma_1}{\Gamma_0})}{\alpha{:}(\Phi \atl{n} *)} = \ctxappend{\Gamma_1}{(\ctxinsert{\Gamma_0}{\alpha{:}(\Phi \atl{n} *)})}\)
  \item If \(n \ge \m{level}(\Gamma_0)\) then \(\ctxinsert{(\ctxappend{\Gamma_1}{\Gamma_0})}{\alpha{:}(\Phi \atl{n} *)} = \ctxappend{(\ctxinsert{\Gamma_1}{\alpha{:}(\Phi \atl{n} *)})}{\Gamma_0}\)
  \end{enumerate}
\end{lemma}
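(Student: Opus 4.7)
The plan is to prove both statements by induction on the second context \(\Gamma_0\), unfolding its rightmost declaration at each step. The guiding intuition is that, because contexts are kept in decreasing order of level (higher levels on the left, lower on the right), the merge of \(\alpha{:}(\Phi \atl{n} *)\) with any sorted context places \(\alpha\) at the unique position where all declarations to its left have level at least \(n\) and all declarations to its right have level strictly less than \(n\), with the merge rule breaking ties so that declarations from the original context end up to the left of \(\alpha\). Part 1's hypothesis \(n < \m{level}(\Gamma_0)\) forces \(\alpha\) to land within or to the right of \(\Gamma_0\), so the insertion coincides with an insertion into \(\Gamma_0\) alone; Part 2's hypothesis \(n \ge \m{level}(\Gamma_0)\) forces \(\alpha\) to stop strictly to the left of all of \(\Gamma_0\), so only \(\Gamma_1\) is affected.

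Concretely, the base case \(\Gamma_0 = \cdot\) makes Part 1's precondition \(n < \m{level}(\cdot) = 0\) vacuous, and Part 2 reduces to the tautology \(\ctxinsert{\Gamma_1}{\alpha{:}(\Phi \atl{n} *)} = \ctxinsert{\Gamma_1}{\alpha{:}(\Phi \atl{n} *)}\) after unfolding \(\ctxappend{\Gamma_1}{\cdot} = \Gamma_1\). In the inductive step I write \(\Gamma_0 = \Gamma_0', y{:}(\Psi \atl{k} K)\) and unfold the definition of \(\ctxappend{\Gamma_1}{\Gamma_0}\) once to expose \(y\) as its rightmost declaration; the merge rule then compares \(k\) against \(n\). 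For Part 1, if \(n \le k\) both sides reduce in a single merge step plus one more unfolding of \(\m{ctxappend}\); if \(n > k\) the merge recurses to \(\ctxinsert{\ctxappend{\Gamma_1}{\Gamma_0'}}{\alpha{:}(\Phi \atl{n} *)}\), at which point I apply the Part 1 induction hypothesis on \(\Gamma_0'\) and repackage \(y\) on the right. For Part 2, the subcase \(n \le k\) is ruled out by \(n \ge \m{level}(\Gamma_0) \ge k+1\), and the subcase \(n > k\) yields directly to the Part 2 induction hypothesis on \(\Gamma_0'\).

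The main obstacle will be verifying the precondition of the induction hypothesis in the \(n > k\) subcase of Part 1. There I have \(n < \m{level}(\Gamma_0) = \max(\m{level}(\Gamma_0'), k+1)\) but I need \(n < \m{level}(\Gamma_0')\). The crucial observation is that \(n > k\) gives \(n \ge k+1\), so if the maximum were realized by \(k+1\) we would contradict \(n < \m{level}(\Gamma_0)\); hence \(\m{level}(\Gamma_0') > k+1\), the maximum is realized by \(\m{level}(\Gamma_0')\), and \(n < \m{level}(\Gamma_0')\) as required. Aside from this small piece of level arithmetic, the proof is routine bookkeeping about which rightmost declaration is being consumed on each side of the equation at each step of the induction.
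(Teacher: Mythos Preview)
The paper states this lemma without proof, so there is no reference argument to compare against. Your induction on \(\Gamma_0\) is the natural approach and the case analysis is correct: unfolding the rightmost declaration of \(\Gamma_0\), comparing its level \(k\) against \(n\) via the merge rule, and invoking the induction hypothesis on \(\Gamma_0'\) is exactly what the definitions of \(\oplus\) and of append force. Your handling of the one nontrivial point---verifying \(n < \m{level}(\Gamma_0')\) in the \(n > k\) subcase of Part~1 via the observation that \(n \ge k+1\) together with \(n < \max(\m{level}(\Gamma_0'), k+1)\) forces the maximum to be realised by \(\m{level}(\Gamma_0')\)---is sound.

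One small remark: your base case for Part~1 reads \(\m{level}(\cdot) = 0\), which tacitly interprets \(\m{level}\) as returning the \emph{least} valid level (the paper notes that \(\m{level}\) is not single-valued). This is the right reading for the lemma to hold unconditionally; under any larger annotated level for the empty context the base case of Part~1 would additionally require information about \(\Gamma_1\). Since your interpretation is the one under which the lemma is true as stated, this is not a gap in your argument, just a point where the paper's notation is loose.
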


We note that when \(\m{level}(\Gamma_0) \le n\), we can derive the
following corollary using Lemma \ref{lem:chopdistrib}:

\[
  \ctxrestrict{(\ctxinsert{\Gamma_1}{\Gamma_0})}{n} = \ctxrestrict{(\ctxappend{\Gamma_1}{\Gamma_0})}{n} = \ctxrestrict{\Gamma_1}{n}
\]

The following lemmas show interactions between context operations and substitutions:

\begin{lemma}[Chopping\--Substitution Commutativity]\label{lem:chopsubstcommute}\mbox{}\quad
  \begin{enumerate}
  \item \([(\hat\Phi^n.T)/\alpha^n](\ctxrestrict{\Gamma}{n}) = \ctxrestrict{([(\hat\Phi^n.T)/\alpha^n]\Gamma)}{n}\)
  \item \([(\hat\Phi^n.e)/u](\ctxrestrict{\Gamma}{n}) = \ctxrestrict{([(\hat\Phi^n.e)/y]\Gamma)}{n}\)
  \item \([\sigma/\hat\Phi](\ctxrestrict{\Gamma}{n}) = \ctxrestrict{([\sigma/\hat\Phi]\Gamma)}{n}\)
  \end{enumerate}
\end{lemma}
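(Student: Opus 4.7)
The plan is to prove all three parts by straightforward structural induction on $\Gamma$, exploiting the fact that substitution leaves the \emph{levels} of declarations untouched while only rewriting the types, kinds, and local contexts that sit inside those declarations. Since $\ctxrestrict{\cdot}{n}$ is defined purely by inspecting levels, the two operations should freely commute.

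For the base case $\Gamma = \cdot$, both sides of each equation reduce to $\cdot$. For the step case, $\Gamma = \Gamma_0, d$ where $d$ is either $x{:}(\Psi \atl{k} K)$ or $\alpha{:}(\Psi \atl{k} *)$. First I would unfold the definition of substitution on an extended context; crucially, this produces a context of the shape $([\sigma/\hat\Phi]\Gamma_0), d'$ where $d'$ has exactly the same outer level $k$ as $d$, differing only by having appropriate substitutions applied to the interior $\Psi$ and $K$. Then I split on whether $k < n$ or $k \ge n$. If $k < n$, both $\ctxrestrict{\Gamma}{n}$ and $\ctxrestrict{([\sigma/\hat\Phi]\Gamma)}{n}$ discard the head declaration, and the equality reduces by the induction hypothesis on $\Gamma_0$. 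If $k \ge n$, both sides retain the head declaration with the same modified interior, and again the induction hypothesis on $\Gamma_0$ closes the case.

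The three parts share this template: Part~(1) and Part~(2) are instances of Part~(3) up to repackaging of the single substitution as a simultaneous one, so I would in fact prove Part~(3) in full detail and then observe that Parts~(1) and (2) follow by specialising.

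The one delicate point to check is the substitution applied to the interior of a surviving declaration. When the substitution descends into a declaration carrying a local context $\Psi$ at level $k$, it may internally restrict itself (as in the boxed-type case of Fig.~\ref{fig:typsubst}). I would therefore make the invariant explicit: applying $[\sigma/\hat\Phi]$ to $d$ yields exactly the same $d'$ regardless of whether $d$ sits inside $\Gamma$ or inside $\ctxrestrict{\Gamma}{n}$, because the internal restriction depends only on the level $k$ attached to $d$ itself, and not on which surrounding declarations have already been dropped. Once this invariant is stated and checked from the substitution clauses, the inductive reasoning is purely syntactic. I do not foresee any genuine obstacle; the only risk is bookkeeping confusion between the outer level $k$ that governs chopping and the internal restrictions governed by $\m{level}(\hat\Phi)$, and I would guard against this by separating the two concerns cleanly in the case split.
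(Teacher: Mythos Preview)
The paper states this lemma without proof (it appears among the auxiliary structural lemmas in the appendix with no accompanying \texttt{proof} environment), so there is no paper proof to compare against. Your approach by structural induction on $\Gamma$ is the natural one and is correct.

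One small refinement: in the case $k \ge n$, you do not actually need the induction hypothesis. By the definition of chopping, $\ctxrestrict{(\Gamma_0, d)}{n} = \Gamma_0, d$ outright (chopping stops at the first declaration of level $\ge n$; it does not recurse), so both sides are literally $[\sigma/\hat\Phi](\Gamma_0, d)$. This also dissolves the ``delicate point'' you flag: since the prefix $\Gamma_0$ is unchanged in this case, the $\id(\hat\Gamma_0)$ extension used inside the context-substitution clause is identical on both sides, and $d'$ is trivially the same. The sortedness invariant on contexts is doing the work here. Your case $k < n$ is exactly right and is where the IH is genuinely used.

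On deriving Parts (1) and (2) from Part (3): the paper does not define single substitution as a literal instance of simultaneous substitution, only as following ``similar ideas,'' so it is safer to say that the same inductive template applies to all three parts rather than that (1) and (2) are formal specialisations of (3).
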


\begin{lemma}[Substitution Distribution Over Merge]\label{lem:substdistrib}\mbox{}\quad
  \begin{enumerate}
  \item \([(\hat\Phi^n.T)/\alpha^n](\ctxinsert{\Gamma_1}{\Gamma_0}) = \ctxinsert{[(\hat\Phi^n.T)/\alpha^n]\Gamma_1}{[(\hat\Phi^n.T)/\alpha^n]\Gamma_0}\)
  \item \([(\hat\Phi^n.T)/\alpha^n](\ctxappend{\Gamma_1}{\Gamma_0}) = \ctxappend{[(\hat\Phi^n.T)/\alpha^n]\Gamma_1}{[(\hat\Phi^n.T)/\alpha^n]\Gamma_0}\)
  \item \([(\hat\Phi^n.e)/u](\ctxinsert{\Gamma_1}{\Gamma_0}) = \ctxinsert{[(\hat\Phi^n.e)/u]\Gamma_1}{[(\hat\Phi^n.e)/u]\Gamma_0}\)
  \item \([(\hat\Phi^n.e)/u](\ctxappend{\Gamma_1}{\Gamma_0}) = \ctxappend{[(\hat\Phi^n.e)/u]\Gamma_1}{[(\hat\Phi^n.e)/u]\Gamma_0}\)
  \item \([\sigma/\hat\Phi](\ctxinsert{\Gamma_1}{\Gamma_0}) = \ctxinsert{[\sigma/\hat\Phi]\Gamma_1}{[\sigma/\hat\Phi]\Gamma_0}\)
  \item \([\sigma/\hat\Phi](\ctxappend{\Gamma_1}{\Gamma_0}) = \ctxappend{[\sigma/\hat\Phi]\Gamma_1}{[\sigma/\hat\Phi]\Gamma_0}\)
  \end{enumerate}
\end{lemma}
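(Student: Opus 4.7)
The plan is to prove each of the six parts by a direct structural induction, relying on the fact that the substitution operations preserve the level annotations of variable declarations in contexts. The lemma splits naturally into two families: the append cases (parts 2, 4, 6) and the merge/insert cases (parts 1, 3, 5). Since the three substitution forms (type-variable, term-variable, and simultaneous) are analogous, it suffices to give the argument in detail for one family and observe that the other forms go through identically.

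For the append cases, I would induct on $\Gamma_0$, since $\ctxappend{\Psi}{\Phi}$ is defined by recursion on its second argument. The base case $\Gamma_0 = \cdot$ is immediate from the definition. In the inductive step $\Gamma_0 = \Gamma_0', x{:}(\Gamma \atl{n} T)$, one unfolds the definitions of append and substitution-on-context on both sides of the equation, and concludes by the induction hypothesis on $\Gamma_0'$.

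For the merge cases, I would follow the case structure of the merge operation itself, which inspects the rightmost declarations in $\Gamma_1$ and $\Gamma_0$ and chooses the one at the lower level to sit furthest to the right. The crucial observation is that applying a substitution $[\sigma/\Phihat]$ to a declaration $x{:}(\Psi \atl{n} K)$ yields a declaration whose level annotation is still $n$; only the contextual type or kind part of the declaration is transformed. Hence the ordering criterion used inside $\oplus$ is invariant under substitution, and the same branch of the merge definition applies both before and after pushing the substitution inside. Each branch then reduces, via one unfolding of merge on each side, to an equation that is closed by the induction hypothesis on strictly smaller contexts.

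The main obstacle is confirming the level-preservation invariant just stated. This is not automatic: one has to look at the concrete definition of substitution on contexts (given in the appendix along with $[\sigma/\Phihat]T$ and $[\sigma/\Phihat]e$) and verify that each clause transmits the level annotation $n$ unchanged to the resulting declaration. Once this invariant is in hand, all six parts follow by routine induction, and for the simultaneous case one additionally uses the analogous distribution and chopping facts (Lemma~\ref{lem:chopsubstcommute} and the prior parts of Lemma~\ref{lem:substdistrib}) when the substitution enters the scope of a binder whose level forces $\sigma$ to be chopped and extended with an identity segment.
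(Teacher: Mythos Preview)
The paper states this lemma without proof (it appears in the appendix alongside the other structural lemmas on context operations, none of which carry a proof), so there is no reference argument to compare against. Your plan is the natural one and is essentially correct: induction on $\Gamma_0$ for the append parts, and induction following the recursive structure of $\oplus$ for the merge parts, with the key invariant being that each substitution operation on a context leaves the level annotation of every declaration unchanged (which is immediate from the clause in Fig.~\ref{fig:ctxsubst}).

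One point worth making explicit that you only gesture at: in the simultaneous-substitution cases, the definition of $[\sigma/\Psihat]\Gamma$ extends $\sigma$ with $\id(\Gammahat')$ for the \emph{prefix} $\Gamma'$ already processed. In the merged context $\ctxmerge{\Gamma_1}{\Gamma_0}$, the prefix of a $\Gamma_0$-declaration may contain $\Gamma_1$-declarations that are absent when you compute $[\sigma/\Psihat]\Gamma_0$ in isolation. The two results agree only because merge is defined on \emph{independent} contexts, so those extra identity mappings for $\Gamma_1$-variables act trivially on the types occurring in $\Gamma_0$. You should state and use this independence assumption explicitly; the lemma as written leaves it implicit, but without it parts~5 and~6 would not go through as stated.
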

}

\section{Substitution Operations}\label{sec:substop}
We define here several substitution operations.

\subsection{Chopping, Merging, and Appending Simultaneous
  Substitutions}

Dual to the operations for contexts, we rely on the corresponding
operations for simultaneous substitutions, which we define below.

\[
  \begin{array}{rcll}
\multicolumn{3}{l}{\mbox{Chopping lower part of the substitution:} \quad\ctxrestrict{\sigma/\Psihat}{n}~ =~  \sigma'/\Psihat'
%  \quad\mbox{where $\Phi, \Phi'  = \Psi$}
}\\
    \ctxrestrict{(\cdot/\cdot)}n &= & \cdot/\cdot \\ \relax
    \ctxrestrict{(\sigma, \Phihat^k.e) / (\Psihat, x^k)}n &= & \ctxrestrict{(\sigma/\Psihat)}n
    &\text{if $k < n$}
    \\
    \ctxrestrict{(\sigma, \Phihat^k.e) / (\Psihat, x^k)}n &= & (\sigma, \Phihat^k.e)/ (\Psihat, x^k)
    &\text{otherwise}
    \\
    \ctxrestrict{(\sigma; x^k) / (\Psihat, x^k)}n &= & \ctxrestrict{(\sigma/\Psihat)}n
    &\text{if $k < n$}
    \\
    \ctxrestrict{(\sigma; x^k) / (\Psihat, x^k)}n &= & (\sigma;x^k)/ (\Psihat, x^k)
    &\text{otherwise}
    \\
    \ctxrestrict{(\sigma, \Phihat^k.T) / (\Psihat, \alpha^k)}n &= & \ctxrestrict{(\sigma/\Psihat)}n
    &\text{if $k < n$}
    \\
    \ctxrestrict{(\sigma, \Phihat^k.T) / (\Psihat, \alpha^k)}n &= & (\sigma, \Phihat^k.T)/ (\Psihat, \alpha^k)
    &\text{otherwise}
    \\
    \ctxrestrict{(\sigma; \alpha^k) / (\Psihat, \alpha^k)}n &= & \ctxrestrict{(\sigma/\Psihat)}n
    &\text{if $k < n$}
    \\
    \ctxrestrict{(\sigma; \alpha^k) / (\Psihat, \alpha^k)}n &= & (\sigma;\alpha^k)/ (\Psihat, \alpha^k)
    &\text{otherwise}
  \end{array}
\]

We define appending of two substitutions below. Note that we will only
use append, when the lowest level of the variables in $\Psihat$ is
greater than or equal to $\m{level}(\Phihat) - 1$.

\[
  \begin{array}{lcll}
    \multicolumn{3}{l}{\mbox{Appending substitutions:} \ctxappend{\sigma/\Psihat}{\rho/\Phihat} = \sigma'/\Gammahat }\\
    \ctxappend{\sigma/\Psihat}{\cdot}                         &= & \sigma/\Psi\\
    \ctxappend{\sigma/\Psihat}{(\rho, \Gammahat^n.e)/(\Phihat, x^n)} &= & (\ctxappend{\sigma/\Psihat}{\rho/\Phihat}), \Gammahat^n.e\\
    \ctxappend{\sigma/\Psihat}{(\rho, \Gammahat^n.T)/(\Phihat, \alpha^n)} &= & (\ctxappend{\sigma/\Psihat}{\rho/\Phihat}), \Gammahat^n.T\\
    \ctxappend{\sigma/\Psihat}{(\rho; x^n)/(\Phihat, x^n)} &= & (\ctxappend{\sigma/\Psihat}{\rho/\Phihat}); x^n\\
    \ctxappend{\sigma/\Psihat}{(\rho; \alpha^n)/(\Phihat, \alpha^n)} &= & (\ctxappend{\sigma/\Psihat}{\rho/\Phihat}); \alpha^n
  \end{array}
\]

\subsection{Variable Lookup} 
In our definition of the substitution operation, we rely on looking up
a mapping for a variable $x$ in the substitution
$\sigma/\Psihat$. This operation is defined recursively on the
structure of $\sigma/\Psihat$. 

%\begin{figure}[h]
  \begin{displaymath}
    \renewcommand{\arraystretch}{1.2}
    \begin{array}{l@{~=~}l@{\qquad\qquad}r}
\multicolumn{3}{l}{\mbox{Variable Lookup \(\LKP{\sigma/\Psihat}x \)}}\\[1em]
      \LKP{(\sigma,\hat\Phi^n.e)/(\Psihat,x^n)}x & (\hat\Phi^n.e) &\\
      \LKP{(\sigma,\hat\Phi^n.e)/(\Psihat,y^n)}x & \LKP{\sigma/\Psihat}x & x \neq y~\mbox{or}~k \neq n\\
      \LKP{(\sigma;y^n)/(\Psihat,x^n)}x & y^n &\\
      \LKP{(\sigma;w^k)/(\Psihat,y^n)}x & \LKP{\sigma/\Psihat}x & x \neq y ~\mbox{or}~k \neq n\\[0.5em]
% \multicolumn{3}{l}{\mbox{Variable Lookup \(\LKP{\sigma/\Psihat}x \)}}\\[1em]
      \LKP{(\sigma,\hat\Phi^n.T)/(\Psihat,\alpha^n)}\alpha & (\hat\Phi^n.T) &\\
      \LKP{(\sigma,\hat\Phi^n.T)/(\Psihat,\beta^n)}\alpha & \LKP{\sigma/\Psihat}\alpha & \alpha \neq \beta\\
      \LKP{(\sigma;\beta^n)/(\Psihat,\alpha^n)}\alpha & \beta^n &\\
      \LKP{(\sigma;\gamma^k)/(\Psihat,\beta^n)}\alpha^n &
                                                          \LKP{\sigma/\Psihat}\alpha & \alpha \neq \beta ~\mbox{or}~k\neq n\\
    \end{array}
  \end{displaymath}
%  \caption{Variable Lookup \(\LKP{\sigma/\Psihat}x = e\)}
%  \label{fig:lookup}
%\end{figure}

\subsection{Simultaneous Substitution Operation}\label{sec:substerm}

We define substitution operation for terms and substitutions
% (Fig.~\ref{fig:termsubst} and Fig.~\ref{})
in a similar fashion to the substitution operation for types.
For variables, we again distinguish between three cases:
If $x$ is in $\Psihat$, we again consider the case where $x$ is mapped
to $\Phihat^n.e$ and where $x$ is simply renamed and mapped to $y$.

When we push a substitution inside the body of a function $\fn{x}{e}$
or $\tfn{\alpha^n}{e}$, we extend the substitution with an identity
mapping for the variable $x$ or $\alpha$ resp if its level is below
\(\mbox{level}(\Psihat)\).
For applications $e_1~e_2$, we simply apply the substitution
recursively. For type applications $(e~(\Phihat^n.T)$, we apply the
substitution recursively to $e$. To apply $\sigma/\Psihat$ to
$(\Phihat^n.T)$ we have to be a little careful: we first drop all the
mappings from $\sigma/\Psihat$ which are below level $n$
($\ctxrestrict{(\sigma/\Psihat)}n = \sigma'/\Psihat'$) and replace
those mappings with identity mappings creating the substitution
$\ctxappend{(\sigma'/\Psihat')}{(\id(\Phihat) / \Phihat)}$. Again,
if \(n > \mbox{level}(\Psihat)\), we do not push the substitution
further, as the substitution $(\sigma/\Psihat)$ has no effect on
$T$. The same principle is also used to push the substitution inside
$\boxm{\Phihat^n}{e}$.  

Applying the substitution to $\letboxm{(\Phihat^n.u)}{e_1}{e_2}$ is
another case where we need
a careful definition. If  $\m{level}(\Psihat) > n$, then we extend
the substitution with the identity mapping which is inserted at the
appropriate position (written here as
$\ctxinsert{(\sigma/\Psihat)}{(u/u^n)}$). Otherwise, $u$ is a
``global'' variable where $\m{level}(\Psihat) \leq n$, and we apply
the substitution as-is to $e_2$  preserving the level of \(\Psihat\).

%\begin{figure}
  \begin{displaymath}
    \renewcommand{\arraystretch}{1.2}
    \begin{array}{lcl}
      \SSubst{\sigma/\Psihat}{(x[\sigma'])} & = &
          x[\sigma'']\hfill
          \qquad\quad x \not\in \hat\Psi~\mbox{and}~\SSubst{\sigma/\Psihat}{\sigma'} = \sigma''
          \\
      \SSubst{\sigma/\Psihat}{(x[\sigma'])} & = &
          e'
          \qquad\quad \hfill
          \LKP{\sigma /\Psihat}{x} = (\Phihat^n.e)~\mbox{and}~
          \SSubst{\sigma/\Psihat}{\sigma'} = \sigma''~\mbox{and}~\\
      & & \quad \hfill
           \m{level}(\Psihat) > n~\mbox{and}~
            \SSubst{\sigma''/\Phihat}{e} = e'
            \\
      \SSubst{\sigma/\Psihat}{(x[\sigma'])} & = &
          y[\sigma'']
          \qquad \hfill
          \quad\LKP{\sigma /\Psihat}{x} = y~
          \mbox{and}~
          \SSubst{\sigma/\Psihat}{\sigma'} = \sigma''
          \\
      \SSubst{\sigma/\Psihat}{(\fn{x}{e})} & = &
          \fn{x}{e'} \hfill
          \SSubst{\ctxinsert{\sigma/\Psihat}{x/x^{0}}}{e} = e'
          \\
      \SSubst{\sigma/\Psihat}{(\tfn{\alpha^n}{e})} & = &
          \tfn{\alpha^n}{e'} \hfill
          \mbox{level}(\Psihat) > n \mbox{ and } \SSubst{\ctxinsert{\sigma/\Psihat}{\alpha/\alpha^{n}}}{e} = e'
          \\
      \SSubst{\sigma/\Psihat}{(\tfn{\alpha^n}{e})} & = &
          \tfn{\alpha^n}{e'} \hfill
          \mbox{level}(\Psihat) \le n \mbox{ and } \SSubst{\sigma/\Psihat}{e} = e'
          \\
      \SSubst{\sigma/\Psihat}{(e_1\;e_2)} & = &
          e'_1~e'_2 \hfill
          \SSubst{\sigma/\Psihat}{e_1} = e'_1
          \mbox{ and } \SSubst{\sigma/\Psihat}{e_2} = e'_2
          \\
      \SSubst{\sigma/\Psihat}{(e~(\Phihat^n.T))} & = &
          e'~(\Phihat^n.T') \hfill
          \mbox{level}(\Psihat) \ge n \mbox{ and } \SSubst{\sigma/\Psihat}{e} = e' \mbox{ and } \\
      & & \quad \hfill
          \ctxrestrict{(\sigma/\Psihat)}n = \sigma'/\Psihat' \mbox{ and }
          \SSubst{\ctxappend{(\sigma'/\Psihat')}{(\id(\Phihat)/\Phihat)}}{T} = T'
          \\
      \SSubst{\sigma/\Psihat}{(e~(\Phihat^n.T))} & = &
          e'~(\Phihat^n.T) \hfill
          \mbox{level}(\Psihat) < n \mbox{ and } \SSubst{\sigma/\Psihat}{e} = e'
          \\
      \SSubst{\sigma/\Psihat}{(\boxm{\Phihat^n}{e})} & = &
          \boxm{\Phihat^n}{e'} \qquad\quad \hfill
          \mbox{level}(\Psihat) \ge n \mbox{ and } \ctxrestrict{(\sigma/\Psihat)}n = (\sigma'/\Psihat') \mbox{ and } \\
      & & \quad \hfill
          \SSubst{\ctxappend{(\sigma'/\Psihat')}{(\id(\Phihat)/\Phihat)}}{e} = e'
          \\
      \SSubst{\sigma/\Psihat}{(\boxm{\Phihat^n}{e})} & = &
          \boxm{\Phihat^n}{e} \qquad\quad \hfill
          \mbox{level}(\Psihat) < n
          \\
      \SSubst{\sigma/\Psihat}{(\letboxm{(\Phihat^n.u)}{e_1}{e_2})} & = &
          \letboxm{(\Phihat^n.u)}{e'_1}{e'_2}
          \hfill
          \mbox{level}(\Psihat) > n \mbox{ and } \\
      & & \quad \hfill
          \SSubst{\sigma/\Psihat}{e_1}
          = e'_1~\mbox{and}~
          \SSubst{\ctxinsert{(\sigma/\Psihat)}{(u/u^n)}}{e_2} = e_2'
          \\
      \SSubst{\sigma/\Psihat}{(\letboxm{(\Phihat^n.u)}{e_1}{e_2})} & = &
          \letboxm{(\Phihat^n.u)}{e'_1}{e'_2}
          \hfill
          \mbox{level}(\Psihat) \le n \mbox{ and }\\
      & & \quad \hfill
            \SSubst{\sigma/\Psihat}{e_1} = e'_1 \mbox{ and }
            \SSubst{\sigma/\Psihat}{e_2} = e_2'
    \end{array}
  \end{displaymath}
%  \caption{Simultaneous Substitution Operation for Terms:
%    $\SSubst{\sigma/\Psihat}{e} = e'$}\label{fig:termsubst}
%\end{figure}

We give here the simultaneous substitution operation for composing two
substitutions $\sigma$ and $\rho$. We only consider extensions of
substitutions with the terms below, but extensions with types follow a
similar principle.

\[
  \begin{array}{lcl}
\SSubst{\sigma/\Psihat}{(~\cdot~)} & = & \cdot
\\[1em]
\SSubst{\sigma/\Psihat}{(\rho; x^n)} & = &
\rho'; x^n
\hfill \quad x \not\in \Psihat \text{ and } 
\SSubst{\sigma/\Psihat}\rho = \rho'
\\
\SSubst{\sigma/\Psihat}{(\rho; x^n)}  & = &
\rho', (\Gammahat^n.e) \hfill \quad
\m{level}(\Psihat) > n~\mbox{and}~
\LKP{\sigma /\Psihat}{x} = (\Gammahat^n.e)~ ~\mbox{and}~ 
\SSubst{\sigma/\Psihat}\rho = \rho'
\\
\SSubst{\sigma/\Psihat}{(\rho; x^n)} & = &
\rho'; y^n \hfill \quad
\m{level}(\Psihat) > n~\mbox{and}~
\LKP{\sigma /\Psihat}{x} = y^n~ ~\mbox{and}~ 
\SSubst{\sigma/\Psihat}\rho = \rho'
\\[1em]
\SSubst{\sigma/\Psihat}{(\rho, (\Gammahat^n.e))}  & = &
\rho', (\Gammahat^n.e)
\hfill x \not\in \Psihat \text{ and } 
\SSubst{\sigma/\Psihat}\rho = \rho'
 \text{ and } 
\SSubst{\sigma/\Psihat}\rho = \rho'
\\
\SSubst{\sigma/\Psihat}{(\rho, (\Gammahat^n.e))}  & = &
\rho', (\Gammahat^n.e') \hfill \quad
\m{level}(\Psihat) > n~\mbox{and}~
\SSubst{\sigma/\Psihat}\rho = \rho' ~\mbox{and}~ \\
& & \quad \hfill 
\ctxrestrict{(\sigma/\Psihat)}n = \sigma'/\Psihat' \mbox{ and }
\SSubst{\ctxappend{(\sigma'/\Psihat')}{(\id(\Gammahat)/\Gammahat)}}{e} = e'
\\[1em]
  \end{array}
\]

Last, we give the simultaneous substitution operation on contexts. The tricky case is how to push the substitution $(\sigma/\Psihat)$ into a context $(\Gamma, x{:}(\Phi \atl{n} T))$. Here we first apply the substitution to $\Gamma$; then we intuitively want to apply the substitution $\ctxappend{(\sigma/\Psihat)}{(\id(\Gammahat)/\Gammahat)}$ to $(\Phi \atl{n} T)$. Since $\Phi$ will replace variables at level below $n$ in $\ctxappend{\Psihat}{\Gammahat}$, we first drop those variables from the extended substitution (i.e. $\ctxrestrict{(\ctxappend{(\sigma/\Psihat)}{(\id(\Gammahat)/\Gammahat)})}{n} = \rho/\Upsilon$) and then apply it to $\Phi$. To push the substitution int $T$, we extend $(\rho/\Upsilon)$ with the identity mapping for all the variables in $\Phi$.

\begin{figure}[ht]
  \centering
\[
\begin{array}{lcl}
\SSubst{\sigma/\Psihat}{(.)} &  = & . \\
\SSubst{\sigma/\Psihat}{(\Gamma, x{:}(\Phi \atl{n} *))} & = & 
\Gamma', x{:}(\Phi \atl{n} *)
\hfill\quad
\SSubst{\sigma/\Psihat}{\Gamma} = \Gamma'
\\
\SSubst{\sigma/\Psihat}{(\Gamma, x{:}(\Phi \atl{n} T))} & = & 
\Gamma', x{:}(\Phi' \atl{n} T')
\hfill\quad
\SSubst{\sigma/\Psihat}{\Gamma} = \Gamma' ~\mbox{and}
\SSubst{\ctxrestrict{(\ctxappend{(\sigma/\Psihat)}{(\id(\Gammahat)/\Gammahat)})}{n}}{\Phi}
                                                              = \Phi'
                                                              ~\mbox{and}~\\
& & \quad \hfill
\SSubst{\ctxappend{\ctxrestrict{(\ctxappend{(\sigma/\Psihat)}{(\id(\Gammahat)/\Gammahat)})}{n}}{\id(\Phihat)/\Phihat}}{T} = T'      
    \end{array}
\]
  \caption{Simultaneous Substitution Operation on Context:
    $\SSubst{\sigma/\Psihat}{\Phi} = \Phi'$}
  \label{fig:ctxsubst}
\end{figure}

\section{Typing Rules for Simultaneous Substitution}\label{sec:subtyp}

We give here in full the typing rules for simultaneous substitutions.

\[
\begin{array}{c}
%\multicolumn{1}{l}{\mbox{Typing rules for substitution: \fbox{$\Gamma \vde \sigma : \Phi$}}}\\[1em]
\infer{\Gamma \vde \cdot : \cdot}{}
\quad
\infer{\Gamma \vde (\sigma, \Psihat^n.e) ~:~ (\Gamma', x{:}(\Psi \atl{n} T))}{
 \Gamma \vde \sigma : \Gamma' &
\ctxrestrict{(\sigma/\Gammahat')}{n} = (\sigma'/\Gammahat'') &
\ctxappend{\ctxrestrict{\Gamma}{n}}{[\sigma'/\Gammahat'']\Psi} \vde
   e : [\ctxappend{(\sigma'/\Gammahat')}{(\id(\Psihat)/\Psihat)}]T}
\\[1em]
\infer{\Gamma \vde (\sigma;x) ~:~ (\Gamma', x{:}(\Psi \atl{n} T))}
{\Gamma \vde \sigma : \Gamma' &
% \Psihat.\alpha[id]
 \ctxrestrict{\sigma/\Gammahat'}{n} = (\sigma''/\Gammahat'') &
% \Psi' = [sigma''/\Gammahat'']\Psi
 \Gamma(x) = ([\sigma''/\Gammahat'']\Psi \atl{n} [\sigma''/\Gammahat'']T)
}
\\[1em]
\infer{\Gamma \vde (\sigma, \Psihat^n.T) ~:~ (\Gamma', \alpha{:}(\Psi \atl{n} *))}{
 \Gamma \vde \sigma : \Gamma' &
\ctxrestrict{(\sigma/\Gammahat')}{n} = (\sigma'/\Gammahat'') &
\ctxappend{\ctxrestrict{\Gamma}{n}}{[\sigma'/\Gammahat'']\Psi} \vde T
}
\\[1em]
\infer{\Gamma \vde (\sigma;\alpha) ~:~ (\Gamma', \alpha{:}(\Psi \atl{n} *))}
{\Gamma \vde \sigma : \Gamma' &
% \Psihat.\alpha[id]
 \ctxrestrict{(\sigma/\Gammahat')}{n} = (\sigma''/\Gammahat'') &
% \Psi' = [sigma''/\Gammahat'']\Psi
 \Gamma(\alpha) = ([\sigma''/\Gammahat'']\Psi \atl{n} *)
}
\end{array}
\]

\section{Proof of Lemma~\ref{lem:typesubst},~\ref{lem:termsubst}~and~\ref{lem:simsubst}}\label{sec:proofsubsts}
\newtheoremstyle{TheoremNum}
        {\topsep}{\topsep}              %%% space between body and thm
        {\itshape}                      %%% Thm body font
        {}                              %%% Indent amount (empty = no indent)
        {\bfseries}                     %%% Thm head font
        {.}                             %%% Punctuation after thm head
        { }                             %%% Space after thm head
        {\thmname{#1}\thmnote{ \bfseries #3}}%%% Thm head spec
    \theoremstyle{TheoremNum}
    \newtheorem{thmn}{Lemma}

Here we state some extra lemmas for the proofs:

\begin{lemma}[Chopping Substitution]\label{lem:substchop} If \(\Gamma \vde \sigma : \Phi\) then \(\ctxrestrict{\Gamma}{m} \vde \sigma' : \Phi'\) for \(\sigma'/\Phi' = \ctxrestrict{(\sigma/\Phi)}{m}\)
\end{lemma}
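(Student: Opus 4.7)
The plan is to proceed by induction on the structure of $\sigma/\Phi$ (equivalently, on the typing derivation $\Gamma \vde \sigma : \Phi$). In each case, we examine the definition of $\ctxrestrict{(\sigma/\Phi)}{m}$: if the level $n$ of the rightmost entry satisfies $n < m$, the entry is dropped and we appeal directly to the induction hypothesis; if $n \ge m$, the entry is retained and we must rebuild a well-typedness derivation of the retained substitution in the chopped context.

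The base case $\sigma/\Phi = \cdot/\cdot$ is immediate, since chopping yields $\cdot/\cdot$ and $\ctxrestrict{\Gamma}{m} \vde \cdot : \cdot$. For the inductive step, take the representative case where $\sigma = \sigma_0, \Psihat^n.e$ and $\Phi = \Phi_0, x{:}(\Psi \atl{n} T)$; the remaining three extension forms follow the same pattern. If $n < m$, the definition gives $\ctxrestrict{(\sigma/\Phi)}{m} = \ctxrestrict{(\sigma_0/\Phi_0)}{m}$, and inversion of the typing derivation yields $\Gamma \vde \sigma_0 : \Phi_0$, so the induction hypothesis closes the case. If $n \ge m$, then $\ctxrestrict{(\sigma/\Phi)}{m} = (\sigma_0', \Psihat^n.e)/(\Phi_0', x{:}(\Psi \atl{n} T))$ where $\sigma_0'/\Phi_0' = \ctxrestrict{(\sigma_0/\Phi_0)}{m}$, and we need to reconstruct the corresponding typing rule.

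Reconstruction uses two straightforward facts about chopping: (i) for $n \ge m$, $\ctxrestrict{(\ctxrestrict{\Gamma}{m})}{n} = \ctxrestrict{\Gamma}{n}$ on contexts; and (ii) the analogous identity $\ctxrestrict{(\ctxrestrict{(\sigma_0/\Phi_0)}{m})}{n} = \ctxrestrict{(\sigma_0/\Phi_0)}{n}$ on substitutions. Both are direct from the respective definitions in Fig.~\ref{fig:ctxoperations} and Appendix~\ref{sec:substop}, since in both cases we are dropping everything below the larger threshold $n$. Using (ii), the restricted substitution $\sigma''/\Gammahat''$ that appears in the premise of the typing rule is the same before and after outer chopping; using (i), the restriction of the ambient context at level $n$ is likewise unchanged. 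Hence the premise $\ctxappend{\ctxrestrict{\Gamma}{n}}{[\sigma''/\Gammahat'']\Psi} \vde e : [\ctxappend{(\sigma''/\Gammahat'')}{(\id(\Psihat)/\Psihat)}]T$ carries over verbatim to the chopped context, and combined with the induction hypothesis for $\sigma_0/\Phi_0$ we reassemble the typing derivation $\ctxrestrict{\Gamma}{m} \vde (\sigma_0', \Psihat^n.e) : (\Phi_0', x{:}(\Psi \atl{n} T))$.

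The other three extension forms ($\sigma; x^n$, $\sigma, \Psihat^n.T$, $\sigma; \alpha^n$) are handled identically, relying on the same two chopping identities and on the observation that the side-conditions in their respective typing rules only involve $\ctxrestrict{\Gamma}{n}$ and $\ctxrestrict{(\sigma/\Gammahat')}{n}$, both of which are preserved by outer chopping at level $m \le n$. I do not anticipate a real obstacle here: the argument is entirely bookkeeping about the commutation of nested chops, and the only minor care needed is to match up the lookups in $\Gamma$ for the identity extensions, which is unaffected by chopping since those variables live at level $n \ge m$ and hence remain in $\ctxrestrict{\Gamma}{m}$.
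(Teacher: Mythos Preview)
Your proposal is correct and follows the same approach as the paper, which simply says ``by induction on the first derivation''; you have supplied the details the paper omits. One small imprecision: in the $n \ge m$ case the chopping operation (by its definition) returns the \emph{entire} $\sigma/\Phi$ unchanged rather than recursing into $\sigma_0$, but since contexts are sorted this coincides with your $(\sigma_0',\Psihat^n.e)/(\Phi_0',\ldots)$ and the argument goes through as written.
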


\begin{proof}
  By induction on the first derivation.
\end{proof}

\begin{lemma}[Substitution Extension]\label{lem:substext}\mbox{}
  \begin{enumerate}
  \item If \(\Gamma \vde \sigma : \Phi\)\\
    then \(\ctxinsert{\Gamma}{\alpha{:}(\Psi \atl{m} *)} \vde \ctxinsert{\sigma}{\alpha^m} : \ctxinsert{\Phi}{\alpha{:}(\Psi \atl{m} *)}\)
  \item If \(\Gamma \vde \sigma : \Phi\) and \(\ctxapp{\ctxrestrict{\Phi}{m}}{\Psi} \vde S\)\\
    then \(\ctxinsert{\Gamma}{u{:}([\ctxrestrict{(\sigma/\hat\Phi)}{m}]\Psi \atl{m} [\ctxappend{\ctxrestrict{(\sigma/\hat\Phi)}{m}}{(\id(\hat\Psi)/\hat\Psi)}]S)} \vde \ctxinsert{\sigma}{u^m} : \ctxinsert{\Phi}{u{:}(\Psi \atl{m} S)}\)
  \end{enumerate}
\end{lemma}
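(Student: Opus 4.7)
}
My plan is to prove both parts by induction on the structure of the context $\Phi$ (equivalently, on the derivation $\Gamma \vde \sigma : \Phi$). The goal in each case is to verify that inserting the new declaration at the correct sorted position in the range of the substitution still yields a well-typed substitution into the corresponding extended domain. The definitions of $\ctxinsert{\cdot}{\cdot}$ on both contexts and substitutions slot the new entry at the appropriate level, so a large part of the work is simply tracking where the extension lands and appealing to the appropriate typing rule.

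For part~(1), in the base case $\Phi = \cdot$, the substitution $\sigma$ is empty and the result $\ctxinsert{\Gamma}{\alpha{:}(\Psi\atl{m}*)} \vde ~;\alpha^m : \alpha{:}(\Psi\atl{m}*)$ follows directly from the typing rule for $\sigma;\alpha$ once we observe that the lookup $\Gamma(\alpha)$ after insertion returns $(\Psi\atl{m}*)$ and that the chopped substitution is empty on this new declaration. For the inductive step, I split on whether the existing entry of $\Phi$ is inserted at a level above or below $m$; in each case, Lemmas~\ref{lem:chopdistrib} (chopping distributes over merging) and~\ref{lem:mergereorder} (merge reordering) let me commute the insertion with the already-present variable, and the induction hypothesis supplies the typing derivation for the smaller context. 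Identity of the inserted $\alpha^m$ is preserved because $\ctxrestrict{((\ctxinsert{\sigma}{\alpha^m})/(\ctxinsert{\hat\Phi}{\alpha^m}))}{k}$ either still contains $\alpha^m$ or drops it consistently with the corresponding declaration in the extended $\Phi$.

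For part~(2), the structure mirrors part~(1), but the new variable $u$ carries a type $S$ that depends on $\Phi$ via the context $\ctxapp{\ctxrestrict{\Phi}{m}}{\Psi}$. The base case requires showing that $\sigma;u^m$ types at $u{:}(\Psi\atl{m}S)$; by the typing rule for $\sigma;x$, I need $\Gamma(u) = ([\tau/\hat\Phi']\Psi \atl{m} [\tau/\hat\Phi']S)$ for $\tau/\hat\Phi' = \ctxrestrict{(\sigma/\hat\Phi)}{m}$, which is exactly the declaration installed by $\ctxinsert{\Gamma}{u{:}([\cdots]\Psi \atl{m} [\cdots]S)}$ in the statement. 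To justify the substitution applied to $S$, I use Lemma~\ref{lem:simsubst}(\ref{itm:simsubst-type}) together with Lemma~\ref{lem:substchop} on $\sigma$, and Lemma~\ref{lem:idsubst} to provide the identity component on $\hat\Psi$ that appends to the chopped substitution. The inductive step again relies on commuting the insertion past the current entry via the merge/chopping lemmas and appealing to the IH.

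The main obstacle I anticipate is bookkeeping around the sorted insertion: the substitution $\ctxinsert{\sigma}{u^m}$ must have its ``identity on $u$'' portion recovered by the chopping operation in exactly the position dictated by level $m$, and this must mesh with the chopped substitution $\ctxrestrict{(\sigma/\hat\Phi)}{m}$ appearing inside the declared type of $u$. Getting the equations $\ctxrestrict{((\ctxinsert{\sigma}{u^m})/(\ctxinsert{\hat\Phi}{u^m}))}{k} = \ctxinsert{\ctxrestrict{(\sigma/\hat\Phi)}{k}}{u^m/u^m}$ (for $k\le m$) and a matching identity for $k > m$ is a routine but delicate calculation that I would discharge by appealing to Lemmas~\ref{lem:chopdistrib} and~\ref{lem:mergereorder}. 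Once those equalities are in hand, the typing rule for substitution extension applies directly and closes both inductive cases.
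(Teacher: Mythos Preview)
Your approach---induction on the derivation $\Gamma \vde \sigma : \Phi$---is exactly what the paper does; its entire proof reads ``By induction on the first derivation.'' The case split on whether the last entry of $\Phi$ sits above or below level $m$, together with Lemmas~\ref{lem:chopdistrib} and~\ref{lem:mergereorder} for the bookkeeping, is the right shape.

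There is one genuine problem. In part~(2) you invoke Lemma~\ref{lem:simsubst}(\ref{itm:simsubst-type}) to ``justify the substitution applied to $S$.'' This is circular: in the paper, Lemma~\ref{lem:substext} is a helper lemma used \emph{inside} the mutual proof of Lemmas~\ref{lem:typesubst}--\ref{lem:simsubst} (see the $\m{let\,box}$ subcase of simultaneous substitution on terms, where ``by substitution extension~\ref{lem:substext}'' appears). You cannot assume \ref{lem:simsubst} here. Fortunately you do not need it: the typing rule for $(\sigma;x)$ only requires that the lookup in the extended $\Gamma$ have the form $([\sigma''/\hat\Gamma'']\Psi \atl{m} [\sigma''/\hat\Gamma'']S)$ with $\sigma''/\hat\Gamma'' = \ctxrestrict{(\sigma/\hat\Phi)}{m}$; it does not demand a separate kinding derivation for the substituted type. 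Since the declaration you insert for $u$ in the conclusion is literally written with that substitution already applied, the lookup condition holds by construction (modulo the syntactic equalities you already identified via chopping/merge commutation). Drop the appeal to Lemma~\ref{lem:simsubst} and the argument stands.
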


\begin{proof}
  By induction on the first derivation.
\end{proof}

\begin{lemma}[Type Ignores Term Substitution]\label{lem:typeignoresubst}
  \([\ctxinsertatl{(\sigma/\hat\Phi)}{u/u}{m}]T = [\sigma/\hat\Phi]T\)
\end{lemma}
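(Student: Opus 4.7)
The plan is to prove this by structural induction on $T$, strengthened with a mutually inductive statement for terms $e$ and substitutions $\sigma'$ appearing inside type-variable closures $\alpha[\sigma']$. Since type substitution recurses into the substitution inside a closure, we must handle substitutions as well, and since substitutions contain terms (of the form $\Psihat^n.e$), we must also handle terms; these three statements need to be established simultaneously.

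The straightforward cases pose no difficulty. For $T = T_1 \arrow T_2$, direct appeal to the induction hypothesis on each component suffices. For $T = \alpha[\sigma']$, regardless of whether $\alpha \in \hat\Phi$ (in which case the lookup is identical in both substitutions since $u$ is a term variable, not a type variable) or $\alpha \notin \hat\Phi$, we reduce to showing $[\ctxinsertatl{(\sigma/\hat\Phi)}{u/u}{m}]\sigma' = [\sigma/\hat\Phi]\sigma'$, which is the mutual IH for substitutions.

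The main obstacle lies in the binding cases $(\alpha{:}(\Phi'\atl{n}*)) \arrow T'$ and $\cbox{\Phi' \atl{n} T'}$, because inserting $u^m$ may change $\mbox{level}(\hat\Phi)$ enough to flip the decision of whether to push the substitution under the binder or extend it with an identity mapping for $\alpha$. Specifically, when $\mbox{level}(\hat\Phi) \leq n$ but $m \geq n$, the LHS pushes inside under an extended substitution $\ctxinsertatl{\ctxinsertatl{(\sigma/\hat\Phi)}{u/u}{m}}{\alpha/\alpha}{n}$, whereas the RHS pushes inside using just $\sigma/\hat\Phi$. To equate these, we need to know that adding a further identity mapping $\alpha/\alpha^n$ is also harmless.

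To handle this uniformly, I would generalize the statement to: for any substitution $\tau/\hat\Upsilon$ that differs from $\sigma/\hat\Phi$ only by the presence of additional identity mappings (for term or type variables, at any levels), we have $[\tau/\hat\Upsilon]T = [\sigma/\hat\Phi]T$ (and similarly for terms and substitutions). This generalization is closed under further insertion of identity mappings like $\alpha/\alpha^n$, so the recursive step in the binding cases preserves the invariant, and structural induction goes through cleanly. The original lemma is then the instance where the only difference is the single identity mapping $u/u^m$.
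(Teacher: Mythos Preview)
Your proposal is correct and considerably more explicit than the paper's proof, which is just the single line ``By induction on $T$.'' You correctly identify that the closure case $\alpha[\sigma']$ forces a simultaneous statement for substitutions, and that the binder cases can flip the level comparison so that the two sides extend the working substitution differently --- a situation not directly covered by the induction hypothesis in its original form. Your generalization to arbitrary additional identity mappings (term \emph{and} type) is a clean way to make the induction close; without it one would still need a companion fact that inserting a type-variable identity $\alpha/\alpha^n$ is harmless, which is not an instance of the lemma as stated. One minor point: the recursion into \emph{terms} is only necessary if the substitution $\sigma'$ in a type closure $\alpha[\sigma']$ can actually contain terms; under the paper's convention that type-variable kinds $(\Phi \atl{n} *)$ have type-variable-only $\Phi$ (see the footnote at the definition of substitution on types), well-kinded types never contain terms, so mutual induction over types, substitutions, and contexts already suffices. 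Either way, your elaboration is what a rigorous proof behind the paper's one-liner would require.
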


\begin{proof}
  By induction on \(T\).
\end{proof}

\begin{lemma}[Property of Identity Substitution]\label{lem:idsubstprop}\mbox{}
  \begin{enumerate}
  \item \([\id(\hat\Phi)/\hat\Phi]T = T\)
  \item \([\id(\hat\Phi)/\hat\Phi]e = e\)
  \end{enumerate}
\end{lemma}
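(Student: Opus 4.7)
The plan is to prove both clauses by mutual induction on the structure of $T$ and $e$, strengthening the statement to also include the clause $[\id(\hat\Phi)/\hat\Phi]\sigma = \sigma$, since the substitution operation recurses through terms, types, and substitutions simultaneously. The whole argument is driven by a single invariant: the identity substitution structure is preserved by the chopping and appending operations that push substitutions inside binders. Concretely, I would first record two preliminary remarks, each provable by a simple induction on contexts using the definitions in Appendix~\ref{sec:substop}: (i) $\ctxrestrict{(\id(\hat\Phi)/\hat\Phi)}{n} = \id(\hat\Phi')/\hat\Phi'$ where $\hat\Phi' = \ctxrestrict{\hat\Phi}{n}$, and (ii) $\ctxappend{(\id(\hat\Phi')/\hat\Phi')}{(\id(\hat\Psi)/\hat\Psi)}$ is itself the identity substitution on $\ctxappend{\hat\Phi'}{\hat\Psi}$. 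A third easy remark is that $\LKP{(\id(\hat\Phi)/\hat\Phi)}{\alpha} = \alpha^n$ and $\LKP{(\id(\hat\Phi)/\hat\Phi)}{x} = x^n$ whenever these variables are in $\hat\Phi$.

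With those remarks in hand, each case of the substitution definition collapses. For the type variable case $\alpha[\sigma]$, if $\alpha \notin \hat\Phi$ then we land in the first clause of Figure~\ref{fig:typsubst}, leaving $\alpha$ fixed and recursing into $\sigma$, which is handled by the substitution IH. If $\alpha \in \hat\Phi$, the lookup remark places us in the third clause (the renaming case), producing $\alpha[\sigma']$ where again $\sigma' = \sigma$ by IH. The term variable case is symmetric. The binder cases — $T_1 \arrow T_2$, $(\alpha{:}(\Phi \atl{n} *)) \arrow T$, $\cbox{\Phi \atl{n} T}$, $\fn{x}{e}$, $\tfn{\alpha^n}{e}$, $\boxm{\hat\Phi^n}{e}$, and $\letboxm{(\hat\Phi^n.u)}{e_1}{e_2}$ — all push the (possibly chopped) substitution under the binder after extending with an identity mapping for the newly bound variable(s); remarks (i) and (ii) guarantee that the extended substitution is still of the form $\id(\hat\Upsilon)/\hat\Upsilon$, so the IHs on the sub-expressions and on the sub-context (for the boxed-type case) apply directly. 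The clauses in which the substitution is ignored because $\m{level}(\hat\Phi)$ is too small are immediate. The substitution clause proceeds entry-by-entry using the same recipe, appealing to remark (i) for the chopping that occurs when extending with a contextual term or type.

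The main obstacle is bookkeeping rather than conceptual difficulty: one must check that every branch of the substitution definition that reshapes the substitution (through chopping or through extension with an identity mapping) really does preserve the identity form, and that the lookup operation agrees with renaming on identity substitutions. Once the two preliminary remarks about chopping and appending identities are established, each inductive case reduces to a single IH invocation, so the proof is essentially mechanical.
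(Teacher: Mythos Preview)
Your proposal is correct and matches the paper's approach: the paper simply records ``By induction on $T$ or $e$'', and what you have written is a faithful elaboration of that induction, including the necessary strengthening to substitutions and the auxiliary facts about chopping, appending, and lookup on identity substitutions. One small remark: some binder cases (e.g.\ $\fn{x}{e}$, $\tfn{\alpha^n}{e}$, $\letboxm{\ldots}{}{}$) use the \emph{merge} operation $\ctxinsert{(\sigma/\hat\Phi)}{(x/x^n)}$ rather than append, so your remark (ii) should also cover that merging an identity substitution with a single identity mapping yields an identity substitution; this is just as easy as the append version and your argument goes through unchanged once it is stated.
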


\begin{proof}
  By induction on \(T\) or \(e\).
\end{proof}

\begin{thmn}[\ref{lem:typesubst}]\mbox{\textnormal{\textsc{(Type Substitution Lemma).}}}\quad
Assuming
  $\vde \ctxapp{\ctxapp{\Gamma_1}{\alpha{:}(\Phi \atl{n}
      *)}}{\Gamma_0}$ and is ordered such that all assumptions
  in $\Gamma_1$ are at level $n$ or above and all assumptions in
  $\Gamma_0$ are at level $n$ or below.
  \begin{enumerate}
  \item If $\vde \ctxapp{\ctxapp{\Gamma_1}{\alpha{:}(\Phi \atl{n} *)}}{\Gamma_0}$
and $\ctxapp{\Gamma_1}{\Phi} \vde T$\\
then $\vde \ctxapp{\Gamma_1}{[(\Phihat^n.T)/\alpha^n]\Gamma_0}$.
  \item If $\ctxapp{\ctxapp{\Gamma_1}{\alpha{:}(\Phi \atl{n} *)}}{\Gamma_0} \vde S$
and $\ctxapp{\Gamma_1}{\Phi} \vde T$\\
then $\ctxapp{\Gamma_1}{[(\Phihat^n.T)/\alpha^n]\Gamma_0} \vde [(\Phihat^n.T)/\alpha^n]S$.
  \item If $\ctxapp{\ctxapp{\Gamma_1}{\alpha{:}(\Phi \atl{n} *)}}{\Gamma_0} \vde e_2 : S$
and $\ctxapp{\Gamma_1}{\Phi} \vde T$\\
then $\ctxapp{\Gamma_1}{[(\Phihat^n.T)/\alpha^n]\Gamma_0} \vde [(\Phihat^n.T)/\alpha^n]e_2 : [(\Phihat^n.T)/\alpha^n]S$.
  \item If $\ctxapp{\ctxapp{\Gamma_1}{\alpha{:}(\Phi \atl{n} *)}}{\Gamma_0} \vde \sigma : \Psi$
and $\ctxapp{\Gamma_1}{\Phi} \vde T$\\
then $\ctxapp{\Gamma_1}{[(\Phihat^n.T)/\alpha^n]\Gamma_0} \vde [(\Phihat^n.T)/\alpha^n]\sigma : [(\Phihat^n.T)/\alpha^n]\Psi$.
  \end{enumerate}
\end{thmn}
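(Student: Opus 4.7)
}
The plan is to prove all four parts simultaneously by mutual induction. Part (\ref{itm:typesubst-context}) proceeds by structural induction on $\Gamma_0$, with the inductive cases appealing to parts (\ref{itm:typesubst-context}) and (\ref{itm:typesubst-type}) at strictly smaller contexts; parts (\ref{itm:typesubst-type}), (\ref{itm:typesubst-term}), and (\ref{itm:typesubst-subst}) each proceed by induction on the derivation of $S$, $e_2$, or $\sigma$ respectively. Throughout we write $\theta = (\Phihat^n.T)/\alpha^n$ for brevity and rely repeatedly on Lemma \ref{lem:chopsubstcommute} to commute the substitution past $\ctxrestrict{(-)}{k}$ and on Lemma \ref{lem:substdistrib} to push it through $\ctxinsert{-}{-}$ and $\ctxappend{-}{-}$, so that the inductive hypothesis can be applied to the reassembled sub-derivations.

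The key cases all live in part (\ref{itm:typesubst-type}). For the variable case $\beta[\sigma']$ with $\beta \neq \alpha$ we simply recurse on $\sigma'$ via part (\ref{itm:typesubst-subst}). For $\alpha[\sigma']$ itself, the definition of $[\theta]$ forces us to actually unfold the substitution: we first use part (\ref{itm:typesubst-subst}) to get $\ctxapp{\Gamma_1}{[\theta]\Gamma_0} \vde [\theta]\sigma' : \Phi$ (noting that $[\theta]\Phi = \Phi$ since $\Phi$ lives at levels strictly below $n$), and then invoke the Simultaneous Substitution Lemma \ref{lem:simsubst} applied to $\ctxapp{\Gamma_1}{\Phi} \vde T$ to derive $\ctxapp{\Gamma_1}{[\theta]\Gamma_0} \vde [[\theta]\sigma'/\Phihat]T$. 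For function types, direct recursion works. For polymorphic types $(\beta{:}(\Psi \atl{k} *)) \arrow S$ and contextual types $\cbox{\Psi \atl{k} S}$ we split on whether $k > n$ or $k \le n$: when $k > n$ the declaration $\alpha$ survives chopping and we use Lemma \ref{lem:substext} to extend the substitution past the binder; when $k \le n$ the definition of $[\theta]$ leaves $\Psi$ and $S$ untouched because $\theta$ gets dropped by the $\ctxrestrict{(-)}{k}$ on its way inside, and well-formedness is inherited directly from $\ctxapp{\Gamma_1}{\Phi}$ after observing that $\ctxrestrict{(\ctxapp{\ctxapp{\Gamma_1}{\alpha{:}(\Phi \atl{n} *)}}{\Gamma_0})}{k} = \ctxrestrict{\Gamma_1}{k}$ by Lemma \ref{lem:chopdistrib}.

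Parts (\ref{itm:typesubst-term}) and (\ref{itm:typesubst-subst}) follow the same pattern. The interesting term cases mirror the type cases: bound-variable occurrences $x[\sigma']$ with $x \in \Gamma_0$ require rewriting the declared type using $[\theta]$ and invoking part (\ref{itm:typesubst-subst}); the binders $\fn{x}{e}$, $\tfn{\beta^k}{e}$, $\boxm{\Phihat^k}{e}$, and $\letboxm{(\Phihat^k.u)}{e_1}{e_2}$ again dispatch on the level $k$ and are handled by Lemma \ref{lem:substext} together with the context-manipulation lemmas (Lemmas \ref{lem:chopdistrib}, \ref{lem:mergereorder}, \ref{lem:substdistrib}) to realign the contexts before applying the IH. For part (\ref{itm:typesubst-subst}), the extension cases $\sigma, \Psihat^k.e$ and $\sigma, \Psihat^k.T'$ reduce to applying part (\ref{itm:typesubst-term}) or (\ref{itm:typesubst-type}) after first using Lemma \ref{lem:chopsubstcommute} to move $[\theta]$ past $\ctxrestrict{(-)}{k}$.

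\textbf{Main obstacle.} The hard part is the bookkeeping around the boxed/polymorphic cases when $k \le n$: one must verify that (i) the substitution operation genuinely discards $\theta$ so that writing $[\theta]$ on the sub-derivation is literally the identity, and (ii) the chopped context $\ctxrestrict{(\ctxapp{\ctxapp{\Gamma_1}{\alpha{:}(\Phi \atl{n} *)}}{\Gamma_0})}{k}$ agrees with $\ctxrestrict{\Gamma_1}{k}$ so that the well-formedness and kinding premises are unchanged by the substitution. This is where the structural properties of contexts (Lemma \ref{lem:contextprop}) and the chopping-substitution lemmas pay off and keep the argument routine rather than case-exploding; once the right rewriting lemma is chosen for each binder, each case reduces mechanically to the inductive hypothesis.
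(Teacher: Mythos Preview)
Your case split on the level $k$ in the polymorphic and contextual-type cases is inverted. The operation $\ctxrestrict{(-)}{k}$ \emph{keeps} declarations at level $\geq k$ and \emph{drops} those below $k$. Since $\alpha$ sits at level $n$, it is precisely when $k > n$ that $\alpha$ is chopped away (so $\ctxrestrict{(\ctxapp{\ctxapp{\Gamma_1}{\alpha{:}(\Phi \atl{n} *)}}{\Gamma_0})}{k} = \ctxrestrict{\Gamma_1}{k}$ and the substitution on $S$ inside $\cbox{\Psi \atl{k} S}$ is the identity), and when $k \le n$ that $\alpha$ \emph{survives} and you must genuinely push $\theta$ into $\Psi$ and $S$ via the IH. Your proposal says the opposite in both the main argument and the ``Main obstacle'' paragraph, so as written the proof would apply the identity in exactly the case where real work is needed and vice versa. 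Compare the paper's detailed case split: for the box case the $m>n$ subcase closes by ``def.\ of substitution'' with no IH, while the $m\le n$ subcase invokes the IH on $S$ after reorganizing the context via Lemmas~\ref{lem:chopdistrib} and~\ref{lem:substdistrib}.

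There is also a structural issue with your induction. You invoke Lemma~\ref{lem:simsubst} as an already-proved black box in the $\alpha[\sigma']$ case, but the paper proves Lemmas~\ref{lem:typesubst}, \ref{lem:termsubst}, and~\ref{lem:simsubst} \emph{simultaneously} by lexicographic induction on $(\text{level of }\Gamma_0,\ \text{first derivation})$: the call to simultaneous substitution on $T$ is justified not because $T$ is a subderivation (it is not) but because the domain $\Phi$ of $[\theta]\sigma'$ has strictly lower level than the ambient $\Gamma_0$. A plain structural induction on derivations, which is what you announce, does not license that step. Even if one could factor the argument so that Lemma~\ref{lem:simsubst} is proved first on its own, that lemma itself needs the same lexicographic measure for its variable case, and you do not mention this anywhere.
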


\begin{thmn}[\ref{lem:termsubst}]\mbox{\textnormal{\textsc{(Term Substitution Lemma).}}}\quad
  \begin{enumerate}
  \item If $\ctxapp{\ctxapp{\Gamma_1}{u{:}(\Phi \atl{n} T)}}{\Gamma_0} \vde e_2 : S$
and $\ctxapp{\Gamma_1}{\Phi} \vde e : T$
then $\ctxapp{\Gamma_1}{\Gamma_0} \vde [(\Phihat^n.e)/u]e_2 : S$.
  \item If $\ctxapp{\ctxapp{\Gamma_1}{u{:}(\Phi \atl{n} T)}}{\Gamma_0} \vde \sigma : \Psi$
and $\ctxapp{\Gamma_1}{\Phi} \vde e : T$
then $\ctxapp{\Gamma_1}{\Gamma_0} \vde [(\Phihat^n.e)/u]\sigma : \Psi$.
  \end{enumerate}
\end{thmn}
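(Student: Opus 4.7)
The plan is to prove (1) and (2) simultaneously by induction on the first typing derivation, since the closure case in (1) invokes the well-typedness of its associated substitution while the substitution-extension rules in (2) invoke the well-typedness of the embedded terms. An essential preliminary observation is that $[(\Phihat^n.e)/u^n]$ is \emph{inert on types}: by the invariant noted in the Section~3.3 footnote, contextual type variables are quantified only over contexts of type variables, so a term variable $u$ never appears free in any well-formed type. This is what licenses leaving $S$ unsubstituted in the conclusion of (1).

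Most cases are routine: invert on the last rule, unfold the substitution through the term constructor using its defining equation, apply IH to each premise, and reassemble, using the context-distributivity and chopping-substitution commutativity lemmas (\ref{lem:chopdistrib}, \ref{lem:substdistrib}, \ref{lem:chopsubstcommute}) to rewrite every intermediate context into the form $\ctxapp{\ctxapp{\Gamma_1}{u{:}(\Phi \atl{n} T)}}{\Gamma_0'}$ required by the IH. Binder cases in (1) --- abstraction, type abstraction, and let-box --- enlarge $\Gamma_0$ with the newly bound declaration; the corresponding cases in (2) combine IH(2) on the sub-substitution with IH(1) on the embedded term, supported by the substitution-extension lemma (\ref{lem:substext}). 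The critical case of (1) is the closure $u[\sigma]$: inversion gives $\sigma : \Phi$ in the larger context; IH(2) produces $\sigma' := [(\Phihat^n.e)/u^n]\sigma : \Phi$ in $\ctxapp{\Gamma_1}{\Gamma_0}$; then the Simultaneous Substitution Lemma (\ref{lem:simsubst}), proved mutually with the current lemma, applied to the standing hypothesis $\ctxapp{\Gamma_1}{\Phi} \vde e : T$ with $\sigma'$ delivers $\ctxapp{\Gamma_1}{\Gamma_0} \vde [\sigma'/\hat\Phi]e : [\sigma'/\hat\Phi]T$. By inertness of types $[\sigma'/\hat\Phi]T = S$, and the substitution definition identifies $[\sigma'/\hat\Phi]e$ with $[(\Phihat^n.e)/u^n](u[\sigma])$.

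The main obstacle is the level-dependent bookkeeping in the box case $\boxm{\hat\Gamma^k}{e'}$ and its analogue for let-box. The substitution $[(\Phihat^n.e)/u^n]$ penetrates the box only when $n+1 \geq k$, i.e.\ when the level of its erased domain reaches the box level. In the non-penetrating case, $u$ has already been chopped from the body's typing context by $\ctxrestrict{\cdot}{k}$ and so cannot appear in $e'$; the conclusion reduces to reassembling an unchanged box in the smaller outer context. In the penetrating case, the substitution enters in its restricted form, and we apply IH(1) to the sub-derivation typed in $\ctxapp{\ctxrestrict{(\ctxapp{\ctxapp{\Gamma_1}{u{:}(\Phi \atl{n} T)}}{\Gamma_0})}{k}}{\Gamma''}$. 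Lemma \ref{lem:chopdistrib} is the key tool: it rewrites this restricted context so that the declaration $u{:}(\Phi \atl{n} T)$ still appears in the middle position the IH expects, allowing the induction to carry through and the outer box to be rebuilt in the smaller context.
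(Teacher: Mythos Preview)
Your overall case analysis and the inertness observation about term substitution on types are correct, and the level bookkeeping you sketch for the box and let-box cases is on the right track. The real gap is in the well-foundedness of the critical closure case $u[\sigma]$. There you invoke the Simultaneous Substitution Lemma~\ref{lem:simsubst} on the standing hypothesis $\ctxapp{\Gamma_1}{\Phi} \vde e : T$, which is \emph{not} a subderivation of the first typing derivation---it is the second hypothesis of the lemma, of unrelated size. Since you explicitly say Lemma~\ref{lem:simsubst} is ``proved mutually with the current lemma,'' plain structural induction on the first derivation does not license this call: nothing in your stated measure decreases.

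The paper closes this gap by proving all three substitution lemmas (\ref{lem:typesubst}, \ref{lem:termsubst}, \ref{lem:simsubst}) together under a \emph{lexicographic} induction on the pair (level of $\Gamma_0$, structure of the first derivation): either the level decreases, or it stays the same and the derivation shrinks. In the $u[\sigma]$ case the mutual call to Lemma~\ref{lem:simsubst} has domain $\Phi$ with $\mathsf{level}(\Phi) \le n$, strictly below the level $n{+}1$ at which the single substitution $[(\Phihat^n.e)/u^n]$ operates; the first component drops, so the call is justified even though the target derivation is unrelated to the one you were inducting on. Note that this level component is already essential for Lemma~\ref{lem:simsubst} on its own---its variable-lookup case recurses on a looked-up body whose typing comes from inverting the substitution hypothesis, not from a subderivation, but whose domain level is strictly smaller. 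So even if you prefer to establish Lemma~\ref{lem:simsubst} first and then invoke it as a black box inside a derivation-only induction for the present lemma, you still need the lexicographic measure somewhere; as written, your plan is missing it.
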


\begin{thmn}[\ref{lem:simsubst}]\mbox{\textnormal{\textsc{(Simultaneous Substitution Lemma).}}}\quad
  \begin{enumerate}
  \item If $\vde \ctxapp{\ctxapp{\Gamma_1}{\Gamma_0}}{\Psi}$ and $\ctxapp{\Gamma_1}{\Phi} \vde \sigma : \Gamma_0$
then $\vde \ctxapp{\ctxapp{\Gamma_1}{\Phi}}{([\sigma/\hat\Gamma_0]\Psi)}$.
  \item If $\ctxapp{\Gamma_1}{\Gamma_0} \vde S$ and $\ctxapp{\Gamma_1}{\Phi} \vde \sigma : \Gamma_0$
then $\ctxapp{\Gamma_1}{\Phi} \vde [\sigma/\hat\Gamma_0]S$.
  \item If $\ctxapp{\Gamma_1}{\Gamma_0} \vde e : S$ and $\ctxapp{\Gamma_1}{\Phi} \vde \sigma : \Gamma_0$
then $\ctxapp{\Gamma_1}{\Phi} \vde [\sigma/\hat\Gamma_0]e : [\sigma/\hat\Gamma_0]S$.
  \item If $\ctxapp{\Gamma_1}{\Gamma_0} \vde \sigma_2 : \Psi$ and $\ctxapp{\Gamma_1}{\Phi} \vde \sigma : \Gamma_0$
then $\ctxapp{\Gamma_1}{\Phi} \vde [\sigma/\hat\Gamma_0]\sigma_2 : [\sigma/\hat\Gamma_0]\Psi$.
  \end{enumerate}
\end{thmn}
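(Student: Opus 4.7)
}
The plan is to prove Lemmas~\ref{lem:typesubst},~\ref{lem:termsubst}, and~\ref{lem:simsubst} by simultaneous induction, since single substitutions are used inside simultaneous substitution operations (when looking up a variable and pushing substitutions under binders) and vice versa (when applying a single substitution to a boxed type/term extends the substitution with an identity mapping and is then handled as a simultaneous substitution). For Lemma~\ref{lem:simsubst} itself, I would proceed by induction on the first derivation of each of the four parts, with part~(\ref{itm:simsubst-context}) using induction on $\Psi$, and parts~(\ref{itm:simsubst-type}),~(\ref{itm:simsubst-term}),~(\ref{itm:simsubst-subst}) using case analysis on the last rule in the kinding, typing, or substitution-typing derivation respectively.

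The easy cases are the purely structural ones (function types, function application, pairs of components), which reduce immediately to the IH. The variable cases $\alpha[\sigma']$ and $x[\sigma']$ split on whether the variable is in $\hat\Gamma_0$: if not, we appeal to the IH on $\sigma'$; if yes, we use $\LKP{\sigma/\hat\Gamma_0}{-}$ and then invoke Lemma~\ref{lem:commsubst} on commuting substitutions to justify that $[\sigma'' / \Phihat] T$ (or the corresponding term) really has the expected type in $\ctxapp{\Gamma_1}{\Phi}$. The binder cases for $\fn{x}{e}$ and $\tfn{\alpha^n}{e}$ proceed by extending $\sigma/\hat\Gamma_0$ with an identity mapping via Lemma~\ref{lem:substext} and then appealing to the IH; here Lemma~\ref{lem:typeignoresubst} is needed to argue that the extended substitution behaves identically on annotations that do not mention the new variable.

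The genuine technical work lies in the boxed and polymorphic cases, i.e., $\cbox{\Phi \atl{n} T}$, $\boxm{\Phihat^n}{e}$, $(\alpha{:}(\Phi\atl{n}*)) \arrow T$, and the corresponding clauses for type/term application where we also have to transport $\Phi$ into the new context. In each of these we must restrict $(\sigma/\hat\Gamma_0)$ to level $n$ obtaining $(\sigma'/\hat\Gamma_0')$, appeal to Lemma~\ref{lem:substchop} to conclude $\ctxrestrict{\ctxapp{\Gamma_1}{\Phi}}{n} \vde \sigma' : \ctxrestrict{\Gamma_0}{n}$, use the IH on part~(\ref{itm:simsubst-context}) to push $\sigma'$ through $\Phi$, and then invoke the IH on part~(\ref{itm:simsubst-type}) or~(\ref{itm:simsubst-term}) with the extended substitution $\ctxappend{(\sigma'/\hat\Gamma_0')}{(\id(\hat\Phi)/\hat\Phi)}$. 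Justifying this rearrangement requires Lemma~\ref{lem:chopsubstcommute} (chopping and substitution commute), Lemma~\ref{lem:substdistrib} (substitution distributes over context merge/append), and Lemma~\ref{lem:mergereorder} (merge reordering) to realign the restricted and appended contexts with the shape demanded by the typing rules.

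The main obstacle I anticipate is bookkeeping in the boxed/polymorphic cases: we need the equality $\ctxapp{\ctxrestrict{\ctxapp{\Gamma_1}{\Phi}}{n}}{[\sigma'/\hat\Gamma_0']\Phi}$ to match exactly the context produced by the typing rule after applying the outer substitution, and this in turn depends on the ordering invariants of $\Gamma_1$ and on the fact that $\ctxrestrict{\Phi}{n} = \Phi$ since $\mbox{level}(\Phi) \le n$. Once this algebra on contexts is discharged using the lemmas in Appendix~\ref{sec:ctxop}, the remaining obligations in each case reduce to a direct appeal to the IH together with the identity-substitution property (Lemma~\ref{lem:idsubstprop}) to eliminate the identity extensions where they were inserted artificially.
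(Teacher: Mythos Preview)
Your overall decomposition is close to the paper's, but the induction measure you propose is not strong enough and the argument will fail in the variable case. You write that you proceed ``by induction on the first derivation'', but consider the case $\alpha[\sigma']$ with $\alpha \in \hat\Gamma_0$ and $\LKP{\sigma/\hat\Gamma_0}{\alpha} = (\Phihat^n.T)$. By the definition of simultaneous substitution, $[\sigma/\hat\Gamma_0](\alpha[\sigma']) = [\sigma''/\Phihat]T$ where $\sigma'' = [\sigma/\hat\Gamma_0]\sigma'$. To conclude that this result is well-kinded in $\ctxapp{\Gamma_1}{\Phi}$ you must invoke part~(\ref{itm:simsubst-type}) again, this time with the substitution $\sigma''$ of domain $\Phihat$ applied to $T$. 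But the kinding derivation for $T$ (obtained by inversion on $\ctxapp{\Gamma_1}{\Phi} \vde \sigma : \Gamma_0$) is \emph{not} a subderivation of the original $\ctxapp{\Gamma_1}{\Gamma_0} \vde \alpha[\sigma']$; $T$ came out of the substitution $\sigma$, not out of the scrutinee. So structural induction on the first derivation alone does not license this recursive call. Lemma~\ref{lem:commsubst} does not help here either: it commutes two \emph{single} substitutions and says nothing about the well-typedness of $[\sigma''/\Phihat]T$.

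The paper fixes this by using a \emph{lexicographic} induction on $(\m{level}(\Gamma_0),\text{first derivation})$, proved simultaneously with Lemmas~\ref{lem:typesubst} and~\ref{lem:termsubst}. In the problematic case, since $\alpha{:}(\Psi \atl{m} *) \in \Gamma_0$ we have $m < \m{level}(\Gamma_0)$, hence $\m{level}(\Phihat) \le m < \m{level}(\Gamma_0)$, and the recursive appeal to part~(\ref{itm:simsubst-type}) with the new domain $\Phihat$ is justified because the first component of the lexicographic measure strictly decreases. This is exactly the same measure used in the termination proof for the substitution operation itself. Once you adopt this measure, the rest of your plan---the context algebra via Lemmas~\ref{lem:chopdistrib}, \ref{lem:mergereorder}, \ref{lem:substdistrib}, \ref{lem:substchop}, \ref{lem:substext}, and the case split on $m$ versus the minimum level of $\Gamma_1$ in the boxed/polymorphic/let-box cases---matches the paper's approach.
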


\begin{proof}
  By lexicographic induction on the level of \(\Gamma_0\) and first
  derivation. Either the level decreases or the level stays the same
  and the first derivation decreases. Here, we covered only the
  interesting cases.

  \paragraph{Cases for type substitution~(\ref{itm:typesubst-type}):}
  \begin{itemize}
  \item Case:
    \[
      \infer%
      {\ctxapp{\ctxapp{\Gamma_1}{\alpha{:}(\Phi \atl{n} *)}}{\Gamma_0} \vde \beta[\sigma]}%
      {%
        \deduce%
        {(\ctxapp{\ctxapp{\Gamma_1}{\alpha{:}(\Phi \atl{n} *)}}{\Gamma_0})(\beta) = (\Psi \atl{m} *)}%
        {\mathcal{D}_1}%
        &%
        \deduce%
        {\ctxapp{\ctxapp{\Gamma_1}{\alpha{:}(\Phi \atl{n} *)}}{\Gamma_0} \vde \sigma : \Psi}%
        {\mathcal{D}_2}%
      }%
    \]
    \begin{itemize}
    \item Subcase: \(\beta = \alpha\)\\
      \(\ctxapp{\ctxapp{\Gamma_1}{\alpha{:}(\Phi \atl{n} *)}}{\Gamma_0} \vde \sigma : \Phi\) and \(\Psi = \Phi\) \hfill by \(\mathcal{D}_2\)\\
      \(\ctxapp{\Gamma_1}{[(\hat\Phi^n.T)/\alpha^n]\Gamma_0} \vde [(\hat\Phi^n.T)/\alpha^n]\sigma : [(\hat\Phi^n.T)/\alpha^n]\Phi\) \hfill by IH (type subst.~(\ref{itm:typesubst-subst}))\\
      \(\ctxapp{\Gamma_1}{[(\hat\Phi^n.T)/\alpha^n]\Gamma_0} \vde [(\hat\Phi^n.T)/\alpha^n]\sigma : \Phi\) \hfill \(\Phi\) contains type variables only\\
      \(\ctxapp{\Gamma_1}{\Phi} \vde T\) \hfill by assumption\\
      \(\ctxapp{\Gamma_1}{[(\hat\Phi^n.T)/\alpha^n]\Gamma_0} \vde [[(\hat\Phi^n.T)/\alpha^n]\sigma/\Phihat]T\) \hfill by IH (sim.~subst.~(\ref{itm:simsubst-type}))\\
      \(\ctxapp{\Gamma_1}{[(\hat\Phi^n.T)/\alpha^n]\Gamma_0} \vde [(\hat\Phi^n.T)/\alpha^n](\alpha[\sigma])\) \hfill by the def.~of substitution\\

    \item Subcase: \(\beta \not= \alpha\)\\
      \((\ctxapp{\ctxapp{\Gamma_1}{\alpha{:}(\Phi \atl{n} *)}}{\Gamma_0})(\beta) = (\Psi \atl{m} *)\) \hfill by \(\mathcal{D}_1\)\\
      \((\ctxapp{\Gamma_1}{[(\Phihat^n.T)/\alpha^n]\Gamma_0})(\beta) = (\Psi \atl{m} *)\)\\
      \mbox{} \hfill context subst.~distrib.~\ref{lem:substdistrib} and \(\Psi\) contains type variables only\\
      \(\ctxapp{\ctxapp{\Gamma_1}{\alpha{:}(\Phi \atl{n} *)}}{\Gamma_0} \vde \sigma : \Psi\) \hfill by \(\mathcal{D}_2\)\\
      \(\ctxapp{\Gamma_1}{[(\Phihat^n.T)/\alpha^n]\Gamma_0} \vde [(\hat\Phi^n.T)/\alpha^n]\sigma : [(\hat\Phi^n.T)/\alpha^n]\Psi\) \hfill by IH (type subst.~(\ref{itm:typesubst-subst}))\\
      \(\ctxapp{\Gamma_1}{[(\Phihat^n.T)/\alpha^n]\Gamma_0} \vde [(\hat\Phi^n.T)/\alpha^n]\sigma : \Psi\) \hfill \(\Psi\) contains type variables only\\
      \(\ctxapp{\Gamma_1}{[(\Phihat^n.T)/\alpha^n]\Gamma_0} \vde \beta[[(\hat\Phi^n.T)/\alpha^n]\sigma]\) \hfill by kinding rule\\
      \(\ctxapp{\Gamma_1}{[(\Phihat^n.T)/\alpha^n]\Gamma_0} \vde [(\hat\Phi^n.T)/\alpha^n](\beta[\sigma])\) \hfill by the def.~of substitution\\
    \end{itemize}

  % \item Case:
  %   \[
  %     \infer%
  %     {\ctxapp{\ctxapp{\Gamma_1}{\alpha{:}(\Phi \atl{n} *)}}{\Gamma_0} \vde S_1 \arrow S_2}%
  %     {%
  %       \deduce%
  %       {\ctxapp{\ctxapp{\Gamma_1}{\alpha{:}(\Phi \atl{n} *)}}{\Gamma_0} \vde S_1}%
  %       {\mathcal{D}_1}%
  %       &%
  %       \deduce%
  %       {\ctxapp{\ctxapp{\Gamma_1}{\alpha{:}(\Phi \atl{n} *)}}{\Gamma_0} \vde S_2}%
  %       {\mathcal{D}_2}%
  %     }%
  %   \]
  %   \(\ctxapp{\ctxapp{\Gamma_1}{\alpha{:}(\Phi \atl{n} *)}}{\Gamma_0} \vde S_1\) \hfill by assumption\\
  %   \(\ctxapp{\ctxrestrict{\Gamma_1}{n}}{[(\Phihat^n.T)/\alpha^n](\ctxapp{\ctxignore{\Gamma_1}{n}}{\Gamma_0})} \vde [(\Phihat^n.T)/\alpha^n]S_1\) \hfill by IH (type subst.~(\ref{itm:typesubst-type}))\\
  %   \(\ctxapp{\ctxapp{\Gamma_1}{\alpha{:}(\Phi \atl{n} *)}}{\Gamma_0} \vde S_2\) \hfill by assumption\\
  %   \(\ctxapp{\ctxrestrict{\Gamma_1}{n}}{[(\Phihat^n.T)/\alpha^n](\ctxapp{\ctxignore{\Gamma_1}{n}}{\Gamma_0})} \vde [(\Phihat^n.T)/\alpha^n]S_2\) \hfill by IH (type subst.~(\ref{itm:typesubst-type}))\\
  %   \(\ctxapp{\ctxrestrict{\Gamma_1}{n}}{[(\Phihat^n.T)/\alpha^n](\ctxapp{\ctxignore{\Gamma_1}{n}}{\Gamma_0})} \vde [(\Phihat^n.T)/\alpha^n]S_1 \arrow [(\Phihat^n.T)/\alpha^n]S_2\) \hfill by kinding rule\\
  %   \(\ctxapp{\ctxrestrict{\Gamma_1}{n}}{[(\Phihat^n.T)/\alpha^n](\ctxapp{\ctxignore{\Gamma_1}{n}}{\Gamma_0})} \vde [(\Phihat^n.T)/\alpha^n](S_1 \arrow S_2)\) \hfill by the def.~of substitution\\

  \item Case:
    \[
      \infer%
      {\ctxapp{\ctxapp{\Gamma_1}{\alpha{:}(\Phi \atl{n} *)}}{\Gamma_0} \vde (\beta{:}(\Psi \atl{m} *)) \arrow S}%
      {%
        \deduce%
        {\vde \ctxapp{\ctxrestrict{(\ctxapp{\ctxapp{\Gamma_1}{\alpha{:}(\Phi \atl{n} *)}}{\Gamma_0})}{m}}{\Psi}}%
        {\mathcal{D}_1}%
        &%
        \deduce%
        {\ctxinsert{(\ctxapp{\ctxapp{\Gamma_1}{\alpha{:}(\Phi \atl{n} *)}}{\Gamma_0})}{\beta{:}(\Psi \atl{m} *)} \vde S}%
        {\mathcal{D}_2}%
      }%
    \]
    \begin{itemize}
    \item Subcase: \(m > n\)\\
      \(\vde \ctxapp{\ctxrestrict{(\ctxapp{\ctxapp{\Gamma_1}{\alpha{:}(\Phi \atl{n} *)}}{\Gamma_0})}{m}}{\Psi}\) \hfill by \(\mathcal{D}_1\)\\
      \(\vde \ctxapp{\ctxrestrict{\Gamma_1}{m}}{\Psi}\) \hfill by context chop distrib.~\ref{lem:chopdistrib}\\
      \(\vde \ctxapp{\ctxrestrict{(\ctxapp{\Gamma_1}{[(\Phihat^n.T)/\alpha^n]\Gamma_0})}{m}}{\Psi}\) \hfill by context chop distrib.~\ref{lem:chopdistrib}\\
      \(\ctxinsert{(\ctxapp{\ctxapp{\Gamma_1}{\alpha{:}(\Phi \atl{n} *)}}{\Gamma_0})}{\beta{:}(\Psi \atl{m} *)} \vde S\) \hfill by \(\mathcal{D}_2\)\\
      \(\ctxapp{\ctxapp{(\ctxinsert{\Gamma_1}{\beta{:}(\Psi \atl{m} *)})}{\alpha{:}(\Phi \atl{n} *)}}{\Gamma_0} \vde S\) \hfill by context merge reordering~\ref{lem:mergereorder} \(m > n\)\\
      \(\ctxappend{(\ctxinsert{\Gamma_1}{\beta{:}(\Psi \atl{m} *)})}{[(\Phihat^n.T)/\alpha^n]\Gamma_0} \vde [(\Phihat^n.T)/\alpha^n]S\) \hfill by IH (type subst.~(\ref{itm:typesubst-type}))\\
      \(\ctxinsert{(\ctxapp{\Gamma_1}{[(\Phihat^n.T)/\alpha^n]\Gamma_0})}{\beta{:}(\Psi \atl{m} *)} \vde [(\Phihat^n.T)/\alpha^n]S\) \hfill by context merge reordering~\ref{lem:mergereorder} \(m > n\)\\
      \(\ctxapp{\Gamma_1}{[(\Phihat^n.T)/\alpha^n]\Gamma_0} \vde (\beta{:}(\Psi \atl{m} *)) \arrow [(\Phihat^n.T)/\alpha^n]S\) \hfill by kinding rule\\
      \(\ctxapp{\Gamma_1}{[(\Phihat^n.T)/\alpha^n]\Gamma_0} \vde [(\Phihat^n.T)/\alpha^n]((\beta{:}(\Psi \atl{m} *)) \arrow S)\) \hfill by the def.~of substitution\\

    \item Subcase: \(m \le n\)\\
      \(\vde \ctxapp{\ctxrestrict{(\ctxapp{\ctxapp{\Gamma_1}{\alpha{:}(\Phi \atl{n} *)}}{\Gamma_0})}{m}}{\Psi}\) \hfill by \(\mathcal{D}_1\)\\
      \(\vde \ctxapp{\ctxapp{\ctxapp{\Gamma_1}{\alpha{:}(\Phi \atl{n} *)}}{\ctxrestrict{\Gamma_0}{m}}}{\Psi}\) \hfill by context chop distrib.~\ref{lem:chopdistrib}\\
      \(\vde \ctxapp{\Gamma_1}{[(\Phihat^n.T)/\alpha^n](\ctxapp{\ctxrestrict{\Gamma_0}{m}}{\Psi})}\) \hfill by IH (type subst.~(\ref{itm:typesubst-context}))\\
      \(\vde \ctxapp{\Gamma_1}{\ctxapp{[(\Phihat^n.T)/\alpha^n]\ctxrestrict{\Gamma_0}{m}}{[(\Phihat^n.T)/\alpha^n]\Psi}}\) \hfill by commuting chop subst.~\ref{lem:chopsubstcommute}\\
      \(\vde \ctxapp{\ctxrestrict{(\ctxapp{\Gamma_1}{[(\Phihat^n.T)/\alpha^n]\Gamma_0})}{m}}{[(\Phihat^n.T)/\alpha^n]\Psi}\) \hfill by context chop distrib.~\ref{lem:chopdistrib}\\
      \(\vde \ctxapp{\ctxrestrict{(\ctxapp{\Gamma_1}{[(\Phihat^n.T)/\alpha^n]\Gamma_0})}{m}}{\Psi}\) \hfill \(\Psi\) contains type variables only\\
      \(\ctxinsert{(\ctxapp{\ctxapp{\Gamma_1}{\alpha{:}(\Phi \atl{n} *)}}{\Gamma_0})}{\beta{:}(\Psi \atl{m} *)} \vde S\) \hfill by \(\mathcal{D}_2\)\\
      \(\ctxapp{\ctxapp{\Gamma_1}{\alpha{:}(\Phi \atl{n} *)}}{(\ctxinsert{\Gamma_0}{\beta{:}(\Psi \atl{m} *)})} \vde S\) \hfill by context merge reordering~\ref{lem:mergereorder} \(m \le n\)\\
      \(\ctxapp{\Gamma_1}{[(\Phihat^n.T)/\alpha^n](\ctxinsert{\Gamma_0}{\beta{:}(\Psi \atl{m} *)})} \vde [(\Phihat^n.T)/\alpha^n]S\) \hfill by IH (type subst.~(\ref{itm:typesubst-type}))\\
      \(\ctxapp{\Gamma_1}{(\ctxinsert{[(\Phihat^n.T)/\alpha^n]\Gamma_0}{[(\Phihat^n.T)/\alpha^n](\beta{:}(\Psi \atl{m} *))})} \vde [(\Phihat^n.T)/\alpha^n]S\)\\
      \mbox{} \hfill by context subst.~distrib.~\ref{lem:substdistrib}\\
      \(\ctxapp{\Gamma_1}{(\ctxinsert{[(\Phihat^n.T)/\alpha^n]\Gamma_0}{\beta{:}(\Psi \atl{m} *)})} \vde [(\Phihat^n.T)/\alpha^n]S\) \hfill \(\Psi\) contains type variables only\\
      \(\ctxinsert{(\ctxapp{\Gamma_1}{[(\Phihat^n.T)/\alpha^n]\Gamma_0})}{\beta{:}(\Psi \atl{m} *)} \vde [(\Phihat^n.T)/\alpha^n]S\) \hfill by context merge reordering~\ref{lem:mergereorder} \(m \le n\)\\
      \(\ctxapp{\Gamma_1}{[(\Phihat^n.T)/\alpha^n]\Gamma_0} \vde (\beta{:}(\Psi \atl{m} *)) \arrow [(\Phihat^n.T)/\alpha^n]S\) \hfill by kinding rule\\
      \(\ctxapp{\Gamma_1}{[(\Phihat^n.T)/\alpha^n]\Gamma_0} \vde [(\Phihat^n.T)/\alpha^n]((\beta{:}(\Psi \atl{m} *)) \arrow S)\) \hfill by the def.~of substitution\\
    \end{itemize}

  \item Case:
    \[
      \infer%
      {\ctxapp{\ctxapp{\Gamma_1}{\alpha{:}(\Phi \atl{n} *)}}{\Gamma_0} \vde \cbox{\Psi \atl{m} S}}%
      {%
        \deduce%
        {\vde \ctxapp{\ctxrestrict{(\ctxapp{\ctxapp{\Gamma_1}{\alpha{:}(\Phi \atl{n} *)}}{\Gamma_0})}{m}}{\Psi}}%
        {\mathcal{D}_1}%
        &%
        \deduce%
        {\ctxapp{\ctxrestrict{(\ctxapp{\ctxapp{\Gamma_1}{\alpha{:}(\Phi \atl{n} *)}}{\Gamma_0})}{m}}{\Psi} \vde S}%
        {\mathcal{D}_2}%
      }%
    \]
    \begin{itemize}
    \item Subcase: \(m > n\)\\
      \(\vde \ctxapp{\ctxrestrict{(\ctxapp{\ctxapp{\Gamma_1}{\alpha{:}(\Phi \atl{n} *)}}{\Gamma_0})}{m}}{\Psi}\) \hfill by \(\mathcal{D}_1\)\\
      \(\vde \ctxapp{\ctxrestrict{\Gamma_1}{m}}{\Psi}\) \hfill by context chop distrib.~\ref{lem:chopdistrib}\\
      \(\vde \ctxapp{\ctxrestrict{(\ctxapp{\Gamma_1}{[(\Phihat^n.T)/\alpha^n]\Gamma_0})}{m}}{\Psi}\) \hfill by context chop distrib.~\ref{lem:chopdistrib}\\
      \(\ctxapp{\ctxrestrict{(\ctxapp{\ctxapp{\Gamma_1}{\alpha{:}(\Phi \atl{n} *)}}{\Gamma_0})}{m}}{\Psi} \vde S\) \hfill by \(\mathcal{D}_2\)\\
      \(\ctxapp{\ctxrestrict{\Gamma_1}{m}}{\Psi} \vde S\) \hfill by context chop distrib.~\ref{lem:chopdistrib}\\
      \(\ctxapp{\ctxrestrict{(\ctxapp{\Gamma_1}{[(\Phihat^n.T)/\alpha^n]\Gamma_0})}{m}}{\Psi} \vde S\) \hfill by context chop distrib.~\ref{lem:chopdistrib}\\
      \(\ctxapp{\Gamma_1}{[(\Phihat^n.T)/\alpha^n]\Gamma_0} \vde \cbox{\Psi \atl{m} S}\) \hfill by kinding rule\\
      \(\ctxapp{\Gamma_1}{[(\Phihat^n.T)/\alpha^n]\Gamma_0} \vde [(\Phihat^n.T)/\alpha^n]\cbox{\Psi \atl{m} S}\) \hfill by the def.~of substitution\\

    \item Subcase: \(m \le n\)\\
      \(\vde \ctxapp{\ctxrestrict{(\ctxapp{\ctxapp{\Gamma_1}{\alpha{:}(\Phi \atl{n} *)}}{\Gamma_0})}{m}}{\Psi}\) \hfill by \(\mathcal{D}_1\)\\
      \(\vde \ctxapp{\ctxapp{\ctxapp{\Gamma_1}{\alpha{:}(\Phi \atl{n} *)}}{\ctxrestrict{\Gamma_0}{m}}}{\Psi}\) \hfill by context chop distrib.~\ref{lem:chopdistrib}\\
      \(\vde \ctxapp{\Gamma_1}{[(\Phihat^n.T)/\alpha^n](\ctxapp{\ctxrestrict{\Gamma_0}{m}}{\Psi})}\) \hfill by IH (type subst.~(\ref{itm:typesubst-context}))\\
      \(\vde \ctxapp{\ctxapp{\Gamma_1}{[(\Phihat^n.T)/\alpha^n]\ctxrestrict{\Gamma_0}{m}}}{[(\Phihat^n.T)/\alpha^n]\Psi}\) \hfill by context subst.~distrib.~\ref{lem:substdistrib}\\
      \(\vde \ctxapp{\ctxrestrict{(\ctxapp{\Gamma_1}{[(\Phihat^n.T)/\alpha^n]\Gamma_0})}{m}}{[(\Phihat^n.T)/\alpha^n]\Psi}\) \hfill by context chop distrib.~\ref{lem:chopdistrib}\\
      \(\ctxapp{\ctxrestrict{(\ctxapp{\ctxapp{\Gamma_1}{\alpha{:}(\Phi \atl{n} *)}}{\Gamma_0})}{m}}{\Psi} \vde S\) \hfill by \(\mathcal{D}_2\)\\
      \(\ctxapp{\ctxapp{\ctxapp{\Gamma_1}{\alpha{:}(\Phi \atl{n} *)}}{\ctxrestrict{\Gamma_0}{m}}}{\Psi} \vde S\) \hfill by context chop distrib.~\ref{lem:chopdistrib}\\
      \(\ctxapp{\Gamma_1}{[(\Phihat^n.T)/\alpha^n](\ctxapp{\ctxrestrict{\Gamma_0}{m}}{\Psi})} \vde [(\Phihat^n.T)/\alpha^n]S\) \hfill by IH (type subst.~(\ref{itm:typesubst-type}))\\
      \(\ctxapp{\ctxapp{\Gamma_1}{[(\Phihat^n.T)/\alpha^n]\ctxrestrict{\Gamma_0}{m}}}{[(\Phihat^n.T)/\alpha^n]\Psi} \vde [(\Phihat^n.T)/\alpha^n]S\) \hfill by context subst.~distrib.~\ref{lem:substdistrib}\\
      \(\ctxapp{\ctxrestrict{(\ctxapp{\Gamma_1}{[(\Phihat^n.T)/\alpha^n]\Gamma_0})}{m}}{[(\Phihat^n.T)/\alpha^n]\Psi} \vde [(\Phihat^n.T)/\alpha^n]S\) \hfill by context chop distrib.~\ref{lem:chopdistrib}\\
      \(\ctxapp{\Gamma_1}{[(\Phihat^n.T)/\alpha^n]\Gamma_0} \vde \cbox{[(\Phihat^n.T)/\alpha^n]\Psi \atl{m} [(\Phihat^n.T)/\alpha^n]S}\) \hfill by kinding rule\\
      \(\ctxapp{\Gamma_1}{[(\Phihat^n.T)/\alpha^n]\Gamma_0} \vde [(\Phihat^n.T)/\alpha^n]\cbox{\Psi \atl{m} S}\) \hfill by the def.~of substitution\\
    \end{itemize}
  \end{itemize}

  \paragraph{Cases for type substitution~(\ref{itm:typesubst-term}):}
  \begin{itemize}
  \item Case:
    \[
      \infer%
      {\ctxapp{\ctxapp{\Gamma_1}{\alpha{:}(\Phi \atl{n} *)}}{\Gamma_0} \vde e_2\;(\hat\Psi^m . S_2) : [(\hat\Psi^m.S_2)/\beta^m]S}%
      {%
        \deduce%
        {\ctxapp{\ctxapp{\Gamma_1}{\alpha{:}(\Phi \atl{n} *)}}{\Gamma_0} \vde e_2 : (\beta{:}(\Psi \atl{m} *)) \arrow S}%
        {\mathcal{D}_1}%
        &%
        \deduce%
        {\ctxapp{\ctxrestrict{(\ctxapp{\ctxapp{\Gamma_1}{\alpha{:}(\Phi \atl{n} *)}}{\Gamma_0})}{m}}{\Psi} \vde S_2}%
        {\mathcal{D}_2}%
      }%
    \]
    \begin{itemize}
    \item Subcase: \(m > n\)\\
      \(\ctxapp{\ctxapp{\Gamma_1}{\alpha{:}(\Phi \atl{n} *)}}{\Gamma_0} \vde e_2 : (\beta{:}(\Psi \atl{m} *)) \arrow S\) \hfill by \(\mathcal{D}_1\)\\
      \(\ctxapp{\Gamma_1}{[(\Phihat^n.T)/\alpha^n]\Gamma_0} \vde [(\Phihat^n.T)/\alpha^n]e_2 : [(\Phihat^n.T)/\alpha^n]((\beta{:}(\Psi \atl{m} *)) \arrow S)\)\\
      \mbox{} \hfill by IH (type subst.~(\ref{itm:typesubst-term}))\\
      \(\ctxapp{\Gamma_1}{[(\Phihat^n.T)/\alpha^n]\Gamma_0} \vde [(\Phihat^n.T)/\alpha^n]e_2 : (\beta{:}(\Psi \atl{m} *)) \arrow [(\Phihat^n.T)/\alpha^n]S\)\\
      \mbox{} \hfill by the def.~of substitution\\
      \(\ctxapp{\ctxrestrict{(\ctxapp{\ctxapp{\Gamma_1}{\alpha{:}(\Phi \atl{n} *)}}{\Gamma_0})}{m}}{\Psi} \vde S_2\) \hfill by \(\mathcal{D}_2\)\\
      \(\ctxapp{\ctxrestrict{\Gamma_1}{m}}{\Psi} \vde S_2\) \hfill by context chop distrib.~\ref{lem:chopdistrib}\\
      \(\ctxapp{\ctxrestrict{(\ctxapp{\Gamma_1}{[(\Phihat^n.T)/\alpha^n]\Gamma_0})}{m}}{\Psi} \vde S_2\) \hfill by context chop distrib.~\ref{lem:chopdistrib}\\
      \(\ctxapp{\Gamma_1}{[(\Phihat^n.T)/\alpha^n]\Gamma_0} \vde [(\Phihat^n.T)/\alpha^n]e_2\;(\hat\Psi^m . S_2) : [(\hat\Psi^m.S_2)/\beta^m][(\Phihat^n.T)/\alpha^n]S\)\\
      \mbox{} \hfill by typing rule\\
      \(\ctxapp{\Gamma_1}{[(\Phihat^n.T)/\alpha^n]\Gamma_0} \vde [(\Phihat^n.T)/\alpha^n]e_2\;(\hat\Psi^m . S_2) : [(\Phihat^n.T)/\alpha^n][(\hat\Psi^m.S_2)/\beta^m]S\)\\
      \mbox{} \hfill \(\beta\) does not appear in \(\Phi\) or \(T\) and \(\alpha\) does not appear in \(\Psi\) and \(S_2\)\\
      \(\ctxapp{\Gamma_1}{[(\Phihat^n.T)/\alpha^n]\Gamma_0} \vde [(\Phihat^n.T)/\alpha^n](e_2\;(\hat\Psi^m . S_2)) : [(\Phihat^n.T)/\alpha^n][(\hat\Psi^m.S_2)/\beta^m]S\)\\
      \mbox{} \hfill by the def.~of substitution\\

    \item Subcase: \(m \le n\)\\
      \(\ctxapp{\ctxapp{\Gamma_1}{\alpha{:}(\Phi \atl{n} *)}}{\Gamma_0} \vde e_2 : (\beta{:}(\Psi \atl{m} *)) \arrow S\) \hfill by \(\mathcal{D}_1\)\\
      \(\ctxapp{\Gamma_1}{[(\Phihat^n.T)/\alpha^n]\Gamma_0} \vde [(\Phihat^n.T)/\alpha^n]e_2 : [(\Phihat^n.T)/\alpha^n]((\beta{:}(\Psi \atl{m} *)) \arrow S)\)\\
      \mbox{} \hfill by IH (type subst.~(\ref{itm:typesubst-term}))\\
      \(\ctxapp{\Gamma_1}{[(\Phihat^n.T)/\alpha^n]\Gamma_0} \vde [(\Phihat^n.T)/\alpha^n]e_2 : (\beta{:}([(\Phihat^n.T)/\alpha^n]\Psi \atl{m} *)) \arrow [(\Phihat^n.T)/\alpha^n]S\)\\
      \mbox{} \hfill by the def.~of substitution\\
      \(\ctxapp{\Gamma_1}{[(\Phihat^n.T)/\alpha^n]\Gamma_0} \vde [(\Phihat^n.T)/\alpha^n]e_2 : (\beta{:}(\Psi \atl{m} *)) \arrow [(\Phihat^n.T)/\alpha^n]S\)\\
      \mbox{} \hfill \(\Psi\) contains type variables only\\
      \(\ctxapp{\ctxrestrict{(\ctxapp{\ctxapp{\Gamma_1}{\alpha{:}(\Phi \atl{n} *)}}{\Gamma_0})}{m}}{\Psi} \vde S_2\) \hfill by \(\mathcal{D}_2\)\\
      \(\ctxapp{\ctxapp{\ctxapp{\Gamma_1}{\alpha{:}(\Phi \atl{n} *)}}{\ctxrestrict{\Gamma_0}{m}}}{\Psi} \vde S_2\) \hfill by context chop distrib.~\ref{lem:chopdistrib}\\
      \(\ctxapp{\Gamma_1}{[(\Phihat^n.T)/\alpha^n](\ctxapp{\ctxrestrict{\Gamma_0}{m}}{\Psi})} \vde [(\Phihat^n.T)/\alpha^n]S_2\) \hfill by IH (type subst.~(\ref{itm:typesubst-type}))\\
      \(\ctxapp{\ctxapp{\Gamma_1}{[(\Phihat^n.T)/\alpha^n]\ctxrestrict{\Gamma_0}{m}}}{[(\Phihat^n.T)/\alpha^n]\Psi} \vde [(\Phihat^n.T)/\alpha^n]S_2\) \hfill by context subst.~distrib.~\ref{lem:substdistrib}\\
      \(\ctxapp{\ctxapp{\Gamma_1}{[(\Phihat^n.T)/\alpha^n]\ctxrestrict{\Gamma_0}{m}}}{\Psi} \vde [(\Phihat^n.T)/\alpha^n]S_2\) \hfill \(\Psi\) contains type variables only\\
      \(\ctxapp{\ctxrestrict{(\ctxapp{\Gamma_1}{[(\Phihat^n.T)/\alpha^n]\Gamma_0})}{m}}{\Psi} \vde [(\Phihat^n.T)/\alpha^n]S_2\) \hfill by context chop distrib.~\ref{lem:chopdistrib}\\
      \(\ctxapp{\Gamma_1}{[(\Phihat^n.T)/\alpha^n]\Gamma_0}\)\\
      \mbox{} \qquad \(\vde [(\Phihat^n.T)/\alpha^n]e_2\;(\hat\Psi^m . [(\Phihat^n.T)/\alpha^n]S_2) : [(\hat\Psi^m.[(\Phihat^n.T)/\alpha^n]S_2)/\beta^m][(\Phihat^n.T)/\alpha^n]S\)\\
      \mbox{} \hfill by typing rule\\
      \(\ctxapp{\Gamma_1}{[(\Phihat^n.T)/\alpha^n]\Gamma_0} \vde [(\Phihat^n.T)/\alpha^n]e_2\;(\hat\Psi^m . [(\Phihat^n.T)/\alpha^n]S_2) : [(\Phihat^n.T)/\alpha^n][(\hat\Psi^m.S_2)/\beta^m]S\)\\
      \mbox{} \hfill by commuting substitutions~\ref{lem:commsubst}\\
      \(\ctxapp{\Gamma_1}{[(\Phihat^n.T)/\alpha^n]\Gamma_0} \vde [(\Phihat^n.T)/\alpha^n](e_2\;(\hat\Psi^m . S_2)) : [(\Phihat^n.T)/\alpha^n][(\hat\Psi^m.S_2)/\beta^m]S\)\\
      \mbox{} \hfill by the def.~of substitution\\
    \end{itemize}

  \item Case:
    \[
      \infer[m > 0]%
      {\ctxapp{\ctxapp{\Gamma_1}{\alpha{:}(\Phi \atl{n} *)}}{\Gamma_0} \vde \letboxm{(\hat\Psi^m.u)}{e_1}{e_2} : S_2}%
      {%
        \deduce%
        {\ctxapp{\ctxapp{\Gamma_1}{\alpha{:}(\Phi \atl{n} *)}}{\Gamma_0} \vde e_1 : \cbox{\Psi \atl{m} S_1}}%
        {\mathcal{D}_1}%
        &%
        \deduce%
        {\ctxinsert{(\ctxapp{\ctxapp{\Gamma_1}{\alpha{:}(\Phi \atl{n} *)}}{\Gamma_0})}{u{:}(\Psi \atl{m} S_1)} \vde e_2 : S_2}%
        {\mathcal{D}_2}%
      }%
    \]
    \begin{itemize}
    \item Subcase: \(m > n\)\\
      \(\ctxapp{\ctxapp{\Gamma_1}{\alpha{:}(\Phi \atl{n} *)}}{\Gamma_0} \vde e_1 : \cbox{\Psi \atl{m} S_1}\) \hfill by \(\mathcal{D}_1\)\\
      \(\ctxapp{\Gamma_1}{[(\Phihat^n.T)/\alpha^n]\Gamma_0} \vde [(\Phihat^n.T)/\alpha^n]e_1 : [(\Phihat^n.T)/\alpha^n]\cbox{\Psi \atl{m} S_1}\) \hfill by IH (type subst.~(\ref{itm:typesubst-term}))\\
      \(\ctxapp{\Gamma_1}{[(\Phihat^n.T)/\alpha^n]\Gamma_0} \vde [(\Phihat^n.T)/\alpha^n]e_1 : \cbox{\Psi \atl{m} S_1}\) \hfill by the def.~of substitution\\
      \(\ctxinsert{(\ctxapp{\ctxapp{\Gamma_1}{\alpha{:}(\Phi \atl{n} *)}}{\Gamma_0})}{u{:}(\Psi \atl{m} S_1)} \vde e_2 : S_2\) \hfill by \(\mathcal{D}_2\)\\
      \(\ctxapp{\ctxapp{(\ctxinsert{\Gamma_1}{u{:}(\Psi \atl{m} S_1)})}{\alpha{:}(\Phi \atl{n} *)}}{\Gamma_0} \vde e_2 : S_2\) \hfill by context merge reordering~\ref{lem:mergereorder} \(m > n\)\\
      \(\ctxapp{(\ctxinsert{\Gamma_1}{u{:}(\Psi \atl{m} S_1)})}{[(\Phihat^n.T)/\alpha^n]\Gamma_0} \vde [(\Phihat^n.T)/\alpha^n]e_2 : [(\Phihat^n.T)/\alpha^n]S_2\)\\
      \mbox{} \hfill by IH (type subst.~(\ref{itm:typesubst-term}))\\
      \(\ctxinsert{(\ctxapp{\Gamma_1}{[(\Phihat^n.T)/\alpha^n]\Gamma_0})}{u{:}(\Psi \atl{m} S_1)} \vde [(\Phihat^n.T)/\alpha^n]e_2 : [(\Phihat^n.T)/\alpha^n]S_2\)\\
      \mbox{} \hfill by context merge reordering~\ref{lem:mergereorder} \(m > n\)\\
      \(\ctxapp{\Gamma_1}{[(\Phihat^n.T)/\alpha^n]\Gamma_0}\)\\
      \mbox{} \qquad \(\vde \letboxm{(\hat\Psi^m.u)}{[(\Phihat^n.T)/\alpha^n]e_1}{[(\Phihat^n.T)/\alpha^n]e_2} : [(\Phihat^n.T)/\alpha^n]S_2\)\\
      \mbox{} \hfill by typing rule\\
      \(\ctxapp{\Gamma_1}{[(\Phihat^n.T)/\alpha^n]\Gamma_0} \vde [(\Phihat^n.T)/\alpha^n](\letboxm{(\hat\Psi^m.u)}{e_1}{e_2}) : [(\Phihat^n.T)/\alpha^n]S_2\)\\
      \mbox{} \hfill by the def.~of substitution\\

    \item Subcase: \(m \le n\)\\
      \(\ctxapp{\ctxapp{\Gamma_1}{\alpha{:}(\Phi \atl{n} *)}}{\Gamma_0} \vde e_1 : \cbox{\Psi \atl{m} S_1}\) \hfill by \(\mathcal{D}_1\)\\
      \(\ctxapp{\Gamma_1}{[(\Phihat^n.T)/\alpha^n]\Gamma_0} \vde [(\Phihat^n.T)/\alpha^n]e_1 : [(\Phihat^n.T)/\alpha^n]\cbox{\Psi \atl{m} S_1}\) \hfill by IH (type subst.~(\ref{itm:typesubst-term}))\\
      \(\ctxapp{\Gamma_1}{[(\Phihat^n.T)/\alpha^n]\Gamma_0} \vde [(\Phihat^n.T)/\alpha^n]e_1 : \cbox{[(\Phihat^n.T)/\alpha^n]\Psi \atl{m} [(\Phihat^n.T)/\alpha^n]S_1}\)\\
      \mbox{} \hfill by the def.~of substitution\\
      \(\ctxinsert{(\ctxapp{\ctxapp{\Gamma_1}{\alpha{:}(\Phi \atl{n} *)}}{\Gamma_0})}{u{:}(\Psi \atl{m} S_1)} \vde e_2 : S_2\) \hfill by \(\mathcal{D}_2\)\\
      \(\ctxapp{\ctxapp{\Gamma_1}{\alpha{:}(\Phi \atl{n} *)}}{(\ctxinsert{\Gamma_0}{u{:}(\Psi \atl{m} S_1)})} \vde e_2 : S_2\) \hfill by context merge reordering~\ref{lem:mergereorder} \(m \le n\)\\
      \(\ctxapp{\Gamma_1}{[(\Phihat^n.T)/\alpha^n](\ctxinsert{\Gamma_0}{u{:}(\Psi \atl{m} S_1)})} \vde [(\Phihat^n.T)/\alpha^n]e_2 : [(\Phihat^n.T)/\alpha^n]S_2\)\\
      \mbox{} \hfill by IH (type subst.~(\ref{itm:typesubst-term}))\\
      \(\ctxapp{\Gamma_1}{(\ctxinsert{[(\Phihat^n.T)/\alpha^n]\Gamma_0}{[(\Phihat^n.T)/\alpha^n](u{:}(\Psi \atl{m} S_1))})} \vde [(\Phihat^n.T)/\alpha^n]e_2 : [(\Phihat^n.T)/\alpha^n]S_2\)\\
      \mbox{} \hfill by context subst.~distrib.~\ref{lem:substdistrib}\\
      \(\ctxinsert{(\ctxapp{\Gamma_1}{[(\Phihat^n.T)/\alpha^n]\Gamma_0})}{[(\Phihat^n.T)/\alpha^n](u{:}(\Psi \atl{m} S_1))} \vde [(\Phihat^n.T)/\alpha^n]e_2 : [(\Phihat^n.T)/\alpha^n]S_2\)\\
      \mbox{} \hfill by context merge reordering~\ref{lem:mergereorder} \(m \le n\)\\
      \(\ctxinsert{(\ctxapp{\Gamma_1}{[(\Phihat^n.T)/\alpha^n]\Gamma_0})}{u{:}([(\Phihat^n.T)/\alpha^n]\Psi \atl{m} [(\Phihat^n.T)/\alpha^n]S_1)}\)\\
      \mbox{} \qquad \(\vde [(\Phihat^n.T)/\alpha^n]e_2 : [(\Phihat^n.T)/\alpha^n]S_2\) \hfill by the def.~of substitution\\
      \(\ctxapp{\Gamma_1}{[(\Phihat^n.T)/\alpha^n]\Gamma_0}\)\\
      \mbox{} \qquad \(\vde \letboxm{(\hat\Psi^m.u)}{[(\Phihat^n.T)/\alpha^n]e_1}{[(\Phihat^n.T)/\alpha^n]e_2} : [(\Phihat^n.T)/\alpha^n]S_2\)\\
      \mbox{} \hfill by typing rule\\
      \(\ctxapp{\Gamma_1}{[(\Phihat^n.T)/\alpha^n]\Gamma_0} \vde [(\Phihat^n.T)/\alpha^n](\letboxm{(\hat\Psi^m.u)}{e_1}{e_2}) : [(\Phihat^n.T)/\alpha^n]S_2\)\\
      \mbox{} \hfill by the def.~of substitution\\
    \end{itemize}
  \end{itemize}

  \paragraph{Cases for simultaneous substitution~(\ref{itm:simsubst-type}):}
  \begin{itemize}
  \item Case:
    \[
      \infer%
      {\ctxapp{\Gamma_1}{\Gamma_0} \vde \alpha[\sigma_2]}%
      {%
        \deduce%
        {(\ctxapp{\Gamma_1}{\Gamma_0})(\alpha) = (\Psi \atl{m} *)}%
        {\mathcal{D}_1}%
        &%
        \deduce%
        {\ctxapp{\Gamma_1}{\Gamma_0} \vde \sigma_2 : \Psi}%
        {\mathcal{D}_2}%
      }%
    \]
    \begin{itemize}
    \item Subcase: \(\alpha \in \Gamma_1\)\\
      \((\ctxapp{\Gamma_1}{\Gamma_0})(\alpha) = (\Psi \atl{m} *)\) \hfill by \(\mathcal{D}_1\)\\
      \((\ctxapp{\Gamma_1}{\Phi})(\alpha) = (\Psi \atl{m} *)\) \hfill by \(\alpha \in \Gamma_1\)\\
      \(\ctxapp{\Gamma_1}{\Gamma_0} \vde \sigma_2 : \Psi\) \hfill by \(\mathcal{D}_2\)\\
      \(\ctxapp{\Gamma_1}{\Phi} \vde [\sigma/\hat\Gamma_0]\sigma_2 : [\sigma/\hat\Gamma_0]\Psi\) \hfill by IH (sim.~subst.~(\ref{itm:simsubst-subst}))\\
      \(\ctxapp{\Gamma_1}{\Phi} \vde [\sigma/\hat\Gamma_0]\sigma_2 : \Psi\) \hfill \(\Psi\) contains type variables only\\
      \(\ctxapp{\Gamma_1}{\Phi} \vde \alpha[[\sigma/\hat\Gamma_0]\sigma_2]\) \hfill by kinding rule\\
      \(\ctxapp{\Gamma_1}{\Phi} \vde [\sigma/\hat\Gamma_0](\alpha[\sigma_2])\) \hfill by the def.~of sim.~subst.\\

    \item Subcase: \(\alpha \in \Gamma_0\) and \(\mbox{lkp}(\sigma/\Gamma_0)\;\alpha = (\hat\Psi^m.T)\)\\
      \(\ctxapp{\Gamma_1}{\Phi} \vde \sigma : \Gamma_0\) \hfill by assumption\\
      \(\ctxapp{\ctxrestrict{(\ctxapp{\Gamma_1}{\Phi})}{m}}{\Psi} \vde T\) \hfill by inversion on \(\ctxapp{\Gamma_1}{\Phi} \vde \sigma : \Gamma_0\)\\
      \(\ctxapp{\Gamma_1}{\Gamma_0} \vde \sigma_2 : \Psi\) \hfill by \(\mathcal{D}_2\)\\
      \(\ctxapp{\Gamma_1}{\Phi} \vde [\sigma/\hat\Gamma_0]\sigma_2 : [\sigma/\hat\Gamma_0]\Psi\) \hfill by IH (sim.~subst.~(\ref{itm:simsubst-subst}))\\
      \(\ctxapp{\Gamma_1}{\Phi} \vde [\sigma/\hat\Gamma_0]\sigma_2 : \Psi\) \hfill \(\Psi\) contains type variables only\\
      \(\ctxapp{\ctxrestrict{(\ctxapp{\Gamma_1}{\Phi})}{m}}{\ctxignore{(\ctxapp{\Gamma_1}{\Phi})}{m}} \vde [\sigma/\hat\Gamma_0]\sigma_2 : \Psi\) \hfill by property of context~\ref{lem:contextprop}\\
      \(\ctxapp{\ctxrestrict{(\ctxapp{\Gamma_1}{\Phi})}{m}}{\ctxignore{\Phi}{m}} \vde [\sigma/\hat\Gamma_0]\sigma_2 : \Psi\) \hfill by context chop distrib.~\ref{lem:chopdistrib}\\
      \(\ctxapp{\ctxrestrict{(\ctxapp{\Gamma_1}{\Phi})}{m}}{\ctxignore{\Phi}{m}} \vde [[\sigma/\hat\Gamma_0]\sigma_2/\hat\Psi^m]T\)\\
      \mbox{} \hfill by IH (sim.~subst.~(\ref{itm:simsubst-type}))\\
      \(\ctxapp{\ctxapp{\Gamma_1}{\ctxrestrict{\Phi}{m}}}{\ctxignore{\Phi}{m}} \vde [[\sigma/\hat\Gamma_0]\sigma_2/\hat\Psi^m]T\) \hfill by context chop distrib.~\ref{lem:chopdistrib}\\
      \(\ctxapp{\Gamma_1}{\Phi} \vde [[\sigma/\hat\Gamma_0]\sigma_2/\hat\Psi^m]T\) \hfill by \hfill by property of context~\ref{lem:contextprop}\\
      \(\ctxapp{\Gamma_1}{\Phi} \vde [\sigma/\hat\Gamma_0](\alpha[\sigma_2])\) \hfill by the def.~of sim.~subst.\\

    \item Subcase: \(\alpha \in \Gamma_0\) and \(\mbox{lkp}(\sigma/\Gamma_0)\;\alpha = \beta\)\\
      \((\ctxapp{\Gamma_1}{\Gamma_0})(\alpha) = (\Psi \atl{m} *)\) \hfill by \(\mathcal{D}_1\)\\
      \(\ctxapp{\Gamma_1}{\Phi} \vdash \sigma : \Gamma_0\) \hfill by assumption\\
      \((\ctxapp{\Gamma_1}{\Phi})(\beta) = ([\sigma/\hat\Gamma_0]\Psi \atl{m} *)\) \hfill by inversion on \(\ctxapp{\Gamma_1}{\Phi} \vdash \sigma : \Gamma_0\)\\
      \((\ctxapp{\Gamma_1}{\Phi})(\beta) = (\Psi \atl{m} *)\) \hfill \(\Psi\) contains type variables only\\
      \(\ctxapp{\Gamma_1}{\Gamma_0} \vde \sigma_2 : \Psi\) \hfill by \(\mathcal{D}_2\)\\
      \(\ctxapp{\Gamma_1}{\Phi} \vde [\sigma/\hat\Gamma_0]\sigma_2 : [\sigma/\hat\Gamma_0]\Psi\) \hfill by IH (sim.~subst.~(\ref{itm:simsubst-subst}))\\
      \(\ctxapp{\Gamma_1}{\Phi} \vde [\sigma/\hat\Gamma_0]\sigma_2 : \Psi\) \hfill \(\Psi\) contains type variables only\\
      \(\ctxapp{\Gamma_1}{\Phi} \vde \beta[[\sigma/\hat\Gamma_0]\sigma_2]\) \hfill by kinding rule\\
      \(\ctxapp{\Gamma_1}{\Phi} \vde [\sigma/\hat\Gamma_0](\alpha[\sigma_2])\) \hfill by the def.~of sim.~subst.\\
    \end{itemize}

  \item Case:
    \[
      \infer%
      {\ctxapp{\Gamma_1}{\Gamma_0} \vde (\alpha{:}(\Psi \atl{m} *)) \arrow S}%
      {%
        \deduce%
        {\vde \ctxapp{\ctxrestrict{(\ctxapp{\Gamma_1}{\Gamma_0})}{m}}{\Psi}}%
        {\mathcal{D}_1}%
        &%
        \deduce%
        {\ctxinsert{(\ctxapp{\Gamma_1}{\Gamma_0})}{\alpha{:}(\Psi \atl{m} *)} \vde S}%
        {\mathcal{D}_2}%
      }%
    \]
    Let \(n\) be the minimum level of variables in \(\Gamma_1\) and \(\sigma'/\Gamma'_0 = \ctxrestrict{(\sigma/\hat\Gamma_0)}{m}\)
    \begin{itemize}
    \item Subcase: \(m \ge n\)\\
      \(\vde \ctxapp{\ctxrestrict{(\ctxapp{\Gamma_1}{\Gamma_0})}{m}}{\Psi}\) \hfill by \(\mathcal{D}_1\)\\
      \(\vde \ctxapp{\ctxrestrict{\Gamma_1}{m}}{\Psi}\) \hfill by context chop distrib.~\ref{lem:chopdistrib}\\
      \(\vde \ctxapp{\ctxrestrict{(\ctxapp{\Gamma_1}{\Psi})}{m}}{\Psi}\) \hfill by context chop distrib.~\ref{lem:chopdistrib}\\
      \(\ctxinsert{(\ctxapp{\Gamma_1}{\Gamma_0})}{\alpha{:}(\Psi \atl{m} *)} \vde S\) \hfill by \(\mathcal{D}_2\)\\
      \(\ctxapp{(\ctxinsert{\Gamma_1}{\alpha{:}(\Psi \atl{m} *)})}{\Gamma_0} \vde S\) \hfill by context merge reordering~\ref{lem:mergereorder}\\
      \(\ctxapp{(\ctxinsert{\Gamma_1}{\alpha{:}(\Psi \atl{m} *)})}{\Phi} \vde [\sigma/\hat\Gamma_0]S\) \hfill by IH (sim.~subst.~(\ref{itm:simsubst-type}))\\
      \(\ctxinsert{(\ctxapp{\Gamma_1}{\Phi})}{\alpha{:}(\Psi \atl{m} *)} \vde [\sigma/\hat\Gamma_0]S\) \hfill by context merge reordering~\ref{lem:mergereorder}\\
      \(\ctxapp{\Gamma_1}{\Phi} \vde (\alpha{:}(\Psi \atl{m} *)) \arrow [\sigma/\hat\Gamma_0]S\) \hfill by typing rule\\
      \(\ctxapp{\Gamma_1}{\Phi} \vde [\sigma/\hat\Gamma_0]((\alpha{:}(\Psi \atl{m} *)) \arrow S)\) \hfill by the def.~of~sim.~subst.\\

    \item Subcase: \(m < n\)\\
      \(\ctxapp{\Gamma_1}{\Phi} \vde \sigma : \Gamma_0\) \hfill by assumption\\
      \(\ctxrestrict{(\ctxapp{\Gamma_1}{\Phi})}{m} \vde \sigma' : \Gamma'_0\) \hfill by substitution chop~\ref{lem:substchop}\\
      \(\ctxapp{\Gamma_1}{\ctxrestrict{\Phi}{m}} \vde \sigma' : \Gamma'_0\) \hfill by context chop distrib.~\ref{lem:chopdistrib}\\
      \(\vde \ctxapp{\ctxapp{\Gamma_1}{\ctxrestrict{\Phi}{m}}}{[\sigma'/\hat\Gamma'_0]\Psi}\) \hfill by IH (sim.~subst.~(\ref{itm:simsubst-context}))\\
      \(\vde \ctxapp{\ctxrestrict{(\ctxapp{\Gamma_1}{\Phi})}{m}}{[\sigma'/\hat\Gamma'_0]\Psi}\) \hfill by context chop distrib.~\ref{lem:chopdistrib}\\
      \(\vde \ctxapp{\ctxrestrict{(\ctxapp{\Gamma_1}{\Phi})}{m}}{\Psi}\) \hfill \(\Psi\) contains type variables only\\
      \(\ctxinsert{(\ctxapp{\Gamma_1}{\Phi})}{\alpha{:}(\Psi \atl{m} *)} \vde \ctxinsert{\sigma}{\alpha^m} : \ctxinsert{\Gamma_0}{\alpha{:}(\Psi \atl{m} *)}\) \hfill by substitution extension~\ref{lem:substext}\\
      \(\ctxapp{\Gamma_1}{(\ctxinsert{\Phi}{\alpha{:}(\Psi \atl{m} *)})} \vde \ctxinsert{\sigma}{\alpha^m} : \ctxinsert{\Gamma}{\alpha{:}(\Psi \atl{m} *)}\) \hfill by context merge reordering~\ref{lem:mergereorder}\\
      \(\ctxapp{\Gamma_1}{(\ctxinsert{\Phi}{\alpha{:}(\Psi \atl{m} *)})} \vde [\ctxinsertatl{\sigma/\hat\Gamma_0}{\alpha/\alpha}{m}]S\) \hfill by IH (sim.~subst.~(\ref{itm:simsubst-type}))\\
      \(\ctxinsert{(\ctxapp{\Gamma_1}{\Phi})}{\alpha{:}(\Psi \atl{m} *)} \vde [\ctxinsertatl{\sigma/\hat\Gamma_0}{\alpha/\alpha}{m}]S\) \hfill by context merge reordering~\ref{lem:mergereorder}\\
      \(\ctxapp{\Gamma_1}{\Phi} \vde (\alpha{:}(\Psi \atl{m} *)) \arrow [\ctxinsertatl{\sigma/\hat\Gamma_0}{\alpha/\alpha}{m}]S\) \hfill by typing rule\\
      \(\ctxapp{\Gamma_1}{\Phi} \vde [\sigma/\hat\Gamma_0]((\alpha{:}(\Psi \atl{m} *)) \arrow S)\) \hfill by the def.~of~sim.~subst.\\
    \end{itemize}
  \end{itemize}

  \paragraph{Cases for simultaneous substitution~(\ref{itm:simsubst-term}):}
  \begin{itemize}
  \item Case:
    \[
      \infer[m > 0]%
      {\ctxapp{\Gamma_1}{\Gamma_0} \vde \letboxm{(\hat\Psi^m.u)}{e_1}{e_2} : S_2}%
      {%
        \deduce%
        {\ctxapp{\Gamma_1}{\Gamma_0} \vde e_1 : \cbox{\Psi \atl{m} S_1}}%
        {\mathcal{D}_1}%
        &%
        \deduce%
        {\ctxinsert{(\ctxapp{\Gamma_1}{\Gamma_0})}{u{:}(\Psi \atl{m} S_1)} \vde e_2 : S_2}%
        {\mathcal{D}_2}%
      }%
    \]
    Let \(n\) be the minimum level of variables in \(\Gamma_1\)
    \begin{itemize}
    \item Subcase: \(m \ge n\)\\
      \(\ctxapp{\Gamma_1}{\Gamma_0} \vde e_1 : \cbox{\Psi \atl{m} S_1}\) \hfill by \(\mathcal{D}_1\)\\
      \(\ctxapp{\Gamma_1}{\Phi} \vde [(\sigma/\hat\Gamma_0)]e_1 : [(\sigma/\hat\Gamma_0)]\cbox{\Psi \atl{m} S_1}\) \hfill by IH\\
      \(\ctxapp{\Gamma_1}{\Phi} \vde [(\sigma/\hat\Gamma_0)]e_1 : \cbox{\Psi \atl{m} S_1}\) \hfill by the def.~of sim.~subst.\\
      \(\ctxinsert{(\ctxapp{\Gamma_1}{\Gamma_0})}{u{:}(\Psi \atl{m} S_1)} \vde e_2 : S_2\) \hfill by \(\mathcal{D}_2\)\\
      \(\ctxapp{(\ctxinsert{\Gamma_1}{u{:}(\Psi \atl{m} S_1)})}{\Gamma_0} \vde e_2 : S_2\) \hfill by context merge reordering~\ref{lem:mergereorder} \(m \ge n\)\\
      \(\ctxapp{(\ctxinsert{\Gamma_1}{u{:}(\Psi \atl{m} S_1)})}{\Phi} \vde [(\sigma/\hat\Gamma_0)]e_2 : [\sigma/\hat\Gamma_0]S_2\) \hfill by IH\\
      \(\ctxinsert{(\ctxapp{\Gamma_1}{\Phi})}{u{:}(\Psi \atl{m} S_1)} \vde [(\sigma/\hat\Gamma_0)]e_2 : [\sigma/\hat\Gamma_0]S_2\) \hfill by context merge reordering~\ref{lem:mergereorder} \(m \ge n\)\\
      \(\ctxapp{\Gamma_1}{\Phi} \vde \letboxm{(\hat\Psi^m.u)}{[\sigma/\hat\Gamma_0]e_1}{[(\sigma/\hat\Gamma_0)]e_2} : [\sigma/\hat\Gamma_0]S_2\) \hfill by typing rule\\
      \(\ctxapp{\Gamma_1}{\Phi} \vde [\sigma/\hat\Gamma_0](\letboxm{(\hat\Psi^m.u)}{e_1}{e_2}) : [\sigma/\hat\Gamma_0]S_2\) \hfill by the def.~of sim.~subst.\\

    \item Subcase: \(m < n\)\\
      Let \((\sigma'/\hat\Gamma'_0) = \ctxrestrict{(\sigma/\hat\Gamma_0)}{m}\)\\
      \(\ctxapp{\Gamma_1}{\Gamma_0} \vde e_1 : \cbox{\Psi \atl{m} S_1}\) \hfill by \(\mathcal{D}_1\)\\
      \(\ctxapp{\Gamma_1}{\Phi} \vde [\sigma/\hat\Gamma_0]e_1 : [\sigma/\hat\Gamma_0]\cbox{\Psi \atl{m} S_1}\) \hfill by IH\\
      \(\ctxapp{\Gamma_1}{\Phi} \vde [\sigma/\hat\Gamma_0]e_1 : \cbox{[\sigma'/\hat\Gamma'_0]\Psi \atl{m} [\ctxmerge{(\sigma'/\hat\Gamma'_0)}{(\id{(\hat\Psi^m)}/\hat\Psi)}]S_1}\) \hfill by the def.~of sim.~subst.\\
      \(\ctxapp{\Gamma_1}{\Phi} \vde \sigma : \Gamma_0\) \hfill by assumption\\
      \(\ctxinsert{(\ctxapp{\Gamma_1}{\Phi})}{u{:}([\sigma'/\hat\Gamma'_0]\Psi \atl{m} [\ctxmerge{(\sigma'/\hat\Gamma'_0)}{(\id{(\hat\Psi^m)}/\hat\Psi)}]S_1)}\)\\
      \mbox{} \qquad \(\vde (\ctxinsertatl{\sigma}{u}{m}) : (\ctxinsert{\Gamma_0}{u{:}(\Psi \atl{m} S_1)})\) \hfill by substitution extension~\ref{lem:substext}\\
      \(\ctxapp{\Gamma_1}{(\ctxinsert{\Phi}{u{:}([\sigma'/\hat\Gamma'_0]\Psi \atl{m} [\ctxmerge{(\sigma'/\hat\Gamma'_0)}{(\id{(\hat\Psi^m)}/\hat\Psi)}]S_1)})}\)\\
      \mbox{} \qquad \(\vde (\ctxinsertatl{\sigma}{u}{m}) : (\ctxinsert{\Gamma_0}{u{:}(\Psi \atl{m} S_1)})\) \hfill by context merge reordering~\ref{lem:mergereorder} \(m < n\)\\
      \(\ctxinsert{(\ctxapp{\Gamma_1}{\Gamma_0})}{u{:}(\Psi \atl{m} S_1)} \vde e_2 : S_2\) \hfill by \(\mathcal{D}_2\)\\
      \(\ctxapp{\Gamma_1}{(\ctxinsert{\Gamma_0}{u{:}(\Psi \atl{m} S_1)})} \vde e_2 : S_2\) \hfill by context merge reordering~\ref{lem:mergereorder} \(m < n\)\\
      \(\ctxapp{\Gamma_1}{(\ctxinsert{\Phi}{u{:}([\sigma'/\hat\Gamma'_0]\Psi \atl{m} [\ctxmerge{(\sigma'/\hat\Gamma'_0)}{(\id{(\hat\Psi^m)}/\hat\Psi)}]S_1)})}\)\\
      \mbox{} \qquad \(\vde [\ctxinsertatl{(\sigma/\hat\Gamma_0)}{u/u}{m}]e_2 : [\ctxinsertatl{(\sigma/\hat\Gamma_0)}{u/u}{m}]S_2\) \hfill by IH\\
      \(\ctxapp{\Gamma_1}{(\ctxinsert{\Phi}{u{:}([\sigma'/\hat\Gamma'_0]\Psi \atl{m} [\ctxmerge{(\sigma'/\hat\Gamma'_0)}{(\id{(\hat\Psi^m)}/\hat\Psi)}]S_1)})}\)\\
      \mbox{} \qquad \(\vde [\ctxinsertatl{(\sigma/\hat\Gamma_0)}{u/u}{m}]e_2 : [\sigma/\hat\Gamma_0]S_2\) \hfill type ignores term substitution~\ref{lem:typeignoresubst}\\
      \(\ctxinsert{(\ctxapp{\Gamma_1}{\Phi})}{u{:}([\sigma'/\hat\Gamma'_0]\Psi \atl{m} [\ctxmerge{(\sigma'/\hat\Gamma'_0)}{(\id{(\hat\Psi^m)}/\hat\Psi)}]S_1)}\)\\
      \mbox{} \qquad \(\vde [\ctxinsertatl{(\sigma/\hat\Gamma_0)}{u/u}{m}]e_2 : [\sigma/\hat\Gamma_0]S_2\) \hfill by context merge reordering~\ref{lem:mergereorder} \(m < n\)\\
      \(\ctxapp{\Gamma_1}{\Phi} \vde \letboxm{(\hat\Psi^m.u)}{[\sigma/\hat\Gamma_0]e_1}{[\ctxinsertatl{(\sigma/\hat\Gamma_0)}{u/u}{m}]e_2} : [\sigma/\hat\Gamma_0]S_2\) \hfill by typing rule\\
      \(\ctxapp{\Gamma_1}{\Phi} \vde [\sigma/\hat\Gamma_0](\letboxm{(\hat\Psi^m.u)}{e_1}{e_2}) : [\sigma/\hat\Gamma_0]S_2\) \hfill by the def.~of sim.~subst.\\
    \end{itemize}
  \end{itemize}
\end{proof}

\section{Structural Equality Modulo Constraints}\label{appendix:typeq}

We summarize here (Fig~\ref{fig:typeqmod}, Fig~\ref{fig:ctxeqmod}, and
Fig~\ref{fig:substeqmod}) the structural equality rules modulo
constraints. Most rules are straightforward and as expected. The only
interesting case is : $\alpha[\sigma] = T$ where $\alpha{:=}(\Phihat^n.S) : (\Phi \atl{n} *) \in \Psi $.
In this case, we continue to compare $T$ with $[\sigma/\Phihat]S$.

\begin{figure}[ht]
  \centering
  \caption{Structural equality for types modulo constraints: \fbox{$\Psi \vde T = S$} where $\Psi \vde T$ and $\Psi \vde S$}
\[
  \begin{array}{c}
%\multicolumn{1}{p{12cm}}{Structural equality for types modulo constraints: \fbox{$\Psi \vde T = S$} where $\Psi \vde T$ and $\Psi \vde S$}\\[1em]
\infer{\Psi \vde \alpha[\sigma] = \alpha[\sigma']}{
\alpha:(\Phi \atl{n} *) \in \Psi &
\Psi \vde \sigma = \sigma' : \Phi
}
\quad
\infer{\Psi \vde \alpha[\sigma] = T}
{\alpha{:=}(\Phihat^n.S){:}(\Phi \atl{n} *) \in \Psi &
\Psi \vde [\sigma/\Phihat]S = T}
\\[1em]
\infer{\Psi \vde T = \alpha[\sigma]}
{\alpha{:=}(\Phihat^n.S){:}(\Phi \atl{n} *) \in \Psi &
\Psi \vde T = [\sigma/\Phihat]S}
\quad
\infer{\Psi \vde T = S}
{\# \in \Psi}
\\[1em]
\infer{\Psi \vde T_1 \arrow T_2 = S_1 \arrow S_2}{
\Psi \vde T_1 = S_1 &
\Psi \vde T_2 = S_2
}
\quad
\infer{\Psi \vde
(\alpha{:}(\Phi\atl{n} *)) \arrow T =
(\alpha{:}(\Phi'\atl{n} *)) \arrow S}
{\ctxrestrict{\Psi}{n} \vde \Phi = \Phi' &
 \ctxinsert{\Psi}{\alpha{:}(\Phi\atl{n} *)}
\vde T = S
}
\\[1em]
\infer{\Psi \vde
\cbox{\Phi\atl{n} T} = \cbox{\Phi' \atl{n} T'}}
{\ctxrestrict{\Psi }{n} \vde \Phi = \Phi' &
 \ctxappend{\ctxrestrict{\Psi}{n}}{\Phi} \vde T = T'}
  \end{array}
\]
  \label{fig:typeqmod}
\end{figure}

\begin{figure}[ht]
  \centering
  \caption{Structural equality for variable
    contexts modulo constraints: \fbox{$\Psi \vde \Phi_1 = \Phi_2$}~~
    where $\m{level}(\Psi) \ge \m{level}(\Phi_i)$ and $\Phi_i$ are pure contexts, i.e. they contain only variable declarations, but no constraints themselves}
\[
  \begin{array}{c}
\infer{\Psi \vde (\Phi_1, \alpha{:}(\Phi_1' \atl{n} *))
               = (\Phi_2, \alpha{:}(\Phi_2' \atl{n} *))}
{\Psi \vde \ctxapp{\ctxrestrict{\Phi_1}n}{\Phi_1'} =
      \ctxapp{\ctxrestrict{\Phi_2}n}{\Phi'_2} &
\Psi \vde \Phi_1 = \Phi_2}
%\\[1em]
\quad
%\infer{ \vde (\Psi, \alpha{:=}(\Psihat'.T) : (\Psi' \atl{n} *)) = (\Phi,
%    \alpha{:=}(\Phihat'.S) : (\Phi' \atl{n} *))}
%{\vde \ctxapp{\ctxrestrict{\Psi}n}{\Psi'} =
%      \ctxapp{\ctxrestrict{\Phi}n}{\Phi'} &
%\ctxapp{\ctxrestrict{\Psi}n}{\Psi'} \vde T = S &
%\vde \Psi = \Phi}
\infer{\Psi \vde \Phi_1 = \Phi_2}{\# \in \Psi}
\quad
\infer{\Psi \vde \cdot = \cdot}{ }
\\[1em]
\infer{\Psi \vde (\Phi_1, x{:}(\Phi_1' \atl{n} T))
               = (\Phi_2, x{:}(\Phi_2' \atl{n} S))}
{\Psi \vde \ctxapp{\ctxrestrict{\Phi_1}n}{\Phi_1'} =
      \ctxapp{\ctxrestrict{\Phi_2}n}{\Phi'_2} &
 \ctxapp{\Psi}{\ctxapp{\ctxrestrict{\Phi_1}n}{\Phi_1'}} \vde T = S &
\Psi \vde \Phi_1 = \Phi_2}
  \end{array}
\]
  \label{fig:ctxeqmod}
\end{figure}

\begin{figure}[ht]
  \centering
  \caption{Structural equality for substitution modulo constraints: \fbox{$\Psi \vde \sigma_1 = \sigma_2 : \Phi$}~~ where \(\Phi\) is a pure context}
\[
  \begin{array}{c}
\\[1em]
\infer%
{\Psi \vde \cdot = \cdot : \cdot}%
{}%
\quad%
\infer%
{\Psi \vde \sigma_1 = \sigma_2 : \Phi}%
{\# \in \Psi}%
\quad%
\infer%
{\Psi \vde \sigma_1;\beta^k = \sigma_2;\beta^k : \ctxapp{\Phi}{\alpha{:}(\Gamma \atl{k} *)}}%
{\Psi \vde \sigma_1 = \sigma_2 : \Phi &%
 \beta{:}(\Gamma \atl{k} *) \in \Psi%
}%
\\[1em]
\infer%
{\Psi \vde \sigma_1;\beta^k = \ctxapp{\sigma_2}{(\hat\Gamma^k . T)} : \ctxapp{\Phi}{\alpha{:}(\Gamma \atl{k} *)}}%
{\Psi \vde \sigma_1 = \sigma_2 : \Phi &%
 \ctxapp{\ctxrestrict{\Psi}{k}}{\Gamma} \vde \beta[\id(\hat\Gamma)] = T%
}%
\\[1em]
\infer%
{\Psi \vde \ctxapp{\sigma_1}{(\hat\Gamma^k . T)} = \sigma_2;\beta^k : \ctxapp{\Phi}{\alpha{:}(\Gamma \atl{k} *)}}%
{\Psi \vde \sigma_1 = \sigma_2 : \Phi &%
 \ctxapp{\ctxrestrict{\Psi}{k}}{\Gamma} \vde T = \beta[\id(\hat\Gamma)]%
}%
\\[1em]
\infer%
{\Psi \vde \sigma_1;\beta^k = \ctxapp{\sigma_2}{(\hat\Gamma^k . T)} : \ctxapp{\Phi}{\alpha{:}(\Gamma \atl{k} *)}}%
{\Psi \vde \sigma_1 = \sigma_2 : \Phi &%
 \beta{:=}(\hat\Gamma^k . S){:}(\Gamma \atl{k} *) \in \Psi &%
 \ctxapp{\ctxrestrict{\Psi}{k}}{\Gamma} \vde S = T%
}%
\\[1em]
\infer%
{\Psi \vde \ctxapp{\sigma_1}{(\hat\Gamma^k . T)} = \sigma_2;\beta^k : \ctxapp{\Phi}{\alpha{:}(\Gamma \atl{k} *)}}%
{\Psi \vde \sigma_1 = \sigma_2 : \Phi &%
 \beta{:=}(\hat\Gamma^k . S){:}(\Gamma \atl{k} *) \in \Psi &%
 \ctxapp{\ctxrestrict{\Psi}{k}}{\Gamma} \vde T = S%
}%
\\[1em]
\infer%
{\Psi \vde \sigma_1;\beta^k = \ctxapp{\sigma_2}{(\hat\Gamma^k . T)} : \ctxapp{\Phi}{\alpha{:}(\Gamma \atl{k} *)}}%
{\Psi \vde \sigma_1 = \sigma_2 : \Phi &%
 \beta{:=}(\hat\Gamma^k . S){:}(\Gamma \atl{k} *) \in \Psi &%
 \ctxapp{\ctxrestrict{\Psi}{k}}{\Gamma} \vde S = T%
}%
\\[1em]
\infer%
{\Psi \vde \ctxapp{\sigma_1}{(\hat\Gamma^k . T_1)} = \ctxapp{\sigma_2}{(\hat\Gamma^k . T_2)} : \ctxapp{\Phi}{\alpha{:}(\Gamma \atl{k} *)}}%
{\Psi \vde \sigma_1 = \sigma_2 : \Phi &%
 \ctxapp{\ctxrestrict{\Psi}{k}}{\Gamma} \vde T_1 = T_2%
}%
  \end{array}
\]
  \label{fig:substeqmod}
\end{figure}

\section{Unification of Substitution}\label{appendix:substunify}
We define unification for substitution in this section. The definition in
Fig~\ref{fig:substunify} contains only substitution for type variables, as
we used unification only on types. Symmetric cases are omitted for brevity.

\begin{figure}
  \caption{Substitution Unification \fbox{\(\Gamma; \Phi \vde \sigma_1 = \sigma_2 : \Psi \searrow \Gamma'\)}}
  \centering
\[
  \begin{array}{c}
\infer%
{\Gamma; \Phi \vde \cdot = \cdot : \cdot \searrow \Gamma}%
{}%
\quad%
\infer%
{\Gamma; \Phi \vde \sigma_1;\beta^k = \sigma_2;\beta^k : \ctxapp{\Psi}{\alpha{:}(\Psi' \atl{k} *)} \searrow \Gamma'}%
{\Gamma; \Phi \vde \sigma_1 = \sigma_2 : \Psi \searrow \Gamma'%
}%
\\[1em]
\infer%
{\Gamma; \Phi \vde \sigma_1;\beta^k = \ctxapp{\sigma_2}{(\hat\Psi'^k . T)} : \ctxapp{\Psi}{\alpha{:}(\Psi' \atl{k} *)} \searrow \Gamma''}%
{\Gamma; \Phi \vde \sigma_1 = \sigma_2 : \Psi \searrow \Gamma'%
 \Gamma; \ctxapp{\ctxrestrict{\Phi}{k}}{\Psi'} \vde \beta[id(\hat\Psi')] = T \searrow \Gamma'%
}%
\\[1em]
\infer%
{\Gamma; \Phi \vde \ctxapp{\sigma_1}{(\hat\Psi'^k . S)} = \ctxapp{\sigma_2}{(\hat\Psi'^k . T)} : \ctxapp{\Psi}{\alpha{:}(\Psi' \atl{k} *)} \searrow \Gamma''}%
{\Gamma; \Phi \vde \sigma_1 = \sigma_2 : \Psi \searrow \Gamma' &%
 \Gamma'; \ctxapp{\ctxrestrict{\Phi}{k}}{\Psi'} \vde S = T \searrow \Gamma''
}%
  \end{array}
\]
  \label{fig:substunify}
\end{figure}

\section{Proof of Lemma~\ref{lem:welldefunify}~and~\ref{lem:unifysound}}\label{sec:proofunify}

\begin{thmn}[\ref{lem:welldefunify}]\mbox{\textnormal{\textsc{(Well-defined Unification).}}}\quad
  \begin{enumerate}
  \item
    If
    $\mathcal{D}:\;\vde \ctxapp{\Gamma}{\ctxapp{\Gamma_0}{\Psi}}$ and
    $\mathcal{E}:\;\vde \ctxapp{\Gamma}{\ctxapp{\Gamma_0}{\Phi}}$\\
    then
    there is a $\Gamma'$ s.t. $\Gamma ; \Gamma_0 \vde \Psi = \Phi \searrow \Gamma'$.
  \item
    If
    $\vde \ctxapp{\Gamma}{\Psi}$ and
    $\mathcal{D}:\;\ctxapp{\Gamma}{\Psi} \vde T$ and
    $\mathcal{E}:\;\ctxapp{\Gamma}{\Psi} \vde S$\\
    then
    there is a $\Gamma'$ s.t. $\Gamma ; \Psi \vde T = S \searrow \Gamma'$
  \end{enumerate}
\end{thmn}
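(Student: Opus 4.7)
The plan is to prove both parts by a single mutual structural induction: part~(\ref{itm:welldefunify-type}) by induction on the type $T$ with case analysis on $S$, and part~(\ref{itm:welldefunify-context}) by induction on $\Psi$ with case analysis on $\Phi$. The mutual recursion is well-founded because context unification only calls type unification on the type component $T'$ inside a declaration $x{:}(\Psi'\atl{n} T')$ (strictly smaller than the containing declaration), and type unification only calls context unification on the inner context $\Psi$ of a $\cbox{\Psi\atl{n}T}$. The key property we exploit is that the $\searrow$ judgment is \emph{total} on well-kinded inputs because every syntactic mismatch has an implicit fall-through producing $\Gamma,\#$.

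For part~(\ref{itm:welldefunify-type}), I would split the cases based on whether $T$ has a pattern type variable $\alpha$ at its head. If $\alpha{:}(\Phi'\atl{k}*)\in\Gamma$, then one of the three rules for $\alpha = S$ applies: occur's-check decides between the binding rule and the $\#$ rule, and if $\alpha$ is already constrained to $(\hat{\Phi}'^k.T')$ we recurse on $T'=S$, which is structurally smaller than $\alpha[\sigma]$ once we unfold the definition. The symmetric case where $S$ has a pattern variable head is analogous. Otherwise, both $T$ and $S$ have fixed head constructors: if they share the constructor (function, polymorphic, contextual-box, or bound variable), we apply the IH to the components; in the box and polymorphic cases the recursive call happens under $\ctxrestrict{\Phi}{n}$ or $\ctxinsert{\Phi}{\alpha{:}(\Psi\atl{n}*)}$, both of which are well-formed by Lemma~\ref{lem:contextprop} together with the chopping/merging properties of Appendix~\ref{sec:ctxop}. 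If the heads disagree, we use the catch-all rule to return $\Gamma,\#$.

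For part~(\ref{itm:welldefunify-context}), I would case-analyze on whether $\Psi$ and $\Phi$ are both empty (return $\Gamma$), both extend with the same sort of declaration (apply IH to the prefixes to obtain $\Gamma'$, then apply IH~(\ref{itm:welldefunify-context}) again to the inner contexts $\Psi'$ and $\Phi'$ under $\ctxapp{\Gamma_0}{\ctxrestrict{\Psi}{n}}$, and in the term-variable case additionally invoke IH~(\ref{itm:welldefunify-type}) on $T'=S'$), or disagree (return $\Gamma,\#$ via the fall-through). In each matching case, well-formedness of the extended context needed to apply the IH follows by inverting $\vde\ctxapp{\Gamma}{\ctxapp{\Gamma_0}{\Psi}}$ and $\vde\ctxapp{\Gamma}{\ctxapp{\Gamma_0}{\Phi}}$.

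The main obstacle will be tracking well-formedness of contexts across the recursive calls, particularly in the contextual-box and polymorphic cases of type unification where the working context changes via chopping and merging. Establishing that $\ctxapp{\Gamma}{(\ctxapp{\ctxrestrict{\Phi}{n}}{\Psi})}$ is well-formed whenever $\ctxapp{\Gamma}{\Phi}\vde\cbox{\Psi\atl{n}T}$ (and analogously for $S$) is essential so that the IH's well-formedness premise is satisfied; this reduces to the distributivity lemmas (Lemma~\ref{lem:chopdistrib} and Lemma~\ref{lem:mergereorder}) in the appendix. Apart from this bookkeeping, the proof is a routine case analysis since the $\#$ rule guarantees that we never get stuck.
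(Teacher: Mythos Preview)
Your overall plan---mutual induction with the \(\#\) catch-all guaranteeing totality, and the case split on whether the head is a pattern variable in \(\Gamma\) versus a bound variable or constructor---matches the paper's approach, and your handling of the structural cases (arrow, box, polymorphic, context extension) is essentially what the paper does.

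There is, however, a genuine gap in your termination argument. You claim that when \(\alpha\) is already constrained to \((\hat\Phi'^k.T')\), the recursive call on \(T'=S\) is ``structurally smaller than \(\alpha[\sigma]\) once we unfold the definition.'' This is false: \(T'\) is stored in the context \(\Gamma\), not inside \(\alpha[\sigma]\), and can be an arbitrarily large type. Structural induction on \(T\) alone cannot justify this call. The problem does not disappear even if the initial \(\Gamma\) is constraint-free: the sequential structure of, e.g., the arrow rule (unify \(T_1=S_1\) to obtain \(\Gamma_1\), then unify \(T_2=S_2\) under \(\Gamma_1\)) means that later recursive calls can encounter freshly constrained variables whose stored types bear no size relation to the original \(T\).

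The paper phrases the argument instead as a lexicographic induction on the pair of kinding derivations \((\mathcal D,\mathcal E)\). In the worked constrained-variable subcase it keeps one derivation fixed and replaces the other by the kinding derivation of the stored constraint, obtained by inversion on \(\vde\Gamma\) (which yields a kinding for \(T'\) in a strict prefix of \(\Gamma\)) and weakening. What ultimately makes the chain of constraint unfoldings terminate is that in a well-formed \(\Gamma\) each constraint \(\alpha{:=}(\hat\Phi.T')\) has \(T'\) kinded only over the variables preceding \(\alpha\). To repair your argument you must either adopt the derivation-based measure, or add an explicit secondary component to your induction (for instance, the position in \(\Gamma\) of the head variable being unfolded) so that successive constraint lookups strictly decrease it.
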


\begin{proof}
  By lexicographic induction on the structures of \(\mathcal{D}\) and \(\mathcal{E}\). Either \(\mathcal{D}\) decreases or \(\mathcal{D}\) remains the same and \(\mathcal{E}\) decreases. We show here the unification variable case only.

  \begin{itemize}
  \item Case: \(\Gamma = \ctxapp{\Gamma_1}{\ctxapp{\alpha{:}(\Psi \atl{k} *)}{\Gamma_0}}\) and
    \[
      \mathcal{D} =
      \begin{array}{c}
        \infer%
        {\ctxapp{\Gamma}{\Psi} \vde \alpha[\id(\hat\Psi)]}%
        {%
          \deduce%
          {\alpha{:}(\Psi \atl{k} *) \in (\ctxapp{\Gamma}{\Psi})}%
          {\mathcal{D}_1}%
          &%
          \deduce%
          {\ctxapp{\Gamma}{\Psi} \vde \id(\hat\Psi) : \Psi}%
          {\mathcal{D}_2}%
        }%
      \end{array}
    \]
    \begin{itemize}
    \item Subcase: \(\ctxapp{\Gamma_1}{\Psi} \vde S\)\\
      \(\Gamma ; \Psi \vde \alpha \not\in S\) \hfill as \(\alpha \not\in \ctxapp{\Gamma_1}{\Psi}\)\\
      \(\Gamma ; \Psi \vde \alpha[\id(\hat\Psi)] = S \searrow \ctxapp{\Gamma_1}{\ctxapp{\alpha{:=}(\hat\Psi . S){:}(\Psi \atl{k} *)}{\Gamma_0}}\) \hfill by unification rule\\
    \item Subcase:
      \[
        \mathcal{E} =
        \begin{array}{c}
          \infer%
          {\ctxapp{\Gamma}{\Psi} \vde \alpha[\id(\hat\Psi)]}%
          {%
            \deduce%
            {\alpha{:}(\Psi \atl{k} *) \in (\ctxapp{\Gamma}{\Psi})}%
            {\mathcal{D}_1}%
            &%
            \deduce%
            {\ctxapp{\Gamma}{\Psi} \vde \id(\hat\Psi) : \Psi}%
            {\mathcal{D}_2}%
          }%
        \end{array}
      \]
      \(\Gamma ; \Psi \vde \alpha[\id(\hat\Psi)] = \alpha[\id(\hat\Psi)] \searrow \Gamma\) \hfill by unification rule\\
    \item Subcase: \(\Gamma_0 = \ctxapp{\Gamma''_0}{\ctxapp{\beta{:}(\Psi \atl{k} *)}{\Gamma'_0}}\) and
      \[
        \mathcal{E} =
        \begin{array}{c}
          \infer%
          {\ctxapp{\Gamma}{\Psi} \vde \beta[\id(\hat\Psi)]}%
          {%
            \deduce%
            {\beta{:}(\Psi \atl{k} *) \in (\ctxapp{\Gamma}{\Psi})}%
            {\mathcal{D}_1}%
            &%
            \deduce%
            {\ctxapp{\Gamma}{\Psi} \vde \id(\hat\Psi) : \Psi}%
            {\mathcal{D}_2}%
          }%
        \end{array}
      \]
      \(\ctxapp{\ctxapp{\ctxapp{\Gamma_1}{\alpha{:}(\Psi \atl{k} *)}}{\Gamma''_0}}{\Psi} \vde \alpha[\id(\hat\Psi)]\) \hfill by typing rule\\
      \(\Gamma ; \Psi \vde \beta \not\in \alpha[\id(\hat\Psi)]\) \hfill as \(\beta \not\in \Psi\)\\
      \(\Gamma ; \Psi \vde \alpha[\id(\hat\Psi)] = \beta[\id(\hat\Psi)] \searrow \ctxapp{\Gamma_1}{\ctxapp{\alpha{:}(\Psi \atl{k} *)}{\ctxapp{\Gamma''_0}{\ctxapp{\beta{:=}(\hat\Psi . \alpha[id(\hat\Psi)]){:}(\Psi \atl{k} *)}{\Gamma'_0}}}}\)\\
      \mbox{} \hfill by unification rule\\
    \item Subcase: \(\Gamma_0 = \ctxapp{\Gamma''_0}{\ctxapp{\beta{:=}(\hat\Psi . S){:}(\Psi \atl{k} *)}{\Gamma'_0}}\) and
      \[
        \mathcal{E} =
        \begin{array}{c}
          \infer%
          {\ctxapp{\Gamma}{\Psi} \vde \beta[\id(\hat\Psi)]}%
          {%
            \deduce%
            {\beta{:=}(\hat\Psi . S){:}(\Psi \atl{k} *) \in \ctxapp{\Gamma_1}{\ctxapp{\alpha{:}(\Psi \atl{k} *)}{\Gamma_0}}}%
            {\mathcal{D}_1}%
            &%
            \deduce%
            {\ctxapp{\Gamma}{\Psi} \vde \sigma : \Psi}%
            {\mathcal{D}_2}%
          }%
        \end{array}
      \]
      \(\ctxapp{\ctxapp{\ctxapp{\Gamma_1}{\alpha{:}(\Psi \atl{k} *)}}{\Gamma''_0}}{\Psi} \vde S\) \hfill by inversion on context\\
      \(\ctxapp{\Gamma}{\Psi} \vde S\) \hfill by weakening\\
      \(\Gamma ; \Psi \vde \alpha[\id(\hat\Psi)] = S \searrow \Gamma'\) \hfill by IH\\
      \(\Gamma ; \Psi \vde \beta \not\in \alpha[\id(\hat\Psi)]\) \hfill as \(\beta \not\in \Psi\)\\
      \(\Gamma ; \Psi \vde \alpha[\id(\hat\Psi)] = \beta[\id(\hat\Psi)] \searrow \Gamma'\) \hfill by unification rule\\
    \item Other subcases:\\
      \(\Gamma ; \Psi \vde \alpha[\id(\hat\Psi)] = S \searrow \Gamma, \#\) \hfill by unification rule\\
    \end{itemize}
  \end{itemize}
\end{proof}

\begin{thmn}[\ref{lem:unifysound}]\mbox{\textnormal{\textsc{(Soundness of Unification).}}}\quad
  \begin{enumerate}
  \item
    If
    $\Gamma ; \Gamma_0 \vde \Phi = \Psi \searrow \Gamma'$\\
    then
    $\ctxapp{\Gamma'}{\Gamma_0} \vde \Phi = \Psi$ and
    $\Gamma' \ext \Gamma$.
  \item
    If
    $\vde \ctxapp{\Gamma}{\Phi}$ and
    $\Gamma ; \Phi \vde T = S \searrow \Gamma'$\\
    then
    $\vde \Gamma'$ and
    $\ctxapp{\Gamma'}{\Phi} \vde T = S$ and
    $\Gamma' \ext \Gamma$.
  \end{enumerate}
\end{thmn}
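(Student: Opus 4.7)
The plan is to prove parts (1) and (2) simultaneously by lexicographic induction on the combined structure of the two unification derivations, nested with a sub-induction on types. The scaffolding is the same as in Lemma~\ref{lem:welldefunify}, and several cases mirror the well-definedness argument almost exactly; the new content is the well-formedness of the returned $\Gamma'$, the refinement claim $\Gamma' \ext \Gamma$, and the structural equality judgment in the refined context. For the trivial cases where $\Gamma' = \Gamma, \#$, all three conclusions are immediate: the well-formedness rule for contexts with $\#$ gives $\vde \Gamma'$, refinement holds by definition, and the structural equality $T = S$ holds vacuously because the equality judgment (Appendix~\ref{appendix:typeq}) has a dedicated rule for contexts containing $\#$.

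For the structural cases (arrow, polymorphic, and boxed types), I apply the IH to each subderivation in turn and then reassemble the equality using the corresponding structural rule. The nontrivial point here is that the first IH produces a refinement $\Gamma_1 \ext \Gamma$, and the second subderivation must then be shown to remain typable relative to $\Gamma_1$; this follows because refinement only adds definitions to existing variable slots and hence is a form of weakening, which I would factor out as a small monotonicity sublemma (analogous to the one stated just above the theorem). Transitivity of $\ext$ then yields $\Gamma_2 \ext \Gamma$. The boxed and polymorphic cases additionally require the chopping and appending identities from Lemma~\ref{lem:chopdistrib} and Lemma~\ref{lem:mergereorder} to route the refined context through the outer operators.

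The genuinely interesting case is the variable instantiation rule
\[
  \infer{\Gamma ; \Phi \vde \alpha = T \searrow \Gamma'}
        {\Gamma = \Gamma_1, \alpha{:}(\Phi \atl{k} *), \Gamma_0 &
         \ctxapp{\Gamma_1}{\Phi} \vde T &
         \Gamma ; \Phi \vde \alpha \not\in T &
         \Gamma' = \Gamma_1, \alpha{:=}(\Phihat.T) : (\Phi \atl{k} *), \Gamma_0}.
\]
Here the premise $\ctxapp{\Gamma_1}{\Phi} \vde T$ is precisely the side condition needed by the well-formedness rule for $\Gamma_1, \alpha{:=}(\Phihat.T) : (\Phi \atl{k} *)$, and the original well-formedness of $\Gamma$ then carries the refinement down through $\Gamma_0$ unchanged (type declarations in $\Gamma_0$ do not depend on the mode of the earlier $\alpha$ slot). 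Refinement $\Gamma' \ext \Gamma$ is immediate by construction. Structural equality $\ctxapp{\Gamma'}{\Phi} \vde \alpha[\id(\hatctx{\Phi})] = T$ then follows by the conversion rule in Figure~\ref{fig:typeqmod}, because $\alpha$ is now defined as $(\hatctx{\Phi}.T)$ in $\Gamma'$ and $[\id(\hatctx{\Phi})/\hatctx{\Phi}]T = T$ by Lemma~\ref{lem:idsubstprop}. The ``already instantiated'' case, where $\Gamma$ contains $\alpha{:=}(\hatctx{\Phi}.T')$ and we recurse on $T' = T$, is handled directly by the IH together with transitivity of structural equality.

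For part (1), the context unification cases proceed similarly: I apply the IH to the prefix unification to obtain a first refinement $\Gamma'$ and the equality $\ctxapp{\Gamma'}{\Gamma_0} \vde \Psi = \Phi$; then I apply the IH to the inner kind or type equation, noting that the chopping identities let me rewrite $\ctxapp{\Gamma_0}{\ctxrestrict{\Psi}{n}}$ as needed, and I close with the structural rule from Figure~\ref{fig:ctxeqmod}. The main obstacle I anticipate is bookkeeping: threading the evolving refined context through the successive IH calls while keeping each intermediate context well-formed. This is exactly where the monotonicity sublemma for refinement together with Lemma~\ref{lem:chopdistrib} does the heavy lifting, and I expect no further essential difficulties beyond careful case tracking.
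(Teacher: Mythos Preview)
Your proposal is correct and follows essentially the same approach as the paper: induction on the unification derivation, with the variable-instantiation case being the interesting one, handled via the well-formedness rule for constrained declarations, Lemma~\ref{lem:idsubstprop} for the identity substitution, and the structural equality rule that unfolds a defined type variable. Two minor remarks: the induction is simply on the single unification derivation (parts (1) and (2) are mutually recursive through the rules), not a lexicographic order on ``two derivations''; and in the already-instantiated case you do not actually need transitivity of structural equality---the IH gives $\ctxapp{\Gamma'}{\Phi} \vde T' = T$, and since $\alpha{:=}(\hatctx{\Phi}.T')$ persists in $\Gamma'$, the unfolding rule from Figure~\ref{fig:typeqmod} plus $[\id(\hatctx{\Phi})/\hatctx{\Phi}]T' = T'$ yields $\ctxapp{\Gamma'}{\Phi} \vde \alpha[\id(\hatctx{\Phi})] = T$ directly.
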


\begin{proof}
  By induction on the unification derivation. We show here the unification variable case only.

  \begin{itemize}
  \item Case:
    \[
      \infer%
      {\Gamma ; \Phi \vde \alpha[\id(\hat\Phi)] = S \searrow \Gamma'}%
      {%
        \deduce%
        {\Gamma = \ctxapp{\Gamma_1}{\ctxapp{\alpha{:}(\Phi \atl{k} *)}{\Gamma_0}}}%
        {\mathcal{D}_1}%
        &%
        \deduce%
        {\ctxapp{\Gamma_1}{\Phi} \vde S}%
        {\mathcal{D}_2}%
        &%
        \deduce%
        {\Gamma ; \Phi \vde \alpha \not\in S}%
        {\mathcal{D}_3}%
        &%
        \deduce%
        {\Gamma' = \ctxapp{\Gamma_1}{\ctxapp{\alpha{:=}(\hat\Phi.S){:}(\Phi \atl{k} *)}{\Gamma_0}}}%
        {\mathcal{D}_4}%
      }%
    \]
    \(\ctxapp{\Gamma_1}{\Phi} \vde S\) \hfill by \(\mathcal{D}_2\)\\
    \(\vde \Gamma_1, \alpha{:=}(\hat\Phi.S):(\Phi \atl{n} *), \Gamma_0\) \hfill by well-formedness rule\\
    \(\vde \Gamma'\) \hfill by \(\mathcal{D}_4\)\\
    \(\ctxapp{\Gamma'}{\Phi} \vde S\) \hfill by weakening\\
    \([\id(\hat\Phi)/\hat\Phi]S = S\) \hfill by property of id.~subst.~\ref{lem:idsubstprop}\\
    \(\ctxapp{\Gamma'}{\Phi} \vde [\id(\hat\Phi)/\hat\Phi]S = S\) \hfill by reflexivity of struct.~eq.\\
    \(\alpha{:=}(\hat\Phi.S){:}(\Phi \atl{n} *) \in \Gamma'\)\hfill by the def.~of \(\in\)\\
    \(\ctxapp{\Gamma'}{\Phi} \vde \alpha[\id(\hat\Phi)] = S\) \hfill by structural eq.\\
    \(\Gamma' \ext \Gamma\) \hfill by the def.~of \(\ext\)\\
  \end{itemize}
\end{proof}

%%% Local Variables:
%%% mode: latex
%%% TeX-master: mstages.tex
%%% End:

}
\end{document}